\documentclass[a4paper]{article}
\usepackage{fullpage}
\usepackage{parskip}
\usepackage{amssymb,amsmath}
\usepackage{amsthm}
\usepackage[T1]{fontenc}
\usepackage{libertinus}
\usepackage{microtype}
\usepackage{slantsc} %
\usepackage{graphicx}
\usepackage[utf8]{inputenc}
\usepackage{subcaption}
\usepackage{booktabs}
\usepackage{tabularx}
\usepackage{algorithm,algorithmicx}
\usepackage{algpseudocode, algxpar}
\usepackage{url}
\usepackage[hypertexnames=false]{hyperref}       
\usepackage{cleveref}    
\usepackage{circledtext}
\usepackage{newunicodechar}
\usepackage{float}
\usepackage{longtable}  
\usepackage{enumitem}
\usepackage{caption}

\urldef{\mailsb}\path|changryeol@yonsei.ac.kr|
\urldef{\mailsa}\path|changryeol@kaist.ac.kr|

\title{Implementation of Polynomial NP-Complete Algorithms\\
Based on  the NP Verifier Simulation Framework
}

\author{\small
  Changryeol Lee\thanks{Independent Researcher \mailsa}
  \thanks{The primary research for this work was conducted while the author was affiliated with the Department of Software, Yonsei University.}
}
\date{\small \today}

\newtheorem{theorem}{Theorem}
\newtheorem{lemma}{Lemma}
\newtheorem{sublemma}{Sublemma}

\newtheorem{corollary}{Corollary}
\newtheorem{definition}{Definition}
\newtheorem{remark}{Remark}

\newcommand{\In}{\textbf{In}}
\newcommand{\Out}{\textbf{Out}}
\newcommand{\InOut}{\textbf{In/Out}}
\newcommand{\bigO}{\mathcal{O}}
\algdef{SE}[CLASS]{Class}{EndClass}[1]{$\mathbf{class}\ \text{#1}$}

\algdef{SE}[PYDEF]{PyDef}{EndPyDef}[2]{$\mathbf{def}\ \text{#1(#2)}$:}

\begin{document}

\maketitle
\begin{abstract}
While prior work established a verifier-based polynomial-time framework for NP, explicit deterministic machines for concrete NP-complete problems have remained elusive.
 In this paper, we construct fully specified deterministic certificate-oblivious verifier Turing machines for \textsc{SAT} and \textsc{Subset-Sum} within an improved NP verifier simulation framework.
 A key contribution of this work is the development of a functional implementation that bridges the gap between theoretical proofs and executable software.
 Our improved feasible-graph construction yields a theoretical reduction in the asymptotic polynomial degree, while enhanced edge extension mechanisms significantly improve practical execution speed.
 We show that these machines generate valid witnesses, extending the framework to deterministic \textsc{FNP} computation without increasing complexity. 
 The complete Python implementation behaves in accordance with the predicted polynomial-time bounds, and the source code, along with sample instances, is available in a public online repository.
\end{abstract}

\tableofcontents
\newpage
\section{Introduction}

The P versus NP problem asks whether every decision problem whose solutions can be
verified in polynomial time can also be solved deterministically in polynomial time.
In recent work by Lee~\cite{lee2025PNP}, a graph-theoretic deterministic computation
model based on the notion of a \emph{feasible graph} was introduced, enabling a
polynomial time simulation of nondeterministic verification on deterministic
machines.

Throughout this paper, we adopt this construction, referred to as the
\emph{Polynomial-Time NP Verifier Simulation Framework via Feasible Graph}, or
simply the \emph{NP verifier simulation framework}.
Within this framework, the feasible graph provides the central structural
tool for deterministic verification for partial and extendable verifier computations.

The previous work establishes the existence of such a framework at a formal level.
It specifies the relevant computational objects—such as computation graphs, walks,
and feasibility conditions—and presents algorithms designed to support the proof of
correctness.
However, certain components required to obtain an executable procedure for concrete
NP decision problems are intentionally left abstract.
In particular, no explicit finite state Turing machines are provided for NP
verifiers, and several operational details necessary for direct translation into
executable code—such as concrete data-structure layouts, indexing schemes, and update
procedures—are omitted.
As a consequence, the framework does not, by itself, yield a closed algorithm that
can be directly executed on specific NP problem instances.

The present paper addresses this gap by providing an implementation-oriented
realization of the NP verifier simulation framework.
Our objective is not to alter or strengthen the core proof of the original paper, but
to make its construction explicit, executable, and empirically verifiable on finite
inputs.
To this end, we supplement the original framework with concrete Turing machine
descriptions, explicit data structures, and algorithmic realizations that adhere
closely to the original definitions.

On the theoretical side, we construct explicit, finite state, finite-symbol Turing
machines for representative NP-complete problems, including SAT and
Subset-Sum.
These constructions instantiate the NP verifiers required by the framework and allow
the NP verifier simulation framework to be applied to concrete decision problems.
The constructions preserve the polynomial bounds on structural parameters such as
width and height and enable a refined, implementation-level complexity analysis.

With respect to feasible graph construction, the original paper employs algorithms
chosen primarily to facilitate the proof of feasible-walk preservation.
In the present work, we instead adopt algorithms that follow the formal definition of
feasibility more directly.
This change does not modify the notion of feasibility or affect correctness; rather,
it replaces a proof-oriented construction with a definition-faithful algorithmic
realization that is more intuitive and admits a lower effective complexity.

In addition, we restrict the set of computation edges that are subject to verification
and extension.
Whereas the original construction verifies a conservative superset of admissible
edges, the present work identifies, through auxiliary structural arguments, a smaller
subset of edges whose verification suffices to preserve acceptance behavior.
These arguments are independent of the core P=NP proof and serve solely to control
combinatorial growth in the implementation.
The relationship between the two constructions is monotone: every edge examined in
the present implementation is also examined in the original framework.
Consequently, correctness of the restricted implementation implies correctness of the
original, more conservative construction.

As a result of these choices, the effective polynomial degree of the simulation is
substantially reduced, making execution on small inputs feasible.
For example, under the restricted candidate-set formulation, each computation edge
requires verification against nearly $\bigO(wh)$ relevant boundary conditions, 
representing only a small fraction of the $\bigO(wh^2)$ bound arising from the fully conservative treatment.
Similarly, feasible graph recomputation costs decrease as the graph is pruned, without
affecting correctness.

Beyond decision capability, the simulator realizes a functional computation in the
sense of FNP.
Whenever an input instance is accepted, the algorithm reconstructs an explicit NP
certificate by tracing a verified computation walk via \emph{floor edges}.
This extension requires only minor structural modifications and does not increase the
asymptotic time complexity of the simulation.

The resulting implementation prioritizes faithfulness to the original definitions and
proof structure over performance engineering.
No aggressive heuristic tuning is applied, and even problem instances that are
trivially solvable by conventional algorithms may require substantial runtime within
this framework.
Nevertheless, the implementation suffices to validate the construction on small
inputs, typically of length 50--200 (and up to 940 in limited cases), and to confirm
that observed behavior matches the theoretical expectations.

In summary, this paper demonstrates that the NP verifier simulation framework admits a fully
explicit realization at the level of Turing machines, data structures, and algorithms.
The implementation closes the gap between theoretical constructiveness and
executability, while preserving the logical foundation and conservative guarantees of
the original proof.

Our main contributions are summarized as follows:
\begin{enumerate}
    \item We provide explicit, finite state, finite-symbol Turing machines for
          representative NP-complete problems, enabling the  NP verifier simulation framework
          to be applied to concrete decision instances.
    \item We present definition-faithful algorithms for feasible graph construction
          and edge verification that preserve correctness while admitting lower
          effective complexity.
    \item We introduce a restricted edge-verification scheme, justified by auxiliary
          structural arguments, that reduces combinatorial growth without altering
          acceptance behavior.
    \item We specify concrete data structures and implementation details corresponding
          directly to the formal objects of the framework.
    \item We extend the simulator to output FNP-style certificates via floor-edge
          reconstruction, without increasing asymptotic complexity.
\end{enumerate}

The remainder of the paper is organized as follows.
\Cref{sec:related_works} reviews related research.
\Cref{sec:background} summarizes the computation model and NP verifier simulation framework.
\Cref{sec:TMs} presents the problem-specific Turing machines.
\Cref{sec:improvements_on_feasiblegraph} discusses feasible graph construction.
\Cref{sec:improvements_on_edge_extension} presents the restricted edge-extension
mechanism, including FNP-style extensions.
\Cref{sec:implementation_detail} details the implementation and data structures.
\Cref{sec:experimental_evaluation} reports experimental validation.
\Cref{sec:discussions} discusses limitations and future directions.
\Cref{sec:conclusion} concludes the paper.

To support reproducibility, the Python source code and sample input instances are made available at \url{https://github.com/changryeol-hub/poly-np-sim}.

\section{Related Work}\label{sec:related_works}
The notion of a Turing Machine was introduced by Turing in his seminal 1936
paper on computable numbers, where he formalized the concept of mechanical
computation and demonstrated that elementary arithmetic operations such as
addition and subtraction can be realized by a finite set of symbolic
manipulation rules~\cite{turing1936computable}.  This work established the
foundational model for algorithmic computation and provided the basis for
later formalizations of effective procedures and complexity-theoretic classes.
Importantly, Turing’s construction was not merely existential: it explicitly
described how concrete computations are carried out step by step on a
machine, a feature that continues to influence implementation-oriented
theoretical work.

Building on this computational model, complexity theory later introduced
resource-bounded classes defined in terms of Turing Machines.  Cook
formalized the class NP via polynomial time verifiability and proved that
Boolean satisfiability is NP-complete, thereby establishing SAT as a canonical
representative of NP~\cite{cook1972time}.  Karp’s subsequent reductions showed
that a wide range of combinatorial problems can be reduced to SAT in
polynomial time, reinforcing the central role of NP-completeness in
theoretical computer science~\cite{karp1972reducibility}.

Algorithmic research on SAT has traditionally focused on high-level search and resolution techniques,
 rooted in the Davis--Logemann--Loveland (DLL/DPLL) procedure~\cite{davis1962machine} and its modern clause-learning extensions.
  While these methods are complete and highly effective in practice, they remain exponential-time algorithms in
the worst case and do not provide polynomial time decision procedures for SAT
or general NP-complete problems.  In particular, they operate at a level of
abstraction that is largely decoupled from the underlying machine model used
to define NP itself.

In contrast, Lee~\cite{lee2025PNP} introduced the feasible-graph model and
proved, at an abstract level, that deterministic polynomial time simulation
of NP verifiers can be achieved by verifying valid computation transitions
within a graph-based computation model.  Central to this framework is the
notion of a feasible graph, which preserves valid computation walks
corresponding to legal Turing Machine transitions.  The NP verifier simulation
framework adopts a strictly machine level viewpoint: its input is
an explicit NP verifier Turing Machine, and all subsequent constructions and
correctness guarantees are defined relative to that machine.  Consequently,
instantiating the framework for a concrete NP problem requires an explicit
construction of the corresponding verifier Turing Machine.

Accordingly, the present work returns to the machine level perspective
inherent in Turing’s original formulation.  Rather than designing new
resolution-based search heuristics, we focus on explicitly constructing a
verifier Turing Machine for SAT and analyzing its deterministic simulation
through structured computation graphs.  The goal is not to compete with
state-of-the-art SAT solvers, but to make the verification process itself
concrete, executable, and inspectable at the level of machine transitions.

Furthermore, while the original framework relied on an exhaustive edge extension strategy—incurring significant computational overhead to ensure no potential edges were overlooked—we have substantially improved efficiency by restricting the verification targets through more intuitive direct extensions and structural constraints on candidate edges. In addition, we reduce the dominant polynomial degree of the framework by employing a more definition-faithful feasible graph construction, providing an enhanced polynomial-time bound for the verification process. 

\section{Background and Preliminaries} \label{sec:background}

This section summarizes the computation model and terminology introduced in
Lee~\cite{lee2025PNP}, which we adopt throughout this paper with minimal
modification.
The structure and wording of the definitions closely follow those in
\cite{lee2025PNP} and are reproduced here to fix notation and to make the
present paper self-contained.
Except where explicitly stated, no conceptual changes to the underlying model
are introduced.

\newcommand{\blank}{\epsilon}
\newcommand{\spacedelim}{\_} 
\newcommand{\IPrec}{\mathrm{IPrec}}
\newcommand{\ISucc}{\mathrm{ISucc}}
\newcommand{\Prev}{\mathrm{Prev}}
\newcommand{\Next}{\mathrm{Next}}
\newcommand{\ipred}{\mathrm{ipred}}
\newcommand{\isucc}{\mathrm{isucc}}
\newcommand{\state}{\mathrm{state}}
\newcommand{\tier}{\mathrm{tier}}
\newcommand{\nextIndex}{\mathrm{next\_index}}
\newcommand{\nextState}{\mathrm{next\_state}}
\newcommand{\lastState}{\mathrm{last\_state}}
\newcommand{\lastSymbol}{\mathrm{last\_symbol}}
\newcommand{\NIL}{\texttt{NIL}}
\newcommand{\indexOf}{\mathrm{index}}
\renewcommand{\output}{\mathrm{output}}
\renewcommand{\symbol}{\mathrm{symbol}}
\newcommand{\dir}{\operatorname{dir}}
\newcommand{\qinit}{q_{\mathrm{init}}}
\newcommand{\qacc}{q_{\mathrm{acc}}}
\newcommand{\qrej}{q_{\mathrm{rej}}}
\newcommand{\vdown}[1]{{#1}_{\downarrow}}
\newcommand{\vup}[1]{{#1}_{\uparrow}}
\newcommand{\vbot}[1]{{#1}_{\bot}} 
\newcommand{\Lfixed}{L_{\mathrm{fixed}}}
\newcommand{\init}{\mathrm{init}}
\newcommand{\term}{\mathrm{term}}
\newcommand{\Incoming}{\mathrm{In}}
\newcommand{\Outgoing}{\mathrm{Out}}
\newcommand{\TCase}{\mathrm{TCase}}
\newcommand{\timeOf}{\mathrm{time}}
\newcommand{\nextOf}{\mathrm{next}}
\newcommand{\prevOf}{\mathrm{prev}}

\subsection{Deterministic Turing Machines and Computation Graph} 

\begin{definition}[Deterministic Acceptor Turing Machine]
A deterministic single-tape acceptor Turing machine (TM) is a tuple
\[
M = (Q, \Sigma, \Gamma, \delta, q_{\mathsf{init}}, q_{\mathsf{acc}}, q_{\mathsf{rej}}),
\]
where:
\begin{itemize}
  \item $Q$ is a finite set of states.
  \item $\Sigma$ is the input alphabet, not containing the blank symbol $\blank$.
  \item $\Gamma$ is the tape alphabet, satisfying $\Sigma \subseteq \Gamma$ and $\blank \in \Gamma$.
  \item $q_{\mathsf{init}} \in Q$ is the initial state.
  \item $q_{\mathsf{acc}},\, q_{\mathsf{rej}} \in Q$ are the designated accepting and rejecting halting states, with $q_{\mathsf{acc}} \neq q_{\mathsf{rej}}$.
  \item $\delta : (Q \setminus \{q_{\mathsf{acc}}, q_{\mathsf{rej}}\}) \times \Gamma 
    \to Q \times \Gamma \times \{-1,+1\}$  
    is the transition function.  On reading a symbol $a\in\Gamma$ in state $q$, the machine writes $b\in\Gamma$, moves its head left or right, and enters the next state $q'$.
\end{itemize}
A configuration of $M$ is a triple $(q, L, i_h)$, where $q \in Q$ is the current state, $i_h \in \mathbb{Z}$ is the head position, and $L$ represents the tape symbols for each cell index $i \in \mathbb{Z}$.
\end{definition}
An \textbf{acceptor Turing machine} is a Turing machine that always halts in either an accepting state or a rejecting state. 

\begin{definition}[NP via Acceptor Verification]
A language $\mathcal{L} \subseteq \Sigma^*$ belongs to \textsf{NP} if there exists a deterministic acceptor Turing machine $M$ (the verifier) and polynomials $p(\cdot)$ and $q(\cdot)$ such that for every $X \in \Sigma^*$ and  $Y \in \Sigma^*$ with $|Y| \le q(|X|)$:
\begin{itemize}
    \item $M$ halts on the input string $X\#Y$ within $p(|X| + |Y| + 1)$ steps, where the $+1$ accounts for the delimiter in the concatenated input.
    \item $X \in \mathcal{L} \iff \exists Y$ such that $M$ reaches an \texttt{accept} state on input $X\#Y$.
\end{itemize}
\end{definition}
It is important to emphasize that not every string $Y \in \Sigma^*$ within the length bound $q(|X|)$ is necessarily a valid or well-formatted certificate for $X$. These candidates may consist of arbitrary symbol sequences that do not constitute a correct proof; the verifier $M$ \textbf{is designed to} reject such invalid candidates by the definition. The essential property of \textsf{NP} is the existence of \textit{at least one} string $Y$ (among the exponential space of all possible candidates) that causes $M$ to reach an \texttt{accept} state when $X \in \mathcal{L}$. Throughout this paper, we use the term \textbf{problem instance} to refer to the fixed sequence $X\#$, and we use \textbf{verifier} to refer to the deterministic acceptor $M$ to maintain consistency with the standard computational model.

\begin{definition}[Computation Node]\label{def:computation_node}
Given a deterministic Turing machine $M = (Q, \Sigma, \Gamma, \delta, \qinit, \qacc, \qrej)$, a \textbf{computation node} is defined as a 6-tuple $v = (i, t, q, \sigma, \vdown{q}, \vdown{\sigma})$, representing a local configuration of a tape cell, where:
\begin{itemize}
    \item $i \in \mathbb{Z}$ representing the tape cell index, denoted by $\indexOf(v)$.
    \item $t \in \mathbb{N}_0$ representing the \textbf{tier}, the number of transitions that have occurred at cell $i$ prior to this configuration, denoted by $\tier(v)$.
    \item $q \in Q$ representing the state after the $t$-th transition at cell index $i$, denoted by $\state(v)$.
    \item $\sigma \in \Gamma$ representing the tape symbol after the $t$-th transition at cell index $i$, denoted by $\symbol(v)$.
    \item $\vdown{q} \in Q \cup \{\bot\}$ representing the former state right before the $t$-th transition at cell $i$, denoted by $\lastState(v)$. If $t=0$, $\vdown{q} = \bot$.
    \item $\vdown{\sigma} \in \Gamma \cup \{\bot\}$ representing the former tape symbol right before the $t$-th transition at cell $i$, denoted by $\lastSymbol(v)$. If $t=0$, $\vdown{\sigma} = \bot$.
\end{itemize} 
A node $v$ is called an \textbf{initial node} if $i=0, t=0, q=\qinit, \vdown{q} = \bot$, and $\vdown{\sigma} = \bot$. The set of initial nodes $V_0$ denotes the starting configuration of $M$.
The transition function $\delta$ is uniquely defined for the pair $(\state(v), \symbol(v))$. For each node $v$, we denote the resulting state, written symbol, and head direction as $\nextState(v)$, $\output(v)$, and $\dir(v)$, respectively, satisfying:
\[ \delta(\state(v), \symbol(v)) = (\nextState(v), \output(v), \dir(v)) \]
\end{definition}

\begin{definition}[Transition Case] \label{def:transition_case}
A \textbf{transition case} $T$ (specifically $V_{i,t}^{q,\sigma}$) is defined as the set of all computation nodes $v \in V$ such that $\indexOf(v)=i$, $\tier(v)=t$, $\state(v)=q$, and $\symbol(v)=\sigma$.

\textbf{Deterministic Projections for Cases:} 
Since all nodes $v \in T$ share the same state $q$ and symbol $\sigma$, the deterministic projections are uniquely determined by the case $T$. We denote them as $\nextState(T)$, $\output(T)$, and $\dir(T)$, satisfying:
\[ (\nextState(T), \output(T), \dir(T)) = \delta(q, \sigma) \]
\end{definition}

Throughout this paper, we let $\TCase(v)$ represent the transition case where the computation node $v$ belongs to  for convenience.

\begin{definition}[Computation Graph]
Given a deterministic Turing machine $M$, the \textbf{computation graph} $G = (V, E)$ is a directed graph where $V$ is the set of all possible computation nodes as defined in Definition \ref{def:computation_node}. A directed edge exists from node $u$ to node $v$, denoted by $(u, v) \in E$, only if:
\[ |\,\indexOf(v) - \indexOf(u)\,| = 1 \]
When the context is clear, we refer to such a graph simply as a \emph{computation graph} $G$. When specified, we refer to $G$ as a \textbf{computation graph with a set of initial nodes $V_0$} to denote a computation graph whose initial nodes are only those in $V_0$.
For a computation graph $G = (V, E)$, an $e$-augmented graph for an edge $e = (u, v)$ is defined as $G \cup \{e\} = (V \cup \{u, v\}, E \cup \{e\})$. More generally, for a set of edges $E'$, the $E'$-augmented graph is denoted as $G \cup E' = (V \cup V', E \cup E')$, where $V'$ is the set of all vertices incident to any edge in $E'$. Similarly, we denote by $G - e$ (or $G - E'$) the graph $(V, E \setminus \{e\})$ (or $(V, E \setminus E')$), representing the removal of edges while preserving the underlying vertex set.
\end{definition}

A \emph{computation graph} $G = (V,E)$ represents TM configurations as nodes, where edges correspond to single-step transitions. Key concepts—including \emph{index-precedent}, \emph{index-succedent}, \emph{computation walk}, \emph{footmarks}, and \emph{folding nodes}—are defined within the computation graph, following the conventions established in the original work \cite{lee2025PNP}

\begin{definition}\label{def:width-height-definition}
Given a computation graph $G$, the \textbf{width} of $G$ (denoted by $w$) is defined as the difference between the maximum and minimum indices of the vertices in $G$:
\[ w = \max_{v \in V(G)} \indexOf(v) - \min_{v \in V(G)} \indexOf(v). \]
The \textbf{height} of $G$ (denoted by $h$) is defined as the maximum tier among all vertices in $G$:
\[ h = \max_{v \in V(G)} \tier(v). \]
\end{definition}

\begin{remark}\label{prop:graph_size_bound}
Since edges exist only between nodes $u, v$ such that $|\indexOf(u) - \indexOf(v)| = 1$, the total number of edges $|E(G)|$ is bounded by:
\[ |E(G)| \le w \cdot h^2 \cdot |Q|^2 \cdot |\Gamma|^2, \]
where $Q$ is the set of states and $\Gamma$ is the tape alphabet of the underlying Turing machine.
\end{remark}

\begin{definition}
Given computation graph $G$ and its subgraph $H$, a \textbf{boundary edge} of $H$ in $G$ is defined as any edge $(u, v)$ in $G$ such that $u \in V(H)$ and $v \notin V(H)$.
\end{definition}

\begin{definition}[Edge Index and Direction]\label{def:edge_index_dir} 
For an edge $e = (u, v)$ in a computation graph $G$, the \textbf{index of an edge} is defined as the minimum index of its endpoints: $\indexOf(e) = \min(\indexOf(u), \indexOf(v))$. The \textbf{direction of an edge} represents the head's displacement: $\dir(e) = \indexOf(v) - \indexOf(u)$.
\end{definition}

\begin{definition}[Edge Slice]\label{def:edge_slice}
For each integer $i \in \mathbb{Z}$, the \textbf{edge slice} $E_i$ is the set of all edges in $G$ with index $i$:
\[ E_i = \{ e \in E(G) \mid \indexOf(e) = i \} \]
An edge slice is treated as a formally indexed set $(i, E_i)$. In particular, $E_i$ remains a distinguished, index-specific slice even if $E_i = \emptyset$. 
\end{definition}

Geometrically, an edge slice $E_i$ represents the set of all possible transitions that cross the boundary between tape cell $i$ and cell $i+1$. Specifically, an edge $e=(u,v)$ belongs to $E_i$ if it corresponds to a move from $i$ to $i+1$ (where $\dir(e)=1$) or from $i+1$ to $i$ (where $\dir(e)=-1$). This construction allows us to analyze the computation as a flow of information across discrete spatial boundaries.

\begin{definition}[Index-Predecessor] \label{def:index-predecessor}
Let $W$ be a walk in $G$. The \textbf{index-predecessor} of a node $v$ on $W$, denoted by $\ipred_W(v)$,  is the last node $p$ appearing before $v$ on $W$ such that $\indexOf(p) = \indexOf(v)$. If $v$ is the first node on $W$ with that index, we define $\ipred_W(v) = \bot$.

Similarly, the \textbf{index-predecessor} of an edge $e$ on $W$, denoted by $\ipred_W(e)$, is the last edge $e_p$ appearing before $e$ on $W$ such that $\indexOf(e_p) = \indexOf(e)$. If no such edge exists, $\ipred_W(e) = \bot$.
\end{definition}

\begin{definition}[Index-Successor] \label{def:index-successor}
Let $W$ be a walk in $G$. The \textbf{index-successor} of a node $v$ on $W$, denoted by $\isucc_W(v)$, is the first node $s$ appearing after $v$ on $W$ such that $\indexOf(s) = \indexOf(v)$. If $v$ is the last node on $W$ with that index, we define $\isucc_W(v) = \bot$.

Similarly, the \textbf{index-successor} of an edge $e$ on $W$, denoted by $\isucc_W(e)$, is the first edge $e_s$ appearing after $e$ on $W$ such that $\indexOf(e_s) = \indexOf(e)$. If no such edge exists, $\isucc_W(e) = \bot$.
\end{definition}

\begin{definition}[Computation Walk] \label{def:computation_walk}
A \textbf{computation walk} $W = (v_0, v_1, \dots, v_n)$ is a sequence of computation nodes representing the step-by-step execution of a Turing machine $M$. Formally, for each step $k$, the node $v_k$ represents the 6-tuple configuration of the tape cell pointed to by the head after $k$ transitions. Equivalently, a walk $\mathcal{W}$ is a computation walk if and only if for every node $v_k$, the following consistency conditions are satisfied:

\begin{enumerate}
    \item \textbf{Initial Vertex Condition:} $v_0$ is an \textbf{initial node} (i.e., $\indexOf(v_0) = 0$, $\tier(v_0) = 0$, and $\state(v_0) = q_0$, where $q_0$ is the initial state of the corresponding Turing machine $M$).
    \item \textbf{Tier Consistency:} The tier of $v_k$ reflects its sequential visit count to cell $\indexOf(v_k)$:
    \[ \tier(v_k) = \begin{cases} \tier(\ipred_{W}(v_k)) + 1 & \text{if } \ipred_{W}(v_k) \neq \bot \\ 0 & \text{if } \ipred_{W}(v_k) = \bot \end{cases} \]

    \item \textbf{State/Symbol History (Index-Predecessor Flow):} The node $v_k$ inherits the state and symbol recorded during the cell's previous visit:
	    \[ 
	    \begin{aligned}
	        &(\lastState(v_k), \lastSymbol(v_k)) \\
	      =& 
	      \begin{cases} 
	          (\state(\ipred_{W}(v_k)), \symbol(\ipred_{W}(v_k))) & \text{if } \ipred_{W}(v_k) \neq \bot \\ 
	          (\bot, \bot) & \text{if } \ipred_{W}(v_k) = \bot 
	      \end{cases} 
	    \end{aligned}
	    \]
    \item \textbf{Transition Consistency:} The transition $\delta(\state(v_k), \symbol(v_k)) = (q', \sigma', d)$ at node $v_k$ uniquely determines the configuration of subsequent nodes in the walk:
    \begin{itemize}
        \item \textbf{Head Flow:} The resulting state $q' = \nextState(v_k)$ must match the state of the next node; that is $\nextState(v_k)= \state(v_{k+1})$ for $k < n$.
        \item \textbf{Cell Flow:} The written symbol $\sigma' = \output(v_k)$ must match the symbol read by the cell's next visitor, $\output(v_k) = \symbol(\isucc_{W}(v_k))$ if $\isucc_{W}(v_k) \neq \bot$.
        \item \textbf{Displacement:} The direction $d = \dir(v_k)$ must satisfy the spatial constraint $\dir(v_k) = \indexOf(v_{k+1}) - \indexOf(v_k)$.
    \end{itemize}
\end{enumerate}
A computation walk $W$ is maximal if no $W+e$ computation walk exists for any $e \in E(G) \setminus E(W)$ in a computation grah $G$.
\end{definition}

From the definition, given computation walk $W$, we can see that $\tier(\isucc_W(v))=\tier(v)+1$, and $\tier(v)=\tier(\ipred_W(v))+1$ for a vertex $v$ on a walk $W$,
 which means no vertex can be repeated in walk $W$.
Hence, a computation walk can be regarded as a \textbf{computation path}, and the two notions are considered equivalent.
Furthermore, since no vertex or edge is repeated, a computation walk may also be treated as a graph—specifically, an induced subgraph of the full computation graph—consisting of the same vertex and edge set as the walk itself.
This allows us to apply graph-theoretic notions directly to walks when appropriate.

\begin{definition}[Surface of a Computation Walk]\label{def:surface}
For a computation walk $W$, the \textbf{surface} $\mathcal{S}(W)$ is defined as a sequence of transition cases indexed by cell positions $i \in \mathbb{Z}$:
\[ \mathcal{S}(W) = \{ T_i \mid i \in \mathbb{Z} \} \]
where each $T_i$ represents the status of cell $i$ after the execution of $W$, satisfying:
\begin{itemize}
    \item $T_i = T$ if $T$ is the transition case of the \textbf{last node} $v \in W$ such that $\indexOf(v) = i$.
    \item $T_i = \bot$ if no node $v \in W$ satisfies $\indexOf(v) = i$ (i.e., cell $i$ has not been visited).
\end{itemize}

\end{definition}

\begin{definition}\label{def:footmarks_graph}
Let $\mathcal{W}$ be a set of computation walks. The \textbf{footmarks graph} $F(\mathcal{W})$ is the graph defined by the union of all vertices and edges appearing in the walks of $\mathcal{W}$. Formally:
\[ 
V(F(\mathcal{W})) = \bigcup_{W \in \mathcal{W}} V(W), \quad E(F(\mathcal{W})) = \bigcup_{W \in \mathcal{W}} E(W). 
\]
When no ambiguity arises, we may \textit{simply} refer to $F(\mathcal{W})$ as the \emph{footmarks}. For an edge $e$ (or a set of edges $E$), we refer to $F(\mathcal{W}) + e$ (or $F(\mathcal{W}) + E$) as the \textbf{$e$-augmented} (or \textbf{$E$-augmented}) \textbf{footmarks}.
\end{definition}

\begin{definition}
A computation node $v$ is called a \textbf{folding node} (or \textbf{folding vertex}) if there exist two edges $e$ and $f$ incident to $v$, where $e$ is an incoming edge, $f$ is an outgoing edge, and $\indexOf(e) = \indexOf(f)$.
In this case, both $e$ and $f$ are called \textbf{folding edges} of $v$. \textit{If needed, we refer to $e$ as the incoming folding edge and $f$ as the outgoing folding edge of $v$.}
\end{definition}

\begin{definition}[Index-Precedent] \label{def:index-precedent_nodes}
The \textbf{index-precedent} of a computation node $v \in V(G)$ with $\tier(v) > 0$ is the unique transition case $P$ satisfying:
\[ (\indexOf(P), \tier(P), \state(P), \symbol(P)) = (\indexOf(v), \tier(v) - 1, \lastState(v), \lastSymbol(v)). \]
This unique case is formally denoted by $\IPrec_G(v)$, or simply $\IPrec(v)$ when the graph context is clear. Since $P$ is a transition case of a deterministic Turing machine, all nodes $u \in P$ share a unique transition rule, and consequently, all outgoing edges from $P$ share the same direction $\dir(P)$ determined by $\delta(\state(P), \symbol(P))$.
\end{definition}

\begin{definition}[Previous and Next Edges]
Let $G = (V, E)$ be a directed computation graph, and let $W = (e_1, e_2, \dots, e_k)$ be a computation walk in $G$. For any edge $e \in E$, we define the following:
\begin{itemize}
    \item \textbf{Walk-based previous and next edge:} \\
    If $e = e_i$ for some $1 < i < k$, then we define the previous and next edge within the walk $W$ as:
    \[
    \mathrm{prev}_W(e_i) = e_{i-1}, \quad \mathrm{next}_W(e_i) = e_{i+1}.
    \]
    For boundary cases, $\mathrm{prev}_W(e_1)$ and $\mathrm{next}_W(e_k)$ are undefined. \\
    These are individual edges, not sets, and are written in lowercase.

    \item \textbf{Graph-based previous and next edges:} \\
    Regardless of any walk, we define the set of graph-adjacent edges:
    \[
    \Prev_G(e) = \{ e' \in E \mid \term(e') = \init(e) \}, \quad
    \Next_G(e) = \{ e' \in E \mid \init(e') = \term(e) \}.
    \]
    These denote the sets of incoming and outgoing edges adjacent to $e$ in the graph $G$, and are written in capitalized form to reflect their set-valued nature.
If the context is clear, $_G$ can be omitted.
\end{itemize}
\end{definition}

\begin{definition}\label{def:index-precedent_index-succedent_edges}
Let $(u,v)$ be an edge in a computation graph. Then:

\begin{enumerate}
	\item \textbf{Index-precedent edges:}  
	The \emph{index-precedent} of $e=(u,v)$, denoted by $\vdown{e}$, is the set of edges $\vdown{e} = (v', u')$ such that:
	\begin{itemize}
	    \item $v' \in \IPrec(v)$, and
	    \item there exists an \emph{IPrec-folding node chain} $(u_0, \dots, u_m)$ ($m \ge 0$) such that $u_0 = u$, $u_m = u'$, $u_{i+1} \in \IPrec(u_i)$ for all $0 \le i < m$, and each $u_i$ is a folding node for $0 < i < m$.
	\end{itemize}
	An edge $\vdown{e}$ is referred to as a \emph{direct index-precedent} if $m \le 1$ (equivalently, $u = u'$ or $u' \in \IPrec(u)$); otherwise, $\vdown{e}$ is referred to as an \emph{indirect index-precedent}.

    \item \textbf{Index-succedent edges:}  
	The \emph{index-succedent} of $e=(u,v)$, denoted by $\vup{e}$, is the set of edges $\vup{e} = (v', u')$ such that: 
	\begin{itemize}
		\item $u' \in \ISucc(u)$, and 
		\item there exists an \emph{ISucc-folding node chain} $(v_0, v_1, \dots, v_n)$ ($n \ge 0$) such that $v_0 = v, v_n = v'$, $v_{i+1} \in \ISucc(v_i)$ for all $0 \le i < n$, and each $v_i$ is a folding node for $0 < i < n$.
    \end{itemize}
	An edge $\vup{e}$ is referred to as an \emph{direct index-succedent} if $n \le 1$ (equivalently, $v = v'$ or $v' \in \ISucc(v)$); otherwise $\vup{e}$ is referred to as an \emph{indirect index-succedent}
\end{enumerate}

 For an edge $e$, we write $\IPrec_G(e)$ and $\ISucc_G(e)$ to denote
    its index-precedent and index-succedent sets.
    When the underlying graph $G$ is clear from context, the subscript $G$
    may be omitted.
\end{definition}

The operators $\IPrec(\cdot)$ and $\ISucc(\cdot)$ are used for both
vertices and edges as in the original paper, since the intended meaning is determined by the argument type, and no ambiguity
arises in context.

\begin{definition}
A \textbf{merging edge} is an incoming edge to a computation node $v$ whose in-degree is greater than $1$ and out-degree is not $0$ in a computation graph.
A \textbf{splitting edge} is an outgoing edge from a computation node $v$ whose out-degree is greater than $1$ and in-degree is not $0$ in a computation graph.
\end{definition}

\subsection{Feasible Graph and Computation Walk Verification}
In this subsection, we review the notion of a grid-aligned footmark graph and a feasible graph.

The certificate-oblivious verifier Turing machine is a verifier whose head movements are independent of the certificates. 
A grid-aligned footmark graph is a footmark graph such that all computation walks are coordinate-coincident, as established in the original paper.
\begin{definition}[Certificate-Oblivious Verifier TM]
A verifier Turing machine $M$ is \textbf{certificate-oblivious} with respect to a problem instance $L_{fixed}$ if the tape head trajectory $H_M(L_{fixed}, Y, t)$ is uniquely determined by the certificate length $m = |Y|$ and the time step $t$, 
independent of the content of $Y$, for all $t$ up to the halting time $T(L_{fixed}, Y)$.
Formally, for any two certificates $Y, Y'$ with $|Y| = |Y'|$, the following holds:
$$H_M(L_{fixed}, Y, t) = H_M(L_{fixed}, Y', t) \quad \text{for all } t \le \min(T(L_{fixed}, Y), T(L_{fixed}, Y'))$$
\end{definition}

\begin{definition}[Grid-Aligned Footmarks]
A footmark graph $G = (V, E)$ is said to be \textbf{Grid-Aligned} if for any two computation walks $W_1, W_2 \in \mathcal{W}$ and any step $k$ such that $|W_1| \ge k$ and $|W_2| \ge k$, 
the vertices $v_k \in W_1$ and $u_k \in W_2$ satisfy:
$$(\tier(v_k), \indexOf(v_k)) = (\tier(u_k), \indexOf(u_k))$$
Specifically, for any vertex $v$ reached at the $k$-th step of a computation walk, we define $\timeOf_G(v) = k$, which is well-defined due to the grid-aligned property. 
For any subgraph $G' \subseteq G$, the coordinates and time step of any vertex $u \in G'$ are defined relative to the original context of $G$.
To simplify notation, we omit the subscript $G$ when the context is clear; thus, $\timeOf(u)$ refers to $\timeOf_G(u)$ for any $u \in G' \subseteq G$.
\end{definition}

\begin{remark}[Certificate-Oblivious to Grid-Aligned Correspondence]\label{rem:cotm_to_grid_aligned}
It was shown in \cite{lee2025PNP} that a Certificate-Oblivious Verifier TM $M_{\mathrm{COTM}}$ generates a footmark graph $G$ of $\mathcal{W}$ that satisfies the Grid-Aligned property where $\mathcal{W}$ is a set of all computation walks corresponding to all certificates for a particular problem instance.
\end{remark}

For a computation walk in a computation graph, its \emph{floor edges} and \emph{ceiling edges} characterize the earliest and latest segments of the walk, respectively, as defined in the original paper.
\begin{definition}
Let $W$ be a computation walk in a computation graph.

An edge $e \in W$ is called a \textbf{ceiling edge} if it has no index-successor edge in the walk $W$.

An edge $e \in W$ is called a \textbf{floor edge} if it has no index-predecessor edge in the walk $W$.
\end{definition}

\begin{remark}[Floor Edge Condition]\label{rem:floor_edge_condition}
It was shown in \cite{lee2025PNP} that an edge $e=(u,v)$ is a floor edge if and only if $\tier(v)=0$.
We will use this fact throughout the paper without further proof.
\end{remark}

\begin{definition}[Ceiling-Adjacent Edge]\label{def:ceiling_adjacent}
Let $e = (v, w)$ be an edge in a computation graph $G$, and let $E_f$ be a set of designated final edges. 
An edge $f = (u, v')$ is said to be \emph{weakly ceiling-adjacent} to $e$ toward $E_f$ if there exists a vertex sequence $(v_0, v_1, \dots, v_n)$ such that:
\begin{itemize} 
    \item $v_0 = v'$ and $v_0$ is a non-folding node (the terminal node of $f$);
    \item for all $0 \le i < n$, $v_i \in \IPrec_G(v_{i+1})$;
    \item every vertex $v_i$ for $0 < i < n$ is a folding node;
    \item either ($v_n = v$ and $v$ is a folding node) or ($v_n = w$ and $e = (v, w) \in E_f$).
\end{itemize}
If, in addition to being weakly ceiling-adjacent, there exists a path from $f$ to $e$ in $G$ such that no edge in the path—other than $f$ itself—shares the same index as $f$, then $f$ is said to be \emph{ceiling-adjacent} to $e$ toward $E_f$.
 This includes cases where $e$ and $f$ are directly adjacent.
When the context is clear, we simply refer to $f$ as being ceiling-adjacent to $e$.
\end{definition} 
If $e \in E_f$, there may exist two distinct ceiling-adjacent edges $f$ satisfying
$\indexOf(f) = \indexOf(e) \pm 1$.
For non-final edges (i.e., $e \notin E_f$), any ceiling-adjacent edge $f$ satisfies
$\indexOf(f) = \indexOf(e) - \dir(e)$.

While floor edge is globally well-defined within the graph structure, independent of any local walk or surface assignment, ceiling edge is solely dependent on the computation walk it belongs to, since the same edge can be ceiling edge or not depends on the walk.
Thus, we require a graph-theoretic notion of ceiling edge that holds globally as well, rather than being dependent on a particular walk.
To formulate this, we introduce a family of edges that play a symmetric role to floor edges in a computation graph, not in a walk. 

\begin{definition}[Cover and Ex-Cover Edges] \label{def:cover_edges}
Let $G$ be a computation graph and $E_f$ be the set of designated final edges of the computation walks. 

Let $C = (c_0, c_1, \dots, c_k)$ be a finite sequence of edges such that $c_0 = e$ and $c_k \in E_f$. Depending on the connectivity and adjacency properties of this chain, we categorize $e$ as follows:

\begin{enumerate}
    \item $e$ is a \textbf{cover edge} if for every $j = 0, \dots, k-1$, the edge $c_j$ is \emph{ceiling-adjacent} to $c_{j+1}$. The set of all cover edges is denoted by $\widehat{C}$.
    
    \item $e$ is an \textbf{ex-cover edge} if the following broader conditions are satisfied:
    \begin{itemize}
        \item for every $j = 0, \dots, k-1$, $c_j$ is \textbf{weakly ceiling-adjacent} to $c_{j+1}$;
        \item for every $j = 0, \dots, k-1$, there exists a path in $G$ from $c_j$ to $c_k$.
    \end{itemize}
    The set of all such edges is denoted by $\widehat{C}_{ex}$.
\end{enumerate}
We refer to the sequence $C$ as a \emph{cover edge chain} or an \emph{ex-cover edge chain}, respectively.
\end{definition}

\begin{remark}[Ceiling Edges as Cover Edges]
Following the results in \cite{lee2025PNP}, every ceiling edge is categorized as a cover edge. We adopt this fact throughout our analysis. Note that since the set of cover edges is a subset of ex-cover edges by definition, every ceiling edge is also an ex-cover edge.
\end{remark}

With the notions of floor and cover edges in place, we now define the following concept of a step-pendant edge, which includes vertical pendant behavior within the graph.

\begin{definition}[Ex-Pendant Edge]\label{def:ex_pendant_edge}
Let $e \in E(G)$ be an edge with index $i$ in the computation graph $G$.  
We say that $e$ is \emph{left-pendant} if there exists no edge in $E$ adjacent to $e$ with index $i - 1$, and the node of $e$ with index $i$ is not a folding node.  
We say that $e$ is \emph{right-pendant} if there exists no edge adjacent to $e$ with index $i + 1$, and the node of $e$ with index $i+1$ is not a folding node.  
If $e$ is either left-pendant or right-pendant, we say that $e$ is \emph{ex-pendant}, and if $e$ is both left-pendant and right-pendant, we say that $e$ is \emph{both-pendant}.
\end{definition}

\begin{definition}[Step-Pendant Edge] \label{def:step_pendant_edge}
Let $G$ be a computation graph with initial nodes $V_0$, and $E_f \subseteq E(G)$ be a set of designated final edges. Let $E_{\mathrm{init}}$ denote the set of all outgoing edges from $V_0$, and $\widehat{C}_{ex}$ be the set of ex-cover edges toward $E_f$ as defined in \cref{def:cover_edges}.

An edge $e = (u,v) \in E(G)$ is said to be \textbf{step-pendant with respect to $E_f$} if it satisfies any of the following conditions:
\begin{enumerate}
    \item \textbf{Horizontally Step-Pendant:} $e$ is an ex-pendant edge such that ($e \notin E_{\mathrm{init}} \cup E_f$)  or ($e$ is both-pendant and $e \in (E_{\mathrm{init}} \cup E_f) \setminus (E_{\mathrm{init}} \cap E_f)$);
    \item \textbf{Vertically Step-Pendant:} 
    \begin{itemize}
        \item $\IPrec_G(e) = \emptyset$ and $\mathrm{tier}(v) > 0$;
        \item $\ISucc_G(e) = \emptyset$ and $e \notin \widehat{C}_{ex}$, implying that $e$ is not a cover edge toward $E_f$.
    \end{itemize}
\end{enumerate}
Henceforth, we denote an edge satisfying these conditions as an \textbf{$E_f$-step-pendant edge}.
\end{definition}

The following definition incorporates vertical adjacency relationships to ensure consistency with index-based transitions:
\begin{definition}[Step-Adjacency] \label{def:step_adjacency}
An edge $e \in E(G)$ is said to be \textbf{step-adjacent} to an edge $f$ if and only if $e$ is adjacent to $f$ in the standard graph-theoretic sense, or $e \in \ISucc_G(f)$ or $e \in \IPrec_G(f)$.
\end{definition}

To define the key notion of a feasible graph as a residual component obtained from another, we first introduce the concept of a step-extended component based on the above notion of step-pendant edges.

\begin{definition}[Step-Extended Component]
\label{def:step_extended_component}
Given a computation graph $G$ and a set of \textbf{base edges} $E_R \subseteq E(G)$, the \textbf{step-extended component} of $E_R$ is the subgraph $C_E \subseteq G$ defined recursively as follows:

Let $C_E^{(0)}$ be the subgraph of $G$ consisting of the edges in $E_R$ and their incident vertices. For each $i \ge 0$, define $C_E^{(i+1)}$ by adding to $C_E^{(i)}$ all edges $e \in E(G) \setminus E(C_E^{(i)})$ (and their incident vertices) such that:
\begin{enumerate}
    \item[(i)] $e$ is step-adjacent to some edge in $E(C_E^{(i)})$, and
    \item[(ii)] $e$ is a step-pendant edge in the subgraph $G - E(C_E^{(i)})$.
\end{enumerate}

The process terminates at step $k$ when no such edges can be added. The resulting subgraph $C_E^{(k)}$ is called the \textbf{maximal step-extended component} of $E_R$ in $G$, denoted by $\mathsf{MSEC}_G(E_R)$.
\end{definition}

Now, we define the feasible graph.

\begin{definition}[Feasible Graph] \label{def:feasible_graph}
Given a computation graph $G$ with initial nodes $V_0$ and a set of designated final edges $E_f \subseteq E(G)$, let $E_{\mathrm{init}}$ denote the set of all outgoing edges from $V_0$. The set of \textbf{base edges} $E_R$ is defined as:
\[
E_R := \{ e \in E(G) \mid e \text{ is $E_f$-step-pendant}\}.
\]

The \textbf{feasible graph} $G_f$, denoted by $\mathsf{Feasible}(G)$, is obtained from $G$ by removing the edges and internal vertices of the maximal step-extended component $\mathsf{MSEC}_G(E_R)$, and any remaining isolated vertices $I(G)$:
\[
G_f := (G - E(\mathsf{MSEC}_G(E_R))) - I(G).
\]
\end{definition}

\begin{definition}[Feasible Walk and Feasible Edge] \label{def:feasible_walk_edge}]
Let $G$ be a feasible graph with respect to a set of designated final edges $E_f$. We define a feasible walk and a feasible edge in $G$ as follows:

\begin{itemize}
    \item A \textbf{feasible walk} is a computation walk in $G$ that contains at least one edge from $E_f$.
    \item A \textbf{feasible edge} is any edge that belongs to at least one \textit{feasible walk}.
\end{itemize}

For precision, we use the phrase \textbf{with respect to $E_f$} to denote this classification. When $E_f = \{e_f\}$ is a singleton, we refer to these as walks \textbf{to $e_f$}. For consistency, the term \textbf{feasible walk} and \textbf{feasible edge} is applied not only to the computed feasible graph but also to the original computation graph from which the feasible graph is derived.
\end{definition}

Besides the feasible graph related terms, we define the types of walks and edges for the verfication purpose.

\begin{definition}[Computing-Redundant and Computing-Futile Edges]\label{def:computing-redundant-futile-edge}
Let $G$ be a subgraph of the $e_t$-augmented footmarks $G_U$ for a verification target edge $e_t$.
\begin{itemize}
    \item \textbf{Computing-targeted walk:} A valid computation walk in $G$ that reaches a \textit{verification target edge} in $e_t$.
    \item \textbf{Computing-futile walk:} A maximal computation walk in $G$ that does not reach any edge in $e_t$.
    \item \textbf{Nested computing-futile walk:} A walk $W$ in $G$ is a \textit{nested computing-futile walk} if it is a subwalk of another computing-futile or computing-targeted walk, and there exists some edge $f \in E(G_U) \setminus E(G)$ such that the $f$-augmented walk $W + f$ forms a valid computation walk in $G_U$. 
    \item \textbf{Computing-effective edge:} An edge $e \in E(G)$ that belongs to some computing-targeted walk in $G$ to $e_t$.
    \item \textbf{Computing-essential edge:} An edge $e \in E(G)$ that is strictly contained within every valid computing-targeted walk in $G$ to $e_t$.
    \item \textbf{Computing-redundant edge:} A computing-effective edge $e \in E(G)$ for some computing-targeted walk in $G$ to $e_t$, such that there exists another computing-targeted walk to $e_t$ in the feasible graph of $G - e$.
    \item \textbf{Computing-futile edge:} An edge $e \in E(G)$ that does not belong to any computing-targeted walk in $G$ to $e_t$. Equivalently, removing $e$ from $G$ does not destroy the existence of any computing-targeted walk to $e_t$.
\end{itemize}
\end{definition}

\begin{definition}[Extendable and Extended Computing-Futile Edge]\label{def:extended_futile_edge}
Let $\mathcal{W}$ be a set of computation walks, and let $G_U$ denote $e_t$-augmented footmarks graph for $\mathcal{W}$ with target verification edge $e_t$.  
Let $G \subseteq G_U$ be the current feasible graph, and let $W \subset W' \in \mathcal{W}$ be a computing-futile walk with respect to $e_t$.  

\begin{itemize}
    \item An edge $e$ of $E(W') \setminus E(G)$ is called an \emph{extendable computing-futile edge} if $W$ is not a nested computing-futile walk, and the sequence $W+e$ (the concatenation of $W$ and $e$) is the valid computation walk.  
   
    \item An \emph{extended computing-futile edge} is an extendable futile edge (from $E_o$) that is included in the augmented graph $G + E_o$ where $E_o$ is the set of all extendable edges in $G$.
\end{itemize}
\end{definition}

\subsection{Overview of Polynomial-Time NP Verifier Simulation Framework}\label{sec:np-framework-summary}
This subsection provides a high-level, implementation-oriented summary of the
Polynomial-Time NP Verifier Simulation Framework introduced in \cite{lee2025PNP}.
The presentation is intentionally schematic: several proof-oriented details
and auxiliary constructions from the original work are omitted or simplified,
and the description here should not be read as a formal restatement of the
original algorithms.
Its sole purpose is to clarify the control flow and algorithmic roles of the
main components used in the implementation developed in this paper.

\begin{enumerate}
    \item \textbf{Computation Walks and Footmarks.}
    Given a certificate-oblivious verifier Turing Machine $M$ and an input instance,
    valid executions of $M$ are represented as \emph{computation walks},
    each corresponding to a concrete sequence of TM transitions.
    The union of all such walks forms the grid-aligned \emph{footmarks graph},
    which represents all consistent computation fragments
    discovered so far.

    \item \textbf{Candidate Edge Extension through Verification.}
    Starting from the initial footmarks, the algorithm attempts to extend computation walks by introducing new candidate edges that represent additional transition fragments. 
    These extensions are not immediately accepted; instead, they are treated as candidates subject to rigorous verification. 
    A walk verification procedure is applied to each candidate edge, leveraging the feasible graph to ensure global consistency. 
    If there exists at least one computation walk consistent with both the current footmarks and the candidate edge, the edge is validated and incorporated into the footmarks graph. 
    Otherwise, edges that cannot belong to any such valid walk are identified as futile and discarded.

    \item \textbf{Walk Verification and Pruning.}
    The verification mechanism is an iterative pruning process designed to eliminate computing-futile or computing-redundant edges while preserving valid computation walks directed toward the verification target. The identification of such edges is itself an iterative procedure:
    \begin{itemize}
    \item First, a computation walk is selected within the current graph. If this walk contains the target edge, the verification is complete.
    \item Otherwise, the algorithm simulates the removal of the first splitting edge in the walk to observe whether the graph undergoes a \textbf{total collapse}. A collapse indicates that the removed edge was an essential edge required for all valid computation walks.
    \item Following a collapse, the algorithm identifies a specific computing-futile or computing-redundant edge: the first edge in another computation walk that does not belong to the walk containing the identified essential edge, as observed in the graph state immediately preceding the collapse.
    \end{itemize}

    \item \textbf{Feasible Graph as a Verification Tool.}
    To determine whether a candidate edge can belong to at least one valid computation walk directed toward the verification target, 
    the framework constructs and maintains a \textbf{feasible graph}. This graph is not intended to represent the computation itself; 
    rather, it serves as a rigorous trimming and filtering structure. 
	Its primary function is to exclude edges that cannot participate in any locally consistent computation walk while strictly preserving all edges belonging to valid computation walks that reach the designated final edge. 
	Notably, the graph collapses to an empty set if the essential edge preceding the second-splitting edge is missing.

    \item \textbf{Iteration and Termination.}
    This process is repeated until either an accepting computation walk
    is identified or no further candidate extensions remain.
    In the former case the input is accepted; in the latter case it is rejected.
\end{enumerate}

This algorithmic skeleton clarifies the separation between
the semantic object of computation (the footmarks graph)
and the verification mechanism using the feasible graph,
and serves as the basis for the refinements and implementations
presented in this paper.

\begin{algorithm}[!ht]
\caption{Verification-Based Edge Extension (schematic)}
\label{alg:edge-extension}
\begin{algorithmic}[1]
\Require
Verifier TM $M$, tape string $L$ for NP problem, and certification length $m$.
\Ensure
\textbf{Yes} if an accepting computation exists; otherwise \textbf{No}

\State $G \gets$ the compution graph with Initial edges $E_0$.
\State $E_{\mathrm{cand}} \gets$ Initial boundary edge \Comment{Outgoing edges from the nodes of $E_0$}

\While{\True}
    \State $E_{\mathrm{valid}} \gets \emptyset$

    \ForAll{$e = (u,v) \in E_{\mathrm{cand}}$}
        \If{\Call{VerifyEdgeExtension}{$G, e$} = \True}
            \If{$v$ is an accepting computation node}
                \State \Return \textbf{Yes}
            \EndIf
            \State $E_{\mathrm{valid}} \gets E_{\mathrm{valid}} \cup \{e\}$
        \EndIf
    \EndFor

    \If{$E_{\mathrm{valid}} = \emptyset$}
        \State \Return \textbf{No} \Comment{No further extension possible}
    \EndIf

    \State $G \gets G + E_{\mathrm{valid}}$
    \State Update boundary edges $E_{\mathrm{cand}}$ of $G$ with $M$, $L$, $m$.
\EndWhile
\end{algorithmic}
\end{algorithm}

\begin{algorithm}[!ht]
\caption{VerifyEdgeExtension$(G, e_t)$}
\label{alg:verify-edge}
\begin{algorithmic}[1]
\Require Footmarks graph $G$, candidate edge $e_t$
\Ensure \True~ if $e_t$ is extendable; otherwise \False
\Function{VerifyEdgeExtension}{$G, e_t$}
\State $G' \gets$ Compute Feasible Graph of $G+e_t$ with the final edge $e_t$

\If{$G' = \emptyset$}
    \State \Return \False
\EndIf

\While{$G' \neq \emptyset$}
    \State $H \gets$ the copy of $G'$; $H' \gets \emptyset$ \Comment{$H'$: For storing the state right before collapse}
    \State $W \gets$ Empty List 
    \While{$H \neq \emptyset$} \Comment{Escape when total collapse occurs} 
        \State Construct a maximal computation walk $W$ in $H$  
	    \If{$e_t \in W$} [$W$ is a computing-targeted walk]
	        \State \Return \True
	    \EndIf 
	    \State Let $e$ the first splitting edge of $W$ if exists otherwise the final edge of it.
            \State $H' \gets$ the copy of $H$
            \State $H \gets$ Compute Feasible Graph of $G'-e$ with the final edge $e_t$
    \EndWhile
    \State $H \gets$  Compute Feasible Graph of $H'-e$ with preserving all the computing-futile walks.
    \State Identify the diverged edge $e'$  from $W$  in $H$
    \If{no such edge exists}
        \State \Return \False
    \EndIf

    \State $G' \gets$ Compute Feasible Graph of $G'-e'$ with the final edge $e_t$
\EndWhile
\State \Return \False
\EndFunction
\end{algorithmic}
\end{algorithm} 

We emphasize that the role of walk verification is solely to determine
whether an edge can participate in at least one valid computation walk.
Acceptance is not a property of verification, but is determined directly
by the target computation node of an edge, since each computation node
encodes the current state of the verifier Turing Machine.
The procedure \textproc{ComputeFeasibleGraph()} does not restrict the graph
to edges belonging exclusively to feasible walks. Instead, it removes
edges that cannot participate in any feasible computation walk, while
preserving all feasible walks that exist in the input graph.
At each iteration, the algorithm attempts to follow an arbitrary maximal
computation walk in the current graph. Feasibility is determined only
afterward through verification; no feasibility assumption is made during
walk construction.

Although feasible walks do not explicitly appear as outputs at this level,
they constitute the semantic units preserved by the feasible graph.

\paragraph{Summary of Time Complexity Bounds (Original Framework).}
We summarize the asymptotic time complexity bounds of the original framework~\cite{lee2025PNP} using consistent notation. 
The overall time complexity of the framework is bounded by $\bigO(p(n)^{20})$ for a certificate-oblivious verifier Turing machine $M$ with input size $n$ where $M$ halts at most $p(n)$ steps.
A more precise analysis can be conducted by considering the structural parameters of the computation graph. 
Let $G$ be the NP computation graph of $M$ with width $w$ and height $h$, and let $T_f$ denote the running time of \textproc{ComputeFeasibleGraph()}.

\begin{itemize}
    \item \textbf{Feasible graph construction.}
    \[
        T_f = \bigO\bigl(w^{2} h^{4} (h\log h + \log w)\bigr),
    \]
    which removes most of edges that cannot belong to any feasible
    computation walk while preserving all feasible walks.

    \item \textbf{Verification per candidate edge.}
    Determining whether a single edge participates in some feasible walk
    (via \textproc{VerifyExistenceOfWalk()}) requires
    \[
        \bigO\bigl(w^{2} h^{4} T_f\bigr).
    \]

    \item \textbf{Extension of verified candidate edges.}
    In \textproc{ExtendByVerifiableEdges()}, at most $\mathcal{O}(wh^2)$ candidate edges are examined for verification. 
    In each iteration, at least one edge is extended, and the process terminates within at most $\mathcal{O}(wh^2)$ iterations.
    Since each edge is extended at most once, and each edge is examined at most once per extension step, the total cost of this phase is
    \[
        \bigO(w^2h^4 \cdot w^2 h^4 T_f) = \bigO(w^4 h^8 T_f),
    \]
    
    \item \textbf{Total simulation cost.}
    The algorithm
    \textproc{SimulateVerifierForAllCertificates()}
    performs at most $\bigO(wh^2)$ edge extensions, yielding
    \[
        \bigO\bigl(w^4 h^8 T_f\bigr) = \bigO\bigl(w^6 h^{12} (h \log h + \log w)\bigr).
    \]
\end{itemize}

Since $M$ halts within $p(n)$ steps, then $w = \bigO(p(n))$ and $h = \bigO(p(n))$. 
Consequently, the total running time is bounded by$$ \bigO\bigl(p(n)^{19} \log p(n)\bigr), $$
confirming that the framework operates in polynomial time relative to the verifier's input size. 
For an NP problem instance of size $n'$, the complexity is $\bigO(p'(n') \log p'(n'))$, where $p'$ is a polynomial function of the verifier's complexity with respect to the problem size.

In this subsection, we have summarized the Polynomial-Time NP Verifier Simulation Framework via Feasible Graphs as introduced in the original paper, 
providing a sufficient foundation for the implementation and refinements presented in the subsequent chapters.

\newcommand{\circled}[1]{\text{\textcircled{#1}}}
\section{Explicit Turing Machines for NP-Complete Problems} \label{sec:TMs}

This section presents three explicit deterministic certificate-oblivious Turing machines constructed within the NP verifier simulation framework. 
Theoretically, as introduced in the original framework, a certificate-oblivious Turing machine can be constructed mechanically from an ordinary Turing machine, 
as detailed in Appendix \cref{subsec:sat_tm_input_dependent}. 
However, this approach incurs quadratic overhead; thus, we construct certificate-oblivious machines directly, 
achieving the same time complexity as their ordinary counterparts.

The machines are intentionally ordered to emphasize conceptual clarity and
generality, rather than increasing formal restriction.

We begin with an input-dependent construction whose state set is allowed to
grow with the input size.
Although this machine is not a fixed-state Turing Machine in the classical
sense, it provides the most transparent illustration of how verifier logic is
realized within the NP verifier simulation framework.

We then present a fixed-state construction for SAT with Boolean certificates.
This machine adheres strictly to the classical definition of a Turing Machine,
with constant alphabet and state set, and operates within an
$\bigO(n^{2})$ time bound with $w = h = \bigO(n)$.
While more intricate, this construction demonstrates that the framework admits
fully uniform and polynomially bounded realizations.

Finally, we present a Subset-Sum machine whose certificate alphabet is not
restricted to $\{T,F\}$.
This construction illustrates that the framework applies uniformly to NP
verifiers with arbitrary finite certificate alphabets, underscoring the
generality of the approach.

In the first three subsections, we introduce certificate-oblivious vertifier Turing machines designed to handle \textbf{well-formed inputs} only. 
In these constructions, the certificate string is restricted to a specific subset of the input alphabet, reflecting practical verification scenarios. 
In the subsequent subsections, we extend these models to handle \textbf{all input symbols} of the Turing machines through an input sanitization process,
 demonstrating the robustness and formal correctness of the framework even under unconstrained input conditions.

Before describing the concrete transition systems of the machines,
we fix the tape layout and symbol encoding conventions that are shared
by all constructions in this section.
These conventions are implementation-level choices and are not part
of the abstract framework defined in the preliminaries.

Throughout this paper, the input is placed starting at tape cell $0$ and occupies cells from $0$ through $n-1$.  
The head initially points to cell $0$, all other cells contain $\blank$ except the input cells,  
and the tape is assumed to extend infinitely in both directions.  

We adopt a digit-based tape representation for numerical fields using the tape input alphabet
\[
\Sigma \subseteq \{0,1,\dots,9, \, @,\, -, \, T,\, F,\, \&,\,
  \#,\, \spacedelim, \, ; \},
\]
where each symbol has a fixed operational interpretation.
In particular, `\_' denotes a space delimiter used to distinguish it from $\blank$. 
In addition, the machine uses auxiliary working symbols
\texttt{?}, \texttt{!}, $\sim$, \texttt{\$}, and the circled digits
\circled{0} \dots \circled{9}.
The circled digits act as reversible digit markers and are always
restored to their original digit values,
whereas the remaining auxiliary symbols are non-reversible working
markers used for intermediate overwriting during computation.

All transformations performed by the transition table preserve this encoding.

\begin{remark}
The encoding conventions fixed above are used uniformly across all
machines constructed in this section.
They are chosen to minimize transition complexity and to allow direct
implementation of verifier logic without auxiliary reductions.
We also treat the transition function as partial: any undefined transition
(or transition to a non-instantiated parameterized state) is interpreted
as an immediate rejecting transition.
\end{remark}
\subsection{SAT Turing Machine with Input-Dependent States} \label{subsec:sat_tm_input_dependent}

In this subsection, we present a deterministic certificate-oblivious verifer Turing machine 
for SAT whose control states are constructed, prior to execution, in a manner
that depends on the input, in particular on the number of variables.

Unlike the fixed-state constructions to be introduced in the following subsection,
 this machine does not require explicit arithmetic operations on the tape—such as subtraction, borrow propagation,
 or digit-wise decrement—as it encodes variable indices directly into the control states,
 thereby facilitating more intuitive comprehension.

The purpose of this construction is to demonstrate that the NP verifier
simulation framework does not inherently rely on a fixed finite set of control
states.
While the original proof assumes a Turing Machine with a fixed finite state
space, the simulation framework itself remains valid as long as the total number
of states is polynomially bounded in the size of the input.
In particular, the framework accommodates machines whose control states are
generated from the input, provided that this generation is effective and
polynomially bounded.

\paragraph{Input Format}
The input to $M_{\mathrm{SAT}}$ consists of a CNF formula followed
by an explicit Boolean assignment counter.
An example input has the form
\[
1\_2\_3\&4\_5\_6\&7\_8\_9\&-9\_10\_1\&-2\_6\_1\&3\_5\_1\&-4\_2\_10\#TFTFFFTTTT 
\]
Here each clause is encoded as three integer literals separated by a single
space symbol `\_'.
The symbol `$\&$' acts as a clause delimiter, and the symbol `\#' separates
the clause region from the assignment region.
Negative literals are indicated by a preceding minus sign `\texttt{-}'.
We emphasize that the space symbol `\texttt{\_}' is an explicit delimiter
between literals and variable indices, and is \emph{not} the blank symbol of
the Turing machine tape.

The suffix following `\#' encodes a Boolean assignment as a sequence of
truth values over the alphabet $\{\texttt{T},\texttt{F}\}$.
The assignment region, which serves as a certificate region, 
is treated as a finite counter that the machine systematically decrements to enumerate all possible truth assignments.

\paragraph{High-Level Structure}

The machine operates on a single tape containing a CNF formula and a candidate assignment. Rather than separate assignment and evaluation phases, the machine performs real-time evaluation while scanning literals, proceeding as follows:
\begin{enumerate}
\item \textbf{Index Encoding and Value Retrieval.}
Upon encountering a literal, the machine captures its variable index into the control states.
 It then moves forward past the delimiter (\texttt{\#}) to the assignment region, retrieves the corresponding truth value, and returns to the literal's position—marked with a temporary marker—to evaluate the result.
\item \textbf{On-the-fly Clause Evaluation.}
The machine evaluates each clause dynamically based on the retrieved values:
\begin{itemize}
\item \textbf{Check Mode:} If the current clause has not yet been satisfied, the machine continues to fetch and evaluate the next literal within the same clause. If no more literals exist in the clause, it rejects.
\item \textbf{Skip Mode:} As soon as a literal evaluates to \textit{True}, the clause is deemed satisfied. The machine then process remaining literals in that clause returning to the \textsc{Skip} state.
\end{itemize}
\item \textbf{Termination and Result.}
If all clauses are successfully satisfied (i.e., every clause reaches \textit{Skip} mode), the machine halts in \texttt{Accept} state. If any clause is fully evaluated to \textit{False} without a single \textit{True} literal, the machine transitions to a rejection state.
\end{enumerate}

\paragraph{Transition-Level Structure} 
\begin{itemize}
\item \textbf{Input-Dependent State Encoding of Variable Indices} \
The TM constructs variable indices dynamically within its control states. Upon scanning a digit $D$ 
 of a literal, the machine transitions from state \textsc{Inc}.$j$ to \textsc{Inc.}($10j+D$), effectively performing the calculation  through state changes.
  This process continues until a non-digit symbol (e.g., \texttt{\_}) is encountered, at which point the complete index 
 is stored in the state (\texttt{Forward}.$i$). This mechanism allows the machine to identify variables without using a separate counter on the tape.
\item \textbf{Direct Value Fetching via Index Decoupling} \
Once the index $i$ is encoded, the TM moves forward to the assignment region. Upon crossing the delimiter (\texttt{\#}), it enters a \textsc{Fetch}.$i$ state.
  For each cell it traverses in the assignment region, it decrements the index stored in the state suffix.
  When the index reaches zero, the TM retrieves the truth value (\texttt{T} or \texttt{F}) at that position.
   This value, along with the literal's polarity (positive or negative), is stored in a return state (\textsc{Backward.T}/\textsc{Backward.F}), which carries the information back to the original literal position.
\item \textbf{On-the-fly Evaluation: Check and Skip Modes} \
After returning to the literal and marking it, the TM enters one of two primary control modes:
\begin{enumerate}
\item \textbf{Check Mode:} If the evaluated literal is \textit{False}, the machine remains in \texttt{Check} mode and advances to the next literal in the current clause to find a satisfying value.
 If the machine encounters the clause delimiter \texttt{\&} without having found a true literal, it rejects, as a single unsatisfied clause renders the entire CNF formula false.
\item \textbf{Skip Mode:} If the literal is \textit{True}, the clause is satisfied. The machine enters \texttt{Skip} mode, the processing of remaining literals are redundant and only for certificate-oblivious property until the next clause delimiter (\texttt{\&}).
\end{enumerate}
The process resets to \texttt{Check} mode at the start of each new clause and repeats until the delimiter \texttt{\#} is reached again, signaling the completion of the formula evaluation.
\end{itemize} 

\begin{definition}[Deterministic SAT Turing Machine with Input-Dependent States]
\label{def:sat_tm_input_dependent_states}
Let $\varphi$ be a Boolean formula in CNF form, encoded on a single tape as described in earlier paragraph.
We define a deterministic single-tape Turing machine
\[
M_{\mathrm{SAT}}^{\mathrm{ID}}
= (Q_\varphi, \Sigma, \Gamma, \delta_\varphi, \qinit, \qacc, \qrej)
\]
with the following properties.
As required by the definition of a Turing machine, the state set of
$M_{\mathrm{SAT}}^{\mathrm{ID}}$ is finite, as the machine is defined for a fixed number of variables $m$, where $m$ is treated as a constant.
\begin{itemize}
    \item $Q$ is a finite set of states consisting of the control states
    listed in \cref{tab:sat_variable_transitions}, including the distinguished
    halting states $\qacc$ and $\qrej$.
    Parameterized state names such as $\textproc{Forward.N}$ or
    $\textproc{Dec.(N-1)}$ denote finite families of states instantiated variable up to index $m$; 
    therefore, they contribute only a constant factor to $|Q|$.

    \item $\Sigma$ is the input alphabet encoding a CNF formula together with
    an explicit Boolean assignment, using symbols
    $\{0,\ldots,9,-,\&, \_, \#, T, F\}$.

    \item $\Gamma$ is the tape alphabet satisfying
    $\Sigma \subseteq \Gamma$; it additionally includes the blank symbol $\blank$ and auxiliary symbols \texttt{`?'}, \texttt{`!'} and \texttt{`S'}.

    \item $\qinit = \textsc{Check.Forwarded}$ is the initial state, in which
    the machine begins scanning the input from left to right.

    \item $\delta : Q\setminus \{ \qacc, \qrej \} \times \Gamma \to Q \times \Gamma \times \{-1,+1\}$
    is the transition function defined explicitly by the rules given in
    \cref{tab:sat_variable_transitions}.

    \item $\qacc = \textsc{Accept}$ is the accepting state,
    entered if and only if all clauses are verified to be \textit{True} under
    the given assignment.

    \item $\qrej = \textsc{Reject}$ is the rejecting state,
    entered upon detection of an invalid encoding or a falsified clause.
\end{itemize}
\end{definition}

\begin{table}[!ht]
\caption{Transition function of the certificate-oblivious inputdepent-state Turing machine}
\label{tab:sat_variable_transitions}
\centering
\renewcommand{\arraystretch}{1.15}
\begin{tabular}{llllll}
\toprule
Current State & Read Symbol & Next State & Write & Move & Remark \\
\midrule
\multicolumn{6}{l}{\emph{Assign.Check phase}} \\
\midrule
Check        & \texttt{\spacedelim} & Check        & \texttt{\spacedelim} & R & Skip blanks \\
Check        & \texttt{-} & Not         & \texttt{\_} & R & Negation detected \\
Check        & \texttt{D} & Inc.D       & \texttt{?} & R & Begin number parsing \\
Not         & \texttt{D} & Inc.D       & \texttt{!} & R & Mark negated literal \\
Skip			& \texttt{D} & Inc.D  	  & \texttt{S} & R & Maker for satisfied \\
Skip        & \texttt{\&} & Check        & \texttt{\spacedelim} & R & Move to next clause \\
Skip        & \texttt{\#} & Accept      & \texttt{\spacedelim} & R & All clauses processed \\
Skip        & \texttt{*} & Skip        & \texttt{\spacedelim} & R & Bypass the literal \\
Check        & \texttt{\&} & Reject      & \texttt{\spacedelim} & R & False clause \\
Check        & \texttt{\#} & Reject      & \texttt{\spacedelim} & R & False clause termination \\

\midrule
\multicolumn{6}{l}{\emph{Assign.Inc phase}} \\
\midrule
Inc.N       & \texttt{\spacedelim} & Forward.N  & \texttt{\spacedelim} & R & End of literal \\
Inc.N       & \texttt{\&} & Forward.N  & \texttt{\&} & R & Clause boundary \\
Inc.N       & \texttt{\#} & Dec.(N-1)  & \texttt{\#} & R & Final literal \\
Inc.N       & \texttt{D} & Inc.(10N+D) & \texttt{\spacedelim} & R & Decimal expansion \\

\midrule
\multicolumn{6}{l}{\emph{Assign.Forward phase}} \\
\midrule
Forward.N   & \texttt{*} & Forward.N  & \texttt{*} & R & Scan to assignment area \\
Forward.N   & \texttt{\#} & Dec.(N-1)  & \texttt{\#} & R & Reached truth value\\

\midrule
\multicolumn{6}{l}{\emph{Assign.Dec phase}} \\
\midrule
Dec.N       & \texttt{T} & Dec.(N-1)  & \texttt{T} & R & Foward/Decrement in certificate area \\
Dec.N       & \texttt{F} & Dec.(N-1)  & \texttt{F} & R & Foward/Decrement in certificate area \\
Dec.0       & \texttt{T} & Backward.T & \texttt{T} & L & Selected literal true \\
Dec.0       & \texttt{F} & Backward.F & \texttt{F} & L & Selected literal false \\

\midrule
\multicolumn{6}{l}{\emph{Assign.Backward phase}} \\
\midrule
Backward.T  & \texttt{*} & Backward.T & \texttt{*} & L & Return to clause \\
Backward.F  & \texttt{*} & Backward.F & \texttt{*} & L & Return to clause \\
Backward.T  & \texttt{?} & Skip       & \texttt{\spacedelim} & R & True clause  \\
Backward.F  & \texttt{?} & Check       & \texttt{\spacedelim} & R & False literal \\
Backward.T  & \texttt{!} & Check       & \texttt{\spacedelim} & R & False from negated literal \\
Backward.F  & \texttt{!} & Skip       & \texttt{\spacedelim} & R & True from negated literal \\
Backward.T  & \texttt{S} & Skip       & \texttt{\spacedelim} & R & Restore True clause  \\
Backward.F  & \texttt{S} & Skip       & \texttt{\spacedelim} & R & Restore True clause \\
\bottomrule
\end{tabular}
\\
The states ($\cdot N$) of the TM are constructed during preprocessing
based on the input variable count.
\end{table}
\paragraph{Transition Symbols.}
We summarize the tape symbols and meta-symbols used in the transition
specification of the variable-state Turing machine.

\begin{itemize}
  \item \textbf{digit}: A concrete numeric symbol
        $\{0,1,\dots,9\}$ appearing on the tape, representing part of a
        decimal-encoded variable index.

  \item \textbf{$D$}: A meta-symbol denoting any digit for a digit-encoded variable index.
        States of the form \texttt{Inc.$D$} represent families of states parameterized by a concrete integer value
        and are instantiated during preprocessing for each digit.

  \item \textbf{$N$}: A meta-variable representing the integer value currently
        encoded in the state name.
        Transitions such as \texttt{Inc.(10$N$+D)} and \texttt{Dec.(N-1)}
        describe arithmetic updates on this encoded value and are realized
        concretely by instantiating the corresponding successor states.

  \item \textbf{? / ! / S}: \texttt{?} and \texttt{!} literal markers written during the \texttt{Check} phase,
        indicating whether the current literal is positive (\texttt{?})
        or negated (\texttt{!}). \texttt{S} literal marker is the clause is already satisfied.

  \item \textbf{T / F}: The truth values read from the assignment region of the tape,
        corresponding to the valuation of a variable under the current
        certificate.

  \item \textbf{*}: A wildcard symbol that is preserved during traversal and
        ignored by most transition rules.

  \item \textbf{\#}: A delimiter separating the formula region from the
        assignment region and marking the end of clause processing.

  \item \textbf{\&}: A clause separator symbol delimiting individual clauses
        within the formula region.
\end{itemize}

\paragraph{Bounded Parameterized States}
The parameterized state families $\mathrm{Inc}.N$, $\mathrm{Forward}.N$, and $\mathrm{Dec}.N$
are instantiated only for indices
\[
0 \le N \le k,
\]
where $k$ is the number of variables determined during preprocessing.
Hence, for every fixed input instance, the machine is a finite state Turing machine.

Any transition that would produce a state outside this range
(e.g., $\mathrm{Inc}.(10N+D)$ with value $>k$,
or $\mathrm{Dec}.(-1)$)
is treated as undefined and immediately causes the machine to enter the
\textsc{Reject} state.
Consequently, malformed literals or references to non-existent variables are
rejected during the parsing phase.

\begin{lemma}[Soundness of Certificate-Oblivious Input-Dependent Verifer]
\label{lem:soundness_sat_id}
If $M_{\mathrm{SAT}}^{\mathrm{ID}}$ halts in the \texttt{Accept} state on
input $\Phi \# \mathcal{V}$, then the valuation $\mathcal{V}$ is a satisfying assignment of the CNF
formula $\Phi$.
\end{lemma}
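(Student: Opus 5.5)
We argue by an invariant on the run of $M_{\mathrm{SAT}}^{\mathrm{ID}}$ over $\Phi\#\mathcal{V}$, read off \cref{tab:sat_variable_transitions}. The only transition into \textsc{Accept} is (\textsc{Skip}, \texttt{\#}). So an accepting run must reach the delimiter \texttt{\#} at the end of the formula in state \textsc{Skip}. The plan is to show that the control mode (\textsc{Check} versus \textsc{Skip}) at any point inside clause $C_j$ records exactly whether some literal of $C_j$ already processed is true under $\mathcal{V}$.

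The first step is a local lemma on literal processing. Suppose the machine starts in \textsc{Check} or \textsc{Skip} at the first cell of a literal $\ell$ with index $i$, written as \texttt{-}$d_1\dots d_r$ or $d_1\dots d_r$. We show it next returns to the formula region, just right of that literal's marker cell, in the following state. It is \textsc{Skip} if it began in \textsc{Skip}, or if it began in \textsc{Check} and $\mathcal{V}(\ell)=\mathrm{True}$. Otherwise it is \textsc{Check}.

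The proof traces the phases in order.
\begin{itemize}
\item The first digit is overwritten with a marker: \texttt{?} for a positive literal, \texttt{!} after \texttt{Not}, or \texttt{S} in \textsc{Skip}. Every other digit becomes \texttt{\_}, and the index $i$ is accumulated in \textsc{Inc}.$N$.
\item \textsc{Forward}.$i$ scans right to \texttt{\#}. \textsc{Dec} then counts $i-1$ assignment cells, so \textsc{Dec}.0 reads the $i$-th truth value $\mathcal{V}(x_i)$.
\item \textsc{Backward}.T or \textsc{Backward}.F moves left. This relies on the fact that \texttt{*} excludes the marker symbols and that the current literal's marker is the nearest marker to the left, because earlier markers were erased to \texttt{\_} on return.
\item The four marker rules implement exactly the truth table above.
\end{itemize}
Out-of-range indices lead to undefined states and hence \textsc{Reject}, so they cannot occur in an accepting run.

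The second step is induction over the literals within a clause and then over the clauses. At the start of every clause the state is \textsc{Check}: this holds initially, and (\textsc{Skip}, \texttt{\&}) leads to \textsc{Check}. Within a clause, \textsc{Skip} is absorbing, and \textsc{Check} turns into \textsc{Skip} exactly when a true literal is processed. So at the clause delimiter the state is \textsc{Skip} iff the clause is satisfied. A delimiter read in \textsc{Check} leads to \textsc{Reject}. Hence an accepting run implies that every clause was reached and satisfied, and $\mathcal{V}$ satisfies $\Phi$.

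The main obstacle is positional bookkeeping, not logic. We must check that the \textsc{Backward} scan stops at the correct marker. We must also deal with delimiters consumed during \textsc{Inc}: the last literal of a clause reads \texttt{\&} or \texttt{\#} inside \textsc{Inc}, and the \textsc{Inc}.N on \texttt{\#} rule jumps directly to \textsc{Dec}. In those cases the return lands before the delimiter, which is then read in the updated mode. We will handle this by stating the local lemma as ``the next non-\texttt{\_} symbol to the right of the marker is the first unprocessed token,'' and checking the three terminating cases separately. We will also fix the convention that the wildcard \texttt{*} excludes \texttt{?}, \texttt{!}, \texttt{S}, and \texttt{\#} where relevant.
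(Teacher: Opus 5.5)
Your proof is correct and follows essentially the same argument as the paper: acceptance is only possible via (\textsc{Skip}, \texttt{\#}), a clause can only be left in \textsc{Skip} because a delimiter read in \textsc{Check} rejects, and \textsc{Skip} is entered from \textsc{Check} only after a fetched literal evaluates to true (the \texttt{S} marker merely preserves an existing \textsc{Skip}). The paper phrases this as a contradiction with an unsatisfied clause, whereas you argue directly by an invariant and induction over literals and clauses; your explicit bookkeeping on marker uniqueness and on delimiters consumed inside \textsc{Inc} makes rigorous positional details that the paper leaves implicit.
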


\begin{proof}
We prove soundness by contradiction, based on the operational semantics induced
by the transition function of $M_{\mathrm{SAT}}^{\mathrm{ID}}$.

\begin{itemize}
    \item \textbf{Contradictory assumption.}  
    Assume that $M_{\mathrm{SAT}}^{\mathrm{ID}}$ halts in the
    \texttt{Accept} state, but the valuation $\mathcal{V}$ does \emph{not}
    satisfy $\Phi$.
    Then there exists a clause $\mathcal{C}$ in $\Phi$ such that every literal 
    $\ell$ in $\mathcal{C}$ evaluates to \texttt{False} under $\mathcal{V}$.

    \item \textbf{Acceptance implies successful processing of all clauses.}  
    By inspection of the transition table, the \texttt{Accept} state is reachable
    only after the machine has scanned the entire formula region, successfully
    processing each clause delimited by \texttt{\&}, and has reached the terminal
    symbol \texttt{\#}.
    In particular, the machine advances past a clause only by entering the
    \texttt{Skip} phase for that clause.

    \item \textbf{Entering \texttt{Skip} requires a satisfied literal.}  
    For any fixed clause $\mathcal{C}$, the machine enters \texttt{Skip} only if
    at least one literal in $\mathcal{C}$ is verified as satisfied.
    For each literal, the machine deterministically fetches the literal as follows:
        \begin{itemize}
        \item marks the position with a literal marker and parses the variable index using the digit-reading transitions and the state family \texttt{Inc.(10$N$+$D$)};
        \item moves to the valuation region through the \texttt{Fetch.$N$} state and enters the corresponding \texttt{Dec.$N$} state, where $N$ represents the exact offset to the valuation region;
        \item reaches the valuation cell $\mathcal{V}(i)$ by decrementing $N$ to zero;
        \item returns to the literal marker via the appropriate \texttt{Backward.T} or \texttt{Backward.F} state and \textbf{checks} consistency with the literal polarity(\texttt{?}, \texttt{!}). 
        			If the literal is \texttt{S}, then the  verification yields \texttt{True} regardless of the suffix of the state.
        \end{itemize}
    Only if this verification yields \texttt{True} does the machine enter \texttt{Skip}, and if the machine reaches the clause end \texttt{\&} with \texttt{Skip} state by using the \texttt{S} marker for each following fetch.

    \item \textbf{Deriving the contradiction.}  
    By the assumption, the clause $\mathcal{C}$ is unsatisfied by $\mathcal{V}$,
    so no literal in $\mathcal{C}$ can cause the machine to enter
    \texttt{Skip}.
    Hence, the machine cannot advance past $\mathcal{C}$ and therefore cannot
    reach the \texttt{Accept} state, contradicting the initial assumption.

    \item \textbf{Conclusion.}  
    No such unsatisfied clause exists.
    Therefore, every clause in $\Phi$ contains at least one literal that
    evaluates to \texttt{True} under $\mathcal{V}$, and $\mathcal{V}$ satisfies
    $\Phi$.
\end{itemize}
\end{proof}

\begin{lemma}[Completeness of Certificate-Oblivious Input-Dependent Verifer]
\label{lem:completeness_sat_id}
If a valuation $\mathcal{V}$ is a satisfying assignment of the CNF formula $\Phi$, then
$M_{\mathrm{SAT}}^{\mathrm{ID}}$ enters the \texttt{Accept} state on
input $\Phi \# \mathcal{V}$.
\end{lemma}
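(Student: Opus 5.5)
We argue by induction over the clauses of $\Phi$, following the run of $M_{\mathrm{SAT}}^{\mathrm{ID}}$ on $\Phi\#\mathcal{V}$ as determined by \cref{tab:sat_variable_transitions}; this is the forward counterpart of the argument in \cref{lem:soundness_sat_id}. Write $\Phi=\mathcal{C}_1\&\cdots\&\mathcal{C}_r$. The invariant $I_j$ states: the machine eventually enters \textsc{Check} with the head on the first cell of $\mathcal{C}_{j+1}$ (or, for $j=r$, on the final \texttt{\#}, having arrived in \textsc{Skip}). Moreover, every cell of $\mathcal{C}_1,\dots,\mathcal{C}_j$ has been overwritten by delimiters. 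The assignment region is never modified, since all \textsc{Dec} and \textsc{Forward} transitions rewrite the symbol they read.

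The core step is a single literal fetch. Suppose the head is at the first digit of literal $\ell$ with variable index $i\le k$, in mode \textsc{Check} (or \textsc{Not} after a \texttt{-}) or \textsc{Skip}. Then the machine writes the marker \texttt{?}, \texttt{!} or \texttt{S}, and passes through \textsc{Inc} states while accumulating $i$; these stay in range because $i\le k$ for a well-formed input. It then moves right in \textsc{Forward}.$i$, crosses \texttt{\#}, and enters \textsc{Dec}.$(i-1)$. The \textsc{Dec} index reaches $0$ exactly on cell $\mathcal{V}(i)$, and this cell exists because $|\mathcal{V}|=k$. Here one must check the off-by-one convention, namely that \textsc{Dec}.$(N-1)$ is entered on crossing \texttt{\#}. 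The return state is \textsc{Backward}.$\mathcal{V}(i)$, which moves left over \texttt{*} until it meets the unique marker. That marker is unique because earlier markers have been erased to \texttt{\_}. The resulting mode is then \textsc{Skip} iff the literal is true or the marker is \texttt{S}, and \textsc{Check} otherwise. I would state this as a sublemma and check it directly against the table, including the special case where the literal is the last one before \texttt{\#} (the \textsc{Inc}.$N$ on \texttt{\#} row).

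Given the sublemma, clause $\mathcal{C}_j$ is processed as follows. Because $\mathcal{V}$ satisfies $\Phi$, some literal of $\mathcal{C}_j$ is true. Let $\ell_s$ be the first true literal. The literals before $\ell_s$ keep the machine in \textsc{Check}, $\ell_s$ switches it to \textsc{Skip}, and every later literal is fetched with marker \texttt{S}, so the machine returns to \textsc{Skip}. At the delimiter, \textsc{Skip} on \texttt{\&} yields \textsc{Check} on the next clause, and on the last clause \textsc{Skip} on \texttt{\#} yields \textsc{Accept}. Hence the two rejecting rows, \textsc{Check} on \texttt{\&} or \texttt{\#}, are never used, and $I_r$ gives acceptance.

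The main obstacle is the bookkeeping inside the fetch sublemma, because the table uses wildcards \texttt{*}. I would need to confirm that during \textsc{Forward} and \textsc{Backward} scans no non-wildcard row fires early. For example, \textsc{Backward} must not stop at a stale marker. This requires that after a fetch, the cells of the parsed literal and its marker have all been overwritten with \texttt{\_}. I would prove that marker uniqueness holds by induction along the same literal-by-literal order. Termination follows since each fetch is finite and there are finitely many literals.
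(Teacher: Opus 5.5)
Your proposal is correct and takes essentially the same route as the paper. Both arguments trace the run clause by clause: each literal is fetched and evaluated through the \textsc{Inc}/\textsc{Forward}/\textsc{Dec}/\textsc{Backward} cycle, the first true literal switches the mode to \textsc{Skip}, the \texttt{S} marker keeps it there for the rest of the clause, and \textsc{Skip} on \texttt{\#} accepts; your version additionally makes explicit the invariant, marker uniqueness, the \textsc{Dec} off-by-one convention, the in-range parameterized states and the \textsc{Inc}.$N$-on-\texttt{\#} row, all of which the paper leaves implicit.
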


\begin{proof}
We prove completeness by following the deterministic execution of
$M_{\mathrm{SAT}}^{\mathrm{ID}}$ under a satisfying valuation
$\mathcal{V}$.

\begin{itemize}
    \item \textbf{Existence of a satisfying literal in each clause.}  
    By assumption, the valuation $\mathcal{V}$ satisfies $\Phi$.
    Hence, for every clause $\mathcal{C}$ in $\Phi$, there exists at least one literal $\ell \in \mathcal{C}$ 
    such that the assignment of the corresponding value to $\ell$ evaluates to \texttt{True}.

    \item \textbf{Correct evaluation of literals.}  
    During the \texttt{Check} phase for a fixed clause $\mathcal{C}$, the machine
    sequentially scans all literals in $\mathcal{C}$.
    For each literal, it:
    \begin{itemize}
        \item parses the variable index using the digit-reading transitions and
              the \texttt{Inc.(10$N$+D)} state family;
        \item enters the corresponding \texttt{Dec.$i$} state, where $i$ encodes
              the exact offset to the valuation region;
        \item reaches and reads the valuation cell $\mathcal{V}(i)$ associated
              with the literal;
        \item returns to the clause region via the appropriate
              \texttt{Backward.T} or \texttt{Backward.F} transition, and checks the truth value.
    \end{itemize}
    This process correctly computes the truth value of each literal under
    $\mathcal{V}$.

    \item \textbf{Deterministic advancement past satisfied clauses.}  
    When a literal $\ell$ satisfying $\mathcal{C}$ is encountered, the machine
    deterministically transitions into the \texttt{Skip} state.
    Once \texttt{Skip} state achieved in a clause, the \texttt{Skip} state is restored after fetching each remaining symbol in the clause via  \texttt{S} marker.

    \item \textbf{Progress through the entire formula.}  
    Since every clause contains at least one satisfying literal, the machine
    successfully enters \texttt{Skip} for each clause delimiter \texttt{\&}.
    Consequently, it advances monotonically through the entire formula region.

    \item \textbf{Reaching acceptance.}  
    After processing the final clause, the machine reaches the terminal symbol
    \texttt{\#}, upon which it enters the \texttt{Accept} state by definition of
    the transition function.
\end{itemize}

\noindent
Therefore, whenever $\mathcal{V}$ satisfies $\Phi$,
$M_{\mathrm{SAT}}^{\mathrm{ID}}$ deterministically halts in the
\texttt{Accept} state.
\end{proof}

\begin{lemma}
The Turing machine $M_{\mathrm{SAT}}^{\mathrm{ID}}$ is a certificate-oblivious verifier for SAT if only `T', `F' are allowed as certificate symbols (via preprocessing).

\begin{proof}
Let $L_{\text{fixed}}$ be a fixed problem instance of SAT. Let $H_i(Y)$ denote the head position of $M_{\mathrm{SAT}}$ after the $i$-th transition, and $L_{i,j}(Y)$ denote the symbol at the $j$-th tape cell after the $i$-th transition, given a certificate $Y \in \mathcal{C}$ of length $m = |Y|$.

We categorize the state space $Q$ into three computational phases: the \textbf{Clause Evaluation phase} ($Q_{\text{eval}}$: states with prefixes $\{\texttt{Check}, \texttt{Skip}, \texttt{Not}\}$), the \textbf{Fetch phase} ($Q_{\text{fetch}}$: state with the prefixes $\{\texttt{Forward}, \texttt{Inc}, \texttt{Dec}\}$), and the \textbf{Backward/Assignment phase} ($Q_{\text{sub}}$: states with prefix $\{\texttt{Backward}$.

We assert the following properties for all $i \ge 0$ and any $Y_1, Y_2 \in \mathcal{C}$ with $|Y_1| = |Y_2| = m$:
\begin{enumerate}
    \item \textbf{Head Invariance:} $H_i(Y_1) = H_i(Y_2)$.
    \item \textbf{Data Invariance:} $L_{i,j}(Y_1) = L_{i,j}(Y_2)$ whenever $L_{i,j}(Y_1), L_{i,j}(Y_2) \notin \{\text{`T', `F'}\}$ and $L_{i,j}(Y_1), L_{i,j}(Y_2) \notin \{\text{`!', `?', `S'}\}$.
    \item \textbf{Phase Invariance:} The machine is in the same phase for both $Y_1$ and $Y_2$.
\end{enumerate}

At the start of the computation, the head position $H_0(Y) = 0$ is uniform for all $Y$. The tape content $L_{0,j}$ is identical for all $j < |L_{\text{fixed}}|$. For the certificate region $|L_{\text{fixed}}| \le j < |L_{\text{fixed}}| + m$, $L_{0,j} \in \{T, F\}$ holds by the assumption that the certificate alphabet is restricted to $\{T, F\}$. For the empty tape area ($j < 0$ or $j \ge |L_{\text{fixed}}| + m$), $L_{0,j} = \epsilon$. Thus, the assertions hold trivially for $i=0$.

Assume, for the sake of contradiction, that there exists a minimum index $i>0$ such that the obliviousness property is violated for two certificates $Y_1, Y_2$ with $|Y_1| = |Y_2| = m$. That is, either $H_i(Y_1) \neq H_i(Y_2)$, or there exists a position $j$ such that $L_{i,j}(Y_1) \neq L_{i,j}(Y_2)$ where $L_{i,j}(Y_1), L_{i,j}(Y_2) \notin \{\text{`T', `F'}\}$. We examine the transition at step $i$ for each phase:

\begin{itemize}
    \item \textbf{Fetch:} Except for \texttt{Dec.0} state, every transition is identical, since every symbol processed in this phase does not contain symbols in $\{\text{`T', `F'}\}$ or $\{\text{`!', `?', `S'}\}$; thus $H_i(Y_1) = H_i(Y_2)$ and $L_{i,j}(Y_1) = L_{i,j}(Y_2)$, contradiction. For state \texttt{Dec.0}, the machine process symbols only in $\{\text{`T', `F'}\}$ and transit to Backward/Assignment phase with the \texttt{Backward.B} state to the left direction where suffix \texttt{`B'} is corresponding boolean symbol while preserving the tape symbol, which means $H_i(Y_1) = H_{i-1}(Y_1)$ and $L_{i,j}(Y_1), L_{i,j}(Y_2) \in \{\text{`T', `F'}\}$, contradicting the assumption.

	\item \textbf{Backward/Assignment:} Except for the symbols in $\{\text{`!', `?', `S'}\}$, every transition is identical independent of the state suffix category $.B$ (where $B \in \{\text{`T', `F'}\}$), which means $H_i(Y_1) = H_i(Y_2)$ and $L_{i,j}(Y_1) = L_{i,j}(Y_2)$, contradicting the assumption. For the symbols $\{\text{`!', `?', `S'}\}$, the direction of transition is identical to right ($H_i(Y_1) = H_{i-1}(Y_1) + 1 = H_{i-1}(Y_2) + 1 = H_i(Y_2)$), the output of the transition is identical to \spacedelim, and transit to clause evaluation phase (leading to the contradiction..
    
    \item \textbf{Clause Evaluation:} Transitions move right ($+1$). Thus, $H_i(Y_1) = H_{i-1}(Y_1) + 1 = H_{i-1}(Y_2) + 1 = H_i(Y_2)$. Regarding the tape content, every symbol $0$ is replaced by $\spacedelim$ independent of the internal state (e.g., \texttt{Check}, \texttt{Skip}, \texttt{Not}, thus $L_{i,j}(Y_1) = L_{i,j}(Y_2) = \text{`\_'}$. Finally, upon encountering the digit symbol `0'-`9', the machine deterministically transitions to the Fetch phase (\texttt{Inc.D}), unless the machine rejects. This consistency contradicts the minimality of $i$.
\end{itemize}

Since the trajectory $H_i(Y)$ is invariant and independent of the certificate content for a particular problem instance, no such minimal index $i$ can exist. Therefore, the execution graph satisfies the Grid-aligned Transition property.
\end{proof}
\end{lemma}

\begin{lemma}[Time and Space Complexity of Certificate-Oblivious Input-Dependent Verifer]\label{lem:time_space_sat_id}
Let $\Phi$ be a CNF formula of encoding length $n$.
For a fixed certificate (truth assignment), the input-dependent SAT Turing
Machine $M_{\mathrm{SAT}}^{\mathrm{ID}}$ runs in time $\bigO(n^{2})$ and
uses $\bigO(n)$ space.
\end{lemma}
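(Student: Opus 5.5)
The plan is to bound the running time by charging it to literal occurrences. Each occurrence induces one \emph{fetch round trip}, and each round trip costs $\bigO(n)$ steps. Space is bounded by showing the head never leaves a window of length $\bigO(n)$ around the input.

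\textbf{Decomposing the execution into rounds.} From the transition table (\cref{tab:sat_variable_transitions}), one round starts when the machine, in a \texttt{Check}/\texttt{Skip}/\texttt{Not} state, reads the first digit $D$ of a literal and overwrites it with a marker (\texttt{?}, \texttt{!} or \texttt{S}). The round then proceeds monotonically:
\begin{itemize}
\item the remaining digits are consumed rightward through the \texttt{Inc.}$N$ states;
\item the head moves right through \texttt{Forward.}$N$ until \texttt{\#};
\item it moves right through the certificate region via \texttt{Dec.}$N$, decreasing $N$ until $0$;
\item it turns left in \texttt{Backward.T}/\texttt{Backward.F} and runs left until it meets the unique marker symbol in the formula region.
\end{itemize}
The marker is unique because every earlier marker was overwritten by \texttt{\_} when its round ended, and the digits of the current literal were replaced by \texttt{\_} during parsing.

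\textbf{Cost of one round.} Let $k$ be the position of the literal. The rightward leg covers at most $(n_\Phi - k) + N \le n$ cells, since $N \le m \le n$ by the bounded-parameter convention; any larger $N$ is rejected immediately. The leftward leg covers the same distance. So each round costs $\bigO(n)$ steps.

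\textbf{Cost between rounds.} Between rounds, the evaluation states only move right over \texttt{\_}, \texttt{-}, \texttt{\&}, erasing or skipping symbols. Across the whole run these moves total $\bigO(n)$, because the evaluation-phase cursor (the marker position) strictly advances from one round to the next.

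\textbf{Total time.} The number of rounds is at most the number of literal occurrences, which is at most $n$. This gives $T = \bigO(n)\cdot\bigO(n) + \bigO(n) = \bigO(n^2)$. Termination follows from the same monotonicity: the marker position is strictly increasing, and halting occurs when \texttt{\#} is read in \texttt{Skip} or \texttt{Check}.

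\textbf{Space.} The head never goes left of cell $0$. In the rightward direction it never passes beyond the certificate cell $|\Phi\#| + N$, or one cell past the certificate before rejecting. Hence it stays within $[0, n + 1]$, and the space is $\bigO(n)$, where $n$ counts the certificate too (the certificate has length $m \le n$ anyway).

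\textbf{Main obstacle.} The hard part is making the round decomposition rigorous, specifically:
\begin{itemize}
\item uniqueness of the marker encountered by \texttt{Backward}, which is what guarantees the return trip stops at the correct literal;
\item the claim that the marker position strictly increases.
\end{itemize}
I would prove both by a simple invariant over rounds, by induction on the round number: \emph{at the start of each evaluation phase, the formula region left of the head contains no digits and no marker symbols.} This uses only the facts that every \texttt{Backward}-to-evaluation transition writes \texttt{\_}, and that \texttt{Inc} writes \texttt{\_} over each consumed digit. Given the invariant, the per-round $\bigO(n)$ bound and the count of at most $n$ rounds are immediate.
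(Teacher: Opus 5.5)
Your proposal is correct and follows essentially the same route as the paper. Each literal occurrence triggers one $\bigO(n)$ fetch round trip to the certificate region, there are $\bigO(n)$ literal occurrences, and all bookkeeping overwrites existing input cells, so the time is $\bigO(n^2)$ and the space is $\bigO(n)$. Your marker-uniqueness and monotone-cursor invariant, together with the explicit accounting of the inter-round scanning, make rigorous what the paper's proof only asserts informally.
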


\begin{proof}
The complexity bounds follow from the structural properties of the machine
execution for a fixed certificate.

\begin{itemize}
  \item \textbf{Clause-region traversal.}
  The machine performs a left-to-right traversal of the clause region of the
  tape.
  The total length of the clause region is $\bigO(n)$, and each literal symbol is
  encountered at most once during this traversal.

  \item \textbf{Literal evaluation cost.}
  For each literal $\ell = x_i$ encountered in the clause region, the machine
  evaluates its truth value by fetching it from the certificate region of the tape.
  This involves a traversal whose length is proportional to the distance 
  between the clause region and the $i$-th certificate cell, which is $\bigO(n)$ in 
  the worst case. After reading the certificate cell, the machine returns to 
  the clause region to resume evaluation. Hence, each literal evaluation 
  incurs an $\bigO(n)$ time cost.

  \item \textbf{Total number of literal evaluations.}
  The total number of literals appearing in the formula is $\bigO(n)$.
  Since the certificate values are not erased after evaluation, and the same 
  variable may appear in multiple clauses, value fetch operations may be 
  repeated for the same variable across different clauses.

  \item \textbf{Total running time.}
  By combining the $\bigO(n)$ time required for each literal evaluation with 
  the $\bigO(n)$ total number of literals, the total running time for a fixed 
  certificate is $\bigO(n^2)$.

  \item \textbf{Tape usage.}
  The machine operates solely on the input tape containing the formula and
  the fixed certificate.
  All auxiliary information is encoded using a constant number of marker
  symbols that overwrite existing tape cells.
  No additional work tape or unbounded storage is used, and the number of
  non-blank tape cells remains $\bigO(n)$ throughout the computation.

  \item \textbf{Control-state dependence.}
  Although the number of control states depends on the number of variables and
  is polynomially bounded, this dependence affects only the finite control and
  does not contribute to tape space usage.
\end{itemize}

Therefore, for a fixed certificate, the total running time of
$M_{\mathrm{SAT}}^{\mathrm{ID}}$ is $\bigO(n^{2})$, and the space usage
is $\bigO(n)$.
\end{proof}

\begin{remark}
Although the control-state space of
$M_{\mathrm{SAT}}^{\mathrm{ID}}$ depends on the input,
its size is polynomially bounded.
Consequently, this construction remains compatible with the NP verifier
simulation framework, which requires polynomial bounds on both time and state
space but does not require a fixed finite set of states.
\end{remark}

\begin{theorem}[Input-Dependent Polynomial-Time Verifier for \textsc{SAT}]
\label{thm:sat-id-verifier}
The input-dependent Turing machine
$M_{\mathrm{SAT}}^{\mathrm{ID}}$ decides whether a certificate is accepted by the verifier for the
\textsc{SAT} problem.

More precisely, for every input string $W = \Phi \# \mathcal{V}$ ,
where $\Phi$ is a CNF formula of length $n$ and $\mathcal{V}$ is a truth assignment encoded as
a certificate,
\begin{itemize}
    \item $M_{\mathrm{SAT}}^{\mathrm{ID}}$ accepts $W$ if and only if
    the valuation $\mathcal{V}$ satisfies the formula $\Phi$, and
    \item $M_{\mathrm{SAT}}^{\mathrm{ID}}$ halts in $\bigO(n^2)$ time and
    uses $\bigO(n)$ tape space.
\end{itemize}
Consequently, \textsc{SAT} admits a deterministic polynomial time verifier using
linear tape space.
\end{theorem}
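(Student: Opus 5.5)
The theorem is a packaging of the three lemmas that precede it, so the plan is to assemble them rather than reargue machine behaviour. For the first item, the ``if'' direction is \cref{lem:completeness_sat_id}: a satisfying $\mathcal{V}$ drives $M_{\mathrm{SAT}}^{\mathrm{ID}}$ into \textsc{Accept}. The ``only if'' direction is \cref{lem:soundness_sat_id}: acceptance on $\Phi\#\mathcal{V}$ forces $\mathcal{V}$ to satisfy $\Phi$. Combining the two gives the biconditional for every well-formed input $W=\Phi\#\mathcal{V}$ with $\mathcal{V}\in\{T,F\}^*$.

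Malformed inputs (indices out of range, non-$\{T,F\}$ certificate symbols) I would handle by the convention fixed in the remark on partial transitions. Any undefined transition or non-instantiated parameterized state is an immediate rejection. Such inputs are therefore rejected, and the stated equivalence is read as ``rejects'' for them.

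For the second item, I would invoke \cref{lem:time_space_sat_id} directly for the $\bigO(n^2)$ time and $\bigO(n)$ space bounds. One point needs to be made explicit: the running time is measured in the total length $|W|=n+|\mathcal{V}|+1$. Since $|\mathcal{V}|$ is at most the number of variables, and each variable index is written in $\Phi$, we have $|\mathcal{V}|=\bigO(n)$. The bound in $|W|$ is therefore still $\bigO(n^2)$.

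I would also argue that the machine always halts. Each literal is processed by a round trip whose length is bounded by the tape extent. The clause-region cursor only moves right, because markers are overwritten by $\spacedelim$ after each evaluation. This gives an acceptor in the sense of the NP definition.

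The final sentence follows from two further facts. The state set is polynomially bounded: the families $\mathrm{Inc}.N$, $\mathrm{Forward}.N$ and $\mathrm{Dec}.N$ are instantiated only for $N\le k$, which gives $\bigO(k)=\bigO(n)$ states. By the remark following \cref{lem:time_space_sat_id}, this is compatible with the framework. Together with the certificate-obliviousness lemma, $M_{\mathrm{SAT}}^{\mathrm{ID}}$ is a polynomial-time, linear-space verifier for \textsc{SAT} with $q(n)=n$ and $p(n)=\bigO(n^2)$.

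The main obstacle I expect is termination and the halting guarantee on every input, which the lemmas only implicitly cover. The difficulty is showing that the $\mathrm{Backward}$ sweep always finds a marker and that $\mathrm{Dec}.N$ never walks off into blanks without rejecting. My first approach would be a short invariant: after each return to the clause region, exactly one fresh marker from $\{?,!,S\}$ lies to the left of the head and no other such marker remains on the tape. Any blank read in the $\mathrm{Dec}$ or $\mathrm{Forward}$ phase hits an undefined transition and therefore rejects.
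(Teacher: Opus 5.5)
Your proposal is correct and follows the paper's proof, which likewise obtains the theorem by combining \cref{lem:soundness_sat_id}, \cref{lem:completeness_sat_id}, and \cref{lem:time_space_sat_id}. Your extra remarks on termination, malformed inputs, and $|\mathcal{V}|=\bigO(n)$ go beyond what the paper argues and are not needed for the statement. If you keep them, fix two points. First, $\mathrm{Forward}.N$ has a wildcard rule, so it would not reject on a blank; it simply never reaches one, because \texttt{\#} is never overwritten before halting. Second, non-$\{\texttt{T},\texttt{F}\}$ symbols in certificate cells that are never fetched are not rejected (e.g.\ \texttt{1\#T5} is accepted), which is why the paper handles them with a separate sanitization phase.
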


\begin{proof}
The claim follows by combining the correctness and resource bounds established previously.

\begin{itemize}
\item \textbf{Soundness.}
By \cref{lem:soundness_sat_id}, if $M_{\mathrm{SAT}}^{\mathrm{ID}}$ accepts an input $\Phi \# \mathcal{V}$, 
then for every clause of $\Phi$, the machine has identified a literal that evaluates to \texttt{True} under $\mathcal{V}$. 
Hence, the valuation $\mathcal{V}$ satisfies $\Phi$, and no invalid certificate is accepted.

\item \textbf{Completeness.}
By \cref{lem:completeness_sat_id}, if a valuation $\mathcal{V}$ satisfies the CNF formula $\Phi$, 
then for every clause, the machine encounters a satisfying literal during the \texttt{Check} phase 
and deterministically transitions through the \texttt{Skip} mechanism. 
Thus, $M_{\mathrm{SAT}}^{\mathrm{ID}}$ reaches the \texttt{Accept} state on input $\Phi \# \mathcal{V}$.

\item \textbf{Time and space complexity.}
By \cref{lem:time_space_sat_id}, for an input where $\Phi$ has length $n$, 
the machine $M_{\mathrm{SAT}}^{\mathrm{ID}}$ halts in $\bigO(n^2)$ time and uses $\bigO(n)$ tape space. 
In particular, the verifier runs in deterministic polynomial time.
\end{itemize}

Combining the above arguments, $M_{\mathrm{SAT}}^{\mathrm{ID}}$ is a correct deterministic polynomial-time verifier for the \textsc{SAT} problem.
\end{proof}

\paragraph{Discussion}
This input-dependent Turing machine construction departs from the classical model in that the control states are not finite independently of the input.
 The purpose of this design is not to optimize the traditional Turing machine,
  but to demonstrate that a fixed control-state set is required only relative to a specific certificate and need not be uniform across all possible inputs.
Presenting this construction first emphasizes that the NP verifier simulation framework does not fundamentally rely on a fixed finite state set.
 Instead, input-dependent state machines should be viewed as an extension that illustrates the framework's adaptability.
 The Turing machine introduced in the following subsection strictly conforms to the classical Turing machine model, although its transition logic is more complex to define.

\subsection{SAT Turing Machine With Fixed States} \label{subsec:sat_tm_fixed}

In this subsection, we describe a deterministic certificate-oblivious Turing machine
$M_{\mathrm{SAT}}$ that verifies satisfiability of Boolean formulas
in CNF form under an explicit assignment encoding.
As required by the formal definition of a Turing machine, $M_{\mathrm{SAT}}$
has a finite set of states and a finite tape alphabet.
What is relevant here is that the number of states is a fixed constant,
independent of the input size and the number of variables.
All numerical generality is handled solely through tape symbols
and structured scan patterns.

\paragraph{Tape Structure.}
The input format, tape alphabet, and overall tape structure of this
fixed SAT Turing Machine are identical to those described in the
previous subsection.
Conceptually, the tape is divided into the following regions:
\begin{enumerate}
    \item a \emph{clause region}, containing the CNF formula encoded as
    integer literals with explicit space delimiters and clause separators
    `\texttt{\&}'.
    This region also serves as a working area, in which temporary markers and
    modified symbols are written during decrement operations, borrow
    propagation, and clause evaluation;

    \item a \emph{certificate region}, located to the right of the symbol
    `\texttt{\#}', encoding a Boolean vector over the alphabet
    $\{\texttt{T},\texttt{F}\}$.
    This vector represents the current candidate truth assignment, with entries
    ordered according to the variable indices.
    During execution, whenever a variable is assigned, 
    the corresponding certificate symbol is erased and replaced with `\_'. 
\end{enumerate}

\paragraph{Symbol Conventions.}
The symbol `\texttt{D}' denotes an arbitrary decimal digit in $\{0,\dots,9\}$.
The symbols `\texttt{T}' and `\texttt{F}' denote Boolean truth values
\textsc{True} and \textsc{False}, respectively.

\paragraph{High Level Semantics.}
The machine operates by repeating the following phases:

    \begin{enumerate}
    \item \textbf{Right-to-Left Decrement Phase (Fecth and Assignment):} 
    \begin{itemize}
        \item The head fetches the first truth value from the certificate region and replaces it with \spacedelim.
        \item The head moves leftward over the clause region, \textbf{decrementing} each variable's index. 
        \item If an index reaches zero, the corresponding variable is considered \textbf{``assigned''} for evaluation. Otherwise, the value \textbf{is decremented} without placement.
    \end{itemize}

    \item \textbf{Left-to-Right Scan Phase (Clause Evaluation):} 
    \begin{itemize}

	\item The head scans from left to right across the entire clause region.
	\item Leading zeros are erased by replacement with `\_'.
	\item Unassigned literals within clauses that are not yet completely assigned are recorded in the control state using the state schema \texttt{Unknown.S} and \texttt{UnknownTerm.S}.
	Here \texttt{UnknownTerm} denotes a literal currently being processed,
	while \texttt{Unknown} denotes a variable whose value is not yet determined.
	(The parameter \texttt{S} specifies the scan mode and will be explained below.)
        \item Clauses are evaluated using only assigned variables:
        \begin{itemize}
            \item If any assigned variable satisfies a clause, the machine
            transitions to \texttt{Skip} to bypass further evaluation.
            \item If all variables in a clause are assigned but the clause evaluates
            to \texttt{False}, the machine enters \texttt{Reject}.
        \end{itemize}
        \item After the evaluation, the head enters to the certificate region to fetch the next variable.
    \end{itemize}

    \item \textbf{Iteration and Halting:} 
	\begin{itemize}
	    \item The machine \textbf{repeats} the fetch/assignment and scan phases for all variables and clauses.
	    \item If every clause \textbf{evaluates to} \texttt{True} (i.e., all scans end in \texttt{Skip.Forwarded}), the machine halts in \texttt{Accept}. 
	    The state \texttt{Skip.Forwarded} indicates that all clauses \textbf{already evaluate to} \texttt{True} under the current assignment prefix; 
	    thus, no further verification is required once it reaches the end of the clause region `\#'.
	    \item Otherwise, it \textbf{fetches} the next assignment from the certificate region and \textbf{repeats} the process.
    \end{itemize}
\end{enumerate}
The decrement operation is the mechanism for preventing multiple fetching for a single variable.
It renormalizes variable indices so that the next single certificate symbol
corresponds to the variable whose index becomes zero.
If an index reaches $0$, the corresponding variable is considered ``assigned''.
(This occurs precisely for the variable whose index was $1$ before the decrement, i.e., the variable aligned with the next certificate symbol.)
After assignment, the certificate is shortened and the remaining indices
are shifted, allowing the process to repeat until all variables are assigned.

\paragraph{Key Features.}
\begin{itemize}
    \item \textbf{Single Fetch and Multiple Assignment:} 
    The assignment region is treated as an area holding truth values for the variables not yet processed.
    Variables are processed in a fixed order across all clauses, rather than clause by clause.

    \item \textbf{State-Based Unknowns:} Unassigned variables are tracked via the control states (\textsc{Unknown.S}, \textsc{UnknownTerm.S}),
     where $S \in \{\textsc{Forwarded}, \textsc{Free}\}$. This allows the TM to perform partial evaluations without the need to mark intermediate results on the tape.

    \item \textbf{Forwarded Satisfaction Propagation:}
    During clause scanning, the machine maintains two scan modes,
    \textsc{Free} and \textsc{Forwarded}. 
    the \textsc{Forwarded} scan mode indicates that all clauses to the left of the current
    position have already evaluated to \texttt{True} under the current assignment
    prefix, and this invariant is propagated across subsequent clauses.
    Acceptance occurs only if the clause region end marker is reached while in
    \textsc{Forwarded} mode.

    \item \textbf{Single Tape and Fixed States:} 
    Numerical operations, assignment management, and clause evaluation are
    encoded entirely through tape symbols and state transitions; the number
    of states is independent of input size or number of variables.
\end{itemize}

We define a deterministic single-tape Turing machine
$M_{\mathrm{SAT}}$
that verifies satisfiability of Boolean formulas in CNF form under an
explicit assignment encoding.

All numerical processing, including variable indices and the fetch position, 
is handled exclusively through tape symbols and structured scan patterns, rather than through state expansion.

\begin{definition}[Deterministic Certificate-Oblivious SAT Verifier Turing Machine]\label{def:tm_sat_fixed}
The deterministic single-tape Turing machine
$M_{\mathrm{SAT}}$ is defined as a tuple
\[
M_{\mathrm{SAT}}
= (Q, \Sigma, \Gamma, \delta, \qinit, \qacc, \qrej),
\]
where the components are specified as follows.

\begin{itemize}
    \item $Q$ is a finite set of states consisting of the control states
    listed in \cref{tab:sat_fixed_transitions_forward,tab:sat_fixed_transitions_backward},
     including the distinguished halting states $\qacc$ and $\qrej$.

    \item $\Sigma$ is the input alphabet encoding a CNF formula together with
    an explicit Boolean assignment, using symbols $\{0,\ldots,9,-,\&, \_, \#, T, F \}$.

    \item $\Gamma$ is the tape alphabet satisfying
    $\Sigma \subseteq \Gamma$, and additionally includes empty tape symbol $\blank$.

    \item $\qinit = \textsc{Check.Forwarded}$ is the initial state, in which
    the machine begins scanning the input from left to right.

    \item $\delta : Q \times \Gamma \to Q \times \Gamma \times \{-1,+1\}$
    is the transition function defined explicitly by the rules given in
    \cref{tab:sat_fixed_transitions_forward,tab:sat_fixed_transitions_backward}.

    \item $\qacc = \textsc{Accept}$ is the accepting state,
    entered if and only if all clauses are verified to be satisfied under
    the given assignment.

    \item $\qrej = \textsc{Reject}$ is the rejecting state,
    entered upon detection of an invalid encoding or a falsified clause.
\end{itemize}

\end{definition}
The corresponding transition table fully determines the machine's behavior, including
digit-wise subtraction, borrow propagation, literal evaluation, and clause
verification.
All numerical generality, including variable indices and the fetch positions is handled exclusively through tape symbols and structured scan
modes, rather than through state expansion.

\begin{table}[!ht] \label{fixed_clause_evaluation}
\centering
\caption{Certificate-Oblivious Transition Function Part 1} \label{tab:sat_fixed_transitions_forward}
\begin{tabular}{llllll}
\textbf{State} & \textbf{Read} & \textbf{Next} & \textbf{Write} & \textbf{Move} & \textbf{Comment} \\
\hline
Check.S & $\_$ & Check.S & $\_$ & R & bypass delimiters \\
Check.S & $-$ & CheckNot.S & $-$ & R & negated state \\
Check.S & $0$ & Unknown.S & $\_$ & R & erase leading zero \\
Check.S & D & UnknownTerm.S & D & R & enter variable term \\
Check.S & T & Skip.S & T & R & clause satisfied \\
Check.S & F & Check.S & F & R & continue scan \\
Check.S & \& & Reject & $\_$ & R & premature rejection \\
Check.S & \# & Reject & $\_$ & R & reject in the final clause \\
\hline
CheckNot.S & $\_$ & CheckNot.S & $\_$ & R & bypass delimiters \\
CheckNot.S & T & Check.S & T & R & false from negation\\
CheckNot.S & F & Skip.S & F & R & satisfied from negation\\
CheckNot.S & D & UnknownTerm.S & D & R & negation to unassigned term\\
CheckNot.S & 0 & Unknown.S & $\_$ & R & erasing leading zero \\
\hline
Unknown.S & $\_$ & Unknown.S & $\_$ & R & bypass \\
Unknown.S & 0 & Unknown.S & $\_$ & R & erasing leading zero \\
Unknown.S & D & UnknownTerm.S & D & R & enter variable term \\
Unknown.S & T & Skip.S & T & R & satisfied \\
Unknown.S & F & Unknown.S & F & R & undecided \\
Unknown.S & $-$ & UnknownNot.S & $-$ & R & negated unknown \\
Unknown.S & \& & Check.Free & \& & R & next clause \\
Unknown.S & \# & Fetch & \# & R & enter fetch phase \\
\hline
UnknownNot.S & $\_$ & UnknownNot.S & $\_$ & R & bypass \\
UnknownNot.S & T & Unknown.S & T & R & undecided from negation \\
UnknownNot.S & F & Skip.S & F & R & satisfied from negation\\
UnknownNot.S & D & UnknownTerm.S & D & R & enter variable term \\
UnknownNot.S & 0 & Unknown.S & $\_$ & R & erase leading zero \\
\hline
UnknownTerm.S & D & UnknownTerm.S & D & R & scan digits \\
UnknownTerm.S & $\_$ & Unknown.S & $\_$ & R & exit variable term \\
UnknownTerm.S & \& & Check.Free & \& & R & next clause \\
UnknownTerm.S & \# & Fetch & \# & R & enter fetch phase \\
\hline
Skip.S & * & Skip.S & $ * $ & R & skip satisfied clause \\
Skip.S & $0$ & Skip.S & $\_$ & R & erase leading zero \\
Skip.S & D & SkipTerm.S & D & R & enter variable term \\
SkipTerm.S & D & SkipTerm.S & D & R & enter variable term \\
SkipTerm.S & \_ & Skip.S & \_ & R & exit variable term \\
Skip.Free & \& & Check.Free & \& & R & next clause \\
Skip.Free & \# & Fetch & \# & R & enter fetch phase \\
Skip.Forwarded & \& & Check.Forwarded & \& & R & forwarded next clause \\
Skip.Forwarded & \# & Accept & \# & R & all clauses satisfied \\
SkipTerm.Free & \& & Check.Free & \& & R & next clause \\
SkipTerm.Free & \# & Fetch & \# & R & enter fetch phase \\
SkipTerm.Forwarded & \& & Check.Forwarded & \& & R & forwarded next clause \\
SkipTerm.Forwarded & \# & Accept & \# & R & all clauses satisfied \\
\hline
Fetch & $\_$ & Fetch & $\_$ & R & bypass space delims \\
Fetch & T & Backward.T & $\_$ & L & fetch True value\\
Fetch & F & Backward.F & $\_$ & L & fetch False value\\
Fetch & * & Reject & $\_$ & L & reject \\

\hline
\end{tabular}
\caption*{Clause evaluation and fetch transitions.
Here $\_$ denotes a space delimiter and $D$ denotes a decimal digit.}
\end{table}

\begin{table}[!ht] \label{tab:fixed_transition_subtraction}
\centering \caption{Certificate-Oblivious Transition Function Part 2} \label{tab:sat_fixed_transitions_backward}
\begin{tabular}{llllll}
\textbf{State} & \textbf{Read} & \textbf{Next} & \textbf{Write} & \textbf{Move} & \textbf{Comment} \\
\hline
Backward.B & * & Backward.B & * & L & scan left \\
Backward.B & 1 & BackwardFrom1.B & 0 & L & subtract 1 \\
Backward.B & 0 & Borrow.B & 9 & L & borrow \\
Backward.B & D & BackwardInTerm.B & D$-1$ & L & decrement digit \\
Backward.B & $\epsilon$ & Check.Forwarded & $\epsilon$ & R & evaluation pahse \\
\hline
Borrow.B & 0 & Borrow.B & 9 & L & propagate borrow \\
Borrow.B & D & BackwardInTerm.B & D$-1$ & L & resolve borrow \\
\hline
BackwardInTerm.B & D & BackwardInTerm.B & D & L & scan left in variable \\
BackwardInTerm.B & $\_$ & Backward.B & $\_$ & L & exit literal(left) \\
BackwardInTerm.B & \& & Backward.B & \& & L & exit clause(left) \\
BackwardInTerm.B & $-$ & Backward.B & $-$ & L & exit lieteral(left) \\
BackwardInTerm.B & $\epsilon$ & Check.Forwarded & $\epsilon$ & R & evaluation phase \\
\hline
BackwardFrom1.B & D & BackwardInTerm.B & D & L & no borrow needed \\
BackwardFrom1.B & $\_$ & Assign.B & $\_$ & R & move for assignment \\
BackwardFrom1.B & $-$ & Assign.B & $-$ & R & move for assignment \\
BackwardFrom1.B & \& & Assign.B & \& & R &  move for assignment \\
BackwardFrom1.B & $\epsilon$ & Assign.B & $\epsilon$ & R & assignment site \\
\hline
Assign.B & 0 & Backward.B & B & L & finalize assignment \\
\hline
\end{tabular}
\caption*{Backward arithmetic and assignment transitions.
State suffix `.B` is instantiated to either `.T` or `.F` by symbolic lifting generally, and either `.1' or `.0' for \textsc{Borrow.B}}
\end{table}

\begin{remark}[Symbolic States and Symbols]
States suffixed by \texttt{.S}, \texttt{.B}, and \texttt{.D} denote scan modes, 
boolean variables, and parameterized digits, respectively. 
A transition labeled by $D$ applies uniformly to all decimal digits $0$--$9$, 
and a state suffixed by \texttt{.B} is instantiated to either \texttt{.T} 
or \texttt{.F}. This notation is purely representational; the induced 
Turing machine has a fixed, finite transition function.
\end{remark}
\begin{lemma}[Soundness of Acceptance]
\label{lem:tm_sat_fixed_soundness}
If the Turing machine $M_{\mathrm{SAT}}$ halts in the
\textsc{Accept} state on input $\Phi \#\mathcal{V}$, then the Boolean
assignment encoded by the certificate string $\mathcal{V}$ satisfies the CNF
formula $\varphi$.
\begin{proof}
The machine $M_{\mathrm{SAT}}$ enters the \textsc{Accept} state
if and only if it reaches the end-of-input marker $\#$ while remaining
in the \textsc{Skip.Forwarded} state.

By the transition rules, the state \textsc{Skip.Forwarded} can encounter 
`\#' only after a complete evaluation of the clause region, during which 
every clause has been verified to be \texttt{True} under the variable 
substitutions performed so far. Thus, acceptance implies that during 
the final forward scan, every clause has been fully determined, 
and no clause evaluates to \texttt{False}.

We examine how a clause is verified as \texttt{True}. When all former 
clauses have evaluated to \texttt{True}, a clause is marked satisfied 
(by entering \textsc{Skip.Forwarded}) if and only if at least one of its 
literals evaluates to \texttt{True}. A positive literal evaluates to 
\texttt{True} only when its variable has been replaced by \texttt{T}, 
and a negative literal evaluates to \texttt{True} only when its variable 
has been replaced by \texttt{F}. No clause can be marked satisfied without 
such a substitution.

Variable substitutions are produced exclusively during the \textsc{Fetch} 
phase. Each \textsc{Fetch} operation selects the next certificate symbol 
(\texttt{T} or \texttt{F}) and initiates a backward scan that decrements 
all remaining variable indices by one. When an index reaches zero, 
the corresponding variable is identified and replaced consistently 
throughout the clause region by the fetched Boolean value. 
If an index does not reach zero, no substitution occurs for that 
variable during the current iteration.

Because all variable counters are decremented by one in each
\textsc{Fetch} phase, the $i$-th variable is substituted precisely when
the $i$-th certificate symbol is fetched.
Thus, every substitution performed by the machine corresponds exactly
to the Boolean assignment encoded by the certificate string $\mathcal{V}$.

Since acceptance requires that every clause is satisfied during the
final forward scan, it follows that under the assignment specified by
$\mathcal{V}$, each clause of $\varphi$ evaluates to \texttt{True}.
Therefore, the certificate $\mathcal{V}$ satisfies the CNF formula $\varphi$.
\end{proof}
\end{lemma}

\begin{lemma}[Completeness of Certificate-Oblivious SAT Verifier]\label{lem:tm_sat_fixed_completeness}
Let $\Phi \# \mathcal{V}$ be a well-formed input such that the Boolean assignment
encoded by the certificate string $\mathcal{V}$ satisfies the CNF formula $\Phi$.
Then the Turing machine $M_{\mathrm{SAT}}$ does not enter the
\textsc{Reject} state and eventually halts in the \textsc{Accept} state.
\end{lemma}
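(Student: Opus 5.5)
We prove the statement by induction over rounds, following the deterministic execution of $M_{\mathrm{SAT}}$ on a well-formed input $\Phi\#\mathcal{V}$ with $\mathcal{V}$ satisfying $\Phi$. A round is one Left-to-Right Scan Phase followed by one Right-to-Left Decrement Phase (\textsc{Fetch}, then \textsc{Backward}/\textsc{Borrow}/\textsc{Assign}). The invariant we maintain at the start of the $k$-th scan (the head at cell $0$ in \textsc{Check.Forwarded}, reached via the $\blank$ transitions of \textsc{Backward.B} or \textsc{BackwardInTerm.B}) is:
\begin{enumerate}
\item[(I1)] every literal whose variable index is at most $k$ has been replaced by \texttt{T} or \texttt{F}, equal to its value under $\mathcal{V}$;
\item[(I2)] every other literal carries its original index decreased by $k$, possibly with leading zeros;
\item[(I3)] the certificate region holds $\mathcal{V}_{k+1}\dots\mathcal{V}_m$, preceded by \texttt{\_} cells.
\end{enumerate}
For $k=0$ the invariant holds trivially.

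First we show that no scan rejects. A scan enters \textsc{Reject} only in two ways: \textsc{Check.S} reads \texttt{\&} or \texttt{\#}, or \textsc{Fetch} reads a non-\texttt{T}/\texttt{F} symbol. The state \textsc{Check.S} persists only while the clause contains no \texttt{T} literal, no \texttt{F} negated literal, and no digit term. A digit term moves the machine to \textsc{Unknown}/\textsc{UnknownTerm}, from which reaching \texttt{\&} yields \textsc{Check.Free} rather than rejection. So reaching \texttt{\&} in \textsc{Check} means every literal of the clause is assigned and false under $\mathcal{V}$, by (I1). This contradicts satisfiability. For \textsc{Fetch}, we argue that it is entered only when some clause still contains an unassigned literal. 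By (I2), that literal has index at least $1$ relative to the remaining certificate, so by (I3) a \texttt{T}/\texttt{F} symbol is available after the \texttt{\_} cells. This requires well-formedness: every index is at most $m$.

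Second, we verify the decrement step preserves the invariant. \textsc{Fetch} consumes $\mathcal{V}_{k+1}$ and enters \textsc{Backward.B}. Sweeping left, each remaining numeric term is decremented exactly once: the low digit is handled by \textsc{Backward.B}, borrows by \textsc{Borrow.B}, and the rest of the term is skipped by \textsc{BackwardInTerm.B}. A term equal to $1$ (i.e.\ \texttt{1} preceded by a delimiter or zeros already erased) triggers \textsc{BackwardFrom1.B} $\to$ \textsc{Assign.B}, which overwrites the \texttt{0} with B. This gives (I1) for index $k+1$, while the other terms satisfy (I2) with $k+1$. This is the step I expect to be the main obstacle. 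One must carefully check the interaction of leading zeros (erased to \texttt{\_} only during scans) with the \textsc{BackwardFrom1} test. In particular, a term like \texttt{01} must not be mistaken for a multi-digit term. The point to establish is that every scan erases the leading zeros created by the previous decrement before the next decrement occurs, so that at decrement time each term has no leading zeros except possibly a lone \texttt{0}, which cannot occur for an unassigned term. I would handle this by a small auxiliary claim on term shapes, proved by case analysis on the \textsc{Check/Unknown/Skip} zero-erasing rows.

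Finally we prove termination in \textsc{Accept}. A scan that meets no unassigned term in any clause, and no falsified clause, stays in the \textsc{Forwarded} mode, because \textsc{Free} is entered only via \textsc{Unknown}/\textsc{UnknownTerm} at \texttt{\&}. It therefore reaches \texttt{\#} in \textsc{Skip.Forwarded} or \textsc{SkipTerm.Forwarded} and accepts. After at most $m$ rounds, all literals are assigned by (I1). Satisfiability then guarantees that every clause reaches \textsc{Skip}, so such a scan occurs, possibly earlier, and the machine accepts without ever rejecting.
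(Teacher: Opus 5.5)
Your proposal is correct and follows essentially the same route as the paper's proof. Both argue that \textsc{Reject} can only be reached from \textsc{Check.S} at a clause whose literals are all assigned and false, which satisfiability excludes, and that once every clause contains a true literal the scan stays in \textsc{Forwarded} mode and accepts at \#. Your round invariant (I1)--(I3) makes explicit what the paper takes for granted from its soundness argument, namely that the $i$-th fetched symbol is substituted for exactly variable $i$. You also cover two points the paper skips: that \textsc{Fetch} never runs past the end of the certificate, and that leading zeros are erased before they can defeat the \textsc{BackwardFrom1} test.
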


\begin{proof}
Assume that the certificate $\mathcal{V}$ satisfies the formula $\varphi$.
We show that $M_{\mathrm{SAT}}$ cannot enter the
\textsc{Reject} state during its execution.

By inspection of the transition rules, the machine enters
\textsc{Reject} only if it encounters a clause whose literals have all
been evaluated and found to be false.
Unassigned variables, represented by the states
\textsc{Unknown.S} and \textsc{UnknownTerm.S} where
$\textsc{S} \in \{\textsc{Forwarded}, \textsc{Free}\}$, never trigger a rejection.

During the iteration of the fetch/assignment and evaluation phases, when a corresponding variable is identified and substituted, 
the satisfying literal causes the machine to transition into the \textsc{Skip.Free} or \textsc{Skip.Forwarded} state for that clause. 
Once a clause enters a \textsc{Skip.S} state, it keep \textsc{Skip} prefix for the remainder of the clause and cannot lead to a rejection.

Thus, the iteration of fetch/assignment and evaluation continues until every clause evaluates to \texttt{True}, 
as there is no clause that evaluates to \texttt{False} under the satisfying assignment. 
Furthermore, since $\mathcal{V}$ satisfies $\Phi$, every clause in $\Phi$ contains at least one literal that evaluates to \texttt{True} under the assignment encoded by $\mathcal{V}$.
Therefore, no clause can cause the machine to enter the \textsc{Reject} state.

After all variables are processed, the machine completes a forward scan
in the \textsc{Skip.Forwarded} state and reaches the end marker $\#$, at which point it enters \textsc{Accept}.

Hence, $M_{\mathrm{SAT}}$ halts in \textsc{Accept} whenever the certificate $\mathcal{V}$ satisfies the formula $\varphi$.
\end{proof}

A crucial feature of $M_{\mathrm{SAT}}$ is its symmetric transition structure regarding the certificate symbols $T$ and $F$. For any state $q$ and index $i$ in the \textsc{Fetch} or \textsc{Assignment} phases, the transition function $\delta(q, T)$ and $\delta(q, F)$ are defined to result in identical head movements and state transitions, differing only in the symbol written to the tape. This structural symmetry ensures that the computation trajectory—defined by the sequence of coordinate pairs $(\text{index}, \text{tier})$—remains invariant regardless of the specific Boolean values encoded in the certificate.

\begin{lemma}
The Turing machine $M_{\mathrm{SAT}}$ is a certificate-oblivious verifier for SAT if only `T', `F' are allowed as certificate symbols (via preprocessing).

\begin{proof}
Let $L_{\text{fixed}}$ be a fixed problem instance of SAT. Let $H_i(Y)$ denote the head position of $M_{\mathrm{SAT}}$ after the $i$-th transition, and $L_{i,j}(Y)$ denote the symbol at the $j$-th tape cell after the $i$-th transition, given a certificate $Y \in \mathcal{C}$ of length $m = |Y|$.

We categorize the state space $Q$ into three computational phases: the \textbf{Clause Evaluation phase} ($Q_{\text{eval}}$: states with prefixes $\{\texttt{Check}, \texttt{Skip}, \texttt{Unknown}\}$), the \textbf{Fetch phase} ($Q_{\text{fetch}}$: state $\{\texttt{Fetch}\}$), and the \textbf{Subtraction/Assignment phase} ($Q_{\text{sub}}$: states with prefixes $\{\texttt{Backward}, \texttt{Borrow}, \texttt{Assign}\}$).

We assert the following properties for all $i \ge 0$ and any $Y_1, Y_2 \in \mathcal{C}$ with $|Y_1| = |Y_2| = m$:
\begin{enumerate}
    \item \textbf{Head Invariance:} $H_i(Y_1) = H_i(Y_2)$.
    \item \textbf{Data Invariance:} $L_{i,j}(Y_1) = L_{i,j}(Y_2)$ whenever $L_{i,j}(Y_1), L_{i,j}(Y_2) \notin \{\text{`T', `F'}\}$.
    \item \textbf{Phase Invariance:} The machine is in the same phase for both $Y_1$ and $Y_2$.
\end{enumerate}

At the start of the computation, the head position $H_0(Y) = 0$ is uniform for all $Y$. The tape content $L_{0,j}$ is identical for all $j < |L_{\text{fixed}}|$. For the certificate region $|L_{\text{fixed}}| \le j < |L_{\text{fixed}}| + m$, $L_{0,j} \in \{\text{`T', `F'}\}$ holds by the assumption that the certificate alphabet is restricted to $\{T, F\}$. For the empty tape area ($j < 0$ or $j \ge |L_{\text{fixed}}| + m$), $L_{0,j} = \epsilon$. Thus, the assertions hold trivially for $i=0$.

Assume, for the sake of contradiction, that there exists a minimum index $i>0$ such that the obliviousness property is violated for two certificates $Y_1, Y_2$ with $|Y_1| = |Y_2| = m$. That is, either $H_i(Y_1) \neq H_i(Y_2)$, or there exists a position $j$ such that $L_{i,j}(Y_1) \neq L_{i,j}(Y_2)$ where $L_{i,j}(Y_1), L_{i,j}(Y_2) \notin \{\text{`T', `F'}\}$. We examine the transition at step $i$ for each phase:

\begin{itemize}
    \item \textbf{Fetch:} During this phase, the machine processes tape symbols. If $L_{i-1,j}(Y_1) = L_{i-1,j}(Y_2) = \text{`\_'}$, the transition moves the head as $H_i(Y_1) = H_{i-1}(Y_1) + 1 = H_{i-1}(Y_2) + 1 = H_i(Y_2)$. Conversely, if $L_{i-1,j}(Y_1) = L_{i-1,j}(Y_2) \in \{\text{`T', `F'}\}$, the transition updates the head position as $H_i(Y_1) = H_{i-1}(Y_1) - 1 = H_{i-1}(Y_2) - 1 = H_i(Y_2)$, and the machine transitions to the Subtraction/Assignment phase (\texttt{Backward.B} for $B \in \{\text{`T', `F'}\}$). Any other symbols, specifically `$\epsilon$', result in a rejection while moving the head in the left direction. Since the branch decision depends solely on the delimiter structure and not the specific truth values, this consistency contradicts the assumption of divergence at $i$.
	
	\item \textbf{Subtraction/Assignment:} Every transition is identical except for \texttt{Assign.B} state, independent of the state suffix category $.B$ (where $B \in \{\text{`T', `F'}\}$), as detailed in \cref{tab:fixed_transition_subtraction}. Since the transition rules do not depend on the state category $B$ except for the retention of the suffix, the head trajectory and tape updates remain identical for any state other than \texttt{Assign.B} state. In the \texttt{Assign.B} state, the tape symbol is updated to the corresponding value $B \in \{T, F\}$; however, this update consistently satisfies the data invariance property, as it is a deterministic result of the assignment phase. Furthermore, upon completion at step $i$, the machine deterministically transitions to the Clause Evaluation phase (\texttt{Check.Forwarded}) for any certificate. This consistency contradicts the minimality of $i$.
    
    \item \textbf{Clause Evaluation:} Transitions move right ($+1$). Thus, $H_i(Y_1) = H_{i-1}(Y_1) + 1 = H_{i-1}(Y_2) + 1 = H_i(Y_2)$. Regarding the tape content, every leading $0$ is replaced by $\_$ independent of the internal state (e.g., \texttt{Check}, \texttt{Skip}, \texttt{Unknown}, \texttt{CheckNot}, \texttt{UnknownNot}), thus $L_{i,j}(Y_1) = L_{i,j}(Y_2) = \text{`\_'}$. Since all other tape symbols are preserved, $L_{i,j}(Y_1) = L_{i,j}(Y_2)$ holds. Finally, upon encountering the delimiter '\#', the machine deterministically transitions to the Fetch phase (\texttt{Fetch}), unless the machine rejects. This consistency contradicts the minimality of $i$.
\end{itemize}

Since the trajectory $H_i(Y)$ is invariant and independent of the certificate content for a particular problem instance, no such minimal index $i$ can exist. Therefore, the execution graph satisfies the Grid-aligned Transition property.
\end{proof}
\end{lemma}

\begin{lemma}[Time and Space Complexity of Certificate-Oblivious SAT Verifier]
\label{lem:tm_sat_fixed_time}
Let $\Phi$ be a CNF formula of encoding length $n$. 
For a fixed certificate (truth assignment) of length $m < n$, the fixed-state SAT verifier Turing machine $M_{\mathrm{SAT}}$ runs in $\bigO(nm)$ time and uses $\bigO(n)$ space.

\begin{proof}
We analyze the complexity of the verifier by tracking the access frequency of each variable and literal occurrence.

For the time complexity, observe that the verifier $M_{\mathrm{SAT}}$ processes variables sequentially. Each variable is fetched exactly once and assigned to its occurrences via a backward scan.
This process costs $\bigO(n)$ time per variable, as the head moves forward over each tape cell only once for assignment before returning to the backward scan, totaling at most $3n$ transitions per variable iteration.
After a variable is fetched, $M_{\mathrm{SAT}}$ enters an evaluation phase 
during which it scans the clause region to update the status of each 
clause (satisfaction, falsification, or uncertainty) under the current 
partial assignment. During this phase, the head moves exclusively in the 
forward direction. Since the total encoding length is $\bigO(n)$, this evaluation phase costs $\bigO(n)$ time per variable. 
Crucially, once a variable is fetched, it is permanently replaced by the symbol `\spacedelim', ensuring that each variable is processed exactly once. 
Therefore, for $m$ variables, the total running time is bounded by $\bigO(n) \cdot m = \bigO(nm)$. 
Finally, if all variables are processed without satisfying the formula, the verifier halts and rejects, guaranteeing termination.

For the space complexity, note that the verifier never introduces new tape symbols beyond a constant-size alphabet. 
Both the backward fetch and forward evaluation phases operate within the initial tape boundaries of size $n + m + 1 = \bigO(n)$. 
The head position remains within a constant offset from the active regions of the input. 
Consequently, the total number of tape cells accessed is linear in the original input size, yielding a space complexity of $\bigO(n)$.
\end{proof}
\end{lemma}

\begin{theorem}[Polynomial-Time Verifier for SAT]\label{thm:3sat-verifier}
The Turing machine $M_{\mathrm{SAT}}$ defined in
\cref{def:tm_sat_fixed} is a correct NP verifier for the
\textsc{SAT} problem.

More precisely, for every input string $W$  encoding a CNF formula $\Phi$ of length $n$
 together with a Boolean certificate $\mathcal{V}$ of length $m<n$,
\begin{itemize}
    \item $M_{\mathrm{SAT}}$ accepts $W$ if and only if the certificate
    $\mathcal{V}$ evaluates the formula $\Phi$ to \textsc{True}, and
    \item $M_{\mathrm{SAT}}$ halts in $\bigO(n^2)$ time and uses $\bigO(n)$ tape
    space.
\end{itemize}
Hence, \textsc{SAT} admits a deterministic quadratic-time and linear-space
verifier.
\begin{proof}
The claim follows by combining soundness, completeness, termination, and the
resource bounds established previously.

\begin{itemize}
\item \textbf{Soundness.} By \cref{lem:tm_sat_fixed_soundness}, 
if $M_{\mathrm{SAT}}$ accepts an input $W$, then the certificate $\mathcal{V}$ 
evaluates the formula $\Phi$ to \textsc{True}. 
Thus, no invalid certificate is ever accepted.

\item \textbf{Completeness.} 
By \cref{lem:tm_sat_fixed_completeness}, 
for every certificate $\mathcal{V}$ that evaluates $\Phi$ to \textsc{True}, 
$M_{\mathrm{SAT}}$ deterministically reaches the accepting 
configuration. 
Thus, every valid certificate admits an accepting computation.

\item \textbf{Termination and resource bounds.} 
By \cref{lem:tm_sat_fixed_time}, the machine halts after finitely many steps and runs in $\bigO(nm)$ time using 
$\bigO(n)$ tape space. Since $m < n$, the time complexity is $\bigO(n^2)$. 
In particular, the verifier operates in deterministic polynomial time.
\end{itemize}

Combining the above arguments, $M_{\mathrm{SAT}}$ is a correct
polynomial-time verifier for the \textsc{SAT} problem.
\end{proof}
\end{theorem}

\paragraph{Conclusion.}
The total running time of $M_{\mathrm{SAT}}$ is $\bigO(n^2)$.
The machine uses only the original input tape and a constant number of auxiliary
symbols, and therefore requires $\bigO(n)$ space.

\subsection{Subset-Sum Turing Machine} \label{subsec:subset-sum-tm}
In this subsection, we describe the certificate-oblivious verifier Turing machine $M_{\mathrm{SS}}$ for the Subset-Sum problem.  
The machine $M_{\mathrm{SS}}$ employs a finite set of states and a finite tape alphabet, including standard digit symbols $\{0, 1, \dots, 9\}$, 
circled digits $\{\text{\circled{0}, \circled{1}, \dots, \circled{9}}\}$, and structural markers $\{\texttt{@}, \texttt{\#}, \texttt{|}, 
\texttt{\_}, \texttt{x}$. For the formal definition of transition rules, we use the symbols $D$ and $M$ as parameters representing arbitrary digits 
to describe the matching and subtraction phases concisely.
 The machine receives a target integer $T$ and a list of integers $a_1,a_2,\dots,a_n$ as an input, 
all encoded in decimal.  It then determines whether a subset within the certificate area, excluding those masked by 'x', sums to $T$.
The semantics are specified at the level of state transitions and head movements, 
together with the meaning of each state in the computational workflow.

The input format is
\[
1\_@1\_3\_5\_7\_10\_20\#x\_3\_5\_7\_xx\_xx
\]
where the left-hand side specifies the target sum,
the segment between ``@'' and ``\#'' encodes the set of input elements,
and the segment to the right of ``\#'' represents a proposed certificate.

As previously discussed, we assume the input is well-formed for the 
purposes of this subsection. Specifically, the target sum is 
delimited by an underscore (\texttt{\_}). The order of certificate element is
the same as that of input set element except that unused number is represented by 
repetition of  'x' with the same length. Thus, the 
length of the certificate is exactly same as the encoded set, any other length of certificate would be rejected.

The redundant x's are inserted as dummy symbol to preserve certificate-oblivious property .

The Turing machine employs auxiliary tape symbols, such as 
$\sim$, $|$, $\$$, and $\text{\circled{0}, \dots, \circled{9}}$, to support 
digit matching, erasing, and subtraction, acrting as positional or erasure markers. Its operation interleaves 
digit matching with subtraction: starting from the most significant 
digit of the certificate, the machine matches the digit or $x$ at the exactly 
same location in the input set region,  the digit matched $x$ replaced by $0$, and the delimiter also matched to the delimiter in the same way.
Once a complete whole certificate is matched, the head round trip between input set region and the sum area to perform 
digit-wise subtraction corresponding to the matched element.

This design separately performs matching and subtraction.

\paragraph{Tape Structure.}
The tape consists of the following regions:
\begin{enumerate}
    \item a \emph{target sum region}, initially holding $T$ and used as a subtractive workspace
        in which selected elements $a_i$ are successively subtracted according to the given certificate after the entire matching process.
  
	\item the \emph{input set region}, containing the decimal encodings of
	$a_1,\dots,a_n$ separated by delimiters and  bounded by the marker `@' on the left and `\#' on the right.

    \item a \emph{certificate region}, in which the machine records the
        certificate symbols encoding a candidate solution, specifying
        which elements $a_i$ are selected in the subset; the unselected elements are represented by the same length of x's.

    \item a right-unbounded blank region; the tape is also left-unbounded, consisting entirely of blank symbols.
\end{enumerate}

The certificate region is intentionally designed to store a general
symbols over a finite alphabet, rather than a Boolean vector.
While certificates for SAT naturally consist of truth values,
the Subset-Sum construction demonstrates that the  NP verifier simulation framework
and the associated Turing machines operate uniformly over arbitrary
symbolic certificates, such as sequences of integers or digit encodings.

\newcommand{\q}[1]{\text{`$#1$'}}

We use $D$ to denote a generic digit from $\{0, \dots, 9\}$, and we use the circled notation \circled{D} to denote its corresponding symbol in the machine’s internal alphabet, used as a positional marker. 
Thus, \circled{D} represents one of \circled{0}, \circled{1}, \dots, \circled{9}.

\paragraph{Symbol conventions}
The machine relies on several delimiter and auxiliary symbols whose semantics are summarized as follows:
\begin{itemize}
	\item The delimiter \texttt{\_} serves as an element-level delimiter. In the certificate region, it marks the end of a certificate element and triggers number-level match verification. 
		In the target-sum region, a trailing \texttt{\_} guarantees a well-defined least significant digit for subtraction.

	\item During processing, an underscore \texttt{\_} may be temporarily replaced by \texttt{|} as a positional boundary marker, indicating that all digits and delimiters up to this marker have been matched or processed.

    \item The symbol \texttt{x} denotes the placeholder for unused set element for the certificate oblivious property.

    \item The symbol \texttt{\$} is used as a subtraction positional
        marker, tracking the digit location currently involved in
        subtraction.

    \item The symbol \texttt{$\sim$} represents a deleted character in the
        certificate region, indicating that the corresponding digit has
        already been consumed and should be ignored in subsequent scans.

    \item Marked digits \circled{D} denote digits that have been
        matched or processed.  In the matching phase, they record
        positional correspondence as temporary processed boundary marker; in the subtraction phase, they
        identify the most recently processed digit .  All circled digits are restored to their original
        values once the corresponding operation is completed.
\end{itemize}

\paragraph{High-Level Semantics.}
The machine operates as a certificate-driven verifier for the Subset-Sum problem, 
utilizing deterministic, fixed-length certificates for each problem instance. 
Digit matching and arithmetic operations are coordinated through explicit verification stages, ensuring that subtraction is performed only after the entire matching phase has been successfully completed.

Conceptually, the execution proceeds through the following stages, each implemented as a structured block of states within the transition table:

\begin{enumerate}
    \item \textbf{Forward scan.}
    The machine scans the tape from left to right until it reaches 
    the delimiter \texttt{\#}, which separates the problem instance from the
    certificate region.

    \item \textbf{Selection of a certificate digit.}
	In the state \textsc{FindDigitToMatch}, the machine scans the certificate region to locate the next unmatched digit, proceeding from left to right. Once such a digit is found, it is erased by replacing it with the deletion marker $\sim$, 
	and the machine initiates a backward digit-matching procedure upon encountering the blank symbol $\epsilon$

	\item \textbf{Digit/Delim matching.}
	The states \textsc{BackwardToMatch.M} and \textsc{MatchPosition.M} perform digit-wise matching within the input set region to the right of \texttt{@}. 
	Specifically, \textsc{BackwardToMatch.M} searches for a previously matched digit and restores it to its original value. 
	If the corresponding digit is $M$, the state \textsc{MatchPosition.M} temporarily marks the newly matched digit (e.g., by replacing it with \text{\circled{D}}) to record the positional correspondence, indicating that the prefix has been successfully matched. 
	Delimiters are matched in a similar manner using the states \textsc{BackwardToMatch.Delim}, \textsc{MatchPosition.Delim}, and the temporary marker $\vert{}$.
	To satisfy the certificate-oblivious property, unused elements in the certificate are represented by 'x'. The states \textsc{BackwardToMatch.X} and \textsc{MatchPosition.X} perform the corresponding positional matching by replacing the digit with $0$.

	\item \textbf{Subtraction after entier matching.}
	Only after all elements have been successfully matched does the machine enter the subtraction phase. 
	The states \textsc{SumArea.M}, \textsc{Subtract.M}, and \textsc{Borrow.B} perform digit-wise subtraction of the matched element from the target-sum region. 
	This process proceeds from the least significant digit (LSD), utilizing the positional marker \texttt{\$} and the circled digit for alignment. 
	Crucially, the trailing delimiter `\_' in the target-sum region immediately preceding \texttt{@} serves as the LSD marker for the subtraction. 
	The subtraction propagates across all digits until $\epsilon$ is encountered, a requirement of the certificate-oblivious property. 
	Upon completion, the matched element is erased, and all temporary markers are restored to their original symbols by the state \textsc{BackwardToRestore}.

	\item \textbf{Final sum verification.}
	After all elements have been matched and the corresponding subtractions have been performed, the machine scans the target-sum region to check whether the remaining digits are exactly zeros. 
	If only zeros (or their marked equivalents) remain, the machine enters the accepting state; otherwise, it rejects.

\end{enumerate}

\paragraph{Invariant}
After the completion of the \(j\)-th subtraction phase,
the target-sum region encodes
\[
    S - \sum_{i=1}^{j} a_{\pi(i)},
\]
where each \(a_{\pi(i)}\) is the unique input-set element whose digit sequence has been fully matched and subsequently subtracted
during the \(i\)-th subtraction phase.

In particular, the value stored in the target-sum region changes only
upon the completion of a subtraction phase and remains unchanged throughout the preceding matching phases.

\begin{definition}[Subset-Sum Verifier Turing Machine] \label{def:sum_of_subset_verifier_tm}

The deterministic single-tape Turing machine
\[
M_{\mathrm{SS}}
=
(Q,\Sigma,\Gamma,\delta,\qinit,\qacc,\qrej)
\]
is defined as follows.

\begin{itemize}
\item
$Q$ is a finite set of states consisting of all states explicitly appearing
in the transition table in \cref{tab:tm_subsetsum_transitions}, including
parameterized state families such as $\textproc{Backward.M}$,
$\textproc{MatchPosition.M}$, $\textproc{Subtract.M}$, and $\textproc{Borrow.B}$.
Each parameterized family is instantiated over a finite domain
($M,D \in \{0,\dots,9\}$, and therefore contributes only a constant number of
states.

\item
$\Sigma$ is the input alphabet, consisting of decimal digits
$\{0,\dots,9\}$ and delimiter symbols
$\texttt{\_}$, \texttt{@}, \texttt{\#}, and \texttt{x}.

\item
$\Gamma$ is the tape alphabet, extending $\Sigma$ with auxiliary marker
symbols used during verification, including
$\sim$, $|$, $\$$, \circled{0}$,\dots,$\circled{9},  and blank symbol $\epsilon$.

\item
$\delta$ is the deterministic transition function defined explicitly by the
rules in \cref{tab:tm_subsetsum_transitions}.
It implements digit matching, marker-based bookkeeping, and digit-wise
subtraction with borrow propagation.

\item
$\qinit = \textproc{Forward}$ is the initial state, in which the
machine begins by scanning the input from left to right.

\item
$\qacc = \textproc{Accept}$ is the accepting halting state.

\item
$\qrej = \textproc{Reject}$ is the rejecting halting state.
\end{itemize}
\end{definition} 
No sanitization is performed on the problem instance area of the tape
(the segment encoding the target sum and the element list).
Consequently, repeated elements are processed independently,
and the verifier therefore accepts instances in which the element list
represents a multiset rather than a set.

\begin{table}[H]
\centering
\small
\begin{tabular}{llllll}
\hline
\textbf{State} & \textbf{Read} & \textbf{Write} & \textbf{Move} & \textbf{Next State} & \textbf{Comment} \\
\hline
Forward & \# & \# & +1 & FindDigitToMatch & Enter certificate area \\
Forward & * & * & +1 & Forward & Scan input \\
FindDigitToMatch & $\sim$ & $\sim$ & +1 & FindDigitToMatch & Skip deleted symbols \\
FindDigitToMatch & M & $\sim$ & -1 & BackwardToMatch.M & Select M digit to match \\
FindDigitToMatch & \_ & $\sim$ & -1 & BackwardToMatch.Delim & Select Delim to match \\
FindDigitToMatch & x & $\sim$ & -1 & BackwardToMatch.X & Select X digit to match \\
FindDigitToMatch & * & $\sim$ & -1 & Reject & Invalid symbol \\
FindDigitToMatch & $\epsilon$ & $\epsilon$ & -1 & BackwardAfterMatching & End of certificate \\
BackwardToMatch.M & * & * & -1 & BackwardToMatch.M & Scan left \\
BackwardToMatch.M & | & \_ & +1 & MatchPosition.M & Found candidate \\
BackwardToMatch.M & \circled{D} & D & +1 & MatchPosition.M & Found matched \\
BackwardToMatch.M & x & \_ & +1 & Reject & Conflict \\
BackwardToMatch.M & @ & @ & +1 & MatchPosition.M & End of set \\
BackwardToMatch.Delim & * & * & -1 & BackwardToMatch.Delim & Scan left \\
BackwardToMatch.Delim & | & \_ & +1 & MatchPosition.Delim & Found candidate \\
BackwardToMatch.Delim & \circled{D} & D & +1 & MatchPosition.Delim & Found matched \\
BackwardToMatch.Delim & x & 0 & +1 & MatchPosition.Delim & Mark 0 \\
BackwardToMatch.Delim & @ & @ & +1 & MatchPosition.Delim & End of set \\
BackwardToMatch.X & * & * & -1 & BackwardToMatch.X & Scan left \\
BackwardToMatch.X & | & \_ & +1 & MatchPosition.X & Found candidate \\
BackwardToMatch.X & \circled{D} & \_ & +1 & Reject & Conflict \\
BackwardToMatch.X & x & 0 & +1 & MatchPosition.X & Match X \\
BackwardToMatch.X & @ & @ & +1 & MatchPosition.X & End of set \\
MatchPosition.M & M & \circled{M} & +1 & Forward & Mark digit \\
MatchPosition.Delim & \_ & | & +1 & Forward & Mark delimiter \\
MatchPosition.X & D & x & +1 & Forward & Mark X \\
BackwardAfterMatching & * & * & -1 & BackwardAfterMatching & Skip \\
BackwardAfterMatching & \circled{M} & \$ & -1 & BackwardToSubtract.M & Prepare subtract \\
BackwardAfterMatching & x & \$ & -1 & BackwardToSubtract.0 & Prepare subtract \\
SbtractionDigit & M & \$ & -1 & BackwardToSubtract.M & Prepare subtract \\
SbtractionDigit & \_ & \$ & -1 & BackwardToRestore & Prepare restore \\
SbtractionDigit & @ & @ & -1 & CheckSum & Check result \\
BackwardToSubtract.M & * & * & -1 & BackwardToSubtract.M & Scan left \\
BackwardToSubtract.M & @ & @ & -1 & SumArea.M & Go to sum area \\
BackwardToRestore & \circled{D} & D & -1 & BackwardToRestore & Restore digit \\
BackwardToRestore & | & \_ & -1 & BackwardToRestore & Restore separator \\
BackwardToRestore & * & * & -1 & BackwardToRestore & Skip \\
BackwardToRestore & $\epsilon$ & $\epsilon$ & +1 & ForwardToSubtract & Return to forward \\
SumArea.M & D & D & -1 & SumArea.M & Traverse sum \\
SumArea.M & | & | & -1 & SumArea.M & Skip \\
SumArea.M & \_ & | & -1 & Subtract.M & LSD subtraction \\
SumArea.M & \circled{D} & D & -1 & Subtract.M & LSD subtraction \\
Subtract.M & D & \circled{D}-\circled{M} & -1 & Borrow.B & Subtract digit \\
Borrow.0 & D & D & -1 & Borrow.0 & No borrow \\
Borrow.0 & $\epsilon$ & $\epsilon$ & +1 & ForwardToSubtract & Return \\
Borrow.1 & 0 & 9 & -1 & Borrow.1 & Propagate borrow \\
Borrow.1 & D & D-1 & -1 & Borrow.0 & Resolve borrow \\
Borrow.1 & $\epsilon$ & \_ & +1 & Reject & Negative result \\
CheckSum & | & \_ & -1 & CheckSum & Skip \\
CheckSum & 0 & \_ & -1 & CheckSum & Skip \\
CheckSum & \circled{0} & \_ & -1 & CheckSum & Skip \\
CheckSum & $\epsilon$ & \_ & +1 & Accept & Success \\
CheckSum & * & \_ & -1 & Reject & Nonzero remain \\
\hline
\end{tabular}
\caption{Transition table for the Certificate-Oblivious Subset-Sum TM $M_{\mathrm{SS}}$.} 
\label{tab:tm_subsetsum_transitions} 
\end{table}
\paragraph{Remark on Symbol Conventions and Matching Semantics.}
The transition table employs schematic symbols to compactly represent
families of transitions. Their meanings and application order are defined
as follows.

\begin{enumerate}
    \item \textbf{Digit symbols.}
    The symbol $D$ denotes an arbitrary decimal digit in $\{0,\dots,9\}$.
    The symbol $\circled{D}$ denotes the corresponding circled digit
    $\{\circled{0},\dots,\circled{9}\}$, used for temporary bookkeeping
    during matching and subtraction.
    Within a single transition rule, repeated occurrences of $D$
    (or $\circled{D}$) always refer to the \emph{same concrete digit}
    read from the tape.

    \item \textbf{Matched digit symbol $M$.}
    The symbol $M$ denotes the specific digit currently stored in the
    parameterized state (i.e., the digit selected for matching from the
    certificate).
    Its marked counterpart $\circled{M}$ denotes the marked version
    of that same digit.
    Whenever both $M$ and $D$ are applicable to the current tape symbol,
    \emph{$M$ takes precedence over $D$}, enforcing exact digit matching
    before generic digit handling.

    \item \textbf{Subtraction notation $\circled{D}-\circled{M}$}
    The schematic write symbol $\circled{D}-\circled{M}$ denotes
    digit-wise subtraction modulo $10$.
    If the subtraction requires borrowing, the machine transitions into
    a borrow state $B\in\{0,1\}$, where $B=1$ indicates an active borrow to be
    propagated to the next more significant digit.
    The resulting digit is written in marked form, and the borrow value is
    encoded in the subsequent control state.

    \item \textbf{Borrow parameter $B$.}
    States of the form $\textproc{Borrow}.B$ encode whether a borrow is
    pending ($B=1$) or resolved ($B=0$).
    Borrow propagation proceeds leftward until the blank symbol is encountered, 
    when underflow is the machine rejects.

    \item \textbf{Wildcard symbol $*$.}
    The symbol $*$ denotes a fallback transition applicable to any tape
    symbol not matched by a more specific rule.
    Wildcard transitions are considered \emph{only after} attempting
    matches for concrete symbols, $M$, $\circled{M}$, $D$, or
    $\circled{D}$.
    Writing $*$ leaves the tape symbol unchanged.

    \item \textbf{Matching priority.}
    For a given state and tape symbol, transitions are applied in the
    following order of specificity:
    \[
        \text{concrete symbol}
        \;>\;
        M
        \;>\;
        D
        \;>\;
        *
    \]
    This priority order guarantees that exact digit matches are handled
    before generic digit transitions, and that wildcard rules serve strictly
    as a fallback mechanism.
\end{enumerate}

\begin{lemma}[Token Identification and Single-Use Correspondence]
\label{lem:token-identification}
During any accepting computation of $M_{\mathrm{SS}}$,
each confirmed certificate element corresponds to exactly one uniquely
determined delimiter-bounded input-set element.

Let $c_j$ be the $j$-th delimiter-bounded integer encoded in the
certificate region that reaches the confirmed matching phase.
Then it always matches input-set element $a_j$ such that:

\begin{enumerate}
    \item Starting from the most significant digit, every digit of $c_j$
    is matched sequentially with the digits of $a_j$.
    \item The match terminates exactly at the delimiter of both tokens,
    establishing equality of the encoded integers $c_j=a_j$ rather than
    prefix agreement.
    \item The subtraction phase is executed over the same digit interval,
    from least significant digit to most significant digit.
    \item After completion, all digits of $a_j$ are erased and can never
    participate in future matches.
\end{enumerate}

Consequently, each certificate element $c_j$ is used exactly once and
corresponds to exactly one input-set element $a_j$, and no input element
can be reused in later phases.
\end{lemma}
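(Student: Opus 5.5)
The plan is an induction over the sequence of matching rounds, carried along with a tape invariant describing the input-set region after each round. After the $k$-th symbol of the certificate region has been consumed, I would claim the following. The certificate prefix has been overwritten by $\sim$. In the input-set region to the right of \texttt{@}, exactly one marker, either a circled digit or a \texttt{|}, sits at position $k$ (counted from \texttt{@}). Every set-region symbol strictly to the left of that marker has already been restored to its matched value, or rewritten to \texttt{x} or \texttt{0} for masked positions. This is the invariant I would maintain.

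The inductive step is a case split on the symbol read by \textsc{FindDigitToMatch}, using \cref{tab:tm_subsetsum_transitions} and the matching-priority convention.
\begin{itemize}
\item For a digit $M$: \textsc{BackwardToMatch.M} scans left to the unique marker, or to \texttt{@} when $k=0$, and restores it. It then steps right once, and \textsc{MatchPosition.M} has a defined transition only on the symbol $M$. By the partiality remark, any mismatch rejects.
\item For a delimiter or \texttt{x}: the states \textsc{MatchPosition.Delim} and \textsc{MatchPosition.X} play the same role.
\end{itemize}
Hence the $k$-th certificate symbol is compared with exactly the $k$-th set-region symbol. Because the certificate ordering and length agree with the set region (the well-formedness assumption), the $j$-th delimiter-bounded token $c_j$ is aligned with $a_j$. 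Items (1) and (2) follow, since the digit positions and the terminating \texttt{\_}/\texttt{|} positions must coincide symbol by symbol. This gives equality of the tokens, not just prefix agreement.

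For items (3) and (4), I would trace the block \textsc{BackwardAfterMatching}, \textsc{BackwardToSubtract.M}, \textsc{SumArea.M}, \textsc{Subtract.M}, \textsc{Borrow.B}, followed by \textsc{BackwardToRestore}. The \texttt{\$} marker moves leftward one digit per subtraction pass, starting from the least significant digit of the marked token. The circled digit in the sum area records the aligned position, with the trailing \texttt{\_} before \texttt{@} serving as the starting point. This lets me argue that the digits of $a_j$ are consumed least significant digit first, over exactly the interval fixed in the matching phase. Each consumed digit is overwritten by \texttt{\$}, and \textsc{BackwardToRestore} only restores circled digits and \texttt{|}, never \texttt{\$}. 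So the erased digits cannot be matched again: \textsc{BackwardToMatch} never treats \texttt{\$} as a match anchor. That gives single use.

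The step I expect to be the main obstacle is the uniqueness of the marker in the invariant, because the transition table is schematic and several wildcard rules could in principle leave stray markers behind. I would try to show directly that every transition introducing a circled digit or \texttt{|} is paired with a transition that removes the previous one within the same round. Only if that direct argument fails would I restrict to accepting runs, which are all the lemma requires, and argue that any stray marker forces a \textsc{Reject} transition before \textsc{CheckSum}.
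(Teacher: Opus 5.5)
Your proof is correct and keeps the paper's four-part outline, but it justifies the key alignment step differently. The paper gets token equality from a ``confirmation scan'' that checks that the matched digits lie in one delimiter-bounded element and reach both boundaries. It gets single use from erasure after subtraction. Both are really mechanisms of the interleaved machine $M_{\mathrm{SS0}}$ (its \textsc{BackwardToCheckMatch}/\textsc{MatchedDigits} phase); \cref{tab:tm_subsetsum_transitions} for $M_{\mathrm{SS}}$ contains no such scan.

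Your induction on a single positional marker instead shows that the $k$-th certificate symbol is compared with the $k$-th symbol after \texttt{@}. Since \textsc{MatchPosition.Delim} accepts only \texttt{\_}, while \textsc{MatchPosition.M} and \textsc{MatchPosition.X} accept only digits, token boundaries are forced to coincide. This is how $M_{\mathrm{SS}}$ actually works, so your argument can be checked transition by transition, and single use follows directly from the position bijection. The paper's version is shorter but rests on machinery the table does not have.

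Your appeal to equal certificate length is genuinely needed, not a convenience. The table does not reject a shorter certificate, and then the final token is only prefix-matched. Once the subtraction loop is completed as intended, \texttt{12\_@123\#12} is accepted.

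A few repairs:
\begin{itemize}
\item The marker at position $k$ can also be \texttt{x}, written by \textsc{MatchPosition.X}. No \texttt{x} survives to its left: the next \textsc{BackwardToMatch.Delim} or \textsc{BackwardToMatch.X} rewrites it to \texttt{0}, and \textsc{BackwardToMatch.M} rejects on it. So drop the clause allowing \texttt{x} to the left of the marker; an \texttt{x} there would contradict the uniqueness you assert.
\item Uniqueness is not the obstacle you expect. During matching, only the three \textsc{MatchPosition} transitions create set-region markers. Each round performs exactly one \textsc{BackwardToMatch} scan, which consumes the nearest marker to the left, and one \textsc{MatchPosition} step. The induction is therefore immediate, and the accepting-run fallback is unnecessary.
\item In $M_{\mathrm{SS}}$, all matching precedes all subtraction: no subtraction state returns to \textsc{Forward}. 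So ``can never participate in future matches'' in item (4) holds vacuously. What you actually need is that the leftward \texttt{\$} sweep visits each set position exactly once.
\item To trace that sweep you must supply the link from \textsc{ForwardToSubtract} into \textsc{SbtractionDigit} (scan right to the leftmost \texttt{\$}, then step left), which the printed table omits. Read literally, the partiality convention would make \textsc{ForwardToSubtract} reject.
\end{itemize}
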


\begin{proof}
Digits of a certificate element $b_j$ are processed in the state
\textsc{FindDigitToMatch} from the most significant to the least significant digit.
Backward matching tentatively aligns each digit with a candidate input token,
and the machine enters the next element matching only after all digits and delimiter token have been processed.

The confirmation scan verifies that all matched digits belong to a single
delimiter-bounded input-set element and that both token boundaries are reached.
Therefore equality $b_j=a_j$ is established before subtraction begins.

The subtraction routine operates on the same digit interval in reverse
direction to perform borrow propagation.
After subtraction, the digits are erased, preventing reuse.

Thus each subtraction corresponds to exactly one complete input-set element
and to exactly one certificate element.
\end{proof}

\begin{lemma}[Soundness of Certificate-Oblivious Subset-Sum Verifier]\label{lem:soundness-subsetsum}
Let $W$ be an input string encoded in the Subset-Sum format
\[
W = S\_@a_1\_a_2\_\cdots\_a_k\_\# c_1\_c_2\_\cdots\_c_k,
\]
where $S, a_i$ are nonnegative integers represented in decimal and $c_i$ is a nonnegative integer or a repetition of x's.
Let $b_j$ be th j-th integer in sequence $(c_1, \cdots, c_k)$. 
If $M_{\mathrm{SS}}$ enters the accepting configuration on input $W$,
then
\[
\{b_1,\dots,b_m\} \subseteq \{a_1,\cdots,a_k\}
\quad\text{and}\quad
\sum_{j=1}^m b_j = S.
\]
\end{lemma}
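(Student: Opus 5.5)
The plan is to argue directly from the operational semantics of $M_{\mathrm{SS}}$, combining \cref{lem:token-identification} with the invariant on the target-sum region stated before \cref{def:sum_of_subset_verifier_tm}. The statement has two parts, membership and summation, and each reduces to one of these ingredients.

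\textbf{Membership.} Fix an accepting computation on $W$. The only transition into \textsc{Accept} is \textsc{CheckSum} reading $\epsilon$. Tracing the table backwards, \textsc{CheckSum} is entered only from \textsc{SbtractionDigit} reading \texttt{@}, and that state is reached only after \textsc{FindDigitToMatch} has met $\epsilon$. So acceptance implies two things: every certificate symbol was replaced by $\sim$ through the matching phase, and no \textsc{Reject} transition fired. In particular neither conflict rule (\textsc{BackwardToMatch.M} on \texttt{x}, \textsc{BackwardToMatch.X} on a circled digit) fired, and no matching state met an undefined transition, which the paper's partial-transition convention treats as rejection. Hence each integer certificate token $b_j$ reached the confirmed matching phase. \cref{lem:token-identification} then gives a unique input element $a_{\pi(j)}$ with $b_j=a_{\pi(j)}$, matched digit-by-digit and terminated at both delimiters. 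Tokens made of \texttt{x} contribute no integer; they only turn the corresponding input digits into $0$/\texttt{x}. This yields $\{b_1,\dots,b_m\}\subseteq\{a_1,\dots,a_k\}$. The single-use clause of the same lemma shows that $\pi$ is injective, so the $b_j$ in fact form a submultiset.

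\textbf{Summation.} I would apply the invariant: after the $j$-th subtraction phase the sum region encodes $S-\sum_{i\le j} a_{\pi(i)}$. I would check it by induction on $j$, using the digit-wise subtraction rules (\textsc{Subtract.M}, \textsc{Borrow.B}) started at the LSD marker \texttt{\_} before \texttt{@}. The induction step needs three facts: borrow propagation implements ordinary decimal subtraction; underflow reaches $\epsilon$ in \textsc{Borrow.1} and rejects, so it never occurs on an accepting run; and zero-filled \texttt{x} positions subtract $0$. After all $m$ phases the region encodes $S-\sum_{j=1}^m b_j\ge 0$. \textsc{CheckSum} accepts only if every remaining symbol is $0$, \circled{0}, or \texttt{|} up to $\epsilon$, which means this value is $0$. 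Therefore $\sum_j b_j = S$.

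The main obstacle is making the invariant rigorous at the digit level. This requires that the marker \texttt{\$}/circled-digit bookkeeping in \textsc{BackwardAfterMatching}/\textsc{SumArea.M} aligns the $r$-th least significant digit of $a_{\pi(j)}$ with the $r$-th least significant digit of the sum region, and that \textsc{BackwardToRestore} restores all markers so that no spurious digits remain. I would first prove a small alignment claim: during the $r$-th round trip, the rightmost circled digit in the sum region sits exactly $r$ positions left of the LSD delimiter. Then I would induct on $r$.
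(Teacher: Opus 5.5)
Your decomposition matches the paper's. Acceptance forces \textsc{CheckSum} to see an all-zero sum region, the per-element subtraction invariant turns this into $S=\sum_j b_j$, and \cref{lem:token-identification} supplies $b_j=a_{\pi(j)}$. Your round-trip alignment claim only makes explicit what the paper leaves implicit.

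The gap is the inference ``no \textsc{Reject} fired, hence every integer token reached the confirmed matching phase.'' It fails for the last token. The paper's step (4), ``a partially matched element leads the machine to reject,'' has the same hole. The certificate $c_1\_\cdots\_c_k$ has no trailing delimiter, so $c_k$ is never tested against a delimiter. After its final digit, \textsc{FindDigitToMatch} meets $\epsilon$, and \textsc{BackwardAfterMatching} treats whatever circled digit it finds as an LSD. Nothing in the run checks that $c_k$ ends where $a_k$ ends. For example, take \texttt{2\_@22\_\#2} and complete \textsc{ForwardToSubtract} in the evident way. The certificate digit is marked on the first digit of $22$ and subtracted from $S=2$, and \textsc{CheckSum} accepts, yet $2\notin\{22\}$. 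The subtracted value is still the value of $c_k$, so your summation half survives; only membership breaks.

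The fix is the well-formedness convention stated before \cref{def:sum_of_subset_verifier_tm}: the certificate has exactly the length of $a_1\_\cdots\_a_k$. \textsc{MatchPosition.M}, \textsc{MatchPosition.X} and \textsc{MatchPosition.Delim} place the $p$-th certificate symbol on the $p$-th cell of the set region. They also require that cell to be the same digit, any digit, or \texttt{\_}, respectively. So with equal length the last certificate symbol lands on the last digit of $a_k$. Only then does your appeal to \cref{lem:token-identification}, including its clause that the match ends at both delimiters, legitimately cover $c_k$. Make this hypothesis explicit in your membership argument.

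A smaller point concerns your backward trace. In the printed table \textsc{ForwardToSubtract} has no rows and \textsc{SbtractionDigit} is never a target state. Under the literal partial-transition convention the machine never accepts at all. State the completion you assume: scan right to the first \texttt{\$}, then step left into \textsc{SbtractionDigit}.
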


\begin{proof}
Assume that $M_{\mathrm{SS}}$ enters the accepting configuration
on input $W$.
We show that this implies
\[
\{b_1,\dots,b_m\} \subseteq \{a_1,\cdots,a_k\}
\quad\text{and}\quad
\sum_{j=1}^m b_j = S.
\]

\begin{itemize}
\item[(1)] \textbf{Acceptance implies zero target sum.}
The machine enters the accepting state only from the \textsc{CheckSum} state.
By construction of the transition rules in \textsc{CheckSum},
acceptance is possible only if the target-sum region $T$ wholely contains  zero(`0') digits.
Hence, at acceptance, the tape encodes
\[
T = 0.
\]

\item[(2)] \textbf{Zero target sum implies complete and correct subtraction.}
The target-sum region is initialized to $S$. The subtraction phase is triggered by the \textsc{BackwardToSubtract} state following \textsc{BackwardAfterMatching}. 
Upon entering this state at the LSD of a matched element, each digit is subtracted iteratively from the LSD to the MSD. 
This process is orchestrated by the states \textsc{Subtract.M} and \textsc{Borrow.B}, which return control to \textsc{BackwardToSubtract} after restoring the symbols once the operation is complete. 
Any unresolved borrow, digit underflow, or malformed subtraction during this phase forces an immediate transition to the rejecting state
Let $s_j$ denote the numerical value subtracted during the $j$-th subtraction phase. 
Reaching the \textsc{CheckSum} state with the target-sum region being exactly zero implies that every subtraction phase was executed successfully without rejection. 
Therefore, the final state of the tape satisfies:
\[
T = S - \sum_{j=1}^m s_j = 0, \quad \text{which implies} \quad S = \sum_{j=1}^m s_j.
\]

\smallskip
\item[(3)] \textbf{Successful subtraction implies matching of an input-set element.}
A subtraction phase is initiated only after the machine confirms that a 
digit-matching process terminated. Specifically, subtraction is triggered only after the matching  
phase (via \textsc{BackwardAfterMatching} indicating that every element has been matched. 

\item[(4)] \textbf{Element matching requires complete digit-wise correspondence.}
For each certificate element, the machine processes its digits from the most significant digit (MSD) to the least significant digit (LSD) in the state \textsc{FindDigitToMatch}. 
Each selected digit is tentatively marked within the corresponding input-set element, while the previously matched digit is restored to its original value by the backward matching procedure. 
Only after all digits of the delimiter-bounded token have been processed does the machine transition to the states \textsc{FindDigitToMatch}. 
A partially matched element leads the machine to reject. 
This ensures that the $j$-th successfully matched element $b_j$ to $i$-th input elements corresponds exactly to the $i$-th certificate element $c_i$; otherwise, the machine rejects. 
Furthermore, this mechanism ensures that $a_i$ is replaced by $0$'s when $c_i$ is a repetition of x's so that such $a_i$ does not contribute to sum 
implying that each $b_j$ is associated with a unique input-set element $a_{i_j}$ and ensuring $\{b_1, \dots, b_m\} \subseteq \{a_1, \dots, a_k\}$. 

After this phase, the subtraction phase is triggered only if every digit of the certificate element has been successfully matched, leaving the LSD of the last element properly marked. 

Since the value subtracted from the target-sum region is $s_j = a_{i_j}$, and the matching process guarantees $a_{i_j} = b_j$, it follows that $s_j = b_j$. 
Thus, each subtraction step corresponds exactly to one delimiter-bounded input-set element as per \cref{lem:token-identification}. 
Consequently, we conclude that $S = \sum_{j=1}^m b_j$ and $\{b_1, \dots, b_m\} \subseteq \{a_1, \dots, a_k\}$.

\end{itemize}
By combining stages (1) through (4), we establish that acceptance implies the existence of a subset of the input—excluding the masked elements—that sums exactly to $S$, thus completing the proof.
\end{proof}

\begin{lemma}[Completeness of Certificate-Oblivious Subset-Sum Verifier]\label{lem:completeness-subsetsum}
Let $W$ be an input string encoded in the Subset-Sum format:
\[
W = S\_@a_1\_a_2\_\cdots\_a_k \# c_1\_c_2\_\cdots\_c_k,
\]
where $S, a_i$ are nonnegative integers represented in decimal and $c_i$ is a nonnegative integer or a repetition of x's.
Let $b_j$ be th j-th integer in sequence $(c_1, \cdots, c_k)$. 
If $\{b_1, \dots, b_m\} \subseteq \{a_1, \dots, a_k\}$ and 
\[
\sum_{j=1}^m b_j = S,
\]
then $M_{\mathrm{SS}}$ enters the accepting configuration on input $W$.
\end{lemma}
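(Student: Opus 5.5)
The plan is to trace the deterministic execution of $M_{\mathrm{SS}}$ on a well-formed input $W$ satisfying the hypothesis and show that no \textsc{Reject} transition fires, mirroring the completeness arguments of \cref{lem:completeness_sat_id} and \cref{lem:tm_sat_fixed_completeness}. Throughout, I assume the well-formedness conventions fixed earlier: $c_i$ has the same length as $a_i$, and it is either exactly $a_i$ digit for digit or the all-\texttt{x} string of that length. I also use that the elements $b_j$ are the non-\texttt{x} entries, taken in certificate order. The proof will be organized as an induction over the symbols of the certificate region, followed by an induction over the subtraction phases, and finally a check of \textsc{CheckSum}.

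\textbf{Matching phase.} I will state an invariant for the moment \textsc{FindDigitToMatch} consumes the $p$-th certificate symbol. The invariant says that the certificate prefix up to position $p$ has been overwritten by $\sim$. It also says that exactly one symbol in the input-set region carries a marker (a circled digit, \texttt{|}, or \texttt{x}). That marker sits at position $p-1$ of the input-set region, counted from \texttt{@}, or no marker exists if $p=1$, in which case \texttt{@} plays the role of the boundary.

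The induction step is a case split on the symbol read.
\begin{itemize}
\item For a digit $M$, \textsc{BackwardToMatch.M} scans left to the unique marker, restores it, and steps right. Since $c_i=a_i$ symbolwise, the next symbol is $M$, and \textsc{MatchPosition.M} circles it.
\item For \texttt{\_}, the aligned input symbol is also a delimiter.
\item For \texttt{x}, the aligned input symbol is a digit of an unselected $a_i$, and \textsc{MatchPosition.X} rewrites it as \texttt{x}.
\end{itemize}
The conflict transitions (\texttt{x} encountered by \textsc{BackwardToMatch.M}, or a circled digit encountered by \textsc{BackwardToMatch.X}) must be shown never to arise. I expect this to follow from the single-marker invariant: the marker that is found is always the one left by the previous step. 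When the blank after the certificate is reached, the machine enters \textsc{BackwardAfterMatching}.

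\textbf{Subtraction phase.} I will invoke the invariant stated just before \cref{def:sum_of_subset_verifier_tm}: after the $j$-th subtraction, the sum region holds $S-\sum_{i\le j} a_{\pi(i)}$. I will take \cref{lem:token-identification} as given for the correspondence between the subtracted tokens and the selected $b_j$; the \texttt{x}-masked elements contribute $0$ by construction. Every partial sum of the nonnegative $b_j$ is at most $S$, so each intermediate value is nonnegative. Hence the \texttt{Borrow.1} transition on $\epsilon$ (negative result) never fires. After the last phase the region encodes $S-\sum_j b_j=0$. Its cells therefore contain only \texttt{0}, circled \texttt{0}, \texttt{|} or \texttt{\_}, so \textsc{CheckSum} scans them down to $\epsilon$ and enters \textsc{Accept}.

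\textbf{Main obstacle.} The hardest step is the matching-phase bookkeeping, specifically proving the single-marker invariant across the interleaving of digit, delimiter and \texttt{x} steps. The table overloads \texttt{x} and \texttt{0} as both markers and content, and \textsc{BackwardToMatch.Delim} rewrites \texttt{x} to \texttt{0}, so I have to show that leftward scans stop at the correct marker and not at stale symbols. My first attempt will be to prove the invariant with the explicit positional correspondence "certificate offset $=$ input-set offset" and carry out the case analysis on the table rows. Any remaining gaps in the table, such as the uninstantiated \textsc{SbtractionDigit} entry, I will treat as resolved by the stated high-level semantics.
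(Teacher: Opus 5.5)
Your route is the paper's route: a deterministic trace through matching, masking of unselected elements to zeros, element-wise subtraction without underflow because $\sum_j b_j = S$, and the final \textsc{CheckSum} scan. Your single-marker invariant is a more careful version of the paper's steps (1)--(2). Your explicit assumption, that each $c_i$ is either $a_i$ symbol for symbol or an \texttt{x}-string of the same length, is exactly what the paper invokes as ``validity of the certificate.'' It is genuinely needed, since the stated hypothesis $\{b_j\}\subseteq\{a_i\}$ alone does not give positional alignment. The conflict rows are excluded by your invariant together with the fact that \texttt{x}-runs are whole elements. Hence the marker left before a digit is never \texttt{x}, and the marker left before an \texttt{x} is never a circled digit.

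The step that fails is in the subtraction phase: ruling out \textsc{Borrow.1} on $\epsilon$ is not enough. Every processed element, including masked ones (now strings of \texttt{0}s), is subtracted digit by digit at successive positions of the sum region. At each step \textsc{SumArea.M} moves one cell left of the current circled digit, and \textsc{Subtract.M} has a transition only on digits. If some $a_i$ has more digits than $S$, the machine reaches \textsc{Subtract.M} on the blank left of the most significant digit of $S$. It then rejects by the partial-transition convention, even though the sum is correct. For instance, on \texttt{8\_@1\_7\_10\_20\#1\_7\_xx\_xx} the masked element $20$ (rewritten as \texttt{00}) is subtracted first, and its tens digit lands on $\epsilon$.

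So the statement needs an extra hypothesis, e.g.\ that $S$ is written (zero-padded) with at least as many digits as the longest $a_i$. Under that hypothesis your argument closes once you add the digit-level bound $S' - (a_j \bmod 10^{i+1}) \ge S' - a_j \ge 0$, which shows no borrow reaches $\epsilon$ partway through an element. The paper's proof has the same omission; its test instances happen to satisfy the padding condition.

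Relatedly, treating the table gaps as ``resolved by the high-level semantics'' is more than cosmetic. \textsc{ForwardToSubtract} has no rows and \textsc{SbtractionDigit} is never entered, so the literal machine rejects right after its first digit subtraction. You are therefore proving completeness of a repaired machine, and you should say so explicitly; the paper does the same thing silently.
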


\begin{proof}

The machine processes the certificate element sequentially, and within each
element matches digits in the order determined by the tape layout, starting
from the most significant digit.

\begin{itemize}
\item[(1)] \textbf{Successful element matching for each certificate element.}
In state \textsc{Forward}, the machine reaches the beginning of the next unprocessed certificate element and transitions to \textsc{FindDigitToMatch}. 
For each digit of the current certificate element, a corresponding digit exists in the input-set region at the same relative position, guaranteed by the validity of the certificate. 
State \textsc{BackwardToMatch.M} deterministically locates the matched digit and circles it, while simultaneously restoring the previously matched digit.
The boundary of each element is confirmed by $\textsc{BackwardToMatch.Delim}$ and $\textsc{MatchPosition.Delim}$.
Thus, $i$-th certificate is positioned at the exactly $i$-th input element and $j$-th matched element is identical to some element of $\{a_1, \dots, a_k\}$.

\item[(2)] \textbf{Replacing unused element with $0$ value}
When \textsc{FindDigitToMatch} encounters symbol $x$, the machine matches any digit to this symbol using temporary marker $x$ using $\textsc{BackwardToMatch.X}$ and $\textsc{MatchPosition.X}$
and the following matching process replacing $x$ with $0$.
This phase confirms that any unused element is replaced by $0$ value so that it cannot contribute to total sum.
 
\item[(3)] \textbf{Element-wise subtraction corresponding to the certificate.}
For each successfully confirmed element, the machine enters the subtraction states \textsc{BackwardToSubtract}.
Using the states \textsc{SumArea.M}, \textsc{Subtract.M}, and \textsc{Borrow.$B$},
the subtraction is performed from LSD to MSD, and the subtracted digit is erased. 
The digits marked during the matching phase encode exactly the value of the
current matched element $b_j$, and the subtraction performed corresponds
precisely to subtracting $b_j$ from the target-sum region.
Because
\[
\sum_{j=1}^m b_j = S,
\]
All borrow propagations terminate successfully and no underflow occurs; otherwise, the input is rejected. After processing all elements, the target-sum region contains only zeros.

\item[(4)] \textbf{Final zero check and acceptance.}
Upon reaching the end of the subtraction, the machine enters
\textsc{CheckSum}.
Since the target-sum region encodes only zeros, the machine reaches the accepting
state.
\end{itemize}

Hence, for every valid certificate, $M_{\mathrm{SS}}$ never enters a
rejecting state and eventually accepts, establishing completeness.
\end{proof}

We analyze the time and space complexity of $M_{\mathrm{SS}}$ as a function of the Subset-Sum problem instance length, 
rather than the total input length of the verifier TM. Let $n$ denote the length of the initial tape contents—including the target sum, 
the input-set encoding, and delimiters—excluding the certificate. We then assume that the certificate length is at most $n$

\begin{lemma}[Time and Space Complexity of Certificate-Oblivious Subset-Sum Verifier] \label{lem:time-space-subsetsum}
Let $n$ denote the length of the initial tape contents (including the target sum, the input set, and delimiters). 
Assuming the certificate length is at most $n$, the verifier machine $M_{\mathrm{SS}}$ defined in \cref{def:sum_of_subset_verifier_tm} halts in $ \bigO(n^2)$ time and uses $ \bigO(n)$ tape space.
\end{lemma}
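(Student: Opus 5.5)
The plan is to bound separately the number of head moves in each phase of $M_{\mathrm{SS}}$ (\cref{tab:tm_subsetsum_transitions}), then sum them. Let $N = n + m + 1$ be the total initial tape length, where $m \le n$ is the certificate length, so $N = \bigO(n)$.

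\textbf{Space.} First I will show that the head never leaves the interval from one cell left of the target-sum region to one cell right of the certificate. Every leftward scan that could run off the tape stops at the first $\epsilon$ on the left. These are \textsc{BackwardToRestore}, \textsc{Borrow.0}, \textsc{Borrow.1}, and \textsc{CheckSum}, and each turns around or halts there. Every rightward scan stops at the first $\epsilon$ after the certificate, via \textsc{FindDigitToMatch}. No transition writes a non-blank symbol outside this interval: \textsc{Borrow.1} writes \_ on $\epsilon$ only when it rejects. Hence the space used is $N + \bigO(1) = \bigO(n)$.

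\textbf{Time.} I will split the run into three phases.
\begin{itemize}
\item \textbf{Matching.} Each matching round is one \textsc{Forward}/\textsc{FindDigitToMatch} sweep to the next unprocessed certificate symbol, followed by a \textsc{BackwardToMatch}/\textsc{MatchPosition} return into the input-set region. Each round costs $\bigO(N)$ moves. Every round overwrites exactly one certificate symbol by $\sim$, which is never restored, so there are at most $m$ rounds. The matching phase therefore costs $\bigO(mN)$.
\item \textbf{Subtraction.} Each subtraction round is triggered from \textsc{BackwardAfterMatching} or \textsc{SbtractionDigit}. It converts one matched digit of the input-set region into \$, travels to the sum area, and performs one \textsc{Subtract.M} followed by a \textsc{Borrow.B} sweep to the left blank. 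Each of these legs is $\bigO(N)$, so one round is $\bigO(N)$. Each round permanently consumes one digit, or one delimiter via \textsc{BackwardToRestore}, of the input-set region. So there are at most $\bigO(n)$ rounds, and this phase costs $\bigO(nN)$.
\item \textbf{Final check.} The \textsc{CheckSum} scan is a single leftward pass of $\bigO(N)$ moves.
\end{itemize}
Summing the three phases gives $\bigO((m+n)N + N) = \bigO(n^2)$.

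\textbf{Main obstacle.} The hard step is the potential argument that every round is charged to a distinct, irreversibly consumed cell. The circled digits and the marker $|$ are restored, so they cannot serve as the potential. I will use two counters instead: the number of $\sim$ symbols in the certificate region, and the number of \$ and erased cells in the input-set region. I will check against the table that both are nondecreasing, and that each round increases one of them by at least one. I also need that each round is a bounded number of monotone sweeps, with no oscillation inside a round. For this I will verify that within every state block the direction is constant, except at the $\bigO(1)$ turn-around transitions. This verification is a direct inspection of \cref{tab:tm_subsetsum_transitions}.
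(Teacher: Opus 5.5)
Your decomposition matches the paper's proof: at most $m$ matching rounds and at most (set-encoding length) subtraction rounds, each a constant number of $\bigO(N)$ sweeps, plus one \textsc{CheckSum} pass, with space controlled by the blank boundaries. Your two monotone counters (the $\sim$'s in the certificate region and the \$'s in the set region) tighten the paper's informal ``at most $n$ scans per phase''.

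The step that fails is the boundary claim in your space paragraph. The leftward sweeps you list are not the only ones that can reach the left blank. \textsc{BackwardAfterMatching} stops only on a circled digit or an \texttt{x}, and its wildcard rule also fires on $\epsilon$, so it never turns around there. Two certificates allowed by your hypothesis have no such marker:
\begin{itemize}
\item the empty certificate, of length $0 \le n$;
\item a certificate shorter than the set encoding whose last symbol is \texttt{\_}. The last round leaves only a \texttt{|}, because \textsc{BackwardToMatch.Delim} has already restored the previous circled digit or turned the previous \texttt{x} into \texttt{0}.
\end{itemize}
On these inputs the head runs through \texttt{@}, the target-sum region and the blanks forever. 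So for the machine as tabled, neither your confinement claim nor halting holds under the hypothesis ``certificate length at most $n$''. The paper's proof has the same hole: it asserts that the head reverses at every blank without checking this state.

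The repair is to use the format the paper fixes for $M_{\mathrm{SS}}$: the certificate has exactly the length of a nonempty set encoding. Then the $j$-th certificate symbol is matched against the $j$-th cell of the set region, and any type mismatch rejects (e.g.\ \textsc{MatchPosition.Delim} has no rule on a digit). The final round therefore acts on the last cell, which is a digit, and leaves a circled digit or an \texttt{x} there. Hence \textsc{BackwardAfterMatching} stops after $\bigO(m)$ moves. With this assumption, or with an added rejecting rule for \textsc{BackwardAfterMatching} on \texttt{@}, your argument goes through unchanged.

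Your table inspection must also address \textsc{ForwardToSubtract}, which has no rules in \cref{tab:tm_subsetsum_transitions}. You should state that you read it as a rightward sweep to the leftmost \$ that hands control to \textsc{SbtractionDigit}. Under the paper's partial-function convention it would instead reject immediately. That only shortens the run, so your upper bound survives either reading.
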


\begin{proof}
We bound the time and space complexity separately.

\smallskip
\noindent\textbf{Time complexity.} 
The machine operates through a sequence of deterministic matching and subtraction phases, each implemented using linear tape scans, marker-based position recovery, and irreversible digit erasure. 
While individual scans are linear ($\bigO(n)$), certain phases are repeated proportional to the input length, resulting in quadratic total time.

\begin{enumerate}
\item \textbf{Forward scan and digit-to-match search.}
Before locating a matching certificate symbol, the machine performs a left-to-right scan until it reaches the `\#' delimiter. Each scan requires $\bigO(n)$ steps.
Since this is performed for each of the at most $n$ digits in the certificate, the cumulative time is $\bigO(n^2)$.
In the state \textsc{FindDigitToMatch}, the machine scans the certificate region to locate the most-significant unmatched digit. Since at most $n$ digits are processed, the total time for this phase is $\bigO(n^2)$.

\item \textbf{Backward matching and element verification.}
For each selected certificate symbol (digit, delimiter, or x), the machine scans leftward to locate a corresponding element in the input-set region. 
Each scan takes $\bigO(n)$, and since at most $n$ such operations are performed, the total time is $\bigO(n^2)$.

\item \textbf{Subtraction phase.}
Digit-wise subtraction with borrow propagation involves repeated scans between the matched element and the target-sum region. 
Each subtraction step requires $\bigO(n)$ time. With at most $n$ digits to be subtracted, the total time spent in this phase is $\bigO(n^2)$.

\item \textbf{Final sum check.}
A final linear scan of the target-sum region verifies that all digits are zero, requiring a single $\bigO(n)$ pass.
\end{enumerate}

Summing these contributions, the total running time is $\bigO(n^2) + \bigO(n^2) + \bigO(n^2) + \bigO(n) = \bigO(n^2)$.

\smallskip
\noindent\textbf{Space complexity.}
The machine never extends the non-blank area of the tape beyond the cells initially occupied by the input and the provided certificate. 
All auxiliary information is managed by overwriting existing symbols, and the head reverses its direction whenever it meets a blank symbol as a boundary, preventing the head from moving beyond those boundary blank cells. 
Thus, the total number of tape cells used, including the two boundary cells, is $\bigO(n)$.

\smallskip
Therefore, $M_{\mathrm{SS}}$ operates in quadratic time and linear space relative to the input length.
\end{proof}

The Subset-Sum verifier Turing machine $M_{\mathrm{SS}}$ therefore satisfies both soundness and completeness 
and runs in polynomial time with respect to the input size of corresponding Subset-Sum problem instance, making it suitable 
for integration into the NP verifier simulation framework described in the next 
section.

\begin{lemma}[Certificate Obliviousness of Subset-Sum Verifier]
\label{lem:oblivious-computational-phases}
Let $M_{\mathrm{SS}}$ be the Subset-Sum verifier Turing machine. For any fixed problem instance $L_{\text{fixed}}$ and certificate length $m = |Y|$, the head trajectory $H_i(Y)$ is determined solely by the iteration count $k$ and the problem instance $L_{\text{fixed}}$, and is independent of the certificate content $Y$.
\end{lemma}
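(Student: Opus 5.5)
We will follow the same template as the certificate-obliviousness proofs for $M_{\mathrm{SAT}}^{\mathrm{ID}}$ and $M_{\mathrm{SAT}}$: a minimal-counterexample induction on the step index $i$, carrying a joint invariant. Fix $L_{\text{fixed}}$ and two well-formed certificates $Y_1,Y_2$ with $|Y_1|=|Y_2|=m$. Well-formedness means each $c_i$ is either the decimal $a_i$ or a run of x's of the same length, so $Y_1$ and $Y_2$ have identical delimiter positions. The invariant has three parts.
\begin{enumerate}
\item \textbf{Head invariance:} $H_i(Y_1)=H_i(Y_2)$.
\item \textbf{Phase invariance:} both runs are in the same state family (Forward, FindDigitToMatch, BackwardToMatch, MatchPosition, BackwardAfterMatching/SbtractionDigit, BackwardToSubtract, SumArea/Subtract/Borrow, BackwardToRestore, CheckSum). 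The runs may differ in the digit parameter $M$ and the borrow bit $B$.
\item \textbf{Structural data invariance:} the two tapes agree at every cell, except that a cell may hold a digit (plain or circled), \texttt{x}, \texttt{0}, or $\sim$ in one run and another symbol of this ``value class'' in the other. Delimiters and the markers \texttt{|}, \texttt{@}, \texttt{\#}, $\$$, $\epsilon$ occupy identical cells in both runs.
\end{enumerate}
The base case $i=0$ holds because the instance part is identical and the certificate cells agree on delimiters.

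\textbf{Inductive step, matching phase.} We go through the table by phase. \textsc{Forward} moves right on every symbol until \texttt{\#}, so its head movement is certificate-independent. \textsc{FindDigitToMatch} always moves left after writing $\sim$; it branches only between digit, \texttt{x}, and \_ by entering \textsc{BackwardToMatch.M/.X/.Delim}. Since delimiter positions agree, the two runs agree on whether the Delim branch is taken. Every \textsc{BackwardToMatch} and \textsc{MatchPosition} rule moves $-1$ while scanning and $+1$ upon finding the marker, and this is the same for every parameter value. The marker it stops at is \texttt{|}, a circled digit, \texttt{x}, or \texttt{@}, and it always lies at the aligned position of the previously matched symbol. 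That alignment holds in both runs, because the $k$-th certificate symbol is matched against the $k$-th symbol of the set region. This is exactly why the paper requires x-runs of equal length.

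\textbf{Inductive step, subtraction and check phases.} In \textsc{BackwardToSubtract}, \textsc{SumArea}, \textsc{Subtract}, and \textsc{Borrow.B}, every listed move is $-1$ except the terminating rules on $\epsilon$, which move $+1$. Borrow propagation runs all the way to $\epsilon$ whether $B=0$ or $B=1$, so the number of steps is fixed by the width of the sum region, and that width is part of $L_{\text{fixed}}$. \textsc{CheckSum} moves $-1$ on all symbols until $\epsilon$, and halting happens only at $\epsilon$. Therefore the trajectory is determined by the iteration count $k$ (which certificate symbol is being processed) and by $L_{\text{fixed}}$, as the statement claims.

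\textbf{Main obstacle: rejection and digit-dependent early exits.} Some rules in \textsc{BackwardToMatch.M/.X} (\textquotedblleft Conflict\textquotedblright) and the underflow rule of \textsc{Borrow.1} send one run to \textsc{Reject} earlier than the other. We handle this with the definition of certificate-obliviousness, which compares trajectories only up to $\min(T(L_{\text{fixed}},Y_1),T(L_{\text{fixed}},Y_2))$. At a rejection step both heads still move in the same direction, so agreement holds up to the earlier halting time.

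The harder subcase is \textsc{SumArea.M}. It distinguishes \_ and \texttt{|} from a circled digit, and it uses the $\$$ position, which in turn depends on whether a circled digit or an \texttt{x} was found in \textsc{BackwardAfterMatching}. We plan to show that these branches all have the same move $-1$. We will also show that the $\$$/circle markers occupy the same cells in both runs, since they track the LSD position, which is fixed by the element widths. We expect this marker-position bookkeeping across the round trips to be the step that needs the most careful case checking.
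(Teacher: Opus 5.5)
Your strategy matches the paper's in substance. Every turnaround cell (the previous match marker or \texttt{@}, \texttt{\#}, the first unprocessed certificate cell, the \texttt{\$} cells, the sum-region markers, $\epsilon$) lies at a position fixed by the iteration count and $L_{\text{fixed}}$, and at each such cell all applicable rules move the head the same way. You run this as a step induction, like the SAT lemmas, rather than the paper's forward/backward subphase description. Your use of the $\min(T_1,T_2)$ clause for early rejections covers something the paper's proof ignores.

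The gap is that the invariant you carry cannot support the inductive step. Your ``value class'' lets a cell hold $\sim$ in one run and a digit in the other, or a circled digit in one run and a plain digit in the other. But \textsc{FindDigitToMatch} moves $+1$ on $\sim$ and $-1$ on a digit. And \textsc{BackwardToMatch} stops and moves $+1$ on a circled digit or set-region \texttt{x}, but moves $-1$ over a plain digit. So head invariance at step $i+1$ does not follow from your hypothesis at step $i$. The alignment facts you assert in prose are exactly what must go inside the invariant. Let $p_\#$ and $p_@$ be the cells of \texttt{\#} and \texttt{@}, and refine the invariant as follows. Fixed symbols and $\sim$ agree cell by cell; after $k$ matches the deleted cells are exactly $p_\#+1,\dots,p_\#+k$. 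A marker class (circled digits and the set-region \texttt{x}) occupies the same cells in both runs. During matching this is only $p_@+k$. During subtraction it is the \texttt{\$} cells, \texttt{|} at $p_@-1$, and the single sum-region circle, all at cells fixed by which element and digit is being processed. Everything else is plain. Phase invariance should likewise be equality of states up to $M$, $B$ and the \texttt{.M}/\texttt{.X} suffix, not the lumped families. Note also that the printed table gives \textsc{ForwardToSubtract} no rules, and nothing enters \textsc{SbtractionDigit}. You must supply the intended forward subphase (scan right to \texttt{\$}, step into \textsc{SbtractionDigit}), as the paper's proof implicitly does.

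Two smaller points. First, under your own well-formedness, $c_i\in\{a_i,\mathtt{x}^{|a_i|}\}$, the Conflict rules never fire, since they need an \texttt{x} adjacent to a digit inside one element. The early exits that actually occur are the \textsc{Borrow.1} underflow and the wildcard rule of \textsc{CheckSum}. Contrary to what you wrote, the latter halts on a nonzero digit before reaching $\epsilon$, but with the same move $-1$, so it is harmless. Second, that restriction is doing real work. For arbitrary certificates over the digits, \texttt{x} and \texttt{\_}, \textsc{MatchPosition.M} reading a different digit has no rule. The implementation's default rejection then moves $-1$ while a successful match moves $+1$, so head invariance fails at the halting step, which the definition of obliviousness includes. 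Your argument, like the paper's (which relies on the delimiter structure being fixed by $m$), establishes the lemma only on the well-formed class.
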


\begin{proof}
The state space $Q$ is partitioned into two disjoint phases, $P_M$ (Matching) and $P_S$ (Subtraction), each further divided into forward ($S_{*, fwd}$) and backward ($S_{*, bwd}$) subphases:

\begin{itemize}
    \item \textbf{Matching Phase ($P_M = S_{M, fwd} \cup S_{M, bwd}$):}
    \begin{itemize}
        \item $S_{M, fwd}$ : the state \texttt{Forward} and the states with prefix in $\{\texttt{FindDigitToMatch, MatchPosition}\}$
        \item $S_{M, bwd}$ : the state in $\{\texttt{BackwardToMatch}\}$
    \end{itemize}
    \item \textbf{Intermediate Phase ($P_I$)}: the state \texttt{BackwardAferMatching}
    \item \textbf{Subtraction Phase ($P_S = S_{S, fwd} \cup S_{S, bwd}$):}
    \begin{itemize}
        \item $S_{S, fwd}$ : the states with the prefix in $\{\texttt{ForwardToSubtract, SumArea}\}$
        \item $S_{S, bwd}$ : the states with the prefix  in $\{\texttt{BackwardToSub}$, \texttt{BackwardToRes}, $\texttt{Subtract}, \texttt{Borrow}$, $\texttt{CheckSum}\}$
    \end{itemize}
\end{itemize}

For any certificate $Y$ of length $m$, the machine performs exactly $m$ iterations in each phase.

\begin{itemize}
    \item \textbf{Matching Phase ($P_M$):} At each iteration $k \in \{1, \dots, m\}$, the machine performs a \textit{Forward Subphase} and a \textit{Backward Subphase}. During the Forward subphase, the head moves from the fixed left marker $p_l^0$ (@) to the current fetch position $p_r^0 + k$. Since $k$ is determined strictly by the iteration count, the head trajectory is invariant for all $Y$. In the Backward subphase, the machine scans to the marker positions (circled digits, $|, x$) and shifts the marker index by $+1$. Because these marker shifts depend only on the delimiter structure—which is fixed for a given certificate length $m$—the turnaround positions remain independent of the specific digit values stored in the certificate.

	\item \textbf{Intermediate Phase ($P_{int} = \{ \texttt{BackwardAfterMatching} \}$):}
    This phase acts as a bridge, triggered upon encountering the right delimiter `$\epsilon$' in the Matching Phase. The head scans leftward to the designated marker, which is then replaced by the symbol `\$' to initiate the Subtraction Phase.

    \item \textbf{Subtraction Phase ($P_S$):} At each iteration $k \in \{1, \dots, m\}$, the machine employs the symbol `\$' as a dynamic right marker. In the Forward subphase, the head scans rightward up to the current position of `\$' (at $p_r^0 - k$), and in the Backward subphase, it scans leftward until the boundary `$\epsilon$'. Both the forward turnaround point (`\$') and the backward turnaround point ($\epsilon$) are updated deterministically as a function of the iteration $k$. Consequently, the head trajectory is invariant regardless of the digit values being processed.

    \item \textbf{Global Invariance:} Since the transition between subphases is triggered by reaching fixed spatial coordinates (markers $p_l, p_r, \epsilon, \$ $) rather than by the content of the certificate, the head position at any time $t$ is defined by a function $H_t = f(L_{\text{fixed}}, m, t)$. Thus, $H_t(Y_1) = H_t(Y_2)$ for all $Y_1, Y_2$ such that $|Y_1| = |Y_2| = m$.
\end{itemize}

Since the head trajectory $H_i(Y)$ is invariant for all certificates of the same length, the execution graph is certificate-oblivious, satisfying the Grid-aligned Transition property.
\end{proof}

\begin{theorem}[Polynomial-Time Verifier for Subset-Sum] \label{thm:subset-sum-verifier}
The Turing machine $M_{\mathrm{SS}}$ defined in
\cref{def:sum_of_subset_verifier_tm} is a correct NP verifier for the
\textsc{Subset-Sum} problem.

More precisely, for every input string $W$  encoding an instance
of \textsc{Subset-Sum} of length $n$ together with a certificate of length at most $n$,
\begin{itemize}
    \item $M_{\mathrm{SS}}$ accepts $W$ if and only if the certificate
    encodes a valid subset whose sum equals the target value, and
    \item $M_{\mathrm{SS}}$ halts in $\bigO(n^2)$ time and uses $\bigO(n)$ tape
    space.
\end{itemize}
Hence, \textsc{Subset-Sum} admits a deterministic quadratic-time and linear-space
verifier.
\begin{proof}
The claim follows by combining soundness, completeness, and the resource bounds
established previously.

\begin{itemize}
\item \textbf{Soundness.}
By \cref{lem:soundness-subsetsum},
if $M_{\mathrm{SS}}$ accepts an input $W$, then the certificate encoded
in $W$ corresponds to a subset of the input multiset whose elements sum exactly
to the target value.
Thus, no invalid certificate is ever accepted.

\item \textbf{Completeness.}
By \cref{lem:completeness-subsetsum},
for every valid certificate encoding a subset whose sum equals the target value,
$M_{\mathrm{SS}}$ deterministically reaches the accepting
configuration.
Thus, every valid instance admits an accepting computation.

\item \textbf{Time and space complexity.}
By \cref{lem:time-space-subsetsum},
on inputs of length $n$ the machine $M_{\mathrm{SS}}$ halts in $\bigO(n^2)$
time and uses $\bigO(n)$ tape space.
In particular, the verifier runs in deterministic polynomial time.
\end{itemize}

Combining the above arguments, $M_{\mathrm{SS}}$ is a correct
polynomial-time verifier for the \textsc{Subset-Sum} problem.
\end{proof}

\end{theorem}

The Subset-Sum machine therefore constitutes a finite deterministic verifier for 
a canonical NP-complete problem within our NP verifier simulation framework

\subsection{Restriction of Certificate Symbols for Input Sanitization}

In the original verifier-simulation framework, certificates are allowed to be arbitrary strings over the entire finite input alphabet of the Turing machine.
 Consequently, for full generality, each verifier Turing Machine must be equipped with a certificate-validation phase 
 that checks whether the given certificate conforms to the syntactic constraints required by the subsequent computation.

\paragraph{Sanitization Turing Machine.}

To preserve full compatibility with the original framework, certificate validation can be incorporated directly into the verifier by adding a finite preprocessing phase; 
in other words, the input to the Turing machine can be sanitized. In this phase, the Turing machine scans the certificate region 
to verify that all symbols belong to the allowed alphabet and to ensure that essential structural requirements are met—such as excluding non-certificate symbols or rejecting invalid ones. 
If an invalid symbol or a critical formatting error is detected, the machine halts in the \textsc{Reject} state; otherwise, it proceeds to the corresponding problem-specific verification phase.

For the two SAT TMs presented in \cref{subsec:sat_tm_input_dependent,subsec:sat_tm_fixed}, 
an explicit input-sanitization phase is required for complete conformance with the original verifier-simulation framework. 
In this case, the preprocessing phase scans the certificate region to ensure it consists solely of symbols from the alphabet $\{\texttt{T}, 
\texttt{F}\}$ and is terminated by a $\blank$ symbol before entering the main SAT verification routine.
Since the preprocessing phase only checks for allowed certificate symbols, the head movement remains independent of the certificate contents, thus preserving the certificate-oblivious property.

\begin{table}[H]
\centering
\footnotesize
\begin{tabular}{llllll}
\textbf{State} & \textbf{Read} & \textbf{Next} & \textbf{Write} & \textbf{Move} & \textbf{Comment} \\
\hline
InputCheck & \# & CertificateCheck & \# & R & enter certificate region \\
InputCheck & *  & InputCheck & * & R & skip formula symbols \\

CertificateCheck & T & CertificateCheck & T & R & valid certificate symbol \\
CertificateCheck & F & CertificateCheck & F & R & valid certificate symbol \\
CertificateCheck & $\epsilon$ & BackToBeginning & $\epsilon$ & L & end of certificate \\
CertificateCheck & * & Reject & \  & R & invalid certificate symbol \\

BackToBeginning & * & BackToBeginning & * & L & rewind to tape start \\
BackToBeginning & $\epsilon$ & Check & $\epsilon$ & R & start main verification \\
\hline
\end{tabular}
\caption{Transition table for input sanitization and certificate validation.
This preprocessing phase verifies that the certificate region contains only
symbols from the allowed alphabet $\{T,F\}$ before entering the main SAT
verification routine.} \label{tab:sanitization_sat}
\end{table}

\Cref{tab:sanitization_sat} presents the transition table for the input-sanitization phase of the input-dependent SAT Turing machine. 
For the fixed-state Turing machine construction primarily used in our implementation, the role of the \textsc{Check} state is replaced by the \textsc{Check.Forwarded} state. 
The preprocessing logic remains identical; only the entry state of the main verification routine is renamed to align with the fixed-state transition layout. 
For both TM variants, the \textsc{InputCheck} state naturally serves as the initial state $q_{\text{init}}$.

Similarly, for the Subset-Sum Turing machines presented above, an explicit input-sanitization phase is required in the original formulation. 
Since certificates are arbitrary strings, the machine must first verify that the certificate region contains only valid numeric symbols and permitted separators, 
and that no unused markers appear beyond the designated boundary. 
\Cref{tab:sanitization_sum_of_subset} presents a transition fragment implementing this validation, where the TM's initial state is \textsc{InputCheck}.

\begin{table}[H]\label{tab:sanitization_sum_of_subset}
\centering
\footnotesize
\begin{tabular}{llllll}
\textbf{State} & \textbf{Read} & \textbf{Next} & \textbf{Write} & \textbf{Move} & \textbf{Comment} \\
\hline
InputCheck & \# & CertificateCheck & \# & R & enter certificate region \\
InputCheck & *  & InputCheck & * & R & skip problem instance \\

CertificateCheck & D & CertificateCheck & D & R & valid digit in certificate \\
CertificateCheck & x & CertificateCheck & x & R & valid digit in certificate \\
CertificateCheck & \_ & CertificateCheck & \_ & R & valid separator / padding \\
CertificateCheck & $\epsilon$  & BackToBeginning & $\epsilon$ & L & end of certificate \\
CertificateCheck & * & Reject & \_ & R & invalid certificate symbol \\

BackToBeginning & * & BackToBeginning & * & L & rewind to tape start \\
BackToBeginning & $\epsilon$ & Forward & $\epsilon$ & R & begin subset-sum verification \\
\hline
\end{tabular}
\caption{Transition table for input sanitization and certificate validation in
the Sum-of-Subset Turing Machine.
This preprocessing phase ensures that the certificate region contains only
numeric symbols and permitted separators before entering the main computation.}
\end{table}

These input sanitization transitions of the machines runs in time linear in the length of the input tape and
uses only a constant number of additional states.
Therefore, composing such a sanitization phase with any of the verifier machines
defined above preserves polynomial time bounds and does not affect the
feasible-graph simulation results.

This construction is a concrete realization of the \emph{built-in input
sanitization} mechanism assumed in the original framework, where the verifier
is required to behave correctly on \emph{all} certificate strings.

\paragraph{Practical enumeration strategy.}
The original verifier model assumes certificates ranging over the full alphabet $\Sigma^*$ and therefore requires an explicit sanitization phase to reject malformed inputs. 
In the actual implementation of the NP verifier simulation framework, these sanitization phases are not executed explicitly. 
Instead, the enumeration procedure \textproc{GetNextFloorEdge()} is designed to generate certificates only over the allowed alphabet. 
As a result, all certificates passed to the verifier are composed of only valid certificate symbols by construction.
This design choice does not weaken the framework. All certificates that could possibly be accepted by the verifier under the original model are still enumerated, 
while certificates that would be rejected deterministically during sanitization are pruned \emph{a priori}. 
From a theoretical perspective, this is equivalent to composing the verifier with a sanitization Turing machine; 
from an implementation perspective, it is strictly more efficient, as it avoids redundant computation on trivially rejecting inputs.

In the remainder of this paper, inputs are therefore assumed to be composed of only the expected certificate symbols by construction. 
This assumption is fully consistent with the original NP verifier simulation framework and preserves all soundness, completeness, and complexity guarantees established above.

\subsection{Refined Bounds of the Footmarks Graph} \label{subsec:refined_footmarks_bounds}

In the analysis of the original framework, the width $w$ and height $h$ of the computation graph were bounded uniformly 
by a single polynomial $p(n)$ representing the running time of the verifier Turing machine, independent of the machine's internal structure. 
While this approach is sufficient for establishing polynomial-time simulability, it tends to overestimate the actual graph size within the NP verifier simulation framework. 
This is because the width and height can be significantly lower than the total time bound; for instance, the space bound, which determines the graph width, is typically much smaller than the time bound. 
In this subsection, we provide a refined analysis of the footmarks graph size for the maximal computation walks of the three TMs, mainly bounding their width and height.

For clarity in this subsection, we denote $n$ as the input size of the NP problem itself. This is justified by the fact that the tape input size $n_v$ of the corresponding verifier is at most linear in the problem size $n$ (specifically, $n_v \leq 2n$ for typical encodings), ensuring that this substitution does not affect the asymptotic time complexity calculations.

\paragraph{Input-dependent SAT verifier.}
For the input-dependent SAT verifier Turing Machine, the tape length is $\bigO(n)$ and each
individual computation walk has height
\[
h = \bigO(n).
\]

This bound holds because the machine performs a single left-to-right traversal followed by a returning backward traversal for each literal in the clause region, 
ensuring that the maximum tier index remains $\bigO(n)$, specifically $2n$. Consequently, the width $w$ is also $\bigO(n)$, specifically $n+1$, as it is constrained by the tape space bound. 
Notably, as discussed in the preceding section, this specific structure allows for a more refined analysis of the total vertex size and edge size.

Unlike the fixed-state construction, however, the control-state set $Q$ depends
on the input size.
In particular, variable-index--dependent states yield
\[
|Q| = \bigO(n).
\]

Consequently, for each tape index and tier, there may exist up to $\bigO(|Q|^{2})$ distinct computation nodes within a transition case.

Importantly, this increase in the state space does not affect the maximum tier of any individual computation walk, nor does it affect the height of the footmarks graph. 
Instead, the effect is purely an inflation of the node count within each transition case, which in turn amplifies the node and edge complexity of the computation graph.

In particular, since there are up to $h+1$ transition cases for each tape cell index, and each tape cell is associated with $\bigO(|Q|^2 h)$ vertices, 
the footmarks graph construction—where the index difference between incident nodes of an edge is at most 1—yields a total number of edges bounded by
\[
O\!\left(
  w \cdot (|Q|^{2} h)^{2}
\right)
=
O\!\left(
  n \cdot (n^{2} \cdot n)^{2}
\right)
=
O(n^{7}).
\]

\begin{remark}
The primary design objective of the input-dependent SAT construction is structural simplicity. 
This is achieved through a minimal tape alphabet—consisting only of clause symbols and truth values $\{T, F\}$—and a straightforward transition logic that reduces the overhead of rule matching. 
However, this simplicity comes at the cost of a state set that grows linearly with the input size. 
Consequently, as $n$ increases, the graph size of the input-dependent model significantly exceeds that of the fixed-state verifier, 
since the latter maintains a constant number of states, as detailed in the following subsection.
\end{remark}

\paragraph{SAT verifier with fixed states.}
For the SAT verifier Turing machine with fixed control states, each variable index digit is decremented during each fetch step, 
triggering clause-wise scans. As a result, the worst-case running time is quadratic in the input size, $\bigO(n^2)$, due to the repeated scans over the digit positions.
Nevertheless, when representing the computation with certificate length $m$ as a footmarks graph, the following bounds apply:
\begin{itemize}
\item The control-state set $Q$ is constant ($|Q| = \bigO(1)$).
\item The width $w$ corresponds to the tape space bound and grows linearly with $n$, such that $w = \bigO(n)$, specifically $n+m+1$.
\item The height $h$ corresponds to the maximum number of transitions per cell, which involves a constant number of backward fetching/assigning and forward evaluation passes, yielding $h = \bigO(n)$, specifially $2m+2$.
\end{itemize}
Thus, while the total configuration space is $\bigO(n^2)$, the structural complexity of the footmarks graph satisfies:
\[
|E(G)| = \bigO(w \cdot h^2) = \bigO(n \cdot n^2) = \bigO(n^3),
\]
characterized by linear width and linear height, leading to an overall cubic edge complexity in terms of the input size $n$.

\paragraph{Subset-Sum verifier}
For the \textsc{Subset-Sum} verifier Turing machine, the tape length is $\bigO(n)$, the certificate length $m$
and the computation consists of repeated linear scans combined with digit-wise matching and subtraction. 
Although individual tape cells may be revisited up to $\bigO(n)$ times during borrow propagation and repeated matching passes, 
these revisits occur sequentially for each certificate digit or delimiter.
As a result, the maximum tier of any node in a computation walk is $\bigO(n)$, specifically $3m$,
implying that the height of the footmarks graph is $h = \bigO(n)$. 
Furthermore, the width $w$—representing the number of cells reachable by the Turing machine head—is constrained by the space bound $\bigO(n)$, specifically $n+m+1$. 
Thus, the effective bounds satisfy:
\[w = \bigO(n), \qquad h = \bigO(n), \qquad |E(G)| = \bigO(w \cdot h^2) = \bigO(n \cdot n^2) = \bigO(n^3).
\]

\paragraph{Summary (Graph Parameters and Simulation Complexity)}

For all concrete (fixed-state) verifiers considered in this paper, including the fixed-state SAT verifier and the \textsc{Subset-Sum} verifier, 
the corresponding footmarks graph satisfies the following asymptotic bounds:
\[
    w = \bigO(n), \qquad h = \bigO(n), \qquad |E(G)| = \bigO(w h^{2}) = \bigO(n^{3}).
\]
Substituting these parameters into the footmarks graph construction, we obtain
\[
T_f = \bigO\bigl(w^{2} h^{4} (h\log h + \log w)\bigr) = \bigO(n^{7} \log n).
\]
Taking into account the cost of verifying and extending all candidate edges in the footmarks graph, 
the total running time of \textproc{SimulateVerifierForAllCertificates()} is bounded by
\[
    \bigO\bigl(w^4 h^8 T_f\bigr) = \bigO(n^{19} \log n),
\]
which confirms that the NP simulation runs in polynomial time relative to the input size $n$ for both problems. 
This observation does not apply to the input-dependent SAT construction, where the state-indexed expansion dominates the graph complexity.

It is important to note that the parameters $w$ and $h$ cannot be replaced by a generic running-time bound $p(n)$. 
If one naively sets $w, h \le p(n) = \bigO(n^2)$, the resulting bound would become $\bigO(n^{38} \log n)$, 
which severely overestimates the actual complexity. 
In contrast, when the parameters derived above are applied to the original NP verifier simulation framework of~\cite{lee2025PNP}, 
the same asymptotic bound $\bigO(n^{19} \log n)$ is obtained. 
This shows that the reduced degree arises from structural properties of the constructed computation graph dependent on the verifier TM rather than from any algorithmic modification.

Consequently, beyond constant-factor savings, further reductions in simulation cost can hardly be obtained by refining Turing machine transition rules alone, 
since the linear bound is an inherent bound for both width and height. 
Hence, any asymptotic improvement must instead arise from structural refinements of the computation graph itself, motivating a closer examination of the feasible graph construction.

\section{Efficient Construction of Feasible Graphs} \label{sec:improvements_on_feasiblegraph}

While the time complexity analysis establishes a uniform worst-case upper bound of
$\bigO(n^{38} \log n)$ for \textproc{SimulateVerifierForAllCertificates()}, this bound is not representative for all concrete verifier constructions.

In particular, for both the \textsc{Subset-Sum} Turing Machine and \textsc{SAT},
a refined analysis of the computation graph shows that the width and height
admit substantially smaller polynomial bounds, yielding an effective theoretical
runtime of $\bigO(n^{19}\log n)$ for these constructions, as discussed in the previous section.

Despite this reduction, the resulting bound remains too large for practical execution. This motivates a refined feasible-graph construction designed for efficiency. 
In the original framework, the feasible graph was refined primarily through iterative additive sweep operations over index-adjacent edges. 
While these operations are polynomially bounded, they incur significant overhead due to repeated global sweep passes and delayed convergence.

In this work, we reorganize and improve the feasible-graph construction by explicitly separating and accelerating two critical components: 
\emph{cover-edge computation} and \emph{feasible-graph trimming}. 
By computing cover edges before the main feasible-graph refinement, we reduce the initial computational cost, which in turn significantly improves overall performance.

\subsection{Efficient Cover Edge Computation}
\label{subsec:cover-edge}

Cover edges represent the minimal structural upper marginal connections required to preserve potential accepting computation walks. 
In the original approach, ceiling-adjacent relationships between edges were determined by repeatedly examining paths between candidate edge pairs, 
which led to a substantial number of redundant reachability checks.

We improve this process by deferring connectivity verification. Instead of explicitly checking all candidate paths during the initial cover-edge construction, 
we first generate a superset of potential cover edges using local adjacency (specifically, \emph{weak-ceiling adjacency}). 
Global connectivity is then verified in a single batch process at the conclusion of the construction.

Although this approach may initially include more redundant edges (referred to as \emph{ex-cover edges}), 
this modification reduces the complexity of path-existence checks. Specifically, 
it transitions from a polynomial dependent on the number of candidate edge pairs to a much simpler inclusion check within connected components. 
This yields a significant asymptotic improvement in the efficiency of cover-edge computation.

\begin{algorithm}
\caption{Collect Cover Edges Connected by Paths}
\label{alg:collect_edges_with_path}
\begin{algorithmic}[1]
\Input{$G$: computation graph, $C_0$: candidate cover edge set, $E_f$: set of final edges}
\Output{Filtered ex-cover edge set $C$}
\Function{CollectEdgesWithPath}{$G, C_0, E_f$}
    \State Initialize queue $Q \gets E_f$
    \State Initialize visited edge set $E_v \gets \emptyset$
    \State Initialize $C$ as an empty index-partitioned edge set
    \While{$Q$ is not empty}
        \State Dequeue edge $e=(u,v)$ from $Q$
        \If{$e \in E_v$} \Continue \EndIf
        \State Let $i \gets \indexOf(e)$
        \If{$e \in C_0[i]$}
            \State Add $e$ to $C_i$
        \EndIf
        \State Add $e$ to $E_v$
        \State Enqueue all incoming edges of $u$ into $Q$
    \EndWhile
    \State \Return $C$
\EndFunction
\end{algorithmic}
\end{algorithm}

\begin{algorithm}
\caption{Improved Computation of Cover Edges}
\label{alg:improved_compute_cover_edges}
\begin{algorithmic}[1]
\Input{$G$: subgraph of the footmarks graph, $E_f$: set of final edges}
\Output{Set $C$ of cover edges of $G$ with respect to $E_f$}
\Function{ComputeCoverEdges}{$G, E_f$}
    \State Initialize $C$ as an empty index-partitioned edge set
    \ForAll{$e=(u,v) \in E_f$}
        \State Add $e$ to $C_i$ where $i=\indexOf(e)$
    \EndFor
    \State Let $V_v \gets$ an empty set 
    \State Initialize queue $Q \gets E_f$
    \While{$Q$ is not empty}
        \State Dequeue an edge $f$ from $Q$
        \State Let $E_c \gets$ \Call{GetWeakCeilingAdjacentEdges}{$G, f, E_f, V_v$} \Comment{No revisited vertices (\cref{alg:weakly_ceiling_adjacent_edges})}
        \ForAll{$e=(u,v) \in E_c$}
            \State Let $i \gets \indexOf(e)$
            \If{$e \notin C_i$}
                \State Add $e$ to $C_i$
                \State Enqueue $e$ into $Q$
            \EndIf
        \EndFor
    \EndWhile
    \State Let $C \gets$ \Call{CollectEdgesWithPath}{$G, C, E_f$} \Comment{Phase 2: Path-based filtering}
    \State \Return $C$
\EndFunction
\end{algorithmic}
\end{algorithm}

\paragraph{Algorithmic Rationale}
The original algorithm for computing cover edges explicitly verified the existence of a path between each ceiling edge and its ceiling-adjacent candidate at the moment of discovery. 
While convenient for correctness proofs, this formulation incurs substantial computational overhead.

The improved algorithm separates these concerns into two distinct phases. 
In the first phase, it computes the transitive closure of ceiling adjacency starting from $E_f$, 
collecting all potential cover-edge candidates—specifically, all edges within the weakly ceiling-adjacent edge chain. 
In the second phase, path existence is verified through a single backward traversal from $E_f$, thereby filtering out edges that lack the necessary connectivity.

This reorganization preserves all valid cover edges: an edge is included if it is weakly ceiling-adjacent to another ceiling edge and connected to it by a path in the graph. 
The primary difference lies in the inclusion of certain redundant \emph{ex-cover edges}, 
which are weakly ceiling-adjacent to a ceiling edge and connected to the designated final edges $E_f$, but may not meet the stricter criteria of the original definition.

\begin{sublemma}[Completeness and Ex-Cover Property of Improved \textproc{ComputeCoverEdges()}]
\label{sublem:completeness_excover_edges_improved}

Let $G$ be a computation graph, and let $E_f$ be a designated final set.
Then, the set $C$ returned by the improved algorithm \textproc{ComputeCoverEdges()} 
(\cref{alg:improved_compute_cover_edges}) satisfies:

\begin{enumerate}
    \item \emph{Completeness:} $C$ contains all cover edges toward $E_f$.
    \item \emph{Ex-cover property:} Every edge in $C$ is an ex-cover edge toward $E_f$.
\end{enumerate}

Consequently, for any computation walk ending with some edge of $E_f$ in $G$, all ceiling edges along the walk are included in $C$.

\begin{proof}
We establish the two properties separately
\begin{itemize}
\item \textbf{Completeness:}
Suppose, for the sake of contradiction, that there exists a cover edge toward $E_f$ not contained in $C$. 
By the definition of a cover edge, there must exist a cover-edge chain $(c_0, c_1, \dots, c_k)$ such that $c_k \in E_f$. 
Let $c_i$ be the edge with the largest index in this chain such that $c_i \notin C$. If $i = k$, then $c_k \in E_f$. 
Since the algorithm initializes the candidate set with $E_f$, $c_k$ must be in $C$, which is a contradiction. 
Thus, $i < k$. By our choice of $i$, the successor $c_{i+1}$ is already included in $C$. 
Since $c_i$ is ceiling-adjacent to $c_{i+1}$, it is, by definition, also weakly ceiling-adjacent. 
Therefore, $c_i$ is enqueued and added to the candidate set during the transitive closure phase of \textproc{ComputeCoverEdges()}. 
Furthermore, since $c_i$ is a cover edge, there exists a path from $c_i$ to $c_{i+1}$, and by induction, a path from $c_i$ to some $c_k \in E_f$. 
Consequently, the reachability filter in \textproc{CollectEdgesWithPath()} will retain $c_i$ in the final set $C$. 
This contradicts the assumption $c_i \notin C$.
\item \textbf{Ex-cover property:}
Let $e \in C$ be any edge returned by the algorithm. 
We must show that $e$ is an ex-cover edge toward $E_f$.
By the construction in \textproc{ComputeCoverEdges()}, any edge added to the candidate set must belong to a weakly ceiling-adjacent chain $(c_0, c_1, \dots, c_k)$ where $c_0 = e$ and $c_k \in E_f$. 
This satisfies the first structural requirement of an ex-cover edge.
Subsequently, the second phase, \textproc{CollectEdgesWithPath()}, performs a backward traversal from $E_f$ to verify reachability. 
An edge $e$ is retained in $C$ if and only if there exists a path in $G$ from $e$ to some edge in $E_f$. 
Since $e$ is both part of a weakly ceiling-adjacent chain leading to $E_f$ and possesses a valid path to $E_f$, it satisfies the definition of an ex-cover edge.
\end{itemize}
Therefore, $C$ preserves all cover edges while ensuring that every included edge qualifies as an ex-cover edge.
\end{proof}
\end{sublemma}

\begin{sublemma}[Time Complexity of Improved \textproc{ComputeCoverEdges}]
\label{lem:time_improved_computing_cover_edges}

Let $G$ be a computation graph of width $w$ and height $h$. 
Then the improved algorithm \textproc{ComputeCoverEdges()} runs in time 
$\bigO(w h^{3} \log(wh))$.

\begin{proof}
Note that each edge slice layer contains $\bigO(h^{2})$ edges, and the total number of edges is $|E(G)| = \bigO(w h^{2})$. The algorithm operates in two phases.

\paragraph{Phase 1: Ceiling-adjacency expansion.}
The queue $Q$ initially contains all final edges $E_f$. Each edge is enqueued at most once, as it is added to the cover-edge set $C$ immediately upon discovery and is never re-enqueued.

For a given edge $e$, the set of ceiling-adjacent edges is confined to neighboring edge slices. 
First, since each edge slice contains $\bigO(h^{2})$ edges and checking inclusion in $E_f$ costs $\bigO(\log(wh))$, finding ceiling-adjacent edges requires $\bigO(h^2 \log(wh))$ time per edge, as described in the \cref{alg:weakly_ceiling_adjacent_edges}. 
Second, the number of ceiling-adjacent edges processed per edge is $\bigO(h^{2})$, and checking for existence in $C_i$ followed by insertion into $C_i$ costs at most $\bigO(\log h)$, as $C_i$ contains at most $h^2$ edges. 
Furthermore, inserting these ceiling-adjacent edges into $Q$ costs $\bigO(\log(wh))$ since $Q$ can contain at most $|E| = \bigO(wh^2)$ edges. 
Consequently, the work per edge in the while-loop is bounded by $\bigO(h^2 \log(wh) )$.

A key optimization in this framework, as detailed in Appendix \cref{alg:weakly_ceiling_adjacent_edges}, is the use of a visited node set $V_v$. 
Although there are $\bigO(wh^2)$ edges in total, if an incident node has already been visited, the algorithm returns immediately after a membership check in $V_v$, which costs only $\bigO(\log(wh))$.
Since the full expansion logic is executed at most once per node (totaling $\bigO(wh)$ nodes), the cumulative complexity for these exhaustive checks is $\bigO(wh \cdot (h^2 \log(wh))) = \bigO(wh^3 \log(wh))$.
Including the linear scan of edges that hit the visited set, the total work for Phase 1 is bounded by:
\[
\bigO(wh^2 \cdot \log(wh)) + \bigO(wh^3 \log (wh)) = \bigO(wh^3 \log(wh)).
\]

\paragraph{Phase 2: Path-based filtering.}
In the second phase, \textproc{CollectEdgesWithPath()} performs a backward traversal starting from the final edges $E_f$. 
While the visited set $E_v$ ensures each edge's predecessors are explored only once, an edge can be enqueued multiple times up to its out-degree (backward in-degree), which is $\bigO(h)$. 
Thus, the total number of enqueued elements is $\bigO(|E(G)| \cdot h) = \bigO(wh^3)$.

For each popped edge, the membership test in $E_v$ costs $\bigO(\log(wh))$. 
If the edge is not in $E_v$, the cost for checking and inserting into $C_i$ is $\bigO(\log h)$, and exploring incoming edges takes $\bigO(h)$. 
The cumulative cost is dominated by the membership tests for all enqueued elements:
\[
\bigO(|E(G)| \cdot h \log(wh)) = \bigO(wh^3 \log(wh)).
\]

\paragraph{Total complexity.}
Combining both phases, where Phase 1 is $\bigO(wh^3 \log h)$, the overall running time is:
\[
\bigO(wh^3 \log h) + \bigO(wh^3 \log(wh)) = \bigO(wh^3 \log(wh)).
\]

Thus, the improved algorithm \textproc{ComputeCoverEdges()} executes in $\bigO(w h^{3} \log(wh))$ time, reflecting the cost of ceiling-adjacency expansion and subsequent path-based filtering.
\end{proof}
\end{sublemma}

\begin{remark}[Ordered Set Assumption]
For the purpose of the worst-case time complexity analysis of \textproc{ComputeCoverEdges()}, 
we assume that all sets (such as the visited edge set and cover edge set) are implemented as ordered sets, 
regardless of specific language-internal implementations (e.g., Python's hash-based sets). 
This ensures that membership checks, insertions, and deletions are performed in logarithmic time per operation, thereby validating the stated $\bigO(w h^{3} \log(wh))$ bound.
\end{remark}

\begin{remark}[Practical Cover-Edge Density and Performance]
The worst-case time complexity of \textproc{ComputeCoverEdges()} is derived under the assumption that a large majority of total edges could potentially be identified as cover edges. 
In practice, however, the cover-edge set typically constitutes only a small fraction of the total edge set. 
Consequently, the empirical runtime is often significantly lower than the formal $\bigO(w h^3 \log(wh))$ upper bound, reflecting the inherent sparsity of cover-edge structures in typical computation graphs.
\end{remark}

\paragraph{Relation to the Original Cover-Edge Set.}
The current construction computes \emph{ex-cover edges} rather than strictly limiting the output to the original cover edges. 
By definition, every cover edge is an ex-cover edge; thus, the resulting set serves as a structural superset of the original cover-edge set. 
This enlargement is a strategic choice: while our primary interest lies in ceiling edges and their associated structures, 
adopting the broader ex-cover edge framework facilitates a more efficient computational procedure.

This modification preserves the correctness of the algorithm. Since any feasible computation walk consists of ceiling edges and the edges residing below them, 
all such valid paths remain intact within the enlarged graph. Although additional edges are introduced, they merely represent non-feasible transitions. 
Unless a verification is confirmed prematurely, these edges are to be eliminated during the subsequent walk verification stages. 

Consequently, this construction maintains completeness by providing a safe over-approximation of the intermediate graph, 
allowing for a significantly more efficient construction without compromising the ultimate preservation of feasible walks.

\subsection{Complexity Improvement in Feasible Graph Construction} \label{subsec:rejected-walk-pruning}
In this subsection, we present a refined algorithm for computing the feasible graph. 
This algorithm preserves the same semantic definition as established in the original work while significantly improving both practical efficiency and asymptotic time complexity. 
Importantly, the fundamental notion of a \emph{feasible graph} remains unchanged: 
it is the subgraph devoid of step-pendant edges that preserves all feasible walks --- namely, computation walks terminating at the designated final edge set $E_f$.

The performance gain stems not from altering the underlying definition, but from adopting a more direct elimination strategy. 
This strategy removes infeasible edges based on localized structural conditions, rather than relying on repeated global sweeps over index layers.

Once the cover edges are computed, the feasible graph is initialized as the reachable component from the initial nodes and subsequently refined. Instead of expensive sweep operations, 
we first identify \emph{step-pendant edges} by detecting: (i) ex-pendant edges that are neither initial nor designated-final, (ii) non-floor edges lacking index-precedent edges, and (iii) non-cover edges lacking index-succedent edges. 
These edges cannot participate in any feasible walk and are thus eliminated from the feasible graph. 

After that, we include the removed edges as the initial step-extended component. Then, we iteratively remove step-pendant edges that are step-adjacent to the removed step-extended component until no further removal is possible. 
This approach terminates within a bounded number of iterations, avoiding repeated global scans and achieving a genuine asymptotic improvement in the construction of the feasible graph.

\paragraph{Original Sweep-Based Construction}

We briefly recall the sweep-based feasible-graph construction from~\cite{lee2025PNP}. 
The algorithm computes the feasible graph through repeated bidirectional sweeps over the edge slices. 
In each iteration, index-succedent edge chains are expanded upward from the floor edges using index-adjacency conditions and are subsequently filtered downward from the cover edges according to index-precedent chains.

While this method faithfully preserves all feasible computation walks by explicitly reconstructing potential components in both directions, its complexity is hindered by several inefficiency factors:
\begin{itemize}
    \item \textbf{Repeated global sweeps:} The algorithm requires multiple passes over $w$ horizontal edge slices.
    \item \textbf{High per-slice cost:} Step-adjacency conditions must be re-evaluated for edge slices of height $h$, involving $\bigO(h^2)$ edges per slice.
    \item \textbf{Slow convergence:} The refinement process may converge slowly if only a small number of edges are eliminated per global sweep.
\end{itemize}

Consequently, the sweep-based approach may require up to $\bigO(m)$ global iterations in the worst case, where $m = |E(G)| = \bigO(w h^2)$ denotes the total number of edges. 
Since each iteration involves up to $\bigO(w h^{2} (h\log h + \log w))$ edge interactions, the overall worst-case upper bound is:
\[
\bigO(m \cdot w h^{2} (h\log h + \log w)) = \bigO(w^2 h^4 (h\log h + \log w)).
\]
This slow convergence, which dependent on the number of edges $m$ (or width $w$), highlights the necessity of the more direct, local elimination strategy proposed in this work.

\paragraph{Direct Elimination via Step-Pendant Edges}
The sweep-based construction repeatedly rebuilds feasibility from scratch, even though infeasibility typically originates from a small set of locally removed edges. 
In particular, once an edge loses its step-adjacent edges and becomes step-pendant, its removal is inevitable; further global sweeps merely rediscover this structural fact.

Instead of reconstructing the feasible graph through alternating upward and downward propagation, we reverse the viewpoint: 
we identify edges whose infeasibility is already determined and eliminate them directly. 
This approach is more faithful to the structural definition of the feasible graph, although the result may contain unnecessary isolated vertices. 
Since the feasible graph serves primarily as a computational substrate for subsequent verification stages, these isolated vertices do not affect the correctness and thus require no explicit handling.

The key observation is that such edges are characterized precisely by their \emph{step-pendant structure}. A step-pendant edge cannot participate in any feasible walk toward the designated final edge set, 
and this property is invariant unless a feasible walk is fundamentally altered by the removal of its constituent edges. Therefore, the feasible-graph construction can be reformulated as a monotone elimination process:

\begin{center}
    \textbf{feasible graph computation = iterative removal of step-pendant edges}
\end{center}

Since all edges lacking a step-adjacency chain from the initial nodes are eventually removed, and the step-pendant property can only propagate through step-adjacent edges upon their removal, it is sufficient to construct the feasible graph within the following \emph{step-reachable subgraph}.

\begin{definition}[Step-Reachable Subgraph]
\label{def:step_reachable_subgraph}
Given a computation graph $G = (V, E)$ and a set of initial edges $E_{\text{init}}$, an edge $e \in E$ is said to be \textbf{reachable via step-adjacency} (or simply \textbf{step-reachable}) from $E_{\text{init}}$ if there exists a sequence of edges $e_0, e_1, \dots, e_k$ such that $e_0 \in E_{\text{init}}$, $e_k = e$, and for each $0 \le j < k$, $e_{j+1}$ is \textbf{step-adjacent} to $e_j$. 

The \textbf{step-reachable subgraph} $H \subseteq G$ is the subgraph induced by the set of all step-reachable edges and their incident vertices. 
\end{definition}

Note that this definition implicitly accounts for both:
\begin{enumerate}
    \item[(i)] \textbf{Horizontal step-adjacency}: $e_{j+1} \in \Next_G(e_j) \cup \Prev_G(e_j)$, and
    \item[(ii)] \textbf{Vertical step-adjacency}: $e_{j+1} \in \IPrec_G(e_j) \cup \ISucc_G(e_j)$.
\end{enumerate}

By enumerating step-pendant edges within the step-reachable subgraph and removing them in a canonical order, the algorithm avoids global sweeps and performs only locally necessary updates. This converts the construction from repeated global propagation to a bounded number of local eliminations, yielding a strictly lower worst-case time complexity.

\begin{algorithm}[!ht]
\caption{Enumeration of Step-Pendant Edges with Step-Reachable subgraph}
\label{alg:get_step_pendant_edges_with_reachable_graph}
\begin{algorithmic}[1]

\Input
{$G$: computation graph,
 $V_0$: initial vertices,
 $E_f$: set of final edges,
 $C$: cover-edge index map}

\Output
{Set $E_r$ of step-pendant edges, Reachable graph $H$ from $V_0$}

\Function{GetStepPendentEdgesWithReachableGraph}{$G, C, V_0, E_f$}
    \State Let $H$ be an empty dynamic computation graph
    \State Let $E_r \gets \emptyset$ \Comment{Step-pendant edges}
    \State Let $E_0 \gets$ the set of outgoing edges of all $v_0 \in V_0$
    \State Let $Q \gets$ A deque with all the edges of $E_0$

    \While{$Q$ not empty}
        \State Let $e=(u,v) \gets$ Pop from $Q$
        \If{$e \in E(H)$} 
        	\State{\Continue}
	\EndIf
        \State add $e$ to $H$

        \If{$e \notin C[\indexOf(e)]$ \textbf{and} $|\ISucc_G(e)| = 0$}[No index-succedent and not a cover edge]
            \State add $e$ to $E_r$
        \Else
            \State add $\ISucc_G(e)$ to $Q$
        \EndIf

        \If{$\tier(v) > 0$ \textbf{and} $|\IPrec_G(e)| = 0$}[No index-precedent above the floor]
            \State add $e$ to $E_r$
        \Else
            \State add $\IPrec_G(e)$ to $Q$
        \EndIf

        \If{$e \notin E_f$}[No next edges; Forward exploration]
            \If{$|\Next_G(e)| = 0$}
                \State add $e$ to $E_r$
            \Else
            	\State enqueue $\Next_G(e)$ into $Q$
            \EndIf
        \EndIf

        \If{$u \notin V_0$}[No previous edges; Backward exploration]
            \If{$|\Prev_G(e)| = 0$}
                \State add $e$ to $E_r$
            \Else
            	\State enqueue $\Prev_G(e)$ into $Q$
            \EndIf
        \EndIf
    \EndWhile

    \State \Return $E_r, H$
\EndFunction

\end{algorithmic}
\end{algorithm}

\begin{sublemma}[Correctness of Step-Pendant Edge Enumeration on the Step-Reachable Subgraph]
\label{lem:step_pendant_edge_enumeration}
Let $G$ be a computation graph, $V_0$ the set of initial vertices,
$E_f$ the set of final edges, and $\widehat{C}_{ex}$ a given set of ex-cover
edges toward $E_f$.
Let $C$ denote the induced  indexed map of cover edges.

Let $(H, E_r)$ be the output of \cref{alg:get_step_pendant_edges_with_reachable_graph}.

Then $H$ is exactly the subgraph of $G$ step-reachable from $V_0$, and $E_r$ is exactly the set of step-pendant edges of $H$
(as defined in \cref{def:step_pendant_edge} with respect to $H$).
\end{sublemma}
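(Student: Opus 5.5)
The proof splits into two claims: the reachability claim about $H$, and the claim that $E_r$ is exactly the step-pendant set of $H$. Both rest on a loop invariant for \cref{alg:get_step_pendant_edges_with_reachable_graph}: every edge ever enqueued is step-reachable from $E_0$, and every edge added to $H$ has had all its step-adjacent edges in $G$ enqueued, unless a pendant test fired for that direction. In the latter case there is simply no neighbour to enqueue in that direction.

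\textbf{Reachability.} I argue both inclusions. Soundness is an induction on the order of dequeuing. $Q$ starts with $E_0 = E_{\mathrm{init}}$, and every later insertion is drawn from $\ISucc_G(e)$, $\IPrec_G(e)$, $\Next_G(e)$ or $\Prev_G(e)$ for some already processed $e$. By \cref{def:step_adjacency}, each of these edges is step-adjacent to $e$, so every edge of $H$ is step-reachable.

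Completeness is an induction on the length $k$ of a step-adjacency chain $e_0,\dots,e_k$ with $e_0 \in E_0$. Suppose $e_j \in E(H)$. Every branch of the loop either enqueues the whole relevant neighbour set or fires only because that set is empty. There are two exceptions, and they need a separate check:
\begin{itemize}
\item Forward exploration is skipped when $e_j \in E_f$.
\item Backward exploration is skipped when $\init(e_j) \in V_0$.
\end{itemize}
Here I must argue either that these skips lose nothing, or that the statement's notion of step-reachable is to be read relative to the same boundary convention (no continuation past final edges, no predecessors of initial edges). I will adopt the latter reading, consistent with the feasible-walk semantics and the role of $E_f$ and $V_0$ in \cref{def:step_pendant_edge}, and state it explicitly. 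Under that reading $e_{j+1}$ is enqueued and eventually added to $H$. Termination is immediate because $G$ is finite and the check $e \in E(H)$ prevents reprocessing.

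\textbf{Pendant characterization.} Fix $e \in E(H)$ and compare each of the four tests with \cref{def:step_pendant_edge}. The vertical tests, $e \notin C$ with $\ISucc_G(e)=\emptyset$, and $\tier(v)>0$ with $\IPrec_G(e)=\emptyset$, match the vertically step-pendant clauses verbatim once $C$ is identified with $\widehat{C}_{ex}$. For the graph $H$ rather than $G$, I use the fact that $H$ is closed under step-adjacency: every step-adjacent edge of an edge in $H$ is itself enqueued, hence lies in $H$. Consequently $\ISucc_H(e)=\ISucc_G(e)$, $\IPrec_H(e)=\IPrec_G(e)$, and the same equalities hold for $\Next$ and $\Prev$. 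The one subtlety is that IPrec- and ISucc-folding chains pass through folding nodes, and folding status could differ between $G$ and $H$. Closure gives the same incident edges at every vertex of $H$, so folding status is also preserved.

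\textbf{Horizontal case (main obstacle).} The tests $|\Next_G(e)|=0$ with $e\notin E_f$, and $|\Prev_G(e)|=0$ with $u\notin V_0$, must be shown equivalent to the ex-pendant condition of \cref{def:ex_pendant_edge} together with the $E_{\mathrm{init}}\cup E_f$ case distinction. Left- and right-pendant are phrased via missing adjacent edges at index $i\mp1$ plus non-folding of the endpoint. For an edge $e=(u,v)$, I show that the absence of all outgoing edges at $v$ is equivalent to $e$ being pendant on $v$'s side. Two facts do the work. First, an edge adjacent at $v$ with the same index as $e$ would make $v$ a folding node, which the definition excludes. Second, an edge adjacent at $v$ at the other index is exactly an edge in $\Next$. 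The same reasoning applies at $u$ with $\Prev$.

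The both-pendant clause for edges in $(E_{\mathrm{init}}\cup E_f)\setminus(E_{\mathrm{init}}\cap E_f)$ is the delicate part. An edge in $E_0\setminus E_f$ is added to $E_r$ exactly when its $\Next$ side is empty. I must reconcile this with the requirement that both sides be pendant, using the observation that the $\Prev$ side of an initial edge is vacuous because $u\in V_0$. The symmetric argument handles $E_f\setminus E_0$.

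I expect this bookkeeping to be the hardest step, and the one where the paper's definition may need to be read generously. I will handle it by an explicit four-way case split on membership of $e$ in $E_0$ and $E_f$, checking each case against the algorithm's branches. The claimed equality $E_r=\{\text{step-pendant edges of } H\}$ then follows.
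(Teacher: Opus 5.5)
Your outline has the same skeleton as the paper's proof: soundness and completeness of the exploration for $H$, then a test-by-test match of \cref{alg:get_step_pendant_edges_with_reachable_graph} against \cref{def:step_pendant_edge}. You are also right that the paper passes over the two skips (no forward exploration from $E_f$, no backward exploration from edges with tail in $V_0$). The gap is in how you close the argument after flagging them.

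\textbf{The closure claim contradicts your own observation.} Only $\IPrec_G(e)$ and $\ISucc_G(e)$ are enqueued unconditionally. $\Next_G(e)$ is enqueued only for $e\notin E_f$, and $\Prev_G(e)$ only for $\init(e)\notin V_0$. So $\Next_H(e)=\Next_G(e)$ is unavailable exactly for $e\in E_f$. Moreover, at the head of a final edge, $G$ may have outgoing edges that are absent from $H$. Folding status there can therefore differ between $G$ and $H$, and your vertical comparison relies on that status too.

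\textbf{The case $E_f\setminus E_0$ is not symmetric to $E_0\setminus E_f$.} For an initial edge the tail side is empty for a structural reason. In (subgraphs of) augmented footmarks graphs no edge enters an initial node, because cell $0$ is at tier $0$ only at step $0$. That is a hypothesis about $G$; it does not follow from $\init(e)\in V_0$ for an arbitrary computation graph. For a final edge there is no analogous reason. The algorithm decides horizontal pendancy of $e\in E_f\setminus E_0$ by $\Prev_G(e)=\emptyset$ alone, while both-pendancy also requires $\Next_H(e)=\emptyset$.

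Concretely, let $e=(u,v)\in E_f\setminus E_0$ be a floor edge with $\Prev_G(e)=\emptyset$. Suppose $e$ enters $H$ as a member of $\Prev_G(g)$ for some step-reachable $g\in\Next_G(e)$.
\begin{itemize}
\item The algorithm puts $e$ into $E_r$ through the $\Prev$ test.
\item $e$ is not both-pendant, since $g\in\Next_H(e)$.
\item $e$ is not vertically pendant via $\ISucc$, because $E_f\subseteq C$ (as \cref{alg:improved_compute_cover_edges} guarantees).
\item The $\IPrec$ clause needs $\tier(v)>0$, which fails for a floor edge.
\end{itemize}
So $E_r\subseteq\{\text{step-pendant edges of }H\}$ fails, and no four-way case split can recover the exact equality. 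The same skip also makes $H$ miss edges that \cref{def:step_reachable_subgraph} reaches only through $\Next$ of a final edge. Your ``boundary convention'' is therefore a change of statement, not a reading of it, and should be declared as one.

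\textbf{How to repair it.} Either add hypotheses, or weaken the conclusion.
\begin{itemize}
\item Possible hypotheses: no edge of $G$ enters $V_0$; and every $e\in E_f\setminus E_0$ with $\Prev_G(e)=\emptyset$ has $\Next_H(e)=\emptyset$.
\item Possible weakening: allow $E_r$ to contain, besides step-pendant edges, final edges $e\notin E_0$ with $\Prev_G(e)=\emptyset$. Such edges lie on no computation walk from $V_0$, so they are harmless for the preservation argument of \cref{lem:feasible_walk_preserved}.
\end{itemize}

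\textbf{A caveat on your two horizontal ``facts''.} They hold only if ``adjacent'' in \cref{def:ex_pendant_edge} means membership in $\Next\cup\Prev$ (the paper's gloss after \cref{def:step_reachable_subgraph}), or under grid-alignment, where all edges into a node come from one side. With shared-endpoint adjacency, a second incoming edge at $v$ from the far side carries the other index and is not in $\Next(e)$. It destroys right-pendancy even when $\Next(e)=\emptyset$.
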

\begin{proof}
We prove three statements: the reachability correctness of $H$, the soundness of $E_r$, and the completeness of $E_r$.

\begin{itemize}
\item \textbf{Step-Reachable Subgraph}:
The algorithm begins with the outgoing edges of $V_0$ and iteratively enqueues $\ISucc_G(e)$, $\IPrec_G(e)$, $\Next_G(e)$, and $\Prev_G(e)$ whenever they exist. 
By induction on the length of the step-adjacency sequence, every edge inserted into $H$ is reachable via step-adjacency from $V_0$. 
Conversely, let $e$ be any edge reachable from $V_0$ via step-adjacency. There exists a sequence $(e_0, e_1, \dots, e_k=e)$ starting from $V_0$ where each consecutive pair is step-adjacent. 
Since the algorithm performs a systematic exploration (BFS/DFS) over all such adjacencies, every $e_i$ in the sequence—and ultimately $e$—is processed and inserted into $H$.

\item \textbf{Soundness of Step-Pendency}:
An edge $e$ is added to $E_r$ only if it satisfies at least one of the conditions specified in the algorithm (e.g., lack of index-succedents while not being a cover edge, or lack of index-precedents at a non-zero tier). 
These conditions are identical to the structural criteria provided in \cref{def:step_pendant_edge}. 
Since the algorithm evaluates these adjacencies, any $e \in E_r$ is, by definition, a step-pendant edge of $H$.

\item \textbf{Completeness of Step-Pendency}:
Let $e$ be any step-pendant edge in $H$. Since $e \in H$, it is guaranteed to be popped from the queue $Q$ and processed. 
During processing, the algorithm evaluates all defining conditions for step-pendancy. 
Because $e$ is step-pendant, at least one of these conditional tests will succeed, resulting in $e$ being added to $E_r$. 
Thus, no step-pendant edge in $H$ is omitted.
\end{itemize}

By combining these parts, we conclude that $H$ is exactly the step-reachable subgraph from $V_0$ and $E_r$ is the exhaustive set of step-pendant edges within $H$.
\end{proof}

\begin{sublemma}[Time Complexity of Step-Pendant Edge Computation]
\label{lem:step_pendant_time_complexity}
Let $G$ be a computation graph of width $w$ and height $h$, and let $(H, E_r)$ be the output of \cref{alg:get_step_pendant_edges_with_reachable_graph}. 
Then the total running time of the procedure is
\[
    \bigO\bigl(|E(H)|(h \log(wh))\bigr) \subseteq \bigO\bigl(wh^3 \log(wh)\bigr).
\]
\end{sublemma}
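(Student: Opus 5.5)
The approach is an aggregate (amortized) count over the main loop of \cref{alg:get_step_pendant_edges_with_reachable_graph}. An edge $e$ passes the membership test ``$e \in E(H)$'' at most once, so the body of the loop runs in full exactly $|E(H)|$ times. Every other pop from $Q$ is a duplicate that costs one membership test. The bound therefore follows from two quantities:
\begin{itemize}
\item[(a)] the total number of enqueue operations;
\item[(b)] the cost of one processed iteration, excluding the enqueues it triggers.
\end{itemize}
Throughout, as in \cref{lem:time_improved_computing_cover_edges}, I assume ordered sets. Then membership in $E(H)$, $C[\indexOf(e)]$ and $E_f$, insertion into $H$ and $E_r$, and deque operations each cost $\bigO(\log(wh))$, since each such set holds at most $\bigO(wh^2)$ edges.

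First I bound the local neighbourhood sizes. For $e=(u,v)$, the set $\Next_G(e)$ consists of outgoing edges of $v$. By determinism, all of them share the direction $\dir(v)$ and so lie in one edge slice. Their heads have a fixed index and a tier at most $h$, and the state and symbol range over the finite sets $Q$ and $\Gamma$, so $|\Next_G(e)| = \bigO(h)$. The same argument gives $|\Prev_G(e)| = \bigO(h)$. For $\IPrec_G(e)$ and $\ISucc_G(e)$, I use \cref{def:index-precedent_index-succedent_edges}: the far endpoint is fixed to one transition case $\IPrec(v)$ (respectively $\ISucc(u)$), which contributes $\bigO(1)$ nodes. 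The folding chain at the other endpoint is confined to a single cell index with tiers up to $h$, which contributes $\bigO(h)$ endpoint choices. Hence each of the four sets has size $\bigO(h)$.

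With these bounds, each processed edge enqueues $\bigO(h)$ edges, so (a) is $\bigO(h\,|E(H)|)$ in total. Popping and testing each enqueued element costs $\bigO(\log(wh))$, which gives $\bigO(|E(H)|\,h\log(wh))$ for all pops. For (b), computing the four neighbourhood sets takes $\bigO(h)$ time each, assuming adjacency lists indexed by node and the IPrec/ISucc chains are precomputed or walked in $\bigO(h)$. The constant number of membership tests and insertions adds $\bigO(\log(wh))$. Summed over the $|E(H)|$ processed edges, (b) is again within $\bigO(|E(H)|\,h\log(wh))$. The containment $\bigO(wh^3\log(wh))$ then follows from $|E(H)| \le |E(G)| = \bigO(wh^2)$, by \cref{prop:graph_size_bound}.

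The main obstacle is the $\bigO(h)$ bound on $|\IPrec_G(e)|$ and $|\ISucc_G(e)|$, together with the claim that these sets can be produced in $\bigO(h)$ time. A naive reading of the folding-chain definition suggests the chain could branch, which would give a larger set. I would first argue that the chain stays inside a single cell index with strictly decreasing (respectively increasing) tiers. Its length is then at most $h$, and each step lands in a single transition case of $\bigO(1)$ nodes, which yields $\bigO(h)$ endpoints. If a step in this argument fails, I would instead charge the enumeration to the $\bigO(h^2)$ edges of the adjacent slice only when an edge is processed. That alternative costs an extra factor of $h$ and would need to be recovered elsewhere, so the single-index chain argument is the one I would pursue first.
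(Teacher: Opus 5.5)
Your proposal is correct and follows essentially the same route as the paper. Both bound the local work per processed edge by $\bigO(h\log(wh))$, coming from four neighbourhood sets of size $\bigO(h)$ plus a constant number of set operations; both then sum over the $|E(H)|$ edges and use $|E(H)|\le|E(G)|=\bigO(wh^2)$. Your separate charging of duplicate pops to the $\bigO(h\,|E(H)|)$ total enqueues is in fact more careful than the paper, which asserts that each edge is enqueued exactly once even though the algorithm enqueues neighbours without a membership check. Your worry about branching folding chains is harmless, since any set of nodes at a single cell index already has size $\bigO(h)$.
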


\begin{proof}
The algorithm performs a graph exploration over the step-reachable subgraph $H$. Since visited edges are recorded in $E(H)$, each edge $e \in E(H)$ is enqueued and processed exactly once. 

For each processed edge $e$, the computational cost is determined by the following operations:
\begin{itemize}
    \item \textbf{Edge Enumeration}: The algorithm enumerates $\ISucc_G(e)$, $\IPrec_G(e)$, $\Prev_G(e)$, and $\Next_G(e)$. Each such set has a size of at most $\bigO(h)$. Following the indexing structure defined in the original framework, these enumerations and their associated index lookups take $\bigO(h \log h)$ time.
    \item \textbf{Membership and Queue Operations}: Checking or updating membership in $E(H)$ costs $\bigO(h)$ using the adjacency list structure. Membership tests for the cover map $C[\cdot]$ take $\bigO(\log h)$ via ordered sets of size $\bigO(h^2)$ within each edge slice. Operations on the dequeue $Q$ take $\bigO(1)$ time using a doubly linked list (deque).
    \item \textbf{Final Edge Verification}: Checking if $e \in E_f$ takes $\bigO(\log(wh))$ time. Given that $|E_f| \leq |E(G)| = \bigO(wh^2)$, the membership test using an ordered set costs $\bigO(\log(wh^2)) = \bigO(\log(wh))$.
\end{itemize}

Since $h \log(wh)$ asymptotically dominates both $\bigO(h\log h)$ and $\bigO(\log(wh))$, the total complexity per processed edge is $\bigO(h \log(wh))$. Summing over all edges in the step-reachable subgraph $H$, the total running time is:
\[
    O\bigl(|E(H)|(h \log(wh))\bigr).
\]
In the worst case, where $H = G$ and $|E(G)| = O(wh^2)$, the complexity is:
\[
    O\bigl(wh^2(h \log(wh))\bigr) = O(wh^3 \log(wh)).
\]
This completes the proof.
\end{proof}

\begin{algorithm}[!ht]
\caption{Improved Feasible Graph Construction via Step-Pendant Removal Extension}
\label{alg:improved_feasible_graph}
\begin{algorithmic}[1]
\Input
    {$G$: computation graph,
     $V_0$: initial vertices,
     $E_f$: designated final edges}
\Output
    {Feasible graph $H \subseteq G$}

\Function{ComputeFeasibleGraph}{$G, V_0, E_f$}
    \State $C \gets$ \Call{ComputeCoverEdges}{$H, E_f$} \Comment{Compute cover edges once}

    \State $(E_r, H) \gets$ \Call{GetStepPendentEdgesWithReachableGraph}{$G, C, V_0, E_f$}
 
    \State $Q$ be a queue with all the edges of $E_r$ \Comment{Initialize removal queue}

    \While{$Q$ not empty}
        \State $e \gets$ Dequeue frome Q

        \If{$e \notin E(H)$} 
        	\State \Continue
        \EndIf

        \If{$e$ is not a merging edge}[Forward propagation for non-merging edges]
            \State Enqueue all $\Next_H(e)$ into $Q$
        \EndIf

        \ForAll{$f \in \ISucc_H(e)$}[Propagate through index-succedents]
            \If{$|\IPrec_H(f)| = 1$}
                \State Enqueue $f$ into $Q$
            \EndIf
        \EndFor

        \ForAll{$f \in \IPrec_H(e)$}[Propagate through index-precedents that are not cover edges]
            \If{$f \notin C[\indexOf(f)]$ \textbf{and} $|\ISucc_H(f)| = 1$}
                \State Enqueue $f$ into $Q$
            \EndIf
        \EndFor

        \If{$e$ is not a splitting edge}[Backward propagation for non-splitting edges]
            \ForAll{$f \in \Prev_H(e) \setminus E_f$}
                \State Enqueue $f$ into $Q$
            \EndFor
        \EndIf

        \State remove $e$ from $H$
        \If{$e \in E_f$}  \label{algline:improved_feasible_graph:early_termination}
            \State remove $e$ from $E_f$
            \If{$|E_f|=0$} 
                \State{\Return Empty Graph}
            \EndIf
        \EndIf
    \EndWhile

    \State \Return $H$
\EndFunction
\end{algorithmic}
\end{algorithm}

\begin{lemma}[Edges Removed From the Maximal Step-Extended Component]
\label{lem:msec_removal_correctness}
Let $G$ be a computation graph, $V_0$ the set of initial vertices, and $E_R$ the initial set of step-pendant edges within the step-reachable subgraph $H \subseteq G$. The set of edges removed by \cref{alg:improved_feasible_graph} is exactly the maximal step-extended component $\mathsf{MSEC}_G(E_R)$. Consequently, the resulting graph $H$ (excluding isolated vertices) is is a feasible graph with respect to $E_f$.
\end{lemma}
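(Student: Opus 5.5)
The plan is to argue by double inclusion between the set $R$ of edges removed by \cref{alg:improved_feasible_graph} and $\mathsf{MSEC}_H(E_R)$. Here $H$ is the step-reachable subgraph returned by \cref{alg:get_step_pendant_edges_with_reachable_graph}. I then lift the statement from $H$ to $G$, using the observation that edges outside $H$ are never step-adjacent to anything in $H$.

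\textbf{Setup.} By \cref{lem:step_pendant_edge_enumeration}, $E_r$ is exactly the set $E_R$ of step-pendant edges of $H$. So the initial queue coincides with $C_E^{(0)}$ of \cref{def:step_extended_component}. I will also use \cref{sublem:completeness_excover_edges_improved}: the map $C$ contains every cover edge and only ex-cover edges. This guarantees that the cover-edge test in the loop agrees with the condition $e\notin\widehat{C}_{ex}$ in \cref{def:step_pendant_edge}, up to the safe over-approximation that the paper already adopts.

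\textbf{Soundness ($R\subseteq\mathsf{MSEC}$).} I induct on the order of dequeuing, with the invariant that every edge ever enqueued either lies in $C_E^{(0)}$ or satisfies two conditions at its removal time: it is step-adjacent to an already removed edge, and it is step-pendant in $H$ minus the removed edges. I check each enqueue rule against the definition.

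\begin{itemize}
\item A non-merging $e$ pushes $\Next_H(e)$. Once $e$ is removed, a successor whose start node had $e$ as its only incoming edge loses all previous edges. It becomes ex-pendant, or stays legitimately non-pendant only when it leaves $V_0$.
\item An $f\in\ISucc_H(e)$ with $|\IPrec_H(f)|=1$ loses its last index-precedent, so it becomes vertically step-pendant. Its tier is positive because it has an index-precedent.
\item The symmetric case for $\IPrec_H(e)$ with the non-cover test.
\item The backward rule for non-splitting $e$ on $\Prev_H(e)\setminus E_f$.
\end{itemize}

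The subtle point is that step-pendancy must hold when the edge is actually removed, not merely when it was enqueued. Since removal is monotone, any edge that is pendant stays pendant, so this is fine. Edges already removed are skipped by the $e\notin E(H)$ test.

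\textbf{Completeness ($\mathsf{MSEC}\subseteq R$).} I induct on the stage $i$ of $C_E^{(i)}$. Suppose $e$ is added at stage $i+1$. Then $e$ is step-adjacent to some $g\in C_E^{(i)}$ and pendant in $H-E(C_E^{(i)})$. Because $e$ was not pendant in $H$ itself, some adjacency of $e$ must have been destroyed by removing edges of $C_E^{(i)}$. Consider the last such destroying edge $g'$. I do a case analysis on which pendant condition $e$ newly satisfies: no $\Prev$, no $\Next$, empty $\IPrec$, or empty $\ISucc$ while not a cover edge. In each case I must show that the corresponding rule fired when $g'$ was processed. For instance, if $e$ lost its last $\IPrec$, then at the time $g'$ was dequeued we had $|\IPrec_H(e)|=1$, so $e$ was enqueued. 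This is the main obstacle: the horizontal cases. The algorithm's merging/splitting tests must capture exactly the moment an edge's endpoint loses its last in- or out-edge. The folding-node clause in \cref{def:ex_pendant_edge}, together with the $E_{\mathrm{init}}/E_f$ exemptions, must be reconciled with the guards $e\notin E_f$ and $\Prev_H(e)\setminus E_f$. I would handle this by arguing that when an endpoint's degree drops to zero, the last removed incident edge was non-merging (respectively non-splitting) at that moment, since the degree was then 1. I would treat folding nodes via the index-based adjacency.

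\textbf{Conclusion.} With $R=\mathsf{MSEC}_H(E_R)$ established, I lift to $G$. Edges of $G$ outside $H$ are not step-reachable from $V_0$, so they are absent from any computation walk, and removing them does not change the MSEC within $H$. Hence $H$ minus $R$, after discarding isolated vertices, equals $\mathsf{Feasible}(G)$ by \cref{def:feasible_graph}. Finally, in the early-termination branch where $E_f$ empties, no feasible walk survives, and the empty graph agrees with the definition.
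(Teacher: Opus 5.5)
Your route is the paper's own: soundness by checking each enqueue rule, completeness by tracing an $\mathsf{MSEC}$ edge back to a removal that should have pushed it, restriction to the step-reachable part, and early termination. Your soundness half is fine, including the monotonicity remark about when pendancy is tested. The gap is in completeness, in the vertical case rather than the horizontal one you single out. Your degree count does handle the $\Prev/\Next$ cases. But the step ``if $e$ lost its last $\IPrec$, then $|\IPrec_H(e)|=1$ when $g'$ was dequeued'' assumes that $\IPrec_H(e)$ shrinks only when one of its members is removed. That is false, because index-precedents are defined through chains of folding nodes. Removing an edge that merely stops an intermediate chain node from folding empties $\IPrec_H(e)$ while every member survives. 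That edge lies on another edge slice and is not step-adjacent to $e$, so none of the four rules looks at $e$. $\mathsf{MSEC}$ then admits $e$ as soon as any step-adjacent edge is removed, for example one of several incoming edges at $\init(e)$. Such an edge is merging, so its $\Next$ is not pushed.

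Concretely, let $\IPrec(e)=\{(v',u_2)\}$ arise only from the chain $u\to u_1\to u_2$. Let $u_1$ fold only through $(a,u_1),(u_1,b)$ on the far slice, where $a\notin V_0$ has no incoming edge, so $(a,u_1)\in E_R$. Let $u=\init(e)$ have incoming edges $(c,u)\in E_R$ and $(d,u)$, the latter supported from $V_0$ independently of $u_1$. Then $e\in C_E^{(1)}$: it follows $(c,u)$, and $\IPrec(e)=\emptyset$ once $(a,u_1)$ is gone. Yet no rule ever pushes $e$. The edge $(c,u)$ is merging. No far-slice edge has $e$ in its $\ISucc$ or $\IPrec$. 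And $e\notin\ISucc_H((v',u_2))$ once $u_1$ has stopped folding. So ``the corresponding rule fired'' fails, and $\mathsf{MSEC}\subseteq R$ is not established; for a general computation graph it is false.

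Following the paper does not close this. Its completeness step asserts that the algorithm ``enqueues any step-adjacent neighbor that satisfies the step-pendant conditions,'' but the rules only test the single count tied to the removed edge. You need one of two repairs:
\begin{itemize}
\item a structural argument for the grid-aligned footmark subgraphs where the routine is actually run, showing that a chain node can stop folding only together with a removal that triggers a rule at $e$; or
\item a stronger algorithm that re-tests full step-pendancy of every edge step-adjacent in $G$ to a removed edge, and of every edge whose precedent or succedent chain passes through a node that stops folding.
\end{itemize}

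Two smaller loose ends remain.
\begin{itemize}
\item In soundness, the $V_0$ case you call ``legitimately non-pendant'' would be a removed non-pendant edge, since the $\Next$ rule has no $V_0$ guard. You must rule it out by showing that no edge enters $V_0$.
\item In the lifting, $E_R\subseteq E(H)$ and the closure of $H$ under step-adjacency give $\mathsf{MSEC}_G(E_R)\subseteq E(H)$. So $G-E(\mathsf{MSEC}_G(E_R))$ still contains every non-step-reachable edge. Being ``absent from every computation walk'' does not remove them, so equating the output with $\mathsf{Feasible}(G)$ needs a separate argument, or the paper's convention of forming the feasible graph inside $H$.
\end{itemize}
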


\begin{proof}
The proof establishes that the iterative removal process in \cref{alg:improved_feasible_graph} precisely executes the recursive definition of $\mathsf{MSEC}_G(E_R)$.

\textbf{1. Initialization and Reachability:}
By \cref{lem:step_pendant_edge_enumeration}, the initial subgraph $H$ is the step-reachable subgraph from $V_0$, and $E_R$ contains all edges that are step-pendant in $H$. Any edge $e \in G \setminus H$ is not step-reachable from $V_0$ and thus cannot participate in any feasible walk. These edges are implicitly excluded from the feasible graph, consistent with the construction of the $\mathsf{MSEC}$.

\textbf{2. Soundness of Propagation (Local to Global):}
The algorithm uses a queue $Q$ to propagate the step-pendant property. We show that every edge $f$ added to $Q$ during the iteration is step-pendant in the current subgraph $H -e$.
\begin{itemize}
    \item \textbf{Horizontal Propagation:} 
    If $e$ is not a merging edge, its removal leaves its next edge $f \in \Next_H(e)$ without a previous edge, leading $f$ to be ex-pendant. 
    Conversely, if $e$ is not a splitting edge, its previous edges $f \in \Prev_H(e) \setminus E_f$ lose their next edge, leading $f$ to be ex-pendant.     
    \item \textbf{Vertical Propagation:} 
    \begin{itemize}
        \item If $f \in \ISucc_H(e)$ and $e$ was its only index-precedent ($|\IPrec_H(f)|=1$), $f$ becomes step-pendant upon removing edge $e$.
        \item If $f \in \IPrec_H(e)$, where $f$ is not a ex-cover edge, and $e$ was its only index-succedent ($|\ISucc_H(f)|=1$), $f$ becomes step-pendant upon removing edge $e$.
    \end{itemize}
\end{itemize}
Each case corresponds to the defining conditions of a step-pendant edge. Thus, $f$ is indeed a step-pendant edge in $H - e$. 
Now, let $e'$ be the first such edge in $\mathsf{MSEC}$ that triggers some edge $f$ in $H-e'$ to become step-pendant.
By the definition of maximal step-extended component, $f$ also should be removed, otherwise the component is not maximal which leads to a contradiction.
Thus, every removed edge is a member of $\mathsf{MSEC}_G(E_R)$.

\textbf{3. Completeness of Removal:}
Assume there exists an edge $e_s \in E(\mathsf{MSEC}_G(E_R))\cap E(H)$ that remains in the graph after the algorithm terminates. 
By the definition of the maximal step-extended component, there must be a step-adjacency chain from some initial edge in $E_R$ to $e_s$ that propagates the step-pendant property. 

If $e_s$ were an initially step-pendant edge, it would have been included in the initial queue $Q$ and subsequently removed. Thus, $e_s$ must have become step-pendant through the removal of its neighbors. 
Without loss of generality, let $e_s$ be the first edge in the chain that was not removed. 
Since the algorithm explores all step-adjacent neighbors of every removed edge and enqueues any step-adjacent neighbor that satisfies the step-pendant conditions, 
$e_s$ would necessarily have been enqueued when its preceding step-adjacent neighbor in the chain was processed, ensuring its subsequent removal. This creates a contradiction.
 Therefore, the termination of the loop with an empty $Q$ ensures that no step-pendant edges belonging to the maximal step-extended component remain in $H$.

\textbf{4. Early Termination:}
If all final edges $E_f$ are removed (\cref{algline:improved_feasible_graph:early_termination}), the algorithm returns an empty graph. 
This is sound because the absence of $E_f$ implies that no feasible walk exists to any edge of $E_f$, rendering all remaining edges step-pendant.

Therefore, the total set of removed edges is exactly $\mathsf{MSEC}_G(E_R)$, and the final graph $H$ without isolated vertices is a feasible graph of $G$.
\end{proof}

The improved algorithm replaces global sweeping with a queue-driven elimination
process centered on \emph{step-pendant edges}. 

Each edge is examined and removed at most once, and removal propagates only
through step-pendant adjacent edges (including index-precedents or index-succedents).  
This eliminates the need for repeated global sweeps and reconstruction of index layers.

Note that the only structural difference between the feasible graph in the original framework and the one constructed above is the use of ex-cover edges instead of standard cover edges. 
Consequently, the preservation property for feasible walks remains valid, as established in the following lemma.

\begin{lemma}[Preservation of Feasible Walks]
\label{lem:feasible_walk_preserved}
Let $H$ be the feasible graph constructed by \textproc{ComputeFeasibleGraph()} from a computation graph $G$ with initial vertices $V_0$ and designated final edges $E_f$, as described in \cref{alg:improved_feasible_graph}. Then every valid computation walk starting from $V_0$ and terminating at an edge in $E_f$ is preserved in $H$; i.e., all edges belonging to such a feasible walk remain in $H$ after the execution of the algorithm.

\begin{proof}
Suppose, for the sake of contradiction, that there exists a feasible walk $W = (v_0, e_0, v_1, e_1, \dots, v_k, e_k)$ where $e_k \in E_f$, such that $W$ is not entirely contained in $H$. Let $e \in W$ be the \emph{first removed edge} during the execution of \textproc{ComputeFeasibleGraph()}. By the algorithm's logic, $e$ must have been identified as a step-pendant edge at the time of its removal.

\paragraph{Case 1: $e$ is identified as step-pendant due to a lack of index-precedent edges.}
By the definition of step-pendancy, a floor edge cannot be step-pendant due to a lack of index-precedents. Thus, $e$ must be a non-floor edge. However, every non-floor edge in a feasible walk $W$ necessarily has an index-precedent edge within the same walk structure. Since $e$ is the \emph{first} removed edge of $W$, its index-precedent must still exist in $H$ at the time of $e$'s removal. Hence, $e$ cannot be step-pendant by this condition, leading to a contradiction.

\paragraph{Case 2: $e$ is identified as step-pendant due to a lack of index-succedent edges.}
If $e$ were a ceiling edge, it would belong to the ex-cover edge set $C$ toward $E_f$. By definition, ex-cover edges cannot be step-pendant due to a lack of index-succedents. Thus, $e$ must be a non-ceiling edge. However, every non-ceiling edge in a feasible walk $W$ has at least one index-succedent edge. Since $e$ is the first removed edge, its index-succedent in $W$ remains in $H$. This contradicts the assumption that $e$ is step-pendant by this condition.

\paragraph{Case 3: $e$ is identified as step-pendant due to a lack of previous edges.}
If $e$ is an initial edge (where $v_0 \in V_0$), it cannot be step-pendant due to a lack of previous edges by definition. If $e$ is a non-initial edge, it must have a previous edge in the walk $W$. Since $e$ is the first removed edge, this previous edge must still be present in $H$. Therefore, $e$ cannot be the first removed edge under this condition, a contradiction.

\paragraph{Case 4: $e$ is identified as step-pendant due to a lack of next edges.}
If $e \in E_f$, it cannot be step-pendant due to a lack of next edges by definition. If $e \notin E_f$, it must have a next edge in the walk $W$ toward $E_f$. Since $e$ is the first removed edge, its next edge in $W$ is still present in $H$. Thus, $e$ cannot be step-pendant under this condition, a contradiction.

\paragraph{Conclusion:}
In all cases, the assumption that a first removed edge exists within a feasible walk $W$ leads to a contradiction. Therefore, every feasible computation walk is entirely preserved in $H$.
\end{proof}
\end{lemma}

The feasible graph is intended solely to preserve all feasible computation walks with respect to $E_f$ while removing all the step-pendant edges. 
The difference between the original and current constructions lies in ex-cover edges $\widehat{C}_E$, which are not unique. In the original work, $\widehat{C}_E = \emptyset$ was implicitly assumed, whereas here it may be non-empty and is explicitly incorporated into the step-pendant edge enumeration.  
This generalization does not affect walk preservation: every feasible computation walk remains intact. The current construction thus aligns formally with the ex-cover definition, providing flexibility while maintaining correctness, making both the original and improved algorithms valid.

Since the feasible graph trims all the step-pendant edges,and the footmark graph is grid-aligned, the total collapse property in the original framework also preserved in the improved one, for the clarity we restate the lemma.
\begin{lemma}[Total Collapse of the Feasible Graph]\label{lem:total_collapse}
Let $G$ be a subgraph of grid-aligned footmark graph with a unique initial node hosting at least one valid feasible walk with respect to the designated final edge $e_f$. 
Let $G_f = \mathsf{Feasible}(G -e)$ denote the residual feasible graph obtained after removing a single edge $e \in E(G)$. 
If $e$ is an essential edge, which is contained in every valid feasible walk and located prior to the second splitting edge of any feasible walk, then the resulting feasible graph has no edges causing a total collapse.
\end{lemma}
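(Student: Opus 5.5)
The plan is to reduce the statement to the structural fact that step-pendancy propagates from the removed essential edge $e$ through the whole residual graph. The propagation uses the grid-aligned property (\cref{rem:cotm_to_grid_aligned}) and the uniqueness of the initial node. The argument runs backwards from $e$ toward the initial node and forwards from $e$ toward $e_f$. Removed edges are tracked exactly as in the queue-driven procedure of \cref{alg:improved_feasible_graph}, whose removed set equals $\mathsf{MSEC}$ by \cref{lem:msec_removal_correctness}.

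\textbf{Step 1: the prefix of $e$ is a chain.} Since $e$ lies before the second splitting edge of every feasible walk, the prefix of any feasible walk up to $e$ contains at most one splitting edge. By grid alignment, every vertex of that prefix has a well-defined $(\indexOf,\tier)$ coordinate and time. So all walks from the unique initial node to $e$ pass through the same coordinate sequence, and they may differ only in state/symbol labels after the single split.

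\textbf{Step 2: collapse of the prefix.} After deleting $e$, there are two cases. If $\init(e)$ has no other outgoing edge, the previous edge becomes ex-pendant, and the removal propagates backward along non-splitting edges through the backward-propagation branch. If instead $\init(e)$ is a split point, then the sibling branch leaving it cannot reach $e_f$, because $e$ is essential. I would argue that the sibling branch is removed by the same propagation, using vertical step-adjacency: walks in that branch need index-succedents that exist only through $e$'s branch.

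\textbf{Step 3: collapse of the suffix.} Every edge occurring after $e$ in a feasible walk either has a walk-ancestor path through $e$, or depends on it via index-precedent chains, since tiers above cells visited before $e$ need $\IPrec$ edges created by walks through $e$. Induction on $\timeOf$ would show that each such edge becomes step-pendant, either horizontally (no previous edge) or vertically ($\IPrec_H=\emptyset$ with positive tier). In particular $e_f$ is removed, and the early-termination line then returns the empty graph.

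\textbf{Main obstacle.} The hardest step is showing that edges not lying on any feasible walk, such as futile branches and isolated stubs, also vanish, rather than only those on feasible walks. I would first try to show that any edge surviving in $\mathsf{Feasible}(G-e)$ lies on a locally consistent step-adjacency cycle-free structure. I would then combine this with grid alignment to extract a genuine feasible walk avoiding $e$, which contradicts essentiality. Failing that, I would appeal directly to the corresponding total-collapse lemma of \cite{lee2025PNP}. The only change from the original setting is the use of ex-cover edges in place of cover edges. That change only enlarges the protected set in the cover-edge condition, so I would need to verify that it cannot keep an edge alive after $e_f$ is removed. This holds because ex-cover edges require a path to $e_f$, and that path is gone once $e_f$ is deleted.
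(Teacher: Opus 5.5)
Your plan has a genuine gap, and it lies in the step you call the main obstacle. That step is the whole content of the lemma, and Steps~2--3 do not reduce the problem to it.

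In Step~3 you assume that, once $e$ is deleted, each later edge of a feasible walk loses its last previous edge or its last index-precedent. But $G$ is built from walks of many certificates, and by grid alignment the walks avoiding $e$ sit on exactly the same $(\indexOf,\tier)$ positions. A later edge can be a merging edge fed by such a walk, and its index-precedents can also come from such walks. Deleting $e$ then makes it neither horizontally nor vertically pendant, and the cascade stops. Essentiality of $e$ only rules out a \emph{genuine} computation walk to $e_f$ avoiding $e$. It says nothing about the locally consistent mixtures of edges from different walks that step-pendant trimming leaves intact. Note also that $e_f\in E_f$ can be trimmed horizontally only when both-pendant, and never for lack of index-succedents. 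So you must show that it loses all its previous edges or all its index-precedents.

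Step~2 has the same defect. The backward cascade from $e$ stops at the split point preceding $e$. Removing the sibling branch there rests on the unproved claim that its index-succedents exist only through $e$'s branch. Moreover, this is exactly the rule from which (ex-)cover edges are exempt. The exemption is in force throughout the cascade, because cover status is fixed relative to the graph that still contains $e_f$. So your closing remark that the ex-cover enlargement matters only after $e_f$ is deleted does not dispose of it.

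Your fallback is to extract a genuine computation walk to $e_f$ avoiding $e$ from whatever survives. That is the much stronger claim that a nonempty feasible graph always carries a feasible walk, and it is proved nowhere. Citing the original lemma only outsources the argument, and that lemma is stated for cover rather than ex-cover edges.

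The paper never propagates from $e$. It strengthens the claim so that $E_r$ is a set of removed feasible edges containing $e$, and $E_f$ is any set of final edges whose target nodes share one time value. It then inducts on $|E(G-E_r)|$. The inductive step needs only \emph{one} step-pendant edge in a nonempty $G'-E_r$. Suppose there were none. Then every edge of the maximum-time layer $F$ has no next edges and no index-succedents, hence lies in $E_f$, and $e$ must be splitting. The paper then peels off $F$, takes $\Prev(F)$ as the new common-time final set, and applies the hypothesis to $G'-F$. It does so in three cases, according to whether $e\in F$ and whether $e$ remains essential in $G'-F$. The step-pendant edge $e'$ thus found is deleted. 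If $e'$ lies on no feasible walk it is removed directly, so $e$ stays essential; otherwise it is adjoined to $E_r$. Either way the induction closes. This layer-peeling induction only ever needs a single step-pendant edge and never has to build a walk. It is how the paper handles structure lying on no feasible walk, and it is the ingredient missing from your plan.
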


\begin{proof}
We prove this structural collapse by establishing a stronger, generalized inductive hypothesis: \emph{For any subgraph $G$ of a footmark graph, if a set of removed feasible edges $E_r$ contains an essential edge $e$ before the second splitting edge, then the resulting feasible graph of $G - E_r$ is empty
where all target nodes $v$ of edges in $E_f$ share an identical time value $\timeOf(v)$.}
Note that an identical time value $\timeOf(v)$ implies an identical spatial coordinate $(index(v), tier(v))$ due to grid-aligned property.

We proceed by induction on the total number of edges $n = |E(G-E_r)|$.

To begin with, if  $|E(G-E_r)|=0$, the feasible graph is trivially empty.

For the induction hypothesis, assume that for any context graph $G-E_r$ containing exactly $i$ edges for such essential edge $e \in E_r$, the removal of such an essential edge guarantees total collapse, yielding $G_f = \emptyset$.

Now, we take an inductive step for $|E(G-E_r)|=i+1$.
The proof proceeds in two stages: first, we establish the existence of at least one step-pendant edge $e'$ within $G' \setminus E_r$ via contradiction.
 Second, we apply mathematical induction by pruning $e'$, showing that the remaining graph collapses while preserving the consistency of the feasible graph structure with respect to $E_f$.
Consider a graph $G' - E_r$ containing $i+1$ edges with an essential edge $e \in E_r$. 
Suppose, for contradiction,  that $G' - E_r$ contains no step-pendant edges. Let $F$ be the set of all edges whose target nodes possess the maximum time value in the grid-aligned graph. 
By the maximality of time, any edge $f \in F$ has neither next edges nor index-succedent edges. 
Consequently, $f$ must belong to $E_f$; otherwise, $f$ would be ex-pendant edge which is a step-pendant edge, triggering an immediate contradiction. 
Furthermore, if $e$ were a non-splitting edge, its immediate predecessors $\Prev(e)$ would become step-pendant upon $e$'s erasure. Thus, $e$ must be a splitting edge.

We categorize the removal of an edge $e$ into three exhaustive cases based on its essentiality in $G = G' - F$ and its relation to the set $F$:
(1) $e$ is essential, (2) $e \in F$ (non-essential), and (3) $e \notin F$ and is non-essential. We demonstrate that each case leads to a structural contradiction regarding the existence of step-pendant edges.

\begin{itemize}[leftmargin=*]
	\item \textbf{Case 1: $e \notin F$, and $e$ is essential in $G$.} \\
	Let $f$ be an edge in $F$. Note that throughout the following sub-cases, any shared incident node or step-adjacent structure between $e$ and $f$ forces identical $(index, tier)$ coordinates due to the grid-alignment property, which we use to derive structural contradictions.
	
	Suppose $f$ shares step-adjacent edges with any edge in $E_r$; then, by grid-alignment, these edges must share identical $(index, tier)$ coordinates with an incident node of $f$. This is impossible, as all edges with the maximum time value are constrained to belong to $F$.
		
	Since sharing any step-adjacent edge with $e$ leads to a structural contradiction, $f$ cannot share any such edges with $e$. Let $E_f^*$ denote the set of all previous edges of $F$. 
	Now, we examine whether a step-pendant edge $e'$ of $G$ remains step-pendant in $G' = G + F$ by investigating the step-adjacent edges of $f \in F$:
	
	\begin{itemize}[leftmargin=*]
	    \item \textbf{Case where $e' \in \ISucc(f)$ or $e' \in \Next(f)$:} This case is impossible, as $f$ is chosen to be an ex-pendant edge with no index-succedent edges.
	    
	    \item \textbf{Case where $e' \in \Prev(f)$:} Since $e' \in E_f^*$, the edge $e'$ cannot be step-pendant due to the presence of next-edges. Consequently, the addition of $f$ does not alter the step-pendancy status of $e'$, because the role of $f$ as a next-edge for $e'$ is already accounted for in the feasibility constraints.
	    
	    \item \textbf{Case where $e' \in \IPrec(f)$:} If $f$ is an index-succedent edge of $e'$, then $e'$ is a cover edge in $G$ toward $E_f^*$. This implies that $e'$ is not a step-pendant edge in $G$ due to the lack of index-succedent edges; thus, the addition of $f$ does not alter this status.
	\end{itemize}
	
	Consequently, the step-pendancy status of the step-adjacent edges of $e'$ in $G' - E_r$ remains identical to their configuration in $G - E_r$. Since the feasible graph of $G - E_r$ is empty by the IH, $G' - E_r$ must contain a step-pendant edge, which contradicts the initial assumption that $G' - E_r$ is step-pendant free.

	\item \textbf{Case 2: $e \in F$, implying $e$ is not essential in $G$.} \\
	By definition, any $e \in F$ is a designated final edge in $E_f$ since $e$ is the essential edge. As established, $e$ must be a splitting edge; otherwise, its removal would trigger an immediate step-pendancy in its previous edges, contradicting our assumption. 
	
	Since the first essential splitting edge in our grid-aligned model must be a floor edge that terminates in $E_f$ (by DTM and no step-pendency), any essential splitting edge $e$ must coincide with an edge in $E_f$. 
	
	Furthermore, if $e$ and $f$ (where $f \in F$) share a previous edge as a splitting junction, we examine the following structural sub-cases:
	\begin{itemize}[leftmargin=*]
	    \item \textbf{Both are Floor Edges:} The existence of $f$ provides an alternative feasible walk, which directly contradicts the essentiality of $e$.
	    \item \textbf{Either is a non-floor edge:} This configuration violates grid-aligned property, which explicitly excludes the existence of infeasible edges at the first splitting junction, rendering this configuration structurally impossible.
	\end{itemize}
	Consequently, the removal of $e$ leads to an immediate structural contradiction, confirming that this case is impossible.
   
	\item \textbf{Case 3: $e \notin F$ and $e$ is non-essential in $G$.} \\
		In this scenario, $e$ is non-essential in $G$ due to the existence of alternative feasible paths. However, the introduction of the terminal set $F$ prunes these alternatives, promoting $e$ to an essential edge in $G' - E_r$. This leads to the infeasibility of $W_f + f$ for some feasible walk $W_f$ previously feasible in $G$, as the history constraints imposed by an edge $f \in F$ render the computation walk invalid.
	    
	    The infeasibility arises at some ceiling edge $e_c$ of $W_f$, where $\init(e_c) \notin \IPrec(\term(f))$. If $e_c$ were not a cover edge, it would necessarily become a step-pendant edge, contradicting the assumption that $G' - E_r$ is step-pendant free. 
	    
	    Crucially, $e_c$ cannot be a cover edge because $G'$ is a grid-aligned footmark graph: the target nodes of all edges in $F$ share an identical time step and $(index, tier)$ coordinates. 
This synchronization renders the existence of a valid cover edge for such a configuration structurally impossible. Thus, $e_c$ is forced to be step-pendant, leading to an immediate contradiction. 
\end{itemize}
Through these exhaustive cases, we demonstrate that there exists at least one step-pendant edge, say $e'$. 
By designating $e'$ as the first edge extracted by the fixed-point pruning sequence, we reduce the graph $G'$ to a smaller configuration that remains within the scope of our inductive hypothesis.

\begin{itemize}[leftmargin=*]
    \item \textbf{Case A: $e'$ is not contained in any feasible walk.} \\
    In this case, removing $e'$ does not eliminate any valid feasible traces.
    Thus, $e \in E_r$ remains a strictly essential edge in the reduced graph $G'' = G' -e'$. Since $|E(G''-E_r)| = i$, the Inductive Hypothesis (IH) applies directly to $G''$. Consequently, the feasible graph of $G'' -E_r$ is empty, implying that the feasible graph of $G' - E_r$ is also empty.
    \item \textbf{Case B: $e'$ is contained within some feasible walks.} \\
	By including $e'$ in the set of pruned edges, we define $E_r' = E_r \cup \{e'\}$. Since $e'$ is a step-pendant edge, its removal reduces the total number of edges in the graph, i.e., $|E(G' - E_r')| = i$. This strictly satisfies the condition for the Inductive Hypothesis. 
  As the removal of a step-pendant edge does not invalidate the essentiality of $e$, the reduced graph $G' - E_r'$ collapses to an empty feasible graph by the IH, which in turn confirms the structural collapse of $G' - E_r$.
\end{itemize}

By mathematical induction, the removal of an essential edge $e$ located before the second splitting boundary triggers the total structural collapse of the graph: $G_f = \emptyset$ for $G-E_r$ where $E(G-E_r)=n \ge 1$.
\end{proof}

\paragraph{Discussion}
Conceptually, the original sweep-based algorithm is simpler and minimizes exceptional cases by reconstructing the graph layer by layer. 
In contrast, the improved algorithm is more faithful to the abstract definition of a feasible graph itself, as it directly eliminates edges that violate fundamental feasibility conditions.

Since all correctness lemmas regarding the feasible graph—as well as those dependent on it, such as walk verification and edge extension—rely solely 
on the total collapse and the preservation of feasible walks, replacing the original construction with the improved algorithm does not weaken the overall integrity of the framework. 
This shift from walk reconstruction to the systematic elimination of infeasibility is the primary source of the observed complexity improvement, 
while maintaining full correctness and compatibility with the NP verifier simulation framework.

\paragraph{Complexity Analysis}

\begin{lemma}[Time Complexity of Improved Feasible Graph Construction]
\label{lem:improved_feasible_graph_time_complexity}
Let $G$ be a computation graph of width $w$ and height $h$. The total worst-case time complexity of \textproc{ComputeFeasibleGraph()} is $\bigO(w h^{3} \log(wh))$.

\begin{proof}
The complexity analysis is divided into three main phases:

\textbf{1. Computing Cover Edges:}
The algorithm begins by computing the cover edges, which takes $\bigO(w h^{3} \log(wh))$ time according to \cref{lem:time_improved_computing_cover_edges}.

\textbf{2. Initial Step-Pendant Identification:}
The algorithm then identifies the initial step-reachable component and its step-pendant edges. By \cref{lem:step_pendant_time_complexity}, this procedure takes $\bigO(wh^3 \log(wh))$ time.

\textbf{3. Iterative Edge Removal:}
The algorithm maintains a queue $Q$ of candidate step-pendant edges for iterative removal. Since each edge in $G$ is enqueued and processed at most once, we analyze the cost per edge removal:
\begin{itemize}
    \item \textbf{Edge Enumeration and Counting}: The algorithm enumerates and counts neighbors ($\ISucc, \IPrec,$ $\Prev, \Next$). Each set has size at most $\bigO(h)$, and utilizing the indexing structure, these operations take $\bigO(h \log h)$ time.
    \item \textbf{Graph Modification}: Removing an edge from the adjacency list structure takes $\bigO(h)$ time.
    \item \textbf{Membership and Queue Operations}: Checking or updating membership in $E(H)$ costs $\bigO(h)$ via the adjacency list. Membership tests for the cover map $C[\cdot]$ take $\bigO(\log h)$ using ordered sets within each edge slice. Operations on the deque $Q$ take $\bigO(1)$ time.
    \item \textbf{Final Edge Verification}: Checking if $e \in E_f$ and updating the set takes $\bigO(\log(wh))$ time, as $|E_f| = \bigO(wh^2)$.
\end{itemize}

Given $|E(G)| = \bigO(w h^2)$ edges and that each removal involves local operations totaling $\bigO(h \log(wh))$, the total cost for the removal phase is:
\[
\bigO(w h^{2}) \times \bigO(h \log(wh)) = \bigO(w h^{3} \log(wh)).
\]

\textbf{Conclusion:}
Combining the costs of all phases, the overall time complexity is:
\[
\underbrace{\bigO(w h^{3} \log(wh))}_{\text{Cover Edges}} + \underbrace{\bigO(w h^{3} \log(wh))}_{\text{Initial Identification}} + \underbrace{\bigO(w h^{3} \log(wh))}_{\text{Iterative Removal}} = \bigO(w h^{3} \log(wh)).
\]
This complexity is polynomial with respect to the size of the input graph, completing the proof.
\end{proof}
\end{lemma}
\emph{Remark:} In practice, the number of edges considered in cover sets is typically much smaller than the worst-case bound, since not all edges can act as ceiling edges simultaneously.

\subsection{Impact of Improved Feasible Graph Construction}

\Cref{alg:walk_verification} presents the walk verification algorithm with minor structural modifications. 
It is important to note that these modifications do not alter the fundamental correctness of the procedure, nor are they the primary source of the improved time complexity. 
The reduction in total complexity is derived exclusively from the integration of the optimized feasible graph construction (\cref{alg:improved_feasible_graph}) within the verification loop. 

While specific implementation details of the verification steps are discussed in subsequent sections, the following analysis focuses on how the improved feasible graph construction leads to a superior overall time complexity. \Cref{lem:improved_verify_existence_of_walk_time}, together with the detailed logic of \cref{alg:extend_futile_walks} provided later, establishes the new complexity bounds by focusing on the dominant sub-algorithms. 
For a complete foundational analysis, readers may refer to the original framework.

\begin{algorithm}
\caption{Verification of Walk} \label{alg:walk_verification}
\begin{algorithmic}[1]
\Input{$G$: a computation graph, $f$: the target edge for verification}
\Output{The Verification Result \True/\False}
\Function{VerifyExistenceOfWalk}{$G_U, V_0, e_t$}
\State{Let $E_f \gets \{e_t\}$}
\State{Set $G \gets$ \Call{ComputeFeasibleGraph}{$G_U, V_0, E_f $}}
\While{$G$ contains any edge of $E_f$} \label{alg_line:walk_verification:start_of_while}
    \State{Let $(e, W) \gets$ \Call{FindTargetRedundantFutileEdge}{$G_U, V_0, e_t$}} \label{alg_line:walk_verification:find_futile_edge}
    \If{$e = e_t$}
        \State{\Return W}
    \ElsIf{$e =NIL$}
        \State{\Return $NIL$}
    \EndIf
    \State{Set $G \gets G-e$} \label{alg_line:walk_verification:removing_an_edge}
    \State{Set $G \gets$ \Call{ComputeFeasibleGraph}{$G, V_0, E_f$}}\label{alg_line:walk_verification:recompute_feasible_graph}
\EndWhile \label{alg_line:walk_verification:end_of_while}
\State{\Return $NIL$}
\EndFunction
\end{algorithmic}
\end{algorithm}

\begin{algorithm}
\caption{Find Computing-Redundant or Computing-Disjoint Edge for Pruning} \label{alg:find_futile_or_targeted_walk}
\begin{algorithmic}[1]
\Input{$e_t$: the verification target edge, $G_U$: $e_t$-augmented footmarks, $G$: Feasible graph, $V_0$: Initial vertices } 
\Output{Computing-Redundant or Computing-Futile Edge of feasible graph $G$ with respect to $E_f=\{e_t\}, V_0$.}
\Function{FindTargetRedundantFutileEdge}{$G_U, G, V_0, e_t$}
\State{Let $E_f \gets \{e_t\}$}\Comment{$E_f$: set of verification target edge}
\State{Let $R \gets$ an empty graph}
\While{$G$ is not empty}[Loop until a computing-targeted/computing-futile edge found]\label{alg_line:find_obsolete_walk:start_loop}
    \State{Let $W \gets$ \Call{TakeArbitraryWalk}{$G, V_0$}}\label{alg_line:take_arbitrary_walk}
    \If{$e_t$ in $W$}[$W$ is a computing-targeted walk]
        \State{\Return {$e_t$}} \Comment{$e_t$ is not always at the end of $W$} 
    \Else[$W$ can be computing-embedded walk]
        \State{Set $H \gets$ \Call{PruneWalk}{$G_U, G, V_0, E_f, W, \False$}} 
	 	\If{$H$ is empty}[$H$ does not contain any of $E_f$]
            \State{Set $G \gets$ \Call{PruneWalk}{$G_U, G, V_0, E_f, W, \True$}} \label{alg_line:prune_walk}
            \State{Let $f \gets$ \Call{FindDisjointEdge}{$R, G$}}\label{alg_line:find_disjoint_edge}
            \State{\Return $f$} \label{alg_line:return_walk_and_disjoint_edge} \Comment{Return futile or redundant edge with the walk}
        \Else
	     \State{Set $G \gets H$}
        \EndIf
    \EndIf
\EndWhile \label{alg_line:find_obsolete_walk:end_loop}
\State{\Return ($NIL$, $NIL$)}
\EndFunction

\Statex \Comment{Any consistent choice (e.g., always first edge) result in deterministic algorithm}
\Function{TakeArbitraryWalk}{$G, V_0$}
\State{Let $S$ be the empty dynamic array of `transition cases'}\Comment{Empty Surface}
\State{Let $W$ be the empty list of edges}
\State{Let $e$ be an edge in $G$ incident with a node in $V_0$}
\While{$e \ne NIL$}
    \State{Update surface $S[\text{index}(u)]$ with the transition case to which node $u$ belongs}
    \State{Append $e$ to walk $W$}
    \State{Set $e \gets$ an edge of $Next_G(e)$ on surface $S$ if exists, otherwise $NIL$}
\EndWhile 
\State{\Return $W$}
\EndFunction

\Function{FindDisjointEdge}{$R, G$}[Find disjoint edge from $R$ ]\label{alg:search_disjoint_edge}
\State{Let $e$ be the first edge of walk $R$}
\While{$e$ is not NIL}\label{alg_line:search_disjoint_edge:while_start}  
    \If{$e \not\in E(G)$} \label{alg_line:search_disjoint_edge:select_disjoint_edge}
		\State{Let $E_N \gets \Outgoing(\init(e))$}
        \State{\Return an edge $f \in E_N$ if $|E_N|>0$ otherwise \NIL}
    \EndIf
    \State{$e \gets \nextOf_R(e)$ }   \Comment{If $\nextOf_R(e)$ does not exist then $e$ is \NIL}
\EndWhile
\State{\Return \NIL}
\EndFunction
\end{algorithmic}
\end{algorithm}

\begin{algorithm}
\caption{Pruning an Edge Given a Computing-Futile Walk} \label{alg:puruning_an_edge_of_walk}
\begin{algorithmic}[1]
\Input{$G$: Feasible Graph, $e_t$: Verification target edge, $G_U$: $e_t$-augmented footmarks, $W$: Computation Walk}
\Output{The feasible graph $G'$ in which an edge of $W$ removed.}

\Function{PruneWalk}{$G_U, G, V_0, e_t, W, preserveFutile$}
\State{Let $E_o \gets \emptyset$ and $E_f \gets \{e_t\}$}
\State{Set $e' \gets$ \Call{FindFirstSplittingEdgeOrFinalEdge}{$G,W$}}\label{alg_line:find_first_splitting_edge}
\If{preserveFutile}
	\State{Set $E_o \gets$ \Call{ExtendFutileWalks}{$G_U, G,  E_f$}}\Comment{Add the extendable futile edges }
\EndIf
\State{Let $G' \gets$ \Call{ComputeFeasibleGraph}{$G-e', V_0, E_f \cup E_o$}}\Comment{Remove $e'$ from feasible graph if exists}\label{alg_line:find_obsolete_edge:recompute_feasible_graph}
\State{\Return $G'-E_o$}\Comment{Do not recover removed edge, this ensure polynomial time complexity}
\EndFunction

\Function{FindFirstSplittingEdgeOrFinalEdge}{$G, W$}
\State{Let $e$ be the first edge of walk $W$}
\While{$e$ is not the final edge of walk $W$}\label{alg_line:find_first_splitting_edge:while_start}
    \If{$e$ is splitting edge}
        \State{\Return $e$}
    \EndIf
    \State{$e \gets next_W(e)$ }
\EndWhile
\State{\Return $e$} \Comment{It returns final edge of computation walk if no splitting edge found}
\EndFunction
\end{algorithmic}
\end{algorithm}

\begin{lemma}[Improved Time Complexity of Verifying Existence of Walk]
\label{lem:improved_verify_existence_of_walk_time}
Let $G$ be a computation graph of width $w$ and height $h$, where $|E(G)| = \bigO(wh^{2})$. The worst-case time complexity of \textproc{VerifyExistenceOfWalk()} in \cref{alg:walk_verification} is bounded by
\[
\bigO(w^{3} h^{7} \log(wh)).
\]
\end{lemma}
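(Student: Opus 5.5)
The bound will come from a nested counting argument: count the outer iterations of \textproc{VerifyExistenceOfWalk()}, then the inner iterations of \textproc{FindTargetRedundantFutileEdge()}, and multiply by the cost of one inner iteration. That cost is dominated by calls to \textproc{ComputeFeasibleGraph()} (\cref{alg:improved_feasible_graph}) and by the futile-walk extension (\cref{alg:extend_futile_walks}). Everything is expressed through $m=|E(G)|=\bigO(wh^2)$ and $T_f=\bigO(wh^3\log(wh))$ from \cref{lem:improved_feasible_graph_time_complexity}. The target factorisation is
\[
\bigO(m)\cdot\bigO(m)\cdot\bigO(h\,T_f)=\bigO(wh^2\cdot wh^2\cdot h\cdot wh^3\log(wh))=\bigO(w^3h^7\log(wh)).
\]
The factor $h$ in the last term should come from extending futile walks, which the next step explains.

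\textbf{Outer loop.} Each pass of the while loop in \cref{alg:walk_verification} either returns or removes an edge $e$ from $G$ (line~\ref{alg_line:walk_verification:removing_an_edge}). Removed edges are never restored, since \textproc{PruneWalk()} returns $G'-E_o$. Hence there are at most $m=\bigO(wh^2)$ outer passes, and each also pays one extra $T_f$ for the recomputation, which is negligible.

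\textbf{Inner loop.} Inside \textproc{FindTargetRedundantFutileEdge()}, every non-returning iteration replaces $G$ by $H$. Here $H$ is the feasible graph of $G-e'$, with $e'$ the first splitting or final edge of the walk, so $|E(G)|$ strictly decreases. This gives at most $\bigO(m)$ inner iterations per call.

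\textbf{Cost of one inner iteration.} \textproc{TakeArbitraryWalk()} costs $\bigO(wh\log h)$, since the walk length is at most the number of nodes $\bigO(wh)$ with surface updates. \textproc{FindFirstSplittingEdgeOrFinalEdge()} and \textproc{FindDisjointEdge()} are linear scans of size $\bigO(wh)$ times $\bigO(h)$ degree checks. The non-preserving \textproc{PruneWalk()} costs one $T_f$. The preserving call (executed once per outer pass, on the collapse branch) invokes \textproc{ExtendFutileWalks()} followed by one feasible-graph computation on a graph enlarged by $E_o$.

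I would bound $|E_o|=\bigO(m)$, so the graph size stays $\bigO(wh^2)$ and $T_f$ is unchanged. I would take \textproc{ExtendFutileWalks()} to cost at most $\bigO(h\,T_f)$, or at least no more than the inner loop's total $\bigO(m\,T_f)$. Summing, one call of \textproc{FindTargetRedundantFutileEdge()} costs $\bigO(m\,T_f+h\,T_f)$. Multiplying by the outer bound gives $\bigO(m^2T_f)$, and even padded by an extra $h$ this stays within $\bigO(w^3h^7\log(wh))$.

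\textbf{Main obstacle.} The delicate point is that \cref{alg:extend_futile_walks} is only described later, so its cost must be assumed. I expect it to enumerate, for each maximal futile walk in the current feasible graph, the $\bigO(h)$ candidate extension edges at the walk's last node, each checked for surface consistency. I would first try to show this is dominated by $\bigO(m\log(wh))$ per call, or more conservatively by $\bigO(h\,T_f)$. If that fails, I would fall back on absorbing it into the $\bigO(m\,T_f)$ inner-loop total, since it is called only on the collapse branch, once per outer iteration.
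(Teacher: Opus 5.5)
Your nested count of $\bigO(wh^2)$ outer passes times $\bigO(wh^2)$ inner iterations of \textproc{FindTargetRedundantFutileEdge()}, each dominated by one $T_f=\bigO(wh^3\log(wh))$ call of \textproc{ComputeFeasibleGraph()}, is exactly the paper's argument, and your final tally $\bigO(m^2T_f)$ is correct. However, your opening factorisation is miscomputed: $wh^2\cdot wh^2\cdot h\cdot wh^3\log(wh)=w^3h^8\log(wh)$, so the claim that the bound survives being ``padded by an extra $h$'' is false. The lemma holds only because the futile-walk extension is charged additively, once per call on the collapse branch. You also need not assume its cost, since \cref{lem:futile_walk_time} gives $\bigO(wh^3\log(wh))=\bigO(T_f)$; this is precisely why the paper can charge every \textproc{PruneWalk()} call as a single $T_f$.
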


\begin{proof}
Recall that \textproc{ComputeFeasibleGraph()} is implemented using the improved algorithm described in \cref{lem:improved_feasible_graph_time_complexity}, which runs in $\bigO(w h^{3} \log(wh))$ time. 
The \textproc{VerifyExistenceOfWalk()} procedure consists of a \textbf{while}-loop that repeatedly  invokes \textproc{FindTargetRedundantFutileEdge()} as follows:
\begin{itemize}
    \item \textproc{FindTargetRedundantFutileEdge()} (\cref{alg:find_futile_or_targeted_walk}) internally invokes \textproc{PruneWalk()} (\cref{alg:puruning_an_edge_of_walk}). With the improved feasible-graph computation, each call to \textproc{PruneWalk()} is dominated by a single execution of \textproc{ComputeFeasibleGraph()}, costing $\bigO(w h^{3} \log(wh))$ time.
    \item Within the internal loop of \textproc{FindTargetRedundantFutileEdge()}, at least one edge is removed from $G$ in each iteration, ensuring progress toward termination.
    \item Since $|E(G)| = \bigO(wh^{2})$, the number of iterations in the while loop of \textproc{FindTargetRedundantFutileEdge()} is bounded by $\bigO(wh^{2})$. 
\end{itemize}

Furthermore, since at least one edge is removed from $G$ per each iteration, the while loop of \textproc{VerifyExistenceOfWalk()} can itself involve up to $\bigO(wh^{2})$ iterations in the worst case. Therefore, the total time complexity is bounded by:
\[
\bigO\bigl((wh^{2}) \cdot (wh^{2}) \cdot (w h^{3} \log(wh))\bigr) = \bigO(w^{3} h^{7} \log(wh)),
\]
as claimed.
\end{proof}

\begin{table}[H]
\centering
\caption{Comparison Between the Original and Improved Feasible Graph Algorithms}
\label{tab:feasible_graph_comparison}
\begin{sloppypar}
\begin{tabularx}{\textwidth}{|p{3cm}|X|X|} 
\hline
\textbf{Aspect} 
& \textbf{Original Algorithm (Sweep-Based)} 
& \textbf{Improved Algorithm (Definition-Driven)} \\ \hline

Primary design goal 
& Preserve feasible walks by construction through bidirectional sweeping and index-wise expansion 
& Directly realize the definition of the feasible graph as the complement of a maximal step-extended component \\ \hline

Conceptual basis 
& Algorithm-first: Simple iteration of an additive sweep-based algorithm  
& Definition-first: Algorithm follows \cref{def:feasible_graph} \\ \hline

Treatment of step-pendant edges 
& Identified implicitly via repeated sweep and convergence checks 
& Explicitly removed using local step-pendant conditions and adjacency rules \\ \hline

Cover/ex-cover edge computation 
& Path existence is checked individually during ceiling-adjacency expansion 
& Path existence collected in a single backward traversal after expansion \\ \hline

Handling of step-reachability 
& No explicit step-reachability computed 
& Global reachability computed once during computing initial step-pendant edges  \\ \hline

Algorithmic structure 
& Iterative sweeping until convergence (left-to-right and right-to-left) 
& Single monotone elimination process based on queue-driven edge removal \\ \hline

Edge removal strategy 
& Edge slice based additive sweep with global stabilization criterion 
& Local removal of step-pendant edges adjacent to already removed edges \\ \hline

Worst-case time complexity of construction 
& $\bigO(w^{2}h^{4}(h \log h + \log w )$ 
& $\bigO(w h^3 \log(wh))$ \\ \hline

Impact on walk verification complexity 
& Leads to $\bigO(w^{4}h^{8}(h\log h  + \log w))$ bound for walk verification 
& Reduces walk verification bound to $\bigO(w^3 h^7 \log(wh))$ \\ \hline

No step-pendant edges
& Guaranteed by exclusion of step-pendant edges until convergence
& Guaranteed by iterative removal of step-pendant edges \\ \hline

Faithfulness to feasible graph definition 
& Indirect (proved by the absence of step-pendant edges) 
& Direct (definition realized by construction) \\ \hline

Feasible walk preservation 
& Guaranteed by construction and proven explicitly 
& Guaranteed by nearly the same proof of prior results \\ \hline

Intended role in the paper 
& Foundational construction in the original theoretical framework 
& Performance-oriented refinement suitable for implementation and experimentation \\ \hline
\end{tabularx}
\end{sloppypar}
\end{table}

\section{Improvements by Restricting Candidate Set for Verification}\label{sec:improvements_on_edge_extension}
In this section, we present both a theoretical reduction in the number of
verifier invocations and additional practical optimizations.

The key observation is that the verification cost is dominated by
\textproc{VerifyExistenceOfWalk()}.
By restricting the candidate set of edges to be verified,
we obtain a provable reduction in the number of verifier calls per edge.

First, we introduce a walk-first strategy; this significantly
reduces the average running time in practice,
without affecting the asymptotic upper bound.

Second, when edge extension becomes necessary, the original framework examines all
edges whose maximum tier is below a global threshold.
This results in a large candidate set, many of whose edges cannot contribute
to new computation walks.
We reduce this overhead by restricting the candidate set to edges that are structurally capable of participating in new computation walks. 
Specifically, an edge $e$ is considered a candidate for verification only if it satisfies at least one of the following structural conditions:
\begin{enumerate}
\item At least one index-precedent of $e$ is a \textit{merging edge}, a \textit{combining edge}, or an \textit{indirect index-precedent edge}.
\item $e$ is an index-succedent of a \textit{merging} or \textit{combining edge}, or $e$ itself acts as an \textit{indirect index-succedent} of another edge.
\end{enumerate}
These conditions identify the specific junctions in the computation graph where new computation walks can diverge only immediately above converged computation paths within the same edge slice structure.
A combining edge refers to an edge that converges into a unified transition case from distinct previous transition cases (as formally defined in \cref{subsec:reduction_of_extension_degree}). 
By focusing on these topological "hubs," we ensure that all potential extensions are captured while safely excluding edges that cannot contribute to any new computation walks.
As a result, the number of candidate edges examined during each extension
step is reduced from a global polynomial bound of $\bigO(wh^2)$ to a smaller polynomial bound $\bigO(wh)$ depending only on local graph parameters such as node degree and adjacency structure.

\subsection{Walk-First Strategy Prior to Edge Extension via verification}

Before invoking \textproc{VerifyExistenceOfWalk()} for a candidate edge,
the algorithm first attempts to extend the computation walk
using the computation walk information already obtained from previous verifications.

The baseline pruning algorithm is described in detail in our previous work.
In this paper, we do not restate the original procedure, and instead focus on
an implementation-oriented reformulation that reduces redundant verification
calls while preserving the same semantics.
The key difference is that \cref{alg:find_futile_or_targeted_walk} returns
not only the selected edge but also the corresponding computation walk,
so that \cref{alg:walk_verification} can return the walk itself
instead of \True, or \textsc{Nil} instead of \False.

Although \cref{alg:find_futile_or_targeted_walk} is presented as an
edge-oriented procedure, its verification mechanism is inherently walk-based.
Every verification step begins by constructing an arbitrary computation walk
$W$ from the initial set $V_0$ via \textproc{TakeArbitraryWalk()}.
If the verification succeeds, the algorithm necessarily holds a computing-targeted walk containing the verified edge.

This observation allows edge verification to be interpreted as the acquisition
of a valid computation walk containing the target edge.
In other words, returning a verified edge is equivalent to returning a computation walk reaching the verification target edge.
Consequently, once a new computation walk is obtained, the computation can be
extended along the walk without further verification, all the way to the final states.

Based on this observation, we adopt a \emph{walk-first strategy}.
Instead of repeatedly verifying individual edges, the algorithm first performs
verification to obtain a valid computation walk leading to the target edge, and then extends the walk
as far as possible using already verified edges.
Verification is resumed on another candidate edge 
only when the extension reaches a final state—even for edges 
that have been extended previously.
Since this step takes at most polynomial running time of the verifier, this reformulation does not affect correctness, but enables a more efficient
implementation by reducing redundant verifier invocations.

Once a valid computation walk can be extended directly,
the edges can be extended until accepted or rejected without triggering the verifier.
This strategy does not alter the worst-case asymptotic complexity,
since adversarial instances may still require full verification.
However, in practice, a substantial fraction of extensions
are resolved through direct walk construction,
thereby significantly reducing the average running time.

\begin{algorithm}
\caption{Extension of Edges of Footmarks} \label{alg:walk_first_extension}
\begin{algorithmic}[1]
\Procedure{ExtendByVerifiableEdges}{$V_0$: \In, $Q$:\In, $H$:\InOut, $E_b$:\Out} 
\ForAll{edge $e$ in $Q$}
     \If{$H$ already contains $e$}
        \State{\textbf{continue}}
    \EndIf
    \State{$W \gets$ \Call{VerifyExistenceOfWalk}{$H+e, V_0, e$}}
    \If{$W$ is not $NIL$}
        \State{$R \gets$ \Call{ExtendEdgeDirectlyWithWalk}{$G, H, W, E_b$}}
        \If{$R$ is not $NIL$}
            \State{\Return $R$}
        \EndIf
    \EndIf         
\EndFor
\EndProcedure
\end{algorithmic}
\end{algorithm}

\begin{algorithm}
\caption{Direct Extension Using a Verified Computation Walk}
\label{alg:extend_edge_directly_with_walk}
\begin{algorithmic}[1]
\Function{ExtendEdgeDirectlyWithWalk}{$G, H, W, E_b$}
    \State Let $S \gets$ empty ceiling-edge array;  $R \gets$ empty result array
    \State Let $T \gets$ empty stack
    \Comment{Initialize ceiling edges and result from the verified walk}

    \ForAll{edge $e=(u,v)$ in $W$}
        \If{$H$ does not contain $e$}
            \State \Break \Comment{$e$ can  be merging edge, not the last edge of $W$}
        \EndIf
        \State $S[\indexOf(e)] \gets e$
        \If{$\tier(u)=0$}
            \State Set $R[\indexOf(u)] \gets \symbol(u)$
        \EndIf
    \EndFor

    \State Push $(e,S,R)$ onto $T$

    \While{$T$ is not empty}
        \State $(e,S,R) \gets$ Pop from $T$
        \If{$H$ already contains $e$}
            \State \Continue
        \EndIf
	\State Let $(u,v) \gets e$
        \While{\True}[This loop iterates at most $|W|$ for any computation walk $W$]
	    \State{Let $isNewEdge \gets \False$} 
            \If{$H$ does not contain $e$}
                \State{Add $e$ to $H$; Set $isNewEdge \gets \True$}
            \EndIf

            \State Set $(u,v) \gets e$; Set $S[\indexOf(e))] \gets e$
            \If{$\mathrm{tier}(u)=0$}
                \State Set $R[\indexOf(v)] \gets \symbol(v)$
            \EndIf

            \If{$isNewEdge$ \textbf{and} $e$ is a merging edge}
                \State \Call{AddExtendableEdgeOnCeilingEdges}{$H,S,E_b$}
            \EndIf

            \If{$\state(v) \in \{\qacc,\qrej\}$}
                \State \Break
            \EndIf

            \State $e_p \gets S[\min(\indexOf(v),\nextIndex(v))]$
            \If{$isNewEdge$ \textbf{and} $e_p \neq NIL$ \textbf{and} $e_p$ is proper merging, combining, or pseudo-combining}
                    \State{Add $(\NIL, e)$ to $E_b$ if $\indexOf(v) \ne \nextIndex(v)$}\label{algline:to_collect_next_based_edge}
            \EndIf

            \State $E_n \gets$ \Call{GetNextEdgesAboveIPreds}{$G,v,\{e_p\}$}
            \State $e \gets$ an edge of $E_n$; Remove $e$ from $E_n$
            \If{$|E_n|>0$}
                \ForAll{$f \in E_n$}
                    \If{$H$ does not contain $f$}
                        \State Push $(f,\text{copy}(S),\text{copy}(R))$ onto $T$
                    \EndIf
                \EndFor
            \EndIf
        \EndWhile

        \If{$\state(v) = \qacc$}
            \State \Return $R$
        \EndIf
    \EndWhile
    \State \Return $NIL$
\EndFunction
\end{algorithmic}
\end{algorithm}

\begin{definition}[Maximal Direct Extension of Footmarks]\label{def:maximal_direct_extension_footmarks}
Let $W$ be a computation walk and $\mathcal{W}$ be a set of valid computation walks containing $W$ as their subwalk. Let $G \subseteq F(\mathcal{W})$ be an initial footmarks containing $E(W)$. For $G^{(0)} = G$, a \textit{direct extension of footmarks of $G$ with respect to $W$}, $G^{(i)}$, is constructed as follows:
\begin{enumerate}
\item For $i \ge 1$, let $W_i \in \mathcal{W}$ be a computation walk such that $E(W_i) \not\subseteq E(G^{(i-1)})$.
\item $W_i$ is admitted for extension if and only if there exists a previously included walk $W_j$ ($j < i$) that shares a common prefix with $W_i$ up to, but not including, the first edge $e \in E(W_i)$ such that $e \notin E(G^{(i-1)})$.
\item The graph is updated as $G^{(i)} = G^{(i-1)} \cup E(W_i)$.
\end{enumerate}
We define $G^{(k)}$ as a \textit{maximal direct extension footmarks} for the maximum such $k$ and denote it as $G^*$.
\end{definition}

\begin{lemma}[Correctness of Direct Edge Extension Algorithm] \label{lem:direct_extension_correct}
Let $W$ be a verified computation walk returned by \textproc{VerifyExistenceOfWalk()}, and let $G$ be the initial footmarks containing $E(W)$. 
Let $\mathcal{W}$ be the set of all valid computation walks containing $W$ as a subwalk that reach halting states of the verifier Turing machine $M$.
Let $H$ be the graph obtained by applying \textproc{ExtendEdgeDirectlyWithWalk()} to $G$. 
Then $H$ is the maximal direct extension of footmarks of $G$ with respect to $W$, as defined in \cref{def:maximal_direct_extension_footmarks}.
\end{lemma}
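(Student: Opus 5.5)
The proof will be a two-sided inclusion, $H \subseteq G^*$ and $G^* \subseteq H$. Both sides rest on one invariant of \textproc{ExtendEdgeDirectlyWithWalk()}: every triple $(e,S,R)$ on the stack $T$, and every state reached inside the inner loop, corresponds to a valid computation walk $W'$ such that
\begin{itemize}
\item $S$ is the array of ceiling edges of $W'$, indexed by edge index, so that it encodes the surface $\mathcal{S}(W')$ of \cref{def:surface};
\item $W'$ is a prefix of some walk in $\mathcal{W}$;
\item $W'$ shares, with a walk already absorbed into $H$, the prefix up to its first edge not in $H$.
\end{itemize}

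\textbf{Initialization.} I first prove the invariant at the start. The initial loop copies the edges of $W$ into $S$ until it meets the first edge of $W$ not in $H$. Hence $(e,S,R)$ represents the prefix of $W$ followed by $e$. This matches the admission condition of \cref{def:maximal_direct_extension_footmarks} with $W_j = W$.

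\textbf{Preservation.} I then show each inner-loop step keeps the invariant. \textproc{GetNextEdgesAboveIPreds()} with $e_p = S[\min(\indexOf(v),\nextIndex(v))]$ returns exactly the edges $f \in \Next_G(e)$ for which $W'+f$ satisfies the conditions of \cref{def:computation_walk}.
\begin{itemize}
\item Head flow and displacement hold because $f$ leaves $v$ in direction $\dir(v)$.
\item Tier consistency and state/symbol history hold because the index-predecessor of the new node is read from the surface entry $e_p$.
\item Cell flow holds by the same lookup.
\end{itemize}
Since the underlying machine is deterministic, any branching in $E_n$ comes only from certificate cells at tier $0$, that is, from floor edges (\cref{rem:floor_edge_condition}). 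Pushing copies of $S$ and $R$ for the alternative edges therefore enumerates every one-step continuation, each with its own correct surface.

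\textbf{Soundness, $H \subseteq G^*$.} Every edge added to $H$ lies on some walk $W'$ produced in this way. Extend $W'$ greedily to a halting state; this terminates because $M$ is an acceptor that halts within $p(n)$ steps. The resulting walk is in $\mathcal{W}$, and it diverges from an already absorbed walk exactly at its first new edge. So $W'$ is admissible as some $W_i$, and by induction on the order in which the algorithm adds edges, $H \subseteq G^*$.

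\textbf{Completeness, $G^* \subseteq H$.} I argue by induction on the admission index $i$. Suppose $W_i$ is admitted through $W_j$ with $j<i$, and assume by induction that all edges of $W_j$ are in $H$. The algorithm traversed the common prefix of $W_i$ and $W_j$ with the correct surface. At the divergence point, the first new edge of $W_i$ lies in $E_n$, so it is either followed directly or pushed onto $T$. Popping it later continues along $W_i$ deterministically, and at each later branching the stack again covers the continuation of $W_i$. Hence $E(W_i) \subseteq H$.

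Two further points must be checked. First, the early \textbf{continue} on edges already in $H$ must not lose any walk. Second, the outer loop returns when an accepting state is reached, which may happen before the stack is exhausted. I would handle both by reading $G^*$ relative to the same termination point. For the \textbf{continue}, I would argue that an edge already in $H$ was reached earlier under a surface that is again consistent with $\mathcal{W}$, so its continuations were already explored or queued.

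\textbf{Main obstacle.} I expect the hardest step to be justifying this \textbf{continue}. Two different prefixes can reach the same edge with different surfaces, so skipping the second arrival might miss continuations that are valid only under that second surface. My first attempt would use grid alignment (\cref{rem:cotm_to_grid_aligned}) together with the fact that all branching happens at floor edges. Under these, the node reached fixes its time, index and tier, and the surface entries that matter for later steps are exactly those determined by the edge itself. If that does not hold in full generality, I would weaken the statement so that maximality is taken with respect to the walks admitted along the order induced by $T$.
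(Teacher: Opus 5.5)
Your plan has the same two directions as the paper's proof. Edges added by the procedure are generated by the transition rules from the stored ceiling edges $S$, and a missing edge is reached from its predecessor, with the stack covering every floor branch. Your invariant on the triples $(e,S,R)$ makes explicit what the paper uses implicitly.

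The gap is in the completeness induction, exactly where you flag it. The step ``the first new edge $f$ of $W_i$ lies in $E_n$, so it is either followed directly or pushed onto $T$'' fails. Pushing and popping are both guarded by $f\notin E(H)$. For an arbitrary admissible sequence, $f$ may already be in $H$ because a walk with a different prefix $P_2$ reached $v=\init(f)$ earlier and left through $f$. In that case the continuation of $P_1+f$ is never traversed. Likewise, $E(W_j)\subseteq E(H)$ does not imply that the procedure ever traversed $W_j$ as one walk carrying its own surface.

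Grid alignment cannot repair this. By \cref{rem:cotm_to_grid_aligned} it fixes the time, index and tier of nodes, but not the states and symbols recorded in other cells. So $P_1$ and $P_2$ can reach $v$ with different surfaces and have genuinely different continuations after $f$. This convergence is a merging configuration, which the algorithm deliberately defers to \textproc{AddExtendableEdgeOnCeilingEdges()} and later verification. Consequently $G^\ast\subseteq H$ is false for an arbitrary admission order.

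The missing idea is that the admission clause of \cref{def:maximal_direct_extension_footmarks} is order-relative. Your fallback is therefore the intended statement, not a weakening, and it is what the paper's closing remark about skipping splitting edges already in $H$ relies on. The argument then runs as follows.
\begin{enumerate}
\item Take $W_1,W_2,\dots$ to be the walks the procedure itself completes, in stack order. Each is admissible: its first new edge is the popped floor edge, and the walk that pushed it shares the whole preceding prefix and has already run to a halting node.
\item For maximality, suppose some $W'\in\mathcal{W}$ is still admissible at termination, through an included $W_j$ sharing its prefix up to its first new edge $e^\ast$. The procedure traversed that prefix with the correct surface, so at $\init(e^\ast)$ it computed $E_n$ under the surface of $W'$.
\item Thus $e^\ast$ is a floor edge of $E_n$ different from the edge taken by $W_j$. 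It was either followed, or pushed (hence in $H$ once popped), or withheld because it was already in $H$. In each case $e^\ast\in E(H)$, a contradiction.
\end{enumerate}
The skipped walk $P_1+f+\cdots$ is not a counterexample, because no included walk contains the prefix $P_1+f$. The walk through $P_1$ leaves $v$ by another edge, and the walk that inserted $f$ differs before $v$.

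Your caveat about the accepting return is also genuine, and the paper does not address it. When $R$ is returned on acceptance, entries may remain on $T$, so maximality holds only for the \NIL{} return. Similarly, $\mathcal{W}$ should be read relative to the prefix of $W$ ending at its first edge outside $H$. Branches spawned after that edge need not contain all of $W$, so your claim that the completed walk lies in $\mathcal{W}$ needs this reading.
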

\begin{proof}
We prove the claim by contradiction. Suppose that the algorithm terminates with a graph $H$, but $H$ is not the maximal direct extension of footmarks of $G$ as defined in \cref{def:maximal_direct_extension_footmarks}. 
In other words, there exists a valid computation walk $W' \in \mathcal{W}$ containing $W$ as a subwalk that includes at least one edge $e \notin H$, 
where $W'$ shares a prefix prior to $e$ with the already extended computation walks. 
Conversely, it may be supposed that there exists an edge in $H$ that does not belong to any valid computation walk within the set of directly extended edges.

First, consider the case where a valid edge $e$ is missing from $H$. Let $e$ be the first such edge on $W'$. 
By this choice, the previous edge $e'$ of $e$ on $W'$ must already be in $H$. Since the underlying Turing Machine is deterministic, 
the index-successor of any given edge is uniquely determined by the current state and its structural context. 
Specifically, the transition function determines the next edge based on the node information and the boundary constraints provided by either the \textit{index-predecessor} 
or the \textit{ceiling-edges} stored in the dynamic array $S$.
During the execution of \textproc{ExtendEdgeDirectlyWithWalk()}, the algorithm systematically explores all potential extensions:
\begin{itemize}\item For \textbf{non-floor edges}, the transition is strictly deterministic and follows directly from $e'$ and the corresponding ceiling-edge in $S$.
\item For \textbf{floor edges} (splitting edges), the algorithm explores all possible branches by considering every valid certificate symbol through the stack $T$.
\end{itemize}
Therefore, regardless of whether $e$ is a floor or a non-floor edge, the extension procedure—utilizing the existing previous edge $e' \in H$ 
and the surface information from the stored ceiling-edges—will eventually process $e$ and add it to $H$. 
This contradicts the assumption that $e \notin H$.

Furthermore, every edge added to $H$ is constructed strictly following the transition rules of the Turing Machine starting from a verified walk $W$. 
Since each extension step preserves the feasibility of the computation walk, any edge $e \in H$ must, by construction, belong to at least one valid computation walk. 
This contradicts the existence of an edge that does not belong to any validcomputation walk in $H$.
Since the algorithm continues until all reachable edges along valid computation walks from $W$ have been processed, 
and because it explicitly skips further extensions if a splitting edge is already in $H$, 
the resulting graph $H$ satisfies the conditions of a maximal direct extension of footmarks. 

Consequently, $H$ contains all and only the edges belonging to the maximal direct extension of the footmarks of $G$ with respect to $W$, ensuring both the correctness and the maximality of the result.
\end{proof}

After the direct edge extension procedure terminates,
no edge belonging to any valid computation walk extending the current prefix
remains directly extendable from the current feasible graph.

The use of a verified computation walk $W$ as the anchor of direct extension
ensures that all iterative extensions are bounded by $|W| \le p(n)$.
Thus, although the algorithm performs local branching via a stack, the total
number of generated edges remains polynomially bounded as in the following \cref{lem:directed_walk_extension_time}.

\begin{lemma}[Time Complexity of Directed Walk Extension]\label{lem:directed_walk_extension_time}
Let $G$ be the computation graph of an NP verifier with width $w$ and height $h$, such that $|E(G)| = \bigO(wh^{2})$. 
Let $H_0 \subseteq G$ be an initial footmarks graph, and let $W$ be a verified computation walk.
The total running time of the directed walk extension procedure (\cref{alg:extend_edge_directly_with_walk}), 
which incrementally extends $H_0$ by exploring computation walks derived from $W$, is bounded by
\[
    \bigO\bigl( |E(G)| \cdot (p(n) w + T_o) \bigr) = \bigO\bigl( wh^{2} \cdot (p(n) w + T_o) \bigr),
\] where $T_o$ denotes the additional overhead for each extended edge and $n$ is the input size of the corresponding NP problem.
Furthermore, the cost per edge extension is defined as:
\[
    T_d = p(n) w + T_o.
\]
\end{lemma}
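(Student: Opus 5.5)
The plan is an aggregate (amortized) count over the executions of the inner \textbf{while}-loop of \cref{alg:extend_edge_directly_with_walk}, charging every unit of work to an edge newly added to $H$. We bound three quantities in turn: how many times an edge can be processed or pushed, what a single inner iteration costs, and how much copying of the surface array $S$ and result array $R$ happens when branches are pushed. The last of these produces the $p(n)w$ term.

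\textbf{Step 1: each edge is added to $H$ at most once.} Edges are only ever added to $H$, never removed, and the flag $isNewEdge$ is true at most once per edge. So the number of inner iterations that add something to $H$ is at most $|E(G)| = \bigO(wh^2)$.

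Iterations that find $e$ already in $H$ need separate treatment. A popped stack entry whose edge is already in $H$ is discarded immediately by the \textbf{continue} at $\bigO(1)$ cost (plus a membership test). Inside a chain, the remark on the inner loop says it runs at most $|W'|\le p(n)$ times for any computation walk $W'$. Moreover, by \cref{lem:direct_extension_correct}, a chain started from a popped edge $f\notin H$ follows a unique deterministic walk. Its already-present edges form a prefix that retraces at most $p(n)$ steps, and this retracing is charged to the branch edge $f$ that started the chain.

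\textbf{Step 2: pushes and copying.} An entry $(f,\mathrm{copy}(S),\mathrm{copy}(R))$ is pushed only when $f\notin H$. Each push is charged to the edge $e$ currently being extended. Since $e$ is new when this happens and each node has $\bigO(1)$ outgoing branches (bounded by $|\Gamma|$ certificate symbols), the total number of pushes is $\bigO(|E(G)|)$.

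Copying $S$ and $R$ costs $\bigO(w)$, since both arrays are indexed by tape cells. The popped branch may then retrace up to $p(n)$ steps, each doing $\bigO(1)$ surface updates, but a careful accounting would give only $p(n)+w$ per push. To match the stated form I will bound the work of each retraced or new step by $\bigO(w)$, namely the surface maintenance and the lookup $S[\min(\indexOf(v),\nextIndex(v))]$ under a pessimistic array model, over at most $p(n)$ steps. That gives the cruder $\bigO(p(n)w)$ per charged edge.

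\textbf{Step 3: residual overhead.} Everything else done per new edge is gathered into $T_o$: the membership test and insertion into $H$, the merging, combining and pseudo-combining checks, the calls to \textproc{AddExtendableEdgeOnCeilingEdges}, and \textproc{GetNextEdgesAboveIPreds}. Summing over the edges gives
\[
\bigO\bigl(|E(G)|(p(n)w+T_o)\bigr)=\bigO\bigl(wh^2(p(n)w+T_o)\bigr),
\]
and dividing by the number of edges yields $T_d=p(n)w+T_o$.

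\textbf{Main obstacle.} The delicate point is Step 1's claim that re-traversal of already-present edges is bounded per charged edge. This relies on determinism: after a pushed branch edge, the walk is forced. It also relies on distinct stack entries for the same edge being collapsed, either by the \textbf{continue} or because $f\notin H$ is rechecked at push time. I would argue this first by induction on the stack operations, showing that at most one full chain is executed per distinct branch edge.
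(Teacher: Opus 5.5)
Your proof follows essentially the same route as the paper's: at most $|E(G)|$ chains are started (each begins by adding a fresh edge to $H$), each chain runs at most $p(n)$ inner iterations at $\bigO(w)$ cost for maintaining and copying $S$ and $R$, and $T_o$ is charged once per new edge. One justification is wrong but harmless: pushes happen at every floor-branching iteration regardless of $isNewEdge$, so the claim ``$e$ is new when this happens'' is false; however, bounding the copying by $\bigO(1)$ pushes of cost $\bigO(w)$ per inner iteration still gives $\bigO(|E(G)|\,p(n)\,w)$ in total.
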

\begin{proof}
First, note that the length of any computation walk in $G$ is bounded by a polynomial $p(n)$, since the NP verifier halts in at most $p(n)$ steps. 
We bound the running time by accounting for the total number of branching events and the cost of the subsequent deterministic extensions.
\begin{itemize}
\item \textbf{1. Bounded number of walk explorations.} 
The algorithm triggers a full walk extension only when it identifies a \textit{first missing edge} relative to the current graph $H$.
Once an edge is added to $H$, it can no longer serve as a ``first missing edge'' for any future exploration. Since $H \subseteq G$ and $|E(G)| = \bigO(wh^2)$, 
the total number of such branching/activation events is bounded by the total number of edges in the computation graph, $\bigO(wh^2)$.
\item \textbf{2. Cost per branch/walk extension.} 
Whenever a new branch is identified, the algorithm performs a direct extension of the computation walk for at most $p(n)$ steps.
In each step, the algorithm registers a new edge to $H$ and updates the ceiling-edge array $S$ and the result array $R$. 
The dominant operation in each step is the update (or copying) of $S$ and $R$, which costs $\bigO(w)$. 
The invocation of \textproc{GetNextEdgesAboveIPreds()} results in only a constant cost per step, since $|E_p|=1$ when called in this context according to \cref{sublem:runningtime_get_next_edges_above_preds}. 
Thus, the cost for a single walk extension is $\bigO(p(n) w)$.
\item \textbf{3. Additional overhead for newly extended edges.}
Whenever a new edge is registered (i.e., the flag \texttt{isNewEdge} is \texttt{True}), it incurs an additional overhead $T_o$ per edge. 
Since a branch is activated only for edges not yet in $H$, the total cost incurred per new edge is $T_d = \bigO(p(n) w + T_o)$.
\item \textbf{Conclusion.}
Summing over all unique branching points (edges) in the graph, the total running time of the procedure is:
\[    \bigO\bigl( |E(G)| \cdot (p(n) w + T_o) \bigr) = \bigO\bigl( wh^{2} \cdot (p(n) w + T_o) \bigr). \]

This accounts for the fact that each edge in $H$ is processed as part of a deterministic walk extension precisely when its preceding branch is first explored.
\end{itemize}
No further computation walk scans are performed for branching edges already contained in $H$, ensuring the polynomial bound.
\end{proof}

Since \textproc{VerifyExistenceOfWalk()} is the dominant subroutine, 
even a moderate reduction in its invocations—by bypassing the verification process for directly extended edges—leads to a substantial empirical speedup.

\subsection{Theoretical Reduction of Candidate Edges} \label{subsec:reduction_of_extension_degree}

In the original NP verifier simulation framework, candidate edges for extension were collected by scanning all potential edges up to tier $t_m+1$ relative to the current graph. 
Specifically, for a candidate edge, this involved checking all vertices up to the maximum tier $t_m$ that share the same index as the head node of the candidate. 
This exhaustive approach resulted in $\bigO(wh^2)$ candidates per extension phase, leading to a per-edge verification bound of $\bigO(wh^2 T_v)$ for the height of the resulting graph $h$ and its width $w$,
where $T_v$ denotes the verification time complexity.

Since a surface can be derived from the ceiling edges, we refer to the set of ceiling edges of a computation walk as the set of \textbf{surface edges}, and we define its cardinality as the width $w$.
Under the restricted candidate selection strategy, the search space is significantly narrowed.
This is because every edge—whether added via formal verification or through the direct extension procedure—is extended exclusively along some valid computation walk.  
Each boundary edge is now constrained to the set of surface edges of width $w$, and is limited to at most $h$ of their index-succedent edges. 
Consequently, the verification of a single edge is confined strictly to the immediate index-succedent edges of these surface edges. 
This refined strategy reduces the per-edge verification bound to $\bigO(wh T_v)$, effectively eliminating a factor of $h$ from the extension overhead and streamlining the overall simulation process.

To formalize additional restrictions, we introduce the notion of a \emph{combined edge}.

\begin{definition}[Combined and Proper Merging Edges]\label{def:combined_proper_merging_edge}
Let $e=(u,v)$ and $e'=(u',v')$ be distinct edges in the computation graph ($e \neq e'$).
\begin{enumerate}
\item We say that $e$ and $e'$ form a \textbf{combined edge} if their respective endpoints belong to the same transition cases; that is, $u$ and $u'$ share the same transition case, and $v$ and $v'$ share the same transition case.
\item A pair of combined edges $e$ and $e'$ is called a \textbf{combined merging edge} if their target vertices coincide ($v = v'$).
\item Conversely, $e$ and $e'$ form a \textbf{proper merging edge} if they are merging edges ($v = v'$) but do not satisfy the condition of being a combined edge.
\end{enumerate}
These definitions capture the structural properties of edges that merge into the same index-successor or exhibit identical transition behaviors within the computation graph.
\end{definition}

\begin{remark}
It is important to note that the classification of an edge $e$ is not absolute but relative to the edge it is being compared with. 
A single edge $e$ can simultaneously participate in different types of merging relations depending on the reference edge.
For instance, an edge $e$ can be a combined merging edge with an edge $e'$, but still can be a proper merging edge with another edge $e''$
\end{remark}

\begin{definition}[Combining Edge] \label{def:combining_edge}
Let $e=(u,v)$ and $e'=(u',v')$ be edges in the computation graph, with $e \neq e'$.
We say that $e$ and $e'$ form a \emph{combining edge} if
\begin{enumerate}
    \item $u$ and $u'$ belong to different transition cases,
    \item $v$ and $v'$ belong to the same transition case, and
    \item $v \neq v'$, i.e., $e$ and $e'$ are not merging edges.
\end{enumerate}
\end{definition}
\medskip
\begin{remark}
A combining edge does \emph{not} necessarily share the edge index
(or equivalently, the edge direction).
Depending on the relative positions of $e$ and $e'$ in a computation walk,
three distinct configurations may occur:
\begin{enumerate}
    \item both $e$ and $e'$ are non-folding edges,
    \item exactly one of $e$ and $e'$ is a folding edge,
    \item both $e$ and $e'$ are folding edges.
\end{enumerate}
\begin{figure}
	\centering
	\includegraphics[width=0.8\columnwidth]{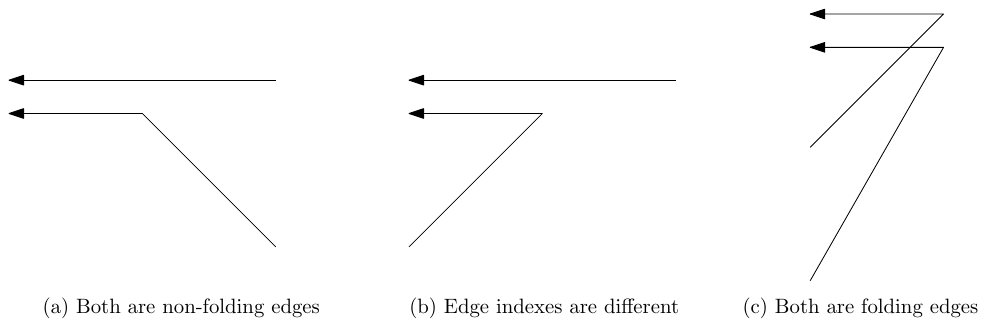}
	\caption{Types of Combining edges}
	\label{fig:types_of_combining_edges}
\end{figure}
In the first and third cases, $e$ and $e'$ share the same edge index,
whereas in the second case, the presence of a folding edge causes
the edge indices (and thus edge directions) of $e$ and $e'$ to differ.
\end{remark} 

Combined edges and combining edges provide a natural grouping of edges that share transition cases. 
By considering converging edges, such as merging and combining edges, collectively, 
we further reduce the number of \textproc{VerifyExistenceOfWalk()} calls required for each edge extension without affecting correctness.

Whereas merging and combining edges are introduced to examine splitting edges relative to a direct index-predecessor, 
the following notion is introduced to examine splitting edges relative to an indirect index-predecessor.

\begin{definition}[Pseudo-Combining Edge]\label{def:pseudo_combining_edge}
An edge $e = (u,v)$ is called a \emph{pseudo-combining edge} if
an index-succedent node of its head $v$ is a folding node and $v$ is not a folding node.
\end{definition}

To restrict the candidate space for indirect succedent edges, we introduce the concept of forward ceiling-adjacent  edges. 
By adopting the forward ceiling adjacency relation as defined below, it becomes evident that index-succedent edges correspond to the next edges (in the graph or walk sense) of those forward ceiling-adjacent edges.

\begin{definition}[Forward Ceiling Adjacency]\label{def:forward_ceiling_adjacency}
Let $e = (v,w)$ and $e' = (u,v')$ be ceiling edges of a computation walk.
We say that $e'$ is \emph{forward weak ceiling adjacent} to $e$ if there exists a
sequence of vertices
\[
v_0, v_1, v_2, \ldots, v_n
\]
such that
\begin{enumerate}
    \item $v_0 = v'$ and $v_n = v$ where $v_0$ and $v_n$ are non-folding nodes;
    \item for every $0 \le i < n$, $v_{i+1}$ is the index-succedent node of $v_i$;
    \item for every $0 \le i < n$, the vertex $v_i$ is a folding node.
\end{enumerate}
If there exists a path from $e'$ to $e$ where $e'$ is forward weak ceiling adjacent to $e$ and no edge in the path has the same edge index as $e'$ other than $e'$ itself, 
then we say that $e'$ is \textbf{forward ceiling adjacent} to $e$. Furthermore, we say $e$ is \textbf{reverse ceiling adjacent} to $e'$.
\end{definition}

We now demonstrate how the candidate space for edge extension can be restricted by applying the aforementioned definitions and lemmas.

\begin{lemma}[Restricted Next-Based Candidate Edges for Verification]
\label{lem:restricted_precedent}
Let $e$ be a newly added edge (the last edge) to a computation walk $W$ in a footmarks graph $H$, where $S$ is the set of ceiling edges of $W-e$ in $H$. Let $e' \in \Next(e)$ be an edge that becomes a new candidate for extension due to the insertion of $e$. 

If $W + e'$ remains a valid computation walk---that is, if the index-predecessor $\hat{e}$ of $e'$ in $W+e'$ belongs to $S$---then a direct edge extension is possible without additional verification. 

Otherwise, it is sufficient to consider only the index-succedent edges of those edges that are forward ceiling-adjacent to $e$ as candidate edges for verification. 
Specifically, these are the index-succedents of forward ceiling edges incident to the head node of $e$. 
Furthermore, a necessary condition for $e'$ to require verification is that there exists an index-precedent edge $\hat{e}'$ that is forward ceiling-adjacent to $e$ other than $\hat{e}$, where $\hat{e}'$ must be one of the following:
\begin{itemize}
    \item a merging edge,
    \item a combining edge,
    \item a pseudo-combining edge.
\end{itemize}
\end{lemma}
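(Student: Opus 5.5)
The proof splits into three claims, taken in order: the direct case, the localization of candidates to index-succedents of forward ceiling-adjacent edges, and the necessary condition on the alternative index-precedent. Throughout, the only tools are the computation-walk consistency conditions of \cref{def:computation_walk}, the determinism of $M$ at the level of transition cases (\cref{def:transition_case}), and the grid-aligned property (\cref{rem:cotm_to_grid_aligned}).

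\textbf{Direct case.} Let $e'=(v,x)\in\Next(e)$. By \cref{def:computation_walk}, $W+e'$ is a computation walk exactly when three things hold: the head flow holds at $v$, the tier of $x$ equals $\tier(\ipred_{W+e'}(x))+1$, and $(\lastState(x),\lastSymbol(x))$ matches the state and symbol of that index-predecessor. The index-predecessor of $x$ is read off from the surface $\mathcal{S}(W)$, and that surface is determined by the ceiling edges $S$ of $W-e$ together with $e$. So if $\hat e\in S$ and $\hat e$ supplies the required predecessor node, every condition is checked locally. In that case $W+e'$ is a valid walk, and the direct extension of \cref{lem:direct_extension_correct} adds $e'$ with no call to \textproc{VerifyExistenceOfWalk()}.

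\textbf{Localization.} Suppose instead that the consistent predecessor $\hat e'$ of $e'$ is not the one stored in $S$. Since $e'$ starts at the head node $v$ of $e$, the index-precedent chain of $e'$ (\cref{def:index-precedent_index-succedent_edges}) descends from $v$ through $\IPrec$ along a possibly empty chain of folding nodes until it reaches a non-folding node $v'$. This is exactly the vertex sequence in \cref{def:forward_ceiling_adjacency}, read in reverse. Hence $\hat e'$, the edge terminating at $v'$, is forward weak ceiling-adjacent to $e$. The path condition holds because, by construction of the walk, no edge of the same index as $\hat e'$ appears between $\hat e'$ and $e$. It follows that $e'$ is an index-succedent of an edge forward ceiling-adjacent to $e$ and incident to the head of $e$ through the folding chain, and these are the only candidates that need to be examined.

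\textbf{Necessary condition.} This step carries the substance. The plan is to argue by contradiction: assume $e'$ needs verification but no such $\hat e'\ne\hat e$ is merging, combining, or pseudo-combining. Distinguish whether the chain from $v$ to $v'$ has length zero, one, or more.
\begin{itemize}
\item \emph{Length at least two.} The chain passes through a folding node above a non-folding one, so $\hat e'$ is pseudo-combining by \cref{def:pseudo_combining_edge}.
\item \emph{Length at most one.} Compare $\hat e'$ with $\hat e$. If they share a target node, $\hat e'$ is merging. If their targets lie in the same transition case but their sources lie in different transition cases, $\hat e'$ is combining by \cref{def:combining_edge}. In the remaining situation both endpoint pairs lie in the same transition cases, or the targets lie in different transition cases. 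In the first sub-case, \cref{def:index-precedent_nodes} and determinism give $\IPrec(x)$ identical for both, so $\hat e$ would already be a consistent predecessor and the direct case would apply. In the second sub-case, $\hat e'$ is not an index-precedent of $e'$, which contradicts the choice of $\hat e'$.
\end{itemize}

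The main obstacle is this last dichotomy. I must show that when both pairs lie in the same transition cases but the nodes are not shared, the walk through $\hat e$ really is consistent with $e'$. The intended argument is that tier and coordinates coincide by grid-alignment, and that $\lastState$ and $\lastSymbol$ of $x$ depend only on the transition case of the predecessor node. I would first check this carefully for the folding-node case, where the edge directions differ.
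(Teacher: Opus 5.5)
Your proposal is correct and follows essentially the paper's route. It settles the direct case by local determination of $e'$ from $v$ and the stored ceiling edge, gets the localization by reading the index-precedent folding chain of $e'$ as forward ceiling adjacency to $e$, and proves the necessary condition by contradiction via determinism on transition cases. Your split by chain length is an explicit form of the paper's contrapositive: length at least two gives pseudo-combining, and otherwise being neither merging nor combining forces the tails of $\hat e$ and $\hat e'$ into one transition case, so $\hat e$ already works.

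The step you flag goes through as you intend. The head of $e'$ is fixed by $v$ and the transition case of its index-predecessor node, and grid alignment gives $\hat e$ and $\hat e'$ chains of equal length, so both heads lie in the single case $\IPrec(v)$. That, rather than $\hat e'$ failing to be an index-precedent, is the actual reason your ``targets in different transition cases'' sub-case cannot occur.
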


\begin{proof}
Let $e=(u,v)$, $e'=(v,w)$ and $\hat{e}=(x,y)$.

First, suppose $\ipred_{W'}(e')$ belongs to $S$ for some computation walk $W'$. Then, the edge index of the ceiling edge $\hat{e} \in S$ is identical to the next edge index of $e$ (and thus to the index of $e'$), 
which is uniquely determined by $\min\{\indexOf(v),$ $\nextIndex(v)\}$. Then the next state, next symbol, and next tier of $e$ are uniquely determined by $e$ and $\hat{e}$. These correspond to the state, symbol, and tier of $e'$, respectively. Hence, the edge $e'$, being the index-successor of $\hat{e}$ on the walk $W+e$, can be extended directly without verification.

\medskip
\noindent
Second, suppose $\ipred_{W'}(e') \notin S$ for some computation walk $W'$ where $e'=(v,w)$ is a valid edge of some computation walk $W' \neq W+e'$ containing both $e$ and $e'$. 
Then $e'$ must have an index-predecessor $\hat{e}' = \ipred_{W'}(e')$ on $W'$. By \cref{def:index-precedent_index-succedent_edges}, $e'=(v,w)$ is an index-succedent of $\hat{e}'=(x', y')$.
Beside, there should exist a path between $\hat{e}'$ and $e'$ where no edge has an edge index identical to that of $e'$ other than $\hat{e}'$ and $e'$.

By the definition of an index-succedent, there exists a sequence of vertices $v_0, v_1, \ldots, v_n$ satisfying:
\begin{enumerate}
    \item $v_0 = y'$ and $v_n = v$;
    \item for every $0 \le i < n$, $v_{i+1}$ is the index-succedent node of $v_i$;
    \item for every $0 < i < n$, the vertex $v_i$ is a folding node.
\end{enumerate}
From the index-succedent condition, $v_0$ and $v_n$ cannot be folding nodes unless $v_0=v_n$. If $v_0=v_n$, then $e'$ is a folding edge and can always be directly extended. 
Since $v_n$ is the head of $e$, this sequence forms a folding chain satisfying the forward ceiling-adjacency condition.
Hence, $\hat{e}'$ in $H$ is forward ceiling-adjacent to $e$ by definition.

\medskip
\noindent
We now show that $\hat{e}'$ must be restricted to the cases stated in the lemma. 
Suppose, for the sake of contradiction, that the forward ceiling-adjacent edge $\hat{e}' = (x', y')$ to $e$ is neither a merging edge, a combining edge, nor a pseudo-combining edge.

Since $\hat{e}'$ is not a pseudo-combining edge, the index-succedent of its head $y'$ is not a folding node. 
Furthermore, because $\hat{e}'$ is neither a merging nor a combining edge, the edge $\hat{e}'$ incoming to a node in the transition case of $y'$ must originate from the same transition case as the tail node $x$ of $\hat{e}$. 
By the determinism of the underlying Turing machine, the index-successor edge $e' = (v, w)$ of such an $\hat{e}'$ must be uniquely determined where $w \in \ISucc(x)=\ISucc(x')$.

Consequently, $e'$ would already be an index-succedent of $\hat{e}$ and thus eligible for direct extension. This implies that $e'$ does not require inclusion in the candidate set for verification, contradicting the assumption that $e'$ requires verification. 
Therefore, $\hat{e}'$ must be one of the three specified types.
\end{proof}

In the case of a merging edge, it is insufficient to examine only its direct index-succedent edges due to the potential existence of combined merging edges or internal splitting. 
As stated in \cref{lem:restricted_isuccedent}, we must examine all succedent edges of the ceiling edges of the computation walk where the merging edge is extended, as well as the edges converging to those ceiling edges.

\begin{sublemma}[Restricted ISuccedent-Based Candidate Edges for Verification]\label{lem:restricted_isuccedent}
Let $e$ be a newly added edge in the footmarks graph $H$, and let $e'$ be an edge that becomes extendable only after the insertion of $e$. 
If $e' \notin \Next(e)$, then $e'$ must be a next edge of an edge that is \textbf{reverse ceiling adjacent} to some edge $e_c$ identified as a proper merging edge, a combining edge, or a pseudo-combining edge.
\end{sublemma}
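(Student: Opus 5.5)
The plan is to mirror the structure of the proof of \cref{lem:restricted_precedent}, but to argue through the index-predecessor of $e'$ rather than its walk-predecessor. Fix a valid computation walk $W'$ in $H+e'$ that contains $e'$ and that is not available before $e$ is inserted. Since $e' \notin \Next(e)$, the edge $e$ is not the immediate predecessor of $e'$ on $W'$. The only way the insertion of $e$ can newly enable $e'$ is therefore through the history constraint, that is, through the State/Symbol History and Cell Flow conditions of \cref{def:computation_walk}. I would first isolate this step. Let $\hat e = \ipred_{W'}(e')$, and let $f = \prevOf_{W'}(e')$. I would show that some edge on the segment of $W'$ from $\hat e$ to $f$ must be new (it is $e$ itself or lies after $e$). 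Otherwise the whole prefix of $W'$ up to $e'$ lies in $H$ minus $e$, and $e'$ would already have been extendable, contradicting the hypothesis.

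Next I would exploit determinism as in the last part of the proof of \cref{lem:restricted_precedent}. The edge $e'$ is determined by the head node $\term(f)$ together with the ceiling edge $\hat e$ of its index, via $\IPrec$. By \cref{def:index-precedent_index-succedent_edges}, $e'$ is an index-succedent of $\hat e$, so there is an ISucc-folding chain from the head of $\hat e$ up to the tail region of $e'$. Reading this chain backwards gives exactly the folding sequence required in \cref{def:forward_ceiling_adjacency}. The absence of same-index edges on the path from $\hat e$ to $f$ supplies the path condition, so $f$ is reverse ceiling adjacent to an edge $e_c$ on that chain (here $e_c = \hat e$, or the first non-folding ceiling edge of the chain). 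This shows $e'$ is a next edge of an edge reverse ceiling adjacent to $e_c$.

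It remains to classify $e_c$, and I would do this by contradiction as in \cref{lem:restricted_precedent}. Suppose $e_c$ is neither a proper merging, a combining, nor a pseudo-combining edge. Not pseudo-combining (\cref{def:pseudo_combining_edge}) means the index-succedent node of its head is non-folding, so the chain collapses to a direct index-succedent. Not combining and not proper merging (\cref{def:combined_proper_merging_edge,def:combining_edge}) means every edge entering the transition case of $\term(e_c)$ comes from a single transition case, possibly through combined merging edges, which are harmless. By determinism of $M$, the transition cases seen by $e'$ are then identical to those along an old walk in $H$ without $e$. Hence $e'$ would already be obtainable by direct extension (\cref{lem:direct_extension_correct}) from the old footmarks, contradicting the assumption that $e'$ becomes extendable only after inserting $e$.

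The main obstacle is the first step: justifying rigorously that the enabling influence of $e$ must travel through a ceiling edge of the index of $e'$, rather than through some remote constraint. I would handle it by choosing $W'$ with a maximal prefix in $H$ without $e$, and arguing through grid-alignment that the surface at the time of $f$ differs from every old surface only at indices reached by folding chains from the first new edge. Combined merging edges, which keep transition cases unchanged, would then be shown not to change the surface, and this gives the reduction to $e_c$.
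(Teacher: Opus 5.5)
Your skeleton matches the paper's: take $\hat e=\ipred_{W'}(e')$ as $e_c$, note that $f=\prevOf_{W'}(e')$ is reverse ceiling adjacent to it, and refute the case where $\hat e$ is none of the three types by determinism. Your second step corresponds to the paper's closing paragraph.

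\textbf{The gap is in the classification step.} The claim that ``the transition cases seen by $e'$ are then identical to those along an old walk in $H$ without $e$'' presupposes a specific old walk. It must reach $\term(f)$, and its last crossing of the boundary $\indexOf(e')$ must be an edge whose endpoints lie in the transition cases of the endpoints of $\hat e$. You never produce this walk, and the route you sketch for it does not work. An old walk may encode a different certificate, so its surface at the time of $f$ can differ from that of $W'$ at many cells unrelated to $e$. Nothing confines the differences to ``indices reached by folding chains from the first new edge.''

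Your first step does not help either. It only shows that $e$ precedes $f$ on $W'$. The worry about a ``remote constraint'' is also moot: the conditions of \cref{def:computation_walk} for appending $e'$ involve only $\term(f)$ and the last node in the target cell of $e'$. The paper anchors its comparison differently. It first proves that $e$ must be a merging edge and works with a maximal walk $W_m$ whose edges other than $e$ already lie in $H$. It then compares $e_c$ with another potential index-predecessor $e'_c$. Your proposal has no counterpart to this.

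\textbf{The missing idea is already in your hypothesis.} Since $e'\notin\Next(e)$, we have $f\neq e$. So $f$ was in the old footmarks and lies on an old valid walk $W_0$ reaching $w=\term(f)$.

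This first disposes of two degenerate cases your setup skips:
\begin{itemize}
\item If $e'$ is a floor edge, there is no $\hat e$.
\item If $\term(\hat e)$ is folding, then $f=\hat e$.
\end{itemize}
In both cases $e'$ is determined by $W_0$ up to $f$ alone, so it was already extendable, which is a contradiction.

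Otherwise write $\hat e=(x,y)$ and suppose $\hat e$ is not pseudo-combining.
\begin{itemize}
\item Then $w=\isucc_{W'}(y)$, so $\IPrec(w)=\TCase(y)$.
\item On $W_0$, the index-predecessor $z$ of $w$ has the same index and tier as $y$. By grid alignment it is reached at the same time step, so it is entered from the same side by an edge $e''=(x'',z)$, with $x''$ at the coordinates of $x$. There is no other crossing of that boundary before $w$.
\item By the history condition, $\TCase(z)=\IPrec(w)=\TCase(y)$.
\end{itemize}
Now split on the tails:
\begin{itemize}
\item If $\TCase(x'')\neq\TCase(x)$, then relative to $e''$ the edge $\hat e$ is proper merging (when $z=y$) or combining (when $z\neq y$).
\item If $\TCase(x'')=\TCase(x)$, the node entered from $w$ across that boundary is the same for both walks. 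Then $W_0$ truncated at $f$ and followed by $e'$ is already a valid computation walk. This contradicts the assumption that $e'$ becomes extendable only after $e$ is inserted.
\end{itemize}
This yields the classification with $e_c=\hat e$. It needs no surface comparison and no separate proof that $e$ is merging.
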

\begin{proof}
Let $G = H + e'$ be the footmarks graph containing $e'$. Since $G$ is a footmarks graph, let $W = e_0 e_1 \cdots e_m$ be a computation walk in $G$ where $e_m = e$ is the newly added edge. 
Let  $S$ be the set of ceiling edges of $W$ and let $W_m$ be a maximal computation walk containing $W$ in $G$.

First, if $e$ is not a merging edge, it would be the unique final edge of the subwalk $W$ that enables $e'$. 
However, if $e$ is an ex-pendant (non-merging) edge, any subsequent edges following $e$ cannot exist without its immediate next edge due to the continuity of the computation walk. 
This contradicts the assumption that $e' \notin \Next(e)$. Therefore, $e$ must be a merging edge to provide the structural foundation for $e' \notin \Next(e)$.

Second, by the maximality of $W_m$, there is no index $k < m$ such that $e_k \notin E(H)$ except for $e_m = e$. 
Thus, all edges of $W_m$ except possibly $e'$ must already belong to $H$ once $e$ is inserted. Let $W'$ be a computation walk in $G$ that contains both $e$ and $e'$. 
We consider the following cases regarding the relationship between $W_m$ and $W'$:
\begin{itemize}
    \item \textbf{Case 1 ($W_m = W'$):} $e'$ is directly extendable with an index-predecessor $e_c \notin S$. 
    \item \textbf{Case 2 ($W'$ splits from $W_m$ behind $e$ on ceiling edges $S$):} $e'$ is an index-successor of $e_c \in S$.
    \item \textbf{Case 3 ($W'$ splits from $W_m$ behind $e$ not on ceiling edges $S$):} $W'$ splits from an edge $f$ which is the index-successor of some edge $e_c \in S$, where $f$ already existed in $H$.
\end{itemize}

We now show that $e_c$ must be restricted to the specified configurations. 
Suppose, for the sake of contradiction, that the ceiling edge $e_c = (x, y)$ is neither a proper merging edge, a combining edge, nor a pseudo-combining edge. 
Then the head node $y$ of $e_c$ must belong to the same transition case as the head node $y'$ of some other potential index-predecessor $e'_c$, if one exists.
Since $e_c$ is not a pseudo-combining edge, the index-succedent of its head $y$ is not a folding node. Furthermore, because $e_c$ is neither a proper merging edge nor a combining edge, any incoming edge $e'_c = (x', y')$ to a node in the transition case of $y$ must originate from the same transition case as $x$. 

By the determinism of the underlying Turing machine, the index-successor edge of such an $e'_c$ must be uniquely determined and identical to $e'$. 
Consequently, $e'$ cannot be a splitting edge and would have been valid in $H$ even without $e$. This directly contradicts our initial assumption that $e'$ is extendable due to insertion of $e$.

Therefore, any edge $e'$ that becomes extendable only after the addition of $e$ must be an index-succedent of an edge $e_c$ that is a merging, combining, or pseudo-combining edge. 
By \cref{def:index-precedent_index-succedent_edges} and \cref{def:forward_ceiling_adjacency}, the previous edge $e_f$ of $e'$ establishes that $e_c$ is forward ceiling-adjacent to $e_f$. 
Inverting this relation via our definition, it follows that $e'$ is a next edge of an edge that is $e_f$ in reverse ceiling adjacent edges $E_f$  to $e_c$. 
This strictly restricts the search space for candidate edges to a well-defined set of index-succedents, completing the proof.
\end{proof}

\begin{remark}
In this framework, $E_f$ acts as the structural terminal of the computation walk completed by the insertion of $e$, thereby serving as the necessary foundation from which the validity of the next edge $e'$ is extended.
Furthermore, if $e$ itself is a merging edge, then either:
\begin{itemize}
\item a next edge of $e$ may become extendable, or
\item an index-succedent edge $e'$ may become extendable if its previous edges $E_f$ are forward ceiling-adjacent to a merging, combining, or pseudo-combining ceiling edge of the computation walk where $e$ is extended.
\end{itemize}
\end{remark}

\begin{algorithm}[!ht]
\caption{Add Extendable Edges on Ceiling Edges}
\label{alg:add_extendable_on_ceiling}
\begin{algorithmic}[1]
\Function{GetReverseCeilingAdjacentEdges}{$H, e$}
\State{Let $(u,v) \gets e$}
\State{Let $Q \gets$ a set of all the vertices of  $\ISucc_H(v)$}
\State{Let $E_f \gets$ Empty set}
\State{Let $V_v \gets$ Empty set}
\While{$Q \neq \emptyset$}
    \State{Select and remove a vertex $w$ from $Q$}
    \If{$w \in V_v$}
    	\State{\Continue}
    \Else
        \State{Add $w$ to $V_v$}
    \EndIf
    \If{$w$ is a folding node}
        \State{Set $Q \gets Q \cup \ISucc_H(w)$}
    \ElsIf{$\nextIndex(w) = \indexOf(u)$}
        \State{Let $E_t \gets$ \Call{FilterWithPathForward}{$H, e, \mathrm{IncomingEdges}_H(w)$}}
        \State{Set $E_f \gets E_f \cup E_t$}
    \EndIf
\EndWhile
\State{\Return $E_f$}
\EndFunction

\Function{AddExtendableEdgeOnCeilingEdges}{$H, S, E_b$}
\ForAll{edge $e \in S$}
    \If{$e$ is \textbf{NIL}}
        \State{\Continue}
    \EndIf
    \If{$e$ is a proper merging edge \textbf{or} a combining edge \textbf{or} a pseudo-combining edge}
            \State{Let $E_f \gets$ \Call{GetReverseCeilingAdjacentEdges}{$H, e$}}
            \ForAll{$f \in E_f$}
                \State{Add $(e, f)$ to $Q$}
            \EndFor
    \EndIf
\EndFor
\EndFunction
\end{algorithmic}
\end{algorithm}

\begin{algorithm}[!ht]
\caption{Compute Footmarks of Computation Walks and Determine Acceptance} \label{alg:compute_footmarks}
\begin{algorithmic}[1]
\Input{$G$: Dynamic Computation Graph, $H$: Graph of Visited Edges, $V_0$: Set of Initial Nodes}
\Output{Whether an accept state $q_{\text{acc}}$ is reachable by a maximal computation walk}
\Function{IsAcceptedOnFootmarks}{$G, H, V_0$}[$q_{\text{acc}}, q_{\text{rej}}$ is a constant]
\State{Let $Q$ be a set of all outgoing edges of $v_0 \in V_0$ in $G$}
\State{Let $isRetry \gets \False$}
\While{$Q \ne \emptyset$}[Extend $H$ by valid computation edges]
	\State{Let $E_b \gets \emptyset$}
	\State{Let $prevCntEdges \gets |E(H)|$}
    \State{Let $R \gets$ \Call{ExtendByVerifiableEdges}{$V_0, Q, H, E_b$}}\Comment{$E_b$: collected for boundary edges}
    \If{$R$ is not NIL}
        \State{Print $R$ as a valid Certificate}
        \State{\Return \True}
    \ElsIf{$|E(H) = prevCntEdges$ \textbf{and} $isRetry$} [No edges are extended for general-index-succedent based candidate edges]
       	\State{\Return \False}
    \EndIf
    \State{$Q \gets$ \Call{CollectRestrictedCandidateEdges}{$G,H, E_b$}}\Comment{Newly collected candidate edges}
    \If{$|Q|=0$} 
    	\State{Set $E_b \gets$ \Call{CollectForISuccedentBoundaryEdges}{$H, V_0$}}
    	\State{$Q \gets$ \Call{CollectRestrictedCandidateEdges}{$G,H, E_b$}}\Comment{General boundary edges}
    	\State{Set $isRetry \gets \True$}
    \Else
        \State{Set $isRetry \gets \False$}
    \EndIf
\EndWhile
\State{\Return \False}
\EndFunction
\end{algorithmic}
\end{algorithm}

\begin{algorithm}[!ht]
\caption{Collect Restricted Boundary Edges} \label{alg:collect_restricted_boundary_edges}
\begin{algorithmic}[1]
\Input{$H$: Graph of Visited Edges, $V_0$: Set of Initial Nodes}
\Output{The restricted candidate edges $Q$}
\Function{CollectForISuccedentBoundaryEdges}{$H, V_0$}
    \State{Let $T$ be a stack initialized with all outgoing edges of $v_0 \in V_0$ in $G$}
    \State{Let $Q \gets \emptyset$} \Comment{Set of candidate boundary edge pairs}
    \State{Let $E_v \gets \emptyset$}
    \While{$T$ is not empty}
        \State{Pop an edge $e$ from $T$}
        \If{$e \in E_v$}
            \State{\Continue}
        \EndIf
        \State{Add $e$ to $E_v$}
        \State{Push all edges in $Next_H(e)$ to $T$}
        
        \If{$e$ combining edge or pseudocombining edge}
            \State{Let $E_f \gets$ \Call{GetReverseCeilingAdjacentEdges}{$H, e$}}
            \ForAll{$f \in E_f$}
                \State{Add $(e, f)$ to $Q$}
            \EndFor
        \EndIf
    \EndWhile
    \State{\Return $Q$}
\EndFunction
  
\Function{CollectRestrictedCandidateEdges}{$G, H, E_b$}

\State{Let $Q$ be the empty set of edges}
\ForAll{$(e_p, e)$  in $E_b$} \Comment{Collect restricted candidate edges}
    \State{Let $(u,v) \gets e$}
    \If{$\state(v) \in \{\qacc, \qrej\}$ \textbf{or} $\indexOf(u)= \nextIndex(v)$}
         \State{\Continue}
    \ElsIf{$e_p$ is not \NIL}    
	    \State{$Let Ep \gets$ \Call{GetForwardWeakCeilingAdjacentEdges}{$H,e$}}
	    \State{$Set Ep \gets$ \Call{FilterWithPathBackward}{$H, e, Ep$}}
    \Else
           \State{$Let Ep \gets \{e_p\}$}
    \EndIf
    \State{$En \gets$ \Call{GetNextEdgesAboveIPreds}{$G, v, Ep$}}
    \ForAll{$e' \in En$}
    	\If{$e' \notin E(H)$}
             \State{Add $e'$ to $Q$}
        \EndIf
    \EndFor
\EndFor
\State{\Return $Q$}
\EndFunction
\end{algorithmic}
\end{algorithm}

In the original algorithm, \textproc{CollectBoundaryEdges()} identified candidate edges by scanning all edges in the tier immediately above the current tier (i.e., tier $m+1$), 
which potentially involved a large number of edges per surface.In contrast, 
the revised \textproc{CollectRestrictedCandidateEdges()} restricts the candidate set based on the structural properties of the visited-edge graph $H$. 
Specifically, only edges positioned above merging edges, combining edges, or pseudo-combining edges are considered for extension.
For merging structures, branching is permitted above all converging surface edges. 
Consequently, the candidate set is efficiently constructed by inspecting the surface and selecting only those edges identified as merging, combining, or pseudo-combining.
This restriction dramatically reduces the number of edges to be processed by \textproc{VerifyExistenceOfWalk()} while preserving correctness, 
as any valid computation walk must necessarily pass through at least one of these structural edges.

\begin{lemma}[Completeness of Restricted Verification Collection]
\label{lem:completeness_of_restricted_verification}
Let $G$ be the dynamic computation graph and $H$ be the footmarks graph of extended edges. Suppose an edge $e' \notin E(H)$ becomes extendable only after $H$ is extended by a direct edge extension (via \textproc{ExtendEdgeDirectlyWithWalk()}) along some valid computation walk.

Then $e'$ is necessarily collected by \textproc{CollectRestrictedCandidateEdges()} and added to the verification candidate set. In particular, no extendable edge requiring verification is omitted.
\end{lemma}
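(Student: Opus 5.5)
The plan is to argue by contradiction on a \emph{first} omitted edge, splitting according to whether $e'$ is a next edge of the newly inserted edge or not. This is exactly the dichotomy covered by \cref{lem:restricted_precedent} and \cref{lem:restricted_isuccedent}. Suppose $e'=(v,w)\notin E(H)$ becomes extendable after a direct extension, yet is never placed in the candidate set $Q$. Fix a valid computation walk $W'$ containing $e'$. Let $e$ be the edge whose insertion during \textproc{ExtendEdgeDirectlyWithWalk()} first enables $e'$; it exists because $e'$ was not extendable before the extension.

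\textbf{Case 1: $e'\in\Next(e)$.} By \cref{lem:restricted_precedent}, if the index-predecessor of $e'$ lies among the current ceiling edges $S$, then $e'$ is extended directly by the inner loop of \textproc{ExtendEdgeDirectlyWithWalk()} via \textproc{GetNextEdgesAboveIPreds()}. This case needs no verification. Otherwise its index-predecessor $\hat e'$ is forward ceiling-adjacent to $e$ and is a proper merging, combining, or pseudo-combining edge. I would check that this is precisely the condition at line~\ref{algline:to_collect_next_based_edge}, where $(\NIL,e)$ is added to $E_b$. Then \textproc{CollectRestrictedCandidateEdges()} handles the pair as follows: it computes the forward weak ceiling-adjacent edges of $e$, filters them by a backward path (matching the path requirement in \cref{def:forward_ceiling_adjacency}), and calls \textproc{GetNextEdgesAboveIPreds()}. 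By the forward-ceiling-adjacency argument in the proof of \cref{lem:restricted_precedent}, the resulting set contains $\hat e'$ and hence $e'$. I also need to dispose of the skipped pairs. Halting heads produce no next edge. If $\indexOf(u)=\nextIndex(v)$, then $e'$ is a folding edge, which \cref{lem:restricted_precedent} shows is always directly extendable.

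\textbf{Case 2: $e'\notin\Next(e)$.} By \cref{lem:restricted_isuccedent}, $e$ is a merging edge, and $e'$ is a next edge of some $f$ that is reverse ceiling adjacent to a ceiling edge $e_c\in S$ of proper merging, combining, or pseudo-combining type. Since $e$ was new and merging, \textproc{AddExtendableEdgeOnCeilingEdges()} is invoked on the current $S$. For each such $e_c$ it calls \textproc{GetReverseCeilingAdjacentEdges()}, which follows the ISucc folding chain from the head of $e_c$ and collects incoming edges filtered by forward path existence. I would show this traversal enumerates exactly the reverse ceiling-adjacent edges of \cref{def:forward_ceiling_adjacency}, so the pair $(e_c,f)$ enters $E_b$. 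Then \textproc{CollectRestrictedCandidateEdges()} emits $e'$ as a next edge of $f$ above the index-predecessor $e_c$. As a fallback, if the extension stalls, \textproc{CollectForISuccedentBoundaryEdges()} rescans all combining and pseudo-combining edges of $H$. This covers walks whose enabling ceiling edge was not in the $S$ current at insertion time.

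The main obstacle is the bookkeeping that the $S$ snapshot (copied along stack branches) used at insertion time really contains the ceiling edge $e_c$ named by \cref{lem:restricted_isuccedent}. It also requires that the code's treatment of $e_p$ in $E_b$ (NIL versus non-NIL, which is swapped in the branch of \textproc{CollectRestrictedCandidateEdges()}) matches the two cases. I would handle this by proving, as an invariant of \textproc{ExtendEdgeDirectlyWithWalk()}, that $S$ always holds the ceiling edges of the current prefix walk. The retry pass with \textproc{CollectForISuccedentBoundaryEdges()} serves as the safety net for any residual configuration, and together the two cases cover all $e'$, contradicting the assumption that $e'$ was omitted.
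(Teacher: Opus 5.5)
Your proposal is essentially the paper's own argument. It is a contradiction on a first omitted edge, in which your Case~1 merges the paper's ``direct continuation'' and ``next-edge-based'' cases (via \cref{lem:restricted_precedent} and the $(\NIL,e)$ insertion at line~\ref{algline:to_collect_next_based_edge}), while your Case~2 and its retry fallback reproduce the paper's ``ceiling-edge-based'' and ``general index-succedent-based'' cases (via \cref{lem:restricted_isuccedent}, \textproc{AddExtendableEdgeOnCeilingEdges()}, and \textproc{CollectForISuccedentBoundaryEdges()} once $isRetry$ is set); the extra bookkeeping you flag---the copied $S$ snapshots and the apparently swapped $\NIL$ test in \textproc{CollectRestrictedCandidateEdges()}---is not treated in the paper's proof either, which simply asserts that the collected pairs produce $e'$.
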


\begin{proof}
Assume, for the sake of contradiction, that there exists an edge $e'$ satisfying the following conditions:
\begin{enumerate}
    \item $e' \notin E(H)$;
    \item $e'$ becomes extendable only after extending $H$ via direct edge extension along some valid computation walk;
    \item $e'$ is not collected by \textproc{CollectRestrictedCandidateEdges()}.
\end{enumerate}

Let $e$ be a newly extended edge during the execution of \textproc{ExtendEdgeDirectlyWithWalk()}, and let $W = e_0 e_1 \cdots e_k$ be a computation walk such that $e=e_k$ is directly extended along this walk. Let $e'$ be the first edge in a computation walk $W'$ that is neither added to $H$ nor collected into $Q$, while all preceding edges of $W'$ already belong to $H$ or to the footmarks of previously extended computation walks.

We distinguish cases based on how $e'$ becomes extendable:

\begin{itemize}
    \item \textbf{Case 1: Direct continuation.}
    If $e_{k-1}$ is directly extended along the computation walk $W$ $(W'=W+e')$, then $e'$ (e.g., a non-floor next edge or a specific floor next edge) must be added immediately during the direct extension by \cref{lem:direct_extension_correct}. This contradicts the choice of $e'$ as the first non-added edge on $W'$.

    \item \textbf{Case 2: Next-edge-based extension.}
    Suppose $e' \in \Next(e)$ for some newly added edge $e$ in \textproc{ExtendEdgeDirectlyWithWalk()}. By \cref{lem:restricted_precedent} (Restricted Next-Based Verification), if $e'$ is not directly extendable, there must exist an edge $\hat{e}$ in $W'$ that is forward ceiling-adjacent to $e$ that is either a merging edge, a combining edge, or a pseudo-combining edge. In this case, $(\NIL, e)$ is added to $Ev$ at line \ref{algline:to_collect_next_based_edge} of \cref{alg:extend_edge_directly_with_walk}, and $e'$ is subsequently collected into $Q$ by \textproc{CollectRestrictedCandidateEdges()}. This contradicts the third assumption.
    
    \item \textbf{Case 3: Ceiling-edge-based extension.}
    Suppose $e'$ does not belong to the next edges of any previously extended edge but belongs to a valid computation walk $W'$. By \cref{lem:restricted_isuccedent} (Restricted Index-Succedent-Based Verification), this is possible only if $e$ is a merging edge and $e'$ is an index-succedent of a proper merging, combining, or pseudo-combining edge $e_c$, where $e_c$ is a ceiling edge of $W$. In this case, $e'$ is collected by \textproc{AddExtendableEdgeOnCeilingEdges()} and subsequently by \textproc{CollectRestrictedCandidateEdges()}, again contradicting assumption (3).
    
    \item \textbf{Case 4: General Index-succedent-based extension.}
    Suppose $e'$ does not belong to the next edges of any previously extended edge but belongs to a valid computation walk $W'$. By \cref{lem:restricted_isuccedent} (Restricted Index-Succedent-Based Verification), this is possible only if $e$ is a merging edge and $e'$ is a next edge of reverse ceiling adjacent edges of $e'_c$, where $e'_c \in E(H)$ is a combining, pseudo-combining, or proper merging edge. In this case, $e'$ is collected by \textproc{CollectForISuccedentBoundaryEdges()} and subsequently by \textproc{CollectRestrictedCandidateEdges()} when the $isRetry$ flag is \True, again contradicting assumption (3).
\end{itemize}

Since all possible cases lead to a contradiction, no such edge $e'$ exists. Thus, every edge that becomes extendable and requires verification is successfully collected by \textproc{CollectRestrictedCandidateEdges()}.
\end{proof}

\begin{sublemma}[Reverse Ceiling Adjacent Edge Computation Cost] \label{sublem:reverse_ceiling_adjacent_cost}
Let $G$ be a computation graph with width $w$ and height $h$, and let $H$ be a footmarks graph. 
Then \textproc{GetReverseCeilingAdjacentEdges()} runs in time
\[
\bigO(w h^5 \log(wh)).
\]

\begin{proof}
The algorithm's complexity is determined by the following stages:
\paragraph{(1) Traversal of index-succedent vertices.}
For a given ceiling edge $e=(u,v)$, the algorithm explores the index-succedent chain starting from $v$. Since the height of the graph is $h$, the total number of reached vertices $w$ through folding and succedent transitions is bounded by $\bigO(h)$.

\paragraph{(2) Incoming edge enumeration.}
For each vertex reached in the traversal (up to $\bigO(h)$ vertices), the algorithm inspects its incoming edges to identify candidate previous edges. Since the maximum in-degree in the computation graph is $\bigO(h)$, the total number of candidate incoming edges $E_t$ to be verified is at most $\bigO(h^2)$.

\paragraph{(3) Path-adjacency verification.}
Each candidate edge in $E_t$ is verified using \textproc{FilterWithPathForward()}, which checks for a valid forward path in $H$. This procedure takes $\bigO(w h^3 \log(wh))$ time per candidate according to \cref{sublem:runningtime_filter_forward}.

\paragraph{(4) Total cost.}
Summing these components, the total cost for one call is:
\[
\bigO(h^2) \cdot \bigO(wh^3 \log(wh)) = \bigO(w h^5 \log(wh)).
\]
This completes the proof.
\end{proof}
\end{sublemma}

\begin{corollary}[Computation of Edge Extension Overhead Cost $T_o$]
\label{lem:edge_extension_cost_To}
Let $G$ be a computation graph with width $w$ and height $h$, and let $H \subseteq G$ be a partially constructed footmarks graph. 
Define $T_o$ as the additional overhead of direct edge extension when $isNewEdge$ is \texttt{True} in \cref{alg:extend_edge_directly_with_walk}.
Then $T_o$ is bounded by
\[
T_o = \bigO(w^2 h^5 \log(wh)).
\]

\begin{proof}
The cost of computing the next-based candidates is $\bigO(h)$, as checking the combining or pseudo-combining edge cost is $\bigO(h)$ according to \cref{lem:complexity_combining_detection}. 
We now focus on \textproc{AddExtendableEdgeOnCeilingEdges()}.
The set $S$ of ceiling-adjacent edges contains at most $\bigO(w)$ edges, corresponding to the width of the computation graph. 
For each edge in $S$, \textproc{GetReverseCeilingAdjacentEdges()} is invoked with a cost of $\bigO(w h^5 \log(wh))$ by \cref{sublem:reverse_ceiling_adjacent_cost}.
Thus, the total overhead $T_o$ satisfies:
\[
T_o = \bigO(w) \cdot \bigO(w h^5 \log(wh)) = \bigO(w^2 h^5 \log(wh)).
\]
\end{proof}
\end{corollary}

In the unrestricted extension process of the original framework, a single extended edge may generate up to $O(wh^2)$ candidate edges. 
Each candidate edge triggers an invocation of $\textproc{VerifyExistenceOfWalk()}$. 
Since the verification cost is typically large, this results in a prohibitive total cost. 
However, the number of candidate edges can be reduced by only investigating restricted candidate edges as follows:

\begin{lemma}[Bound on Extension Candidates] \label{lem:restricted_candidate_bound}
Let $H$ be a footmarks graph of maximal computation walks in a computation graph $G$ of width $w$ and height $h$. 
Let the general extension candidate ratio $\Delta$ be the ratio of the number of edges having a combining or pseudo-combining edge as their index-predecessor to the total number of edges in $H' = H \cup E_b$, where $E_b$ denotes the set of boundary edges of $H$.
At any extension step, the number of restricted extension candidates is bounded by $\bigO(wh(1 + h p_{\Delta}\Delta))$ per newly extended edge, where $p_{\Delta}$ is the probability that no directly extendable, next-based, nor ceiling-edge-based edge exists after an edge insertion.
\end{lemma}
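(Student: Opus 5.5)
The plan is to split the candidates generated after one newly extended edge $e$ by the three sources that \textproc{IsAcceptedOnFootmarks()} feeds into \textproc{CollectRestrictedCandidateEdges()}: next-based candidates, ceiling-edge-based candidates, and general index-succedent-based candidates. Each source is bounded separately and the bounds are summed. \Cref{lem:completeness_of_restricted_verification} already guarantees that these sources exhaust all edges requiring verification, so only counting remains.

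\textbf{Next-based candidates.} By \cref{lem:restricted_precedent}, a next-based candidate $e'\in\Next(e)$ is an index-succedent of some edge $\hat e'$ forward ceiling-adjacent to $e$. Such edges are incoming edges of the vertices on the folding chain of index-succedent nodes ending at the head of $e$. This chain has length $\bigO(h)$, and each vertex has $\bigO(h)$ incoming edges. For each admissible $\hat e'$, \textproc{GetNextEdgesAboveIPreds()} returns $\bigO(1)$ edges, since the transition at the head of $e$ is fixed by $\delta$ once the index-predecessor is fixed. This gives $\bigO(h^2)$ in the crude count. I expect to sharpen it to $\bigO(h)$ by observing that every edge of a fixed slice entering a fixed transition case yields the same next edge, so only $\bigO(h)$ distinct transition cases contribute.

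\textbf{Ceiling-edge-based candidates.} These arise from \textproc{AddExtendableEdgeOnCeilingEdges()}, which scans the surface $S$ of $\bigO(w)$ ceiling edges. By \cref{lem:restricted_isuccedent}, only proper merging, combining, or pseudo-combining ceiling edges $e_c$ contribute. For each such $e_c$, the candidates are next edges of reverse ceiling-adjacent edges of $e_c$, which are collected along one index-succedent chain of height $\bigO(h)$. Distinct transition cases again contribute $\bigO(1)$ next edges each, giving $\bigO(h)$ per surface edge and $\bigO(wh)$ in total. This matches the leading term $\bigO(wh)$ of the bound.

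\textbf{General index-succedent candidates.} These are produced by \textproc{CollectForISuccedentBoundaryEdges()}, which is triggered only when $|Q|=0$, i.e.\ when no direct, next-based, or ceiling-based edge exists. I will model this event as occurring with probability $p_\Delta$. When it occurs, the procedure ranges over all combining and pseudo-combining edges of $H'$. By definition of $\Delta$, their index-successors number $\Delta|E(H')|=\bigO(\Delta wh^2)$. Each contributes $\bigO(1)$ candidates after the transition-case collapse, so the expected contribution per inserted edge is $\bigO(p_\Delta\Delta wh^2)=\bigO(wh\cdot hp_\Delta\Delta)$. Summing the three sources gives $\bigO(wh(1+hp_\Delta\Delta))$.

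The main obstacle is the collapse argument: showing that a slice's incoming edges into one node, or into one transition case, produce only $\bigO(1)$ distinct candidates. Without it the counts carry an extra factor of $h$. I would first try to argue it from determinism, as in the proof of \cref{lem:restricted_precedent}: the next edge depends only on the head node of the current edge and on the transition case of the index-predecessor's head, and these are $\bigO(h)$ possibilities per index. If that fails, I would settle for an amortized count per distinct transition case rather than per edge.
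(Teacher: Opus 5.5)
Your proposal follows the paper's proof essentially step for step: the same three-way split of candidates (next-based, ceiling-edge-based, general index-succedent-based) justified by \cref{lem:completeness_of_restricted_verification}, an $\bigO(h)$ next-based term, an $\bigO(w)\cdot\bigO(h)$ ceiling-edge-based term, and a fallback term $p_\Delta\Delta|E(H')| = \bigO(wh\cdot hp_\Delta\Delta)$. The ``collapse'' you flag as the main obstacle does not need a separate argument: every next-based candidate lies in $\Next(e)$, i.e.\ it is an out-edge of the single head node of $e$, and the paper simply uses $|\Next(e)| = \bigO(h)$. That bound holds because the target's index and state are fixed by $\delta$, leaving only the tier free up to constants, which is exactly the determinism observation you describe.
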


\begin{proof}
A new candidate edge must be either next-based, ceiling-edge-based, or general-index-succedent-based, unless it is directly extendable, as established in \cref{lem:completeness_of_restricted_verification}. By the structural properties of the computation graph:
\begin{itemize}
    \item \textbf{Next-based edges:} The number of such candidate edges is bounded by $\bigO(h)$, since $|\text{Next}(e)| = \bigO(h)$, reflecting the maximum number of incoming edges at the head node of $e$ within a single tier.
    \item \textbf{Ceiling-edge-based edges:} For each ceiling edge $e_c = (u, v)$, the number of index-succedent edges is bounded by a constant, specifically $(|\Gamma||Q|)^2$, determined by the transition rules. Since the surface contains at most $\bigO(w)$ such ceiling edges, and considering potential vertical verification across $h$ tiers, the total number of candidates is bounded by $\bigO(w) \times \text{constant} \times \bigO(h) = \bigO(wh)$.
    \item \textbf{General-index-succedent-based edges:} Let $H' = H \cup E_b$. The total number of candidate edges in this category is at most $\Delta|E(H')|$. While the original framework considered all $|E(H')|$ edges, the restricted approach reduces the search space by an asymptotic ratio of $\Delta$.
\end{itemize}
The effective fallback ratio during the verification phase is $p_{\Delta}\Delta$, where both $p_{\Delta}$ and $\Delta$ are strictly less than $1$ because floor edges and the edges whose index-precedent edges are none of combining nor pseudo-combining edges are excluded from this category. 
Consequently, during each extension phase, the number of candidate edges that must be examined per newly added edge is bounded by:
\[
\bigO(wh + |E(H')| \cdot h p_{\Delta} \Delta) = \bigO(wh + wh \cdot h p_{\Delta} \Delta) = \bigO(wh(1 + h p_{\Delta} \Delta)).
\]
This concludes the proof.
\end{proof}

\begin{corollary}[Reduction in Extension Complexity] \label{lem:extension_complexity_reduction}
Let $H$ be a footmarks graph of width $w$ and height $h$. The improved algorithm using restricted candidate edges reduces the candidate examination cost per each extended edge. 
The cost is bounded by $\bigO(wh(1 + h p_{\Delta} \Delta) T_v)$, compared to $\bigO(wh^2 T_v)$ in the original framework, where $T_v$ denotes the verification cost per candidate.
\end{corollary}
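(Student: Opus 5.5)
The corollary follows from \cref{lem:restricted_candidate_bound} by charging each examined candidate one invocation of \textproc{VerifyExistenceOfWalk()}. I would separate the count of candidates from the cost of handling each one, and then multiply. The comparison figure $\bigO(wh^2 T_v)$ for the original framework is taken from the summary of the original bounds: \textproc{ExtendByVerifiableEdges()} there examines up to $\bigO(wh^2)$ candidate edges per extension step, namely every edge up to tier $t_m+1$.

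The steps, in order, are as follows.

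First, fix a newly extended edge $e$ (or a newly inserted edge during \textproc{ExtendEdgeDirectlyWithWalk()}). Consider the candidate set $Q$ assembled from the resulting $E_b$ by \textproc{CollectRestrictedCandidateEdges()}, together with the fallback set from \textproc{CollectForISuccedentBoundaryEdges()} when $isRetry$ is set.

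Second, invoke \cref{lem:completeness_of_restricted_verification} to ensure these sets are the only places a verification-requiring edge can appear. Directly extendable edges are handled without the verifier by \cref{lem:direct_extension_correct}, so they contribute no $T_v$ term. This keeps the restriction from undercounting what must be verified.

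Third, apply \cref{lem:restricted_candidate_bound}. It gives $\bigO(wh)$ next-based and ceiling-edge-based candidates, plus $\bigO(wh\cdot h p_\Delta \Delta)$ general-index-succedent candidates triggered with probability $p_\Delta$. Together these bound $|Q|$ by $\bigO(wh(1+hp_\Delta\Delta))$ per extended edge.

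Fourth, observe that in \cref{alg:walk_first_extension} each candidate in $Q$ not already in $H$ triggers exactly one call to \textproc{VerifyExistenceOfWalk()} at cost $T_v$. The cost of collecting candidates (the filtering routines and $T_o$ from \cref{lem:edge_extension_cost_To}) is dominated by $T_v$: by \cref{lem:improved_verify_existence_of_walk_time}, $T_v=\bigO(w^3h^7\log(wh))$, which exceeds both the collection cost and $T_o$. Multiplying the count by $T_v$ gives $\bigO(wh(1+hp_\Delta\Delta)T_v)$.

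Finally, since $p_\Delta,\Delta<1$, the bound is never worse than $\bigO(wh^2T_v)$, and it is strictly smaller whenever $hp_\Delta\Delta = o(h)$.

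The main obstacle is the third step: the bound is averaged through $p_\Delta$, so it is an expected or amortized statement, not a worst case. I would phrase the corollary's cost in the same probabilistic sense as \cref{lem:restricted_candidate_bound}. I would also note that in the worst case $p_\Delta\Delta=\Theta(1)$, where the bound collapses to the original $\bigO(wh^2T_v)$ but does not exceed it. This makes the comparison monotone, consistent with the paper's claim that every edge examined by the restricted scheme is also examined by the original one.
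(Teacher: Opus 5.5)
Your proposal is correct and follows the paper's route. The paper's proof likewise takes the per-extended-edge candidate count $\bigO(wh(1+hp_\Delta\Delta))$ from \cref{lem:restricted_candidate_bound}, charges one call to \textproc{VerifyExistenceOfWalk()} per candidate, and contrasts this with the $\bigO(wh^2)$ calls of the original framework; your extra steps (the completeness appeal, the check that the collection overhead and $T_o$ are dominated by $T_v$, and the amortized reading of $p_\Delta$) are supplementary remarks rather than a different argument.
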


\begin{proof}
Under the verification algorithm using restricted candidates, for each newly extended edge, at most $\bigO(wh(1 + h p_{\Delta} \Delta))$ candidate edges are generated that satisfy the structural constraints, as established in \cref{lem:restricted_candidate_bound}. Therefore, the total number of verification calls is reduced and bounded by $\bigO(wh(1 + h p_{\Delta} \Delta))$, whereas the original framework required $\bigO(wh^2)$ calls.
\end{proof}

\begin{remark}[Verification Cost Per Edge Collected]\label{rem:verification_cost_per_edge}
During an extension step, direct extensions are performed along a computation walk up to its maximal length. Consequently, candidate edges may accumulate along the entire walk, and the total number of edges awaiting verification before the next verification step can reach $\bigO(wh^2)$ in the worst case.

Nevertheless, the verification cost attributed to each individual edge remains bounded. Specifically, each newly added edge contributes at most $\bigO(wh)$ candidates for verification when next-based or ceiling-edge-based candidates are found. Otherwise, it contributes at most $\bigO(wh^2\Delta)$ candidates. 

Therefore, although multiple edges may accumulate before a verification phase, the total verification cost per each extended edge is bounded by $\mathcal{O}(wh(1 + h p_\Delta \Delta) \cdot T_v)$.
For empirical analysis, we approximate these structural parameters using directly observable experimental metrics as follows:
\begin{itemize}
\item $p_\Delta \approx p_I$: The probability of triggering the general verification mode, defined as the ratio of the retry count to the total number of extended edges.
\item $\Delta \approx \rho$: The average workload of each general verification phase. To ensure statistical rigor and eliminate redundant counts, $\rho$ is defined as the average number of general index-succedent-based candidate edges per retry event, normalized by the total extended edges.
\end{itemize}
Consequently, the empirical complexity contribution $p_\Delta \Delta$ is evaluated by the product $p_I \cdot \rho$:
\[
r_I=p_I \cdot \rho = \left( \frac{N_{\text{retry}}}{N_{\text{total}}} \right) \cdot \left( \frac{N_{\text{cand}} / N_{\text{retry}}}{N_{\text{total}}} \right) = \frac{N_{\text{cand}}}{N_{\text{total}}^2}
\] where $N_{\text{retry}}$ is the retry count, $N_{\text{cand}}$ is the total number of general candidate edges allowing multiple counts, and $N_{\text{total}}$ is the total number of extended edges. 
This formulation ensures that even if $N_{\text{cand}}$ involves overlapping counts, the use of the per-retry average maintains a fair representation of the algorithm's actual efficiency.
\end{remark}

\begin{sublemma}[Complexity of General Index-Succedent Boundary Collection] \label{sublem:complexity_isuccedent_collection}
Let $G$ be a computation graph with width $w$ and height $h$, and $H \subseteq G$ be a footmarks graph. 
Let $N_I$ be the number of combining and pseudo-combining edges in $H$. 
Then \textproc{CollectForISuccedentBoundaryEdges()} runs in time 
\[\bigO(|E(H)| \log(wh) + N_I \cdot w h^5 \log(wh))=\bigO(w^2h^7 \log(wh)). \]
 
\begin{proof}
The complexity is analyzed by partitioning the operations into traversal overhead and structural verification cost:
\paragraph{(1) Graph Traversal Overhead.}
The algorithm performs a depth-first search (DFS) starting from $V_0$ using a stack $T$. 
Each edge $e \in E(H)$ is pushed and popped at most once due to the visited set $E_v$. 
Membership tests and insertions into $E_v$ take $\bigO(\log |E(H)|) = \bigO(\log(wh))$ time. Thus, the total traversal overhead is $\bigO(|E(H)| \log(wh))$.
\paragraph{(2) Conditional Structural Verification.}
The expensive \textproc{GetReverseCeilingAdjacentEdges()} function is invoked only when $e$ is a combining or pseudo-combining edge.
\begin{itemize}
\item Let $N_I$ be the number of such edges in $H$, which is at most $\bigO(wh^2)$.
\item By \cref{sublem:reverse_ceiling_adjacent_cost}, each call to \textproc{GetReverseCeilingAdjacentEdges()} costs $\bigO(w h^5 \log(wh))$.
\end{itemize}
The total cost for this stage is $\bigO(N_I \cdot w h^5 \log(wh))$.

\paragraph{(3) Total Complexity.}
Combining the two parts, and noting that $|E(H)|$ and $N_I$ are both bounded by $\bigO(wh^2)$, the total running time is dominated by the verification cost:
\[ 
\bigO(wh^2 \log(wh) + wh^2 \cdot wh^5 \log(wh)) = \bigO(w^2h^7 \log(wh)).
\]
\end{proof}
\end{sublemma}

\begin{sublemma}[Running Time of Collecting Restricted Candidate Edges] \label{sublem:runningtime_collect_restricted_boundary_edges_final}
Let $G$ be a computation graph of width $w$ and height $h$, and let $H \subseteq G$ be a partially constructed footmarks graph. 
Let $E_b$ be a set of edges from which restricted boundary edges are collected. 
Then collecting candidate edges (\textproc{CollectForISuccedentBoundaryEdges()} and \textproc{CollectRestrictedCandidateEdges()})   costs
\[
T_c = \bigO(w^2 h^7 \log(wh)).
\]

\begin{proof}
For each edge $e \in E_b$, the function performs the following steps:
\begin{enumerate}
    \item \textproc{GetForwardWeakCeilingAdjacentEdges()}, which takes $\bigO(h^2 \log h)$ time by \cref{sublem:time_complexity_forward_weakly_ceiling_adjacent_edges}.
    \item \textproc{FilterWithPathBackward()}, which takes $\bigO(wh^3 \log(wh))$ time according to \cref{sublem:runningtime_filter_backward}.
    \item \textproc{GetNextEdgesAboveIPreds()}, which takes $\bigO(h^2 \log h)$ time where $|Ep| = \bigO(h^2)$ by \cref{sublem:runningtime_get_next_edges_above_preds}.
    \item Membership checks in $H$ for each generated edge, costing $\bigO(\log(wh))$ per edge.

\end{enumerate}
Furthermore, if the initial collection of restricted candidate edges yields an empty set ($|Q|=0$), the algorithm performs a one-time execution of \textproc{CollectForISuccedentBoundaryEdges()} to retrieve general boundary edges,
 incurring an additional cost of $\bigO(w^2h^7 \log(wh))$ time according to \cref{sublem:complexity_isuccedent_collection}.

The per-edge cost in $E_b$ is dominated by the backward filtering, $\bigO(wh^3 \log(wh))$. Since the worst-case size of the boundary edge set is $|E_b| = \bigO(wh^2)$, the total running time is:
\[
T_c = \bigO(|E_b| \cdot wh^3 \log(wh) + w^2h^7 \log(wh)) = \bigO(wh^2 \cdot wh^3 \log(wh) + w^2h^7 \log(wh)) = \bigO(w^2 h^7 \log(wh)).
\]
Thus, the asymptotic running time is $T_c = \bigO(w^2 h^7 \log(wh))$.
\end{proof}
\end{sublemma}

\begin{sublemma}[Time per Edge Extension via Verification]
\label{sublem:per_edge_extension_time}

Consider a single edge $e$ processed by \textproc{ExtendByVerifiableEdges()} during the footmark-based extension procedure.  
Let $N_c$ be the maximum number of candidate edges per each extended edge.
Let $T_v$ denote the cost of \textproc{VerifyExistenceOfWalk()}, and $T_o$ the additional overhead of direct edge extension as defined in \cref{lem:edge_extension_cost_To}.

Then the worst-case time to process all candidate edges generated from $e$ is bounded by
\[
\bigO\Bigl( N_c \cdot T_v + T_o + p(n) w \Bigr).
\]

\begin{proof}
For a given edge $e$, the algorithm processes candidate edges from the queue $Q$, which contains at most $N_c$ edges. For each candidate edge, the algorithm performs:
\begin{itemize}
    \item One invocation of \textproc{VerifyExistenceOfWalk()}, costing $T_v$.
    \item If verification succeeds, the algorithm proceeds to the direct edge extension phase.
\end{itemize}

As established in \cref{lem:directed_walk_extension_time}, each newly added edge incurs a cost of $\bigO(p(n) w)$ for the edge extension including walk traversal and an additional overhead $T_o$ for collecting candidate edges. 

Summing these costs, the total cost per processed edge $e$ is:
\[
\bigO(N_c \cdot T_v + T_o + p(n) w).
\]
\end{proof}
\end{sublemma}

\begin{corollary}[Time Complexity of Footmark-Based Simulation]
\label{cor:simplified_simulation_time}

Let $G$ be the computation graph with $|E(G)|$ edges, and let
\[
T_a = N_c \cdot T_v + T_c + (T_o + p(n) w)
\]
denote the total overhead to process a single edge, including verification, direct edge extension with additional overhead, and candidate calculation.

Then the total running time of the footmark-based deterministic simulation is bounded by
\[
\bigO\bigl( |E(G)| \cdot T_a \bigr),
\]
where $T_a$ captures all per-edge costs.

\begin{proof}
Each edge of $G$ is added to the boundary queue at most once.  
For each edge, the simulation performs:
\begin{itemize}
    \item verifying the existence of a computation walk ($T_v$) in \textproc{ExtendByVerifiableEdges()},
    \item performing direct edge extension ($p(n) w$) in \textproc{ExtendByVerifiableEdges()},
    \item additional processing for direct edge extension and collecting candidate edges ($T_o$) in \textproc{ExtendByVerifiableEdges()},
	\item Calculating candidate edges ($T_c$) via the invocation of \textproc{CollectRestrictedCandidateEdges()} (and \textproc{CollectForISuccedentBoundaryEdges()} upon a retry condition).
\end{itemize}

Since these are the only per-edge costs incurred, the total running time is
\[
\bigO(|E(G)| \cdot T_a).
\]
\end{proof}

\end{corollary}

\paragraph{Impact on Overall Simulation Performance}
\label{par:cover-feasible-impact}

In this framework, the improved \textproc{IsAcceptedOnFootmarks()} replaces the original verification procedure within the \textproc{SimulateVerifierForAllCertificates()} algorithm. 
The integration of efficient cover-edge computation and feasible-graph construction—improved by step-pendency pruning—significantly reduces the overall verification overhead. 
Furthermore, by utilizing direct extensions and restricting candidate edges, the total number of required verification calls is drastically minimized during the deterministic simulation.
Consequently, the computational cost for deciding NP problems within this improved simulation framework is substantially decreased, providing both theoretical polynomial-time guarantees and practical efficiency.

\begin{lemma}[Overall Time Complexity of Improved Deterministic Verifier Simulation]
\label{lem:overall_simulation_time}
Let $G$ be the dynamic computation graph of an NP verifier with width $w$ and height $h$.
Let $T_v$ denote the cost of \textproc{VerifyExistenceOfWalk()}, and let $\Delta$ be the general extension candidate ratio, and $p_\Delta$ be the corresponding probability as defined in \cref{lem:restricted_candidate_bound}. 
The total running time of the deterministic simulation algorithm, incorporating direct walk extension, reduced verification cost and frequency, is bounded by
\[
\bigO\Bigl(w^2 h^3 (1 + h \Delta) \cdot T_v \Bigr) = \bigO\Bigl(w^5 h^{10} (1 + hp_\Delta \Delta) \log(wh) \Bigr) = \bigO\Bigl(w^5 h^{11} \log(wh) \Bigr).
\]
\end{lemma}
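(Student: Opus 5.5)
The bound will follow by combining \cref{cor:simplified_simulation_time} with the per-edge estimates already established, and then simplifying with $\Delta, p_\Delta \le 1$. By \cref{cor:simplified_simulation_time}, the total running time is $\bigO(|E(G)|\cdot T_a)$ with $T_a = N_c T_v + T_c + T_o + p(n)w$ and $|E(G)| = \bigO(wh^2)$. The first step is to bound $N_c$. Here I will invoke \cref{lem:restricted_candidate_bound}, or equivalently \cref{lem:extension_complexity_reduction}, which gives $N_c = \bigO(wh(1+hp_\Delta\Delta))$. This yields the dominant term
\[
|E(G)|\cdot N_c T_v = \bigO\bigl(wh^2\cdot wh(1+hp_\Delta\Delta)\,T_v\bigr) = \bigO\bigl(w^2h^3(1+hp_\Delta\Delta)\,T_v\bigr).
\]
This matches the first expression in the statement, since $p_\Delta \le 1$ allows us to write $(1+h\Delta)$.

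Next I substitute $T_v = \bigO(w^3h^7\log(wh))$ from \cref{lem:improved_verify_existence_of_walk_time}. This gives $\bigO(w^5h^{10}(1+hp_\Delta\Delta)\log(wh))$. Bounding $p_\Delta\Delta \le 1$ then turns $(1+hp_\Delta\Delta)$ into $\bigO(h)$ and produces the final $\bigO(w^5h^{11}\log(wh))$.

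What remains is to check that the other per-edge terms are absorbed. I will verify this term by term, each after multiplying by $|E(G)| = \bigO(wh^2)$:
\begin{itemize}
\item By \cref{sublem:runningtime_collect_restricted_boundary_edges_final}, $T_c = \bigO(w^2h^7\log(wh))$, so this term contributes $\bigO(w^3h^9\log(wh))$.
\item By \cref{lem:edge_extension_cost_To}, $T_o = \bigO(w^2h^5\log(wh))$, so this term contributes $\bigO(w^3h^7\log(wh))$.
\item For the walk-traversal term I use $p(n)$, the walk length, which is at most $wh$ because every vertex of a walk is distinct and occupies one of $\bigO(wh)$ (index, tier) grid positions. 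The term is therefore $\bigO(wh^2\cdot w^2h) = \bigO(w^3h^3)$.
\end{itemize}
All three are dominated by $w^5h^{10}\log(wh)$, since this already exceeds $w^3h^9\log(wh)$.

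The main obstacle is justifying that the per-edge accounting of \cref{cor:simplified_simulation_time} really charges the candidate verifications to newly extended edges. As \cref{rem:verification_cost_per_edge} notes, candidates can accumulate along a walk before a verification phase, and retry rounds of \textproc{CollectForISuccedentBoundaryEdges()} could in principle repeat. I would handle this with the termination rule of \cref{alg:compute_footmarks}. Each retry round either adds an edge to $H$ or causes a return. The number of rounds is therefore $\bigO(|E(G)|)$, and each round's cost is charged to the edge it adds. Under that charging, the general-candidate cost per round is exactly the $hp_\Delta\Delta$ factor already present in $N_c$. If this charging argument turns out to be shaky, I will fall back on the crude bound $N_c = \bigO(wh^2)$ per round, which yields the same final $\bigO(w^5h^{11}\log(wh))$.
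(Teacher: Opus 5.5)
Your proposal is correct and follows the paper's proof essentially step for step. The paper also starts from the $\bigO(|E(G)|\cdot T_a)$ bound of \cref{cor:simplified_simulation_time}, substitutes $N_c=\bigO(wh(1+p_\Delta\Delta h))$, $T_v=\bigO(w^3h^7\log(wh))$, $T_c$, $T_o$ and $p(n)=\bigO(wh)$, observes that $N_c\cdot T_v$ dominates, and then uses $p_\Delta\Delta<1$ to reach $\bigO(w^5h^{11}\log(wh))$. You add two things the paper does not have. First, you justify $p(n)=\bigO(wh)$; that argument really rests on tier consistency (successive visits to one index have strictly increasing tiers), not on vertex distinctness alone. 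Second, you give a charging discussion with a crude $\bigO(|E(G)|^2)$-verification fallback, where the paper simply takes the per-edge accounting of \cref{cor:simplified_simulation_time} at face value. Both additions are fine, but the fallback recovers only the final $\bigO(w^5h^{11}\log(wh))$ and not the refined $(1+h\Delta)$ form.
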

 
\begin{proof}
Let $p(n)$ denote the polynomial running time of the verifier on inputs of size $n$, $T_o$ the cost of extending and registering a new edge, and $T_c$ the cost of candidate edge calculation. 
We establish the total complexity bound by aggregating per-edge costs over the entire computation graph.

\paragraph{Step 1: Candidate edges per edge.}
Each edge $e \in E(G)$ may generate up to $N_c = \bigO(wh(1+p_\Delta\Delta))$ candidate edges during the execution.  

\paragraph{Step 2: Total cost over all edges.}

Since the dominant cost of the improved NP verifier simulation framework (\textproc{SimulateVerifierForAllCertificates()}) is the extension cost—specifically the cost of \textproc{IsAcceptedOnFootmarks()}—the total cost of \textproc{IsAcceptedOnFootmarks()} is $\bigO(|E(G)| \cdot T_a)$, where $T_a = p(n)w + T_o + T_c + N_c \cdot T_v$ as established in \cref{cor:simplified_simulation_time}.

\paragraph{Step 3: Substitute asymptotic costs.}
Substituting the asymptotic costs
\begin{align*}
&T_v = \bigO(w^3 h^7 \log(wh)), \quad N_c=O(wh(1+ p_\Delta \Delta h)), \quad \\ 
&T_o = \bigO(w^3 h^5 \log(wh)), \quad T_c = \bigO(w^2 h^7 \log(wh)), \quad p(n) = \bigO(wh).
\end{align*}
Noting that $N_c \cdot T_v$ dominates, we obtain
\[
\bigO(|E(G)| \cdot (N_c \cdot T_v + T_c + p(n) w + T_o)  = \bigO(wh^2 \cdot wh(1+ p_\Delta \Delta h) \cdot T_v) = \bigO(w^5 h^{10}(1+ p_\Delta\Delta h) \log(wh)).
\]
In the structural analysis of $G$, the ratio $p_\Delta \Delta$ remains strictly less than $1$ as in the \cref{lem:restricted_candidate_bound}. 
Consequently, while the simplified bound $\bigO(w^5 h^{11} \log(wh))$ serves as a valid theoretical upper limit, the actual computational complexity is significantly lower due to these structural constraints.
\end{proof}

The correctness of the improved simulation framework follows directly from the functional equivalence of its components; the improved procedures, such as \textproc{ExtendByVerifiableEdges()}, 
replace the original sub-routines while strictly preserving the transition invariants of the computation graph. Consequently, the focus of this analysis is the evaluation of the total time complexity required to solve NP-complete instances within this implemented framework.

\begin{theorem}[Polynomial-Time Simulation for SAT and Subset-Sum Instances] 
\label{thm:simulation_sat_subset_sum} 
Let $n$ denote the input size (e.g., number of variables for SAT or number of elements for Subset-Sum), and let $G$ be the dynamic computation graph of the corresponding NP verifier constructed as in this framework. Then, applying the directed walk extension and boundary-based verification strategies, the total running time for solving SAT or Subset-Sum instances is bounded by 
\[ 
\bigO(n^{16} \log n). 
\]
Consequently, the proposed deterministic simulation algorithm solves these NP-complete instances in polynomial time, confirming the theoretical feasibility of the construction for moderate input sizes.
\end{theorem}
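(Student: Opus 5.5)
The plan is to obtain the bound by substitution: the structural parameters $w$ and $h$ of the footmarks graph for the concrete verifiers go into the general bound of \cref{lem:overall_simulation_time}. Correctness comes from the verifier theorems and the simulation lemmas already established, so no new combinatorial argument is needed.

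First, fix the verifier. For SAT I take the fixed-state machine $M_{\mathrm{SAT}}$ of \cref{def:tm_sat_fixed}. For Subset-Sum I take $M_{\mathrm{SS}}$ of \cref{def:sum_of_subset_verifier_tm}, each composed with its sanitization phase or, equivalently, with the alphabet-restricted enumeration. By \cref{thm:3sat-verifier} and \cref{thm:subset-sum-verifier}, these are correct polynomial-time verifiers. By the obliviousness lemmas together with \cref{rem:cotm_to_grid_aligned}, their footmarks graphs are grid-aligned, so the framework applies.

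Second, invoke the refined bounds of \cref{subsec:refined_footmarks_bounds}. For both machines, $|Q| = \bigO(1)$, $w = n+m+1 = \bigO(n)$, and $h \le \max(2m+2, 3m) = \bigO(n)$. Since each machine runs in $\bigO(n^2)$ time, $p(n) = \bigO(n^2) = \bigO(wh)$, which is the assumption used in Step 3 of \cref{lem:overall_simulation_time}.

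Third, plug these parameters into the component bounds:
\begin{itemize}
\item $T_v = \bigO(w^3h^7\log(wh))$ from \cref{lem:improved_verify_existence_of_walk_time};
\item $T_o$ from \cref{lem:edge_extension_cost_To};
\item $T_c$ from \cref{sublem:runningtime_collect_restricted_boundary_edges_final};
\item $N_c$ from \cref{lem:restricted_candidate_bound}.
\end{itemize}
Using the crude upper bound $p_\Delta\Delta \le 1$, \cref{lem:overall_simulation_time} gives a total running time of $\bigO(w^5h^{11}\log(wh))$. With $w,h = \bigO(n)$ this is $\bigO(n^{16}\log(n^2)) = \bigO(n^{16}\log n)$.

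Acceptance and rejection are correct by \cref{lem:completeness_of_restricted_verification}, \cref{lem:direct_extension_correct}, \cref{lem:feasible_walk_preserved} and \cref{lem:total_collapse}. These guarantee that restricting the candidates loses no extendable edge. Hence \textproc{IsAcceptedOnFootmarks()} returns \True{} exactly when some certificate is accepted.

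The step I expect to be the main obstacle is the meaning of $n$. The statement describes $n$ as the number of variables or elements, but the refined bounds $w,h = \bigO(n)$ are stated in terms of the instance encoding length. A SAT instance with few variables but many clauses has an encoding length that is not linear in the variable count. I would resolve this by reading $n$ as the instance encoding length, as \cref{subsec:refined_footmarks_bounds} does, and noting that the verifier tape length satisfies $n_v \le 2n$. I would also check that the certificate length satisfies $m \le n$ for both formats: $m < n$ for SAT by \cref{thm:3sat-verifier}, and $m$ equal to the set-encoding length for Subset-Sum. This keeps $w$ and $h$ linear. If $n$ must literally be the variable count, the bound instead holds in the encoding length, which is polynomially related, and the conclusion of polynomial time is unaffected.
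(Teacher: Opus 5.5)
Your proposal follows the paper's own proof. You substitute the linear bounds $w,h=\bigO(n)$ from \cref{subsec:refined_footmarks_bounds} into the $\bigO(w^5h^{11}\log(wh))$ bound of \cref{lem:overall_simulation_time} (taking $p_\Delta\Delta\le 1$). Reading $n$ as the instance encoding length is exactly how the paper's proof uses it, and deriving $w,h$ directly from the refined bounds is cleaner than the paper's wording $w,h=\bigO(p(n))=\bigO(n)$, since $p(n)=\bigO(n^2)$. Drop your fallback remark, though: the encoding length is not in general polynomially bounded in the number of variables or elements. For example, Subset-Sum numbers can have arbitrarily many digits, so the bound must be stated in the encoding length.
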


\begin{proof}
Let $n$ be the input size of the NP problem corresponding to the verifier TM. From \cref{lem:overall_simulation_time}, the total running time of the deterministic verifier simulation is bounded by
\[
\bigO(w^5 h^{11} \log(wh)),
\]
where $w$ and $h$ denote the width and height of the dynamic computation graph $G$, respectively.

For both SAT and Subset-Sum instances, the construction of the computation graph ensures that the graph dimensions are proportional to the verifier's running time:
\[
w = \bigO(p(n)) = \bigO(n), \quad h = \bigO(p(n)) = \bigO(n),
\]
where $p(n)$ is the polynomial running time of the verifier. Substituting these bounds into the asymptotic expression yields
\[
\bigO(w^5 h^{11} \log(wh)) = \bigO(n^5 \cdot n^{11} \cdot \log n) = \bigO(n^{16} \log n),
\]
where the logarithmic factor arises from set operations during the extension process.

Note that the input size $n$ represents the size of the NP-complete problem, and the dimensions $w$ and $h$ are linearly bounded as established in \cref{subsec:refined_footmarks_bounds}. Therefore, the total time complexity for both NP-complete problems is bounded by 
\[
\bigO(n^{16} \log n).
\]
Consequently, the improved NP verifier simulation framework solves SAT and Subset-Sum instances in polynomial time relative to the input size $n$, establishing the claimed bound.
\end{proof}

As discussed in \cref{lem:overall_simulation_time}, the bound $\bigO(n^{16} \log n)$ is a conservative over-estimation. In practice, the complexity can be more tightly bounded by $\bigO(n^{15}(1 + r_I) \log n)$, As discussed in \cref{lem:overall_simulation_time}, the complexity bound of $\bigO(n^{16} \log n)$ represents a conservative over-estimation. In practice, the complexity can be more tightly characterized as $\bigO(n^{15}(1 + r_I) \log n)$, $r_I$ denotes the empirical ratio of general index-succedent-based candidate edges to the total extended edges in \cref{rem:verification_cost_per_edge}. As demonstrated in our experimental results, the value of $r_I$ remains significantly low, indicating that the actual computational burden is substantially lighter than the theoretical worst-case bound.

\paragraph{Discussion on Walk-First Strategy and Edge Extension Optimization}  
Importantly, \cref{thm:simulation_sat_subset_sum} establishes that the proposed
deterministic simulation algorithm solves SAT and Sum-of-Subset instances in
time bounded by $\bigO(n^{16}\log n)$ for input size $n$.
Thus, the explicit Turing machine construction and NP verifier simulation framework
admit a rigorous polynomial-time execution guarantee for these canonical
NP-complete problems.

Although the asymptotic degree is large, the walk-first extension together with
structural restriction of candidate edges significantly reduces verification calls : intermediate edges are extended directly along verified
computation walks, and the candidate set examined per extension step decreases
from $\bigO(wh^2)$ to $\bigO(wh(1+p_\Delta \Delta h))$.
By \cref{lem:direct_extension_correct},
this optimization preserves completeness, ensuring that all edges belonging to
valid walks are eventually incorporated into the feasible graph.

Consequently, the construction remains theoretically sound while becoming
computationally tractable for moderate input sizes, which is further supported
by the experimental observations in \cref{sec:experimental_evaluation} that the
running behavior scales primarily with the instance length rather than the total
number of variables.

\subsection{Extraction of FNP Witnesses from Accepted Walks}

The NP verifier simulation framework naturally enables FNP witness construction 
in addition to deterministic polynomial-time NP decision. 

\paragraph{Auxiliary Structures.} 
Two key structures are maintained during direct edge extension:
\begin{itemize}
    \item \textbf{Ceiling-edge array $S$}: tracks the current frontier of the walk for each index, 
          ensuring consistent deterministic continuation.
    \item \textbf{Tier-0 symbol array $R$}: incrementally accumulates tape symbols along the computation walk, 
          representing the witness information.
\end{itemize}

\paragraph{Witness Extraction.} 
Whenever an accepting vertex is reached, the array $R$ contains a complete valid witness 
(e.g., a satisfying assignment for \textsc{SAT} or a selected subset for \textsc{Subset-Sum}). 
Since $R$ is updated during each edge addition in constant time, extraction incurs no additional asymptotic cost.

\paragraph{Comparison with Standard $P=NP \Rightarrow FNP$ Construction.}
The standard approach fixes witness symbols sequentially, querying the verifier for each position, 
resulting in an $\bigO(n)$ multiplicative overhead. 
In contrast, the feasible-graph simulation accumulates witness information \emph{online} 
during deterministic walk expansion, producing the complete FNP witness in the same time as the decision procedure, 
without extra calls or branching.

\paragraph{Rejected Witnesses.}
For computation walks terminating at a rejecting halting vertex, 
the simulator can output the corresponding array $R$ as a \emph{rejected witness}, 
representing the certificate explored along that deterministic walk. 
Rejected witnesses are generated only for walks explicitly constructed during edge extension; 
they do \emph{not} enumerate all certificates that would be rejected by the verifier. 
Consequently, their number equals the number of rejecting halting edges observed during simulation.

Since the footmarks graph has polynomial size, the total number of halting edges---and thus the number of rejected candidate certificates---remains polynomially bounded relative to the input length,
 even for instances that are ultimately rejected (i.e., unsatisfiable). These rejected candidates certify that specific branches of the computation graph cannot lead to an accepting state and primarily serve to justify edge pruning;
 they are auxiliary artifacts of the simulation process rather than externally meaningful outputs.

\paragraph{Implication.}
Within this construction, the framework provides an \emph{operational} $P=NP$ effect: 
it deterministically decides NP languages and simultaneously produces corresponding FNP witnesses in polynomial time, without incurring any additional asymptotic overhead.

\section{Implementation Details and Practical Improvements}\label{sec:implementation_detail}

This section describes a concrete implementation of the verifier simulation
framework introduced in the preceding sections.
Our aim is not to re-prove correctness, but to demonstrate how the abstract
Turing-machine and feasible-graph constructions can be realized efficiently
in practice while remaining faithful to the theoretical model.
The full implementation, including all experimental scripts,
is available at \url{https://github.com/changryeol-hub/poly-np-sim}.

\subsection{Practical Improvements for Extending Futile Edges}
This subsection presents practical optimizations for extending extendable computing-futile edges
that improve runtime performance without affecting the worst-case
asymptotic complexity of feasible-graph computation and verifier invocations.

By restricting exploration to reachable portions of the dynamic
computation graph, avoiding redundant recomputation, and exploiting
structural properties of computation walks, these improvements
substantially reduce observed runtime on large instances.

Unlike the original \textproc{ExtendFutileWalks()} procedure,
which scans all edges in $E(G_U)$, the revised procedure
\textproc{ExtendFutileWalks()} in \cref{alg:puruning_an_edge_of_walk}
restricts exploration to edges reachable from the initial nodes $V_0$.
This preserves correctness, as only reachable computing-futile walks
can influence feasibility, while avoiding unnecessary inspection
of irrelevant edges.

\begin{algorithm}
\caption{Extend Non-Nested Futile Walk} \label{alg:extend_futile_walks}
\begin{algorithmic}[1]
\Function{ExtendFutileWalks}{$G_U:\In, G:\InOut, V_0:\In$}

    \State{Let $E_o \gets \emptyset$} \Comment{Extended computing-futile edges}
    \State{Let $E_v \gets \emptyset$} \Comment{Visited edges}

    \State{Let $T$ be a stack initialized with all outgoing edges of $v_0 \in V_0$ in $G$}

    \While{$T \neq \emptyset$}
        \State{Pop an edge $e=(u,v)$ from $T$}
        \If{$e \in E_v$}
            \State{\Continue}
        \EndIf
        \State{Set $E_v \gets E_v \cup \{e\}$}

        \If{$e \in E(G)$}
            \State{Set $T \gets T \cup \Next_{G_U}(e)$}
        \ElsIf{$e$ not a floor edge  \textbf{and} $\IPrec_G(e) \neq \emptyset$}
            \State{Set $G \gets G + e$}
            \State{Set $E_o \gets E_o \cup \{e\}$}
        \EndIf
    \EndWhile

    \State{\Return $E_o$} 
\EndFunction
\end{algorithmic}
\end{algorithm}

\begin{lemma}[Time Complexity of Futile Walk Extension]
\label{lem:futile_walk_time}
Let $G_U$ be a dynamic computation graph with width $w$ and height $h$, and let $V_0$ be the set of initial nodes.
Then, \textproc{ExtendFutileWalks()} runs in worst-case time
\[
\bigO(|E(G_U)| \cdot h \cdot \log |E_v|) = \bigO(wh^3 \log(wh)).
\]
\end{lemma}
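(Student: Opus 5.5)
The plan is a standard amortized analysis of the stack-based traversal in \cref{alg:extend_futile_walks}, with costs charged per edge of $G_U$ and per push.

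\textbf{Visits.} Because of the visited set $E_v$, each edge $e \in E(G_U)$ passes the membership test and is processed at most once. Every other pop costs only a single membership test in $E_v$.

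\textbf{Pushes.} I will bound the total number of stack pushes. An edge $e$ pushes $\Next_{G_U}(e)$ only when it is processed and lies in $E(G)$. By the width/height structure of the computation graph, $|\Next_{G_U}(e)| = \bigO(h)$, as in \cref{lem:step_pendant_time_complexity}: the out-degree of a node is bounded by the $\bigO(h)$ tiers of the adjacent cell times the constant factor $|Q||\Gamma|$. Hence the total number of pushes, and therefore of pops, is $\bigO(|E(G_U)| \cdot h)$.

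\textbf{Cost per pop and per processed edge.} Each pop performs one membership test in $E_v$, which costs $\bigO(\log |E_v|)$ under the ordered-set assumption of the earlier remark. For a processed edge, the remaining work has three parts:
\begin{itemize}
\item the test $e \in E(G)$ costs $\bigO(\log(wh))$, or $\bigO(h)$ via adjacency lists;
\item the floor-edge test is $\bigO(1)$ by \cref{rem:floor_edge_condition}, since it is just $\tier(v)=0$;
\item computing $\IPrec_G(e)$ and checking it is nonempty costs $\bigO(h \log h)$ using the index structure, exactly as in \cref{lem:step_pendant_time_complexity}.
\end{itemize}
Adding $e$ to $G$ and to $E_o$ is another $\bigO(h)$ plus a logarithmic term.

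\textbf{Summation.} The pop cost totals $\bigO(|E(G_U)|\, h \log|E_v|)$. The per-edge work totals $\bigO(|E(G_U)|\, h \log(wh))$, which has the same order because $|E_v| \le |E(G_U)|$ and $\log|E_v| = \bigO(\log(wh))$. Substituting $|E(G_U)| = \bigO(wh^2)$ from \cref{prop:graph_size_bound} gives $\bigO(wh^3 \log(wh))$.

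\textbf{Main obstacle.} The hard step is justifying that the $\IPrec_G(e)$ check costs only $\bigO(h\log h)$ rather than scanning an indirect folding chain. An indirect index-precedent chain can have length up to $\bigO(h)$, and if one vertex is examined per step this risks an extra factor of $h$. I would first try to argue, as in the earlier sublemmas, that nonemptiness only requires locating the transition case $\IPrec(v)$ through the index map and testing for existence of a suitable incoming edge there. That test is the same $\bigO(h\log h)$ bound already used in \cref{lem:step_pendant_time_complexity}, and I will invoke it rather than re-derive it.
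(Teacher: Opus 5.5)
Your proposal is correct and follows essentially the same per-edge accounting as the paper. Both use logarithmic-cost ordered-set membership in $E_v$, $\bigO(h)$-size neighbor scans, an $\bigO(1)$ floor test via $\tier(v)=0$, and the $\bigO(h\log h)$ index-precedent check taken from the step-pendant enumeration lemma, then multiply by $|E(G_U)|=\bigO(wh^2)$. Your explicit count of $\bigO(|E(G_U)|\,h)$ pushes is slightly more careful than the paper's claim that each edge is stacked at most once, since an edge can be pushed once per processed predecessor; each such push costs only one logarithmic membership test, so both routes give the same $\bigO(wh^3\log(wh))$ bound.
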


\begin{proof}
Each edge in $G_U$ is stacked at most once. 
For each edge, the procedure performs:
\begin{itemize}
    \item membership check and insertion in an ordered set $E_v$, costing 
          $\bigO(\log |E_v|) = \bigO(\log(wh^2)) = \bigO(\log(wh))$,
    \item scanning the adjacency lists for previous and next edges with set conversion ($\bigO(h \log h)$),
    \item floor edge check costs ($\bigO(1)$), and checking the existence of  index-precedent edge costs $\bigO(h \log h)$.
\end{itemize}
Thus, the per-edge cost is $\bigO(h \log(wh))$, and with $|E(G_U)| = \bigO(wh^2)$,
the total cost is $\bigO(wh^3 \log(wh))$. 
\end{proof}

\begin{remark}[Time Complexity of Futile-Walk Pruning]
The procedure \textproc{ExtendFutileWalks()} 
(\cref{alg:puruning_an_edge_of_walk}) is invoked only inside \textproc{PruneWalk} to extend the extendable-computing futile edges as final edges of reachable computing-futile walks.
By \cref{lem:futile_walk_time}, its worst-case time is $\bigO(wh^3 \log(wh))$. 

Since \textproc{PruneWalk} itself recomputes the feasible graph using
\textproc{ComputeFeasibleGraph()} (\cref{alg_line:find_obsolete_edge:recompute_feasible_graph}),
whose time complexity is $\bigO(wh^3 \log(wh))$ (\cref{lem:improved_feasible_graph_time_complexity}),
the contribution of the improved computing-futile walks extension does not change the overall
worst-case asymptotic complexity. 

Thus, the invocation of \textproc{ExtendFutileWalks()} is
safe from a polynomial-time perspective, and restricting it to reachable edges ensures practical efficiency without affecting
the correctness or asymptotic bounds of the feasible-graph computation.
\end{remark}

While the worst-case asymptotic complexity remains unchanged, restricting exploration to reachable edges substantially reduces
the number of edges examined in practice, particularly when only a small portion of $G_U$ is reachable from $V_0$.

\subsection{Implementation of Turing Machines} \label{subsec:implementation-TM}

This subsection presents an implementation-oriented representation of
the transition function for the \textsc{SumOfSubset} Turing machine.
Instead of explicitly enumerating the full transition table, transitions
are encoded symbolically and realized as a deterministic procedure.

The main idea is to group concrete tape symbols into symbolic classes
(e.g., digits, circled digits, or wildcards) and to represent
families of parameterized states via state templates.
This approach reduces the size of the transition specification
while remaining fully faithful to the underlying Turing-machine model.

Formally, transitions are specified as a finite set of symbolic rules:
\[
  (\textit{state}, \textit{symbol\_class})
  \;\mapsto\;
  (\textit{next\_state}, \textit{output}, \textit{move}),
\]
where \textit{symbol\_class} ranges over abstract categories such as
\texttt{D} (digit), \texttt{\circled{D}} (circled digit), and
\texttt{*} (wildcard). The transition function $\delta$ is implemented
as a deterministic procedure that resolves symbolic rules into concrete
transitions.

\begin{algorithm}[H]
\caption{Symbolic Transition Function $\delta$}
\label{alg:delta}
\begin{algorithmic}[1]
\Function{$\delta$}{$state$, $symbol$}
\State Let $(action, addr) \gets (state, \epsilon)$
\State Let $altstate \gets \bot$
\If{$state$ contains character \texttt{`.'}}
    \State Split $state$ into $(action, addr)$
    \If{$addr$ is a decimal digit}
        \State Set $altstate \gets action || \texttt{`.M'}$
    \EndIf
\EndIf
\State Initialize ordered list $\mathcal{S} \gets [symbol]$
\If{$symbol$ is a decimal digit}
    \State Append $\texttt{`M'} and \texttt{`D'}$ to $\mathcal{S}$
\ElsIf{$symbol$ is a circled digit}
    \State Append $\texttt{`\circled{M}'} and \texttt{`\circled{D}'}$ to $\mathcal{S}$
\EndIf
\State Append wildcard $\texttt{`*'}$ to $\mathcal{S}$
\ForAll{$s \in \mathcal{S}$}
    \State $result \gets \textproc{TryTransition}(\textsc{Transitions}, state, altstate, addr, symbol, s)$
    \If{$result \neq \bot$}
        \State \Return $result$
    \EndIf
\EndFor
\State \Return $(\textsc{Reject}, \texttt{`\_'}, -1)$
\EndFunction
\end{algorithmic}
\end{algorithm}

Here, the \textsc{Transitions} of calling \textproc{TryTransition()} is a hash map of the transition table in \cref{sec:TMs}.

\begin{algorithm}[H]
\caption{\textproc{TryTransition}: Symbolic Transition Instantiation}
\label{alg:trytransition}
\begin{algorithmic}[1]
\Function{\textproc{TryTransition}}{$\textsc{Transitions}, state, altstate, addr, symbol, s$}
\State Initialize parameters $M \gets \bot$, $D \gets \bot$
\If{$(state,s)$ is defined in \textsc{Transitions}}
    \State Let $(q', out, dir) \gets \textsc{Transitions}[(state,s)]$
\ElsIf{$altstate \neq \bot$ and $(altstate,s)$ is defined in $\textsc{Transitions}$}
    \If{$s=\texttt{`M'}$ and $addr \neq symbol$} \State \Return $\bot$ \EndIf
    \State $(q', out, dir) \gets \textsc{Transitions}[(altstate,s)]$
    \If{$altstate$ ends with \texttt{`.M'} and $addr$ is a digit} 
	\State Set $M \gets addr$ 
    \EndIf
\Else \State \Return $\bot$ \EndIf

\If{$s=\texttt{`M'}$ and $symbol$ is a digit} \State Set $M \gets$ symbol
\ElsIf{$s=\texttt{`\circled{M}'}$ and $symbol$ is a circled digit} \State Set $M \gets$ numeric value of $symbol$
\ElsIf{$s=\texttt{`D'}$ and $symbol$ is a digit} \State Set $D \gets symbol$
\ElsIf{$s=\texttt{`\circled{D}'}$ and $symbol$ is a circled digit} \State Set $D \gets$ numeric value of $symbol$ \EndIf

\If{$q'$ ends with \texttt{`.M'} and $M \neq \bot$} \State Replace \texttt{`.M'} in $q'$ by \texttt{`.'} || $M$ \EndIf
\If{$q'$ ends with \texttt{`.B'} and $out=\texttt{`\circled{D}-\circled{M}'}$ and $D \neq \bot$ and $M \neq \bot$}
    \State Let $n \gets (10 + D - M) \bmod 10$
    \State Let $B \gets 1$ if $D-M<0$, otherwise $0$
    \State Set $out \gets$ circled digit corresponding to $n$
    \State {Replace \texttt{`.B'} in $q'$ by \texttt{`.'} || $B$}
    \State \Return $(q', out, dir)$
\EndIf

\If{$out=\texttt{`\circled{M}'}$} \State Set $out \gets$ circled digit of $M$
\ElsIf{$out=\texttt{`D'}$ and $D \neq \bot$} \State Set $out \gets D$
\ElsIf{$out=\texttt{`D-1'}$ and $D \neq \bot$} \State Set $out \gets D-1$
\ElsIf{$out=\texttt{`\circled{D}'}$ and $D \neq \bot$} \State Set $out \gets$ circled digit of $D$
\ElsIf{$out=\texttt{`*'}$} \State Set $out \gets symbol$ \EndIf

\State \Return $(q', out, dir)$
\EndFunction
\end{algorithmic}
\end{algorithm}

\paragraph{Implementation Note.}
For uniformity in the implementation of the transition function,
we fix the rejection return value to $(\textsc{Reject}, \_, -1)$,
where \texttt{\_} is a dummy symbol and $-1$ indicates no effective head movement.
This convention does not affect the semantics or time complexity of the machine.

The symbolic transition map of the \textsc{SumOfSubset} Turing machine has been provided in Section \cref{subsec:subset-sum-tm}.
In appendix \cref{sec:impl_sat_tms}, we provide the concrete implementations of the remaining two Turing machines’ transition functions.
This approach enables a compact yet fully faithful encoding of all machines, suitable for practical simulation while preserving the formal model.

\subsection{Additional Methods for Dynamic Computation Graphs}
\label{subsec:additional-methods}

In addition to the core procedures for feasible-graph construction and walk
verification, we introduce several auxiliary methods to facilitate
dynamic computation graph operations. These methods assist in counting
index-precedent and index-succedent connections and in detecting
combining edges during walk extensions. They are implemented as
deterministic procedures and incur only constant or polynomial overhead
per invocation. In this section we only provide pseudocode for the algorithm newly introduced in this improved framework.

\paragraph{Detection of Combining, Pseudo-Combining or Proper Merging Edges.} 
For the purpose of walk extension and edge pruning, we implement a deterministic procedure
to identify edges that act as combining edges, pseudo-combining edges or proper merging edges (see \cref{def:combining_edge,def:pseudo_combining_edge,def:combined_proper_merging_edge}). 

\begin{algorithm}[H]
\caption{Detection of Combining, Pseudo-Combining and Proper Merging Edges}
\label{alg:combining-edges}
\begin{algorithmic}[1]
\Function{IsCombiningEdge}{$e = (u,v)$}

    \ForAll{$w \in \TCase(v)$}
        \If{$w = v$} 
        	\State{\Continue}
	\EndIf
	    \State Let $i \gets \indexOf(e)$
        \ForAll{$x$ in incoming lists $w$ of the edge slice $E_i$}
            \If{$\TCase(x) \ne \TCase(u)$}
            	\State \Return \True
	    \EndIf
	\EndFor
        \If{only one of $u$ or $v$ is a folding node}
            \State \Return \True
        \EndIf
    \EndFor
    \State \Return \False
\EndFunction

\Function{IsPseudoCombiningEdge}{$e = (u,v)$}
    \If{$v$ is a folding node} 
        \State \Return \False 
    \EndIf
    \ForAll{$s \in \ISucc(v)$}
        \If{$s$ is a folding node} 
            \State \Return \True 
        \EndIf
    \EndFor
    \State \Return \False
\EndFunction

\Function{IsProperMergingEdge}{$e = (u,v)$}
    \ForAll{$u' \in \Incoming(v')$}
    	\If{$\TCase(u) \ne \TCase(u')$}
    	    \State{\Return \True}
    	\EndIf
    \EndFor
    \State{\Return \True}
\EndFunction
\end{algorithmic}
\end{algorithm}

\noindent
These procedures implement local checks on the dynamic computation graph:

\begin{itemize}
    \item \textproc{IsCombiningEdge} identifies edges that satisfy the combining-edge criteria (\cref{def:combining_edge}) 
          considering adjacency and folding-node cases.
    \item \textproc{IsPseudoCombiningEdge} detects edges that effectively combine due to folding nodes in their succedents (\cref{def:pseudo_combining_edge}).
    \item \textproc{IsProperMergingEdge} detect edges that merging edge that combines to one transition case from different transition cases.
\end{itemize}

\begin{lemma}[Complexity of Combining/Pseudo-Combining Edge Detection] \label{lem:complexity_combining_detection}
Let $G$ be a computation graph of height $h$. For any edge $e = (u, v)$, the functions \textproc{IsCombiningEdge}($e$), \textproc{IsPseudoCombiningEdge}($e$) and \textproc{IsProperMergingEdge} in \cref{alg:combining-edges} terminate in time
\[
O(h).
\]
\end{lemma}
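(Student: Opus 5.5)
The plan is to bound each of the three procedures in \cref{alg:combining-edges} separately. Each one performs a constant number of nested scans, and every scan ranges over a set whose size is controlled by the local structure of the computation graph. The basic structural facts are the following. For a fixed index $i$ and tier $t$, a transition case $V_{i,t}^{q,\sigma}$ contains at most $|Q|\cdot|\Gamma|$ nodes. These are distinguished only by $(\lastState,\lastSymbol)$, so each transition case has $\bigO(1)$ size for a fixed machine. The in-degree of any node within a single edge slice is $\bigO(h)$, because incoming edges come from at most $h+1$ tiers of one neighbouring index, with $\bigO(1)$ nodes per tier and case. Likewise, $|\ISucc(v)|=\bigO(1)$ for a node $v$, since $\ISucc(v)$ lies in a single index and tier and is therefore confined to the $\bigO(1)$ nodes determined by $(\state(v),\symbol(v))$ as last state and last symbol. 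Finally, checking whether a node is a folding node costs $\bigO(h)$ with adjacency lists: one scans its incoming and outgoing edges and compares edge indices. It costs $\bigO(1)$ if a folding flag is maintained.

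For \textproc{IsCombiningEdge}, the outer loop runs over $w\in\TCase(v)$, giving $\bigO(1)$ iterations. For each $w$, the inner loop runs over the incoming list of $w$ in edge slice $E_i$, which has $\bigO(h)$ entries. Each entry requires a $\TCase$ comparison, which costs $\bigO(1)$ since it only compares the tuple $(\indexOf,\tier,\state,\symbol)$. The folding test on $u$ and $v$ costs $\bigO(h)$ per outer iteration. The total is therefore $\bigO(1)\cdot\bigO(h)=\bigO(h)$.

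For \textproc{IsPseudoCombiningEdge}, there is one folding test on $v$ and then at most $|\ISucc(v)|=\bigO(1)$ folding tests, which gives $\bigO(h)$. For \textproc{IsProperMergingEdge}, the loop runs over $\Incoming(v)$, which has size $\bigO(h)$, and performs an $\bigO(1)$ case comparison at each step, again giving $\bigO(h)$.

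The step I expect to need the most care is justifying that $|\TCase(v)|$ and $|\ISucc(v)|$ are constants independent of $h$. I would argue this directly from \cref{def:computation_node,def:transition_case}. Index, tier, state and symbol are fixed within a case, which leaves at most $(|Q|+1)(|\Gamma|+1)$ choices for the last two coordinates. For $\ISucc(v)$, I would use the fact that its elements share the index and tier $\tier(v)+1$ and have last-state and last-symbol determined by $v$, so only $\state$ and $\symbol$ vary. I would also state explicitly the data-structure assumption, consistent with the implementation section, that per-node incoming and outgoing adjacency lists are kept partitioned by edge slice. This is what makes the slice-restricted incoming list in \textproc{IsCombiningEdge} directly accessible without an extra logarithmic factor.
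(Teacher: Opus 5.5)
Your proposal is correct and follows essentially the same route as the paper's proof. Both bound the three procedures separately, using the constant size of $\TCase(v)$ and $\ISucc(v)$ for a fixed machine, the $\bigO(h)$ size of a node's incoming list within a single edge slice, and an $\bigO(h)$ folding-node test. You also spell out two points the paper leaves implicit: the $(|Q|+1)(|\Gamma|+1)$ count derived from \cref{def:computation_node,def:transition_case}, and the assumption that adjacency lists are partitioned by edge slice.
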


\begin{proof}
We analyze the two functions separately based on the structural properties of the computation graph and the given implementation constraints:

\begin{itemize}
    \item \textbf{\textproc{IsCombiningEdge()}:} The algorithm iterates over the vertices in the transition case of $v$. Since the number of states and symbols in the underlying Turing machine is constant, the size of the transition case $|\TCase(v)|$ is bounded by a constant $c$. Within the loop, checking for a folding node is an $\bigO(h)$ operation, accounting for the worst-case lookup time in the hash-map-based edge list. Similarly, the adjacency check in the edge slice $E_i$ takes $\bigO(h)$ time, as it may involve scanning a vertex's adjacency list of size at most $c \times h$. Thus, the total complexity is $\bigO(c \times h) = O(h)$.

    \item \textbf{\textproc{IsPseudoCombiningEdge()}:} This function first performs a folding node check on $v$, which takes $\bigO(h)$ time. It then iterates over the set of index-succedents $\ISucc(v)$. As established by the transition rules of the Turing machine, the number of potential index-succedents for any vertex is bounded by a constant ($|\Gamma| \times |Q|$). Each iteration involves an $\bigO(h)$ folding node check. Therefore, the total complexity of this function is $\bigO(\text{constant} \times h) = O(h)$.
    
    \item \textbf{\textproc{IsProperMergingEdge()}:} The algorithm iterates over the vertices in adjacent node incident with incoming edges of $v$. ,The adjacency check in the edge slice $E_i$ takes $\bigO(h)$ time, as it may involve scanning a vertex's adjacency list  of size at most $c \times h$. Thus, the total complexity is $\bigO(c \times h) = O(h)$.
\end{itemize}

Therefore, both detection functions are strictly bounded by $\bigO(h)$ per edge, completing the proof.
\end{proof}

These auxiliary procedures extend the notion of combining edges in the computation graph. 
\textproc{IsCombiningEdge()} handles the case when two nodes are combined into the same transition case from different transition cases 
(specifically, \textproc{IsProperMergedEdge()} handles the case when it is a proper merging edge), 
while \textproc{IsPseudoCombiningEdge()} identifies edges that are effectively combining 
due to connections to folding nodes in their succedents.  
These procedures all operate as local checks on the dynamic computation graph, supporting efficient 
bookkeeping and analysis during walk extension without affecting the asymptotic complexity.

\subsection{Implementation of Dynamic Array}
To support scalable verifier simulation without preallocating the full
computation graph, we employ a dynamically expandable computation-graph
representation. Nodes, states, and symbol transitions are instantiated
lazily, materializing only when first accessed during simulation.

This mechanism relies on language-level features such as operator
overloading and object-based lazy evaluation. The dynamic graph exposes
array-like and map-like interfaces, whose accessors implicitly trigger
node and edge construction. Abstract pseudocode cannot fully capture
these control-flow and instantiation semantics; therefore, we present
the core construction directly in Python.

\begin{algorithm}
\caption{Dynamic Cell Array (List-Inherited)}
\label{alg:cellarray}
\begin{algorithmic}[1]

\Class{CellArray \textbf{extends} list}
    \State \textbf{fields:} integer \textsf{base}, integer \textsf{kind}

    \PyDef{\_\_init\_\_}{this, kind} \Comment{initialize inherited list}
        \State $\textsf{this.base} \gets 0$
        \State $\textsf{this.kind} \gets kind$
        \State \Call{super.init}{this}
    \EndPyDef
 
    \PyDef{NewItem}{this, index} 
        \If{$\textsf{this.kind}$ is edge slice}
            \State \Return \Call{EdgeSlice}{index}
        \ElsIf{$\textsf{this.kind}$ is tier array}
            \State \Return \Call{TierArray}{index}
        \ElsIf{$\textsf{this.kind}$ is set} 
            \State \Return empty set
        \Else
            \State \Return \textsf{None} 
        \EndIf
    \EndPyDef

    \PyDef{\_\_getitem\_\_}{this, index} 
        \Comment{Operator overloading of \texttt{[]}}
        \If{$\textsf{this.base} \le index < \textsf{this.base} + |\textsf{this}|$}
            \State \Return \Call{super.get}{this,\; index - \textsf{this.base}}
        \EndIf

        \State $m \gets \textsf{this.base} + |\textsf{this}|$

        \For{$i = m$ \textbf{to} $index$}
            \State $item \gets$ \Call{NewItem}{this, $i$}
            \State \Call{super.append}{this, $item$}
        \EndFor
 
        \For{$i = \textsf{this.base}-1$ \textbf{down to} $index$}
            \State $item \gets$ \Call{NewItem}{this, $i$}
            \State \Call{super.insert}{this, $0$, $item$}
            \State $\textsf{this.base} \gets \textsf{this.base} - 1$
        \EndFor

        \State \Return $item$
    \EndPyDef

    \PyDef{\_\_setitem\_\_}{this, index, value} 
        \Comment{Operator overloading of \texttt{[] =}}
        \If{$index = \textsf{this.base} - 1$}
            \State \Call{super.insert}{this, $0$, $value$}
            \State $\textsf{this.base} \gets \textsf{this.base} - 1$
        \ElsIf{$index = \textsf{this.base} + |\textsf{this}|$}
            \State \Call{super.append}{this, $value$}
        \Else
            \State \Call{super.set}{this,\; index - \textsf{this.base},\; value}
        \EndIf
    \EndPyDef

    \PyDef{IsDefined}{this, index}
        \State \Return $(\textsf{this.base} \le index < \textsf{this.base} + |\textsf{this}|)$
    \EndPyDef

\EndClass

\end{algorithmic}
\end{algorithm}

Here, \textproc{NewItem} creates a cell of the appropriate kind
(e.g.\ dynamic computation array, edge slice, or auxiliary set)
based on the role of the surface.

The class \textsc{CellArray} inherits directly from a list structure of python.
The overloaded access operators implement a dynamically shifted index space,
allowing sparse and unbounded access without preallocation.

\begin{algorithm}[H]
\caption{Dynamic Tier Expansion}
\label{alg:dynamic-tier}
\begin{algorithmic}[1]
\Class{TierArray \textbf{extends} list}
    \State \textbf{fields:} integer \textsf{cell\_index}
    \PyDef{\_\_init\_\_}{this, index}
        \State this.cell\_index $\gets$ index
    \EndPyDef

    \PyDef{\_\_getitem\_\_}{this, tier}
        \State i $\gets$ length(this)
        \While{tier $\ge$ i}
            \State item $\gets$ \Call{StateDict}{this.cell\_index, i}
            \State this.append(item)
            \State i $\gets$ i + 1
        \EndWhile
        \State \Return super().\_\_getitem\_\_(tier)
    \EndPyDef
\EndClass
\end{algorithmic}
\end{algorithm}
Each tier is created on demand, ensuring that the height of the computation
graph grows only as required by the simulation.
\begin{algorithm}[H]
\caption{Lazy State and Symbol Instantiation}
\label{alg:lazy-instantiation}
\begin{algorithmic}[1]
\Class{StateDict \textbf{extends} dict}
    \State \textbf{fields:} integer \textsf{index}, integer \textsf{tier}
    \PyDef{\_\_init\_\_}{this, index, tier}
        \State this.index $\gets$ index
        \State this.tier $\gets$ tier
    \EndPyDef

    \PyDef{\_\_getitem\_\_}{this, state}
        \If{state not in this}
            \State item $\gets$ \Call{SymbolDict}{this.index, this.tier, state}
            \State this[state] $\gets$ item
        \EndIf
        \State \Return this[state]
    \EndPyDef
\EndClass
\Class{SymbolDict \textbf{extends} dict}
    \State \textbf{fields:} integer \textsf{index}, integer \textsf{tier}, state \textsf{state}
    \PyDef{\_\_init\_\_}{this, index, tier, state}
        \State \textsf{this.index} $\gets$ index
        \State \textsf{this.tier} $\gets$ tier
        \State \textsf{this.state} $\gets$ state
    \EndPyDef

    \PyDef{\_\_getitem\_\_}{this, symbol}
        \If{symbol not in this}
            \If{symbol in SymbolDict.symbols}
                \State item $\gets$ \Call{TransitionCase}{this.index, this.tier, this.state, symbol}
                \State this[symbol] $\gets$ item
            \Else
		\Return \NIL
            \EndIf
        \EndIf
        \State \Return this[symbol]
    \EndPyDef
\EndClass
\end{algorithmic}
\end{algorithm}
State nodes and symbol transitions are instantiated lazily, ensuring that
only reachable configurations of the computation graph are materialized. 
Together, these mechanisms enable the graph to expand dynamically in both
spatial (tape index) and temporal (tier) dimensions, avoiding global
preallocation and scaling naturally with the execution paths induced by
the verifier.

\paragraph{Remark (Python Implementation).}
Although presented in pseudocode, the reference implementation uses Python
for clarity and experimental validation. Access expressions \texttt{A[i]} 
and assignments \texttt{A[i] = v} implicitly invoke 
\texttt{\_\_getitem\_\_()} and \texttt{\_\_setitem\_\_()}, encapsulating
dynamic expansion logic. 
The design itself is language-agnostic and can be reimplemented in
statically typed languages (e.g., \texttt{C++}) using operator overloading
and proxy objects, preserving correctness while improving performance.

The choice of Python in this work is motivated by clarity of expression
and ease of experimental validation.
A performance-oriented reimplementation in \texttt{C++} is entirely
straightforward and is expected to reduce constant factors non-trivially,
while preserving the same logical structure and correctness guarantees.
We therefore view the present implementation as a faithful
prototype rather than a performance limit of the framework.

\subsection{Additional Possible Data-Structure Optimizations}

Several auxiliary optimizations further improve practical performance.
These include caching existence between two edges, using enumeration or integer data type for symbols or states, 
and using indexed data structures for constant-time access to edge attributes such as tier.

These auxiliary procedures allow efficient local access to edge-related
information, reducing runtime overhead for dynamic updates and
edge-analysis queries, while preserving asymptotic guarantees.

\paragraph{Graph reconstruction versus logical deletion.}
In principle, dynamic updates of the feasible graph could be implemented
using logical deletion flags on edges, allowing edges to be temporarily
removed and later restored.
However, in the present framework, edges are frequently pruned and then
reconsidered during different phases of feasible-graph construction.
Managing deletion flags, consistency conditions, and restoration logic
introduces substantial conceptual and implementation complexity.

To simplify both reasoning and implementation, we instead adopt a
structural copying strategy.
Whenever a modified version of the graph is required, a fresh graph instance
is constructed by copying the active edges or reconstructing the structure from the original graph while the original remains intact.
Although this approach incurs additional constant-factor overhead, it avoids
the need for delicate state management and significantly clarifies correctness
arguments.

Importantly, since the total number of edges in each index layer is polynomially
bounded, the cost of copying the graph remains polynomial and does not affect
the asymptotic complexity of the overall simulation.

Since the feasible-graph construction already dominates the worst-case time
complexity, the cost of copying the dynamic computation graph is asymptotically
subsumed by the overall computation.
We therefore adopt full graph copying to simplify implementation and reasoning,
without affecting the theoretical complexity guarantees.

\paragraph{Visited flags versus set tracking.}
During graph traversal, one could maintain a visited flag on each edge instead of constructing a separate visited set. 
However, similar to the rationale for avoiding logical deletion flags, we do not adopt this approach, 
as it would introduce additional bookkeeping and complicate correctness arguments. 
Using explicit sets for visited edges ensures clarity and preserves polynomial-time guarantees.

\section{Experimental Evaluation}
\label{sec:experimental_evaluation}

This section reports experimental results obtained from a prototype
implementation of the NP verifier simulation framework developed in this paper.
The primary goal of these experiments is \emph{not} to demonstrate practical
performance, but rather to validate the correctness of the construction and to
confirm that the simulated verifier behaves as theoretically predicted on
concrete instances.

As established in the previous sections, the simulation runs in polynomial
time with respect to the input size.  However, the resulting polynomial bounds
remain large, and the current implementation prioritizes clarity and fidelity
to the theoretical model over runtime efficiency.  Consequently, the observed
running times grow rapidly even for moderate input sizes, and the experiments
should be interpreted primarily as correctness checks rather than performance
benchmarks.

\subsection{Experimental Setup}

All experiments were conducted using a direct implementation of improved \textproc{IsAcceptedOnFootmarks()} within 
the original \textproc{SimulateVerifierForAllCertificates()} algorithm, incorporating the improved feasible graph construction described in \cref{sec:improvements_on_feasiblegraph} and the implementation details
outlined in the preceding section.
The transition function $\delta$ was provided explicitly as a function, and
computation graph components were generated dynamically on demand.

For each problem instance, the simulator constructs the corresponding
footmarks graph with the improved feasible graph, and determines acceptance or
rejection by exhaustively simulating all valid certificates within the
specified polynomial-time bounds.

We emphasize that the purpose of these experiments is not to
demonstrate practical efficiency, but to validate the correctness
and structural properties of the proposed construction in polynomial time, even though the polynomial degree is notably high.

\subsection{Test Case Design and Metrics}
The test cases are constructed to emphasize the structural properties of the input rather than raw problem size. 
In particular, we focus on instances with a relatively large number of variables (or certificate length) in proportion to the total formula length (or problem instance size).

This choice is motivated by the observation that the execution time of the proposed simulation depends primarily
 on the total input length (comprising both the problem instance and the candidate certificate), 
 while the number of variables (or certificate length) mainly determines the complexity of the underlying certificate space. 
 Accordingly, instances with a relatively high certificate-to-instance ratio including a few cnf20 benchmark,
 highlight the practical advantages of our approach.

\paragraph{Benchmark Selection and Test Case Generation}
Although the \textsc{Subset-Sum} Turing machine ($M_{\mathrm{SS}}$) shares the same theoretical asymptotic order as the fixed-state \textsc{SAT} machine ($M_{\mathrm{SAT}}$)—
and even maintains a lower degree of polynomial complexity than the input-dependent SAT model ($M_{\mathrm{SAT}}^{\mathrm{ID}}$)—
its concrete implementation involves a substantially larger number of states and tape symbols. These higher hidden constant factors lead to noticeably slower empirical performance; 
thus, to verify functional correctness within practical simulation limits, we focus on \textsc{Subset-Sum} instances with strictly constrained tape lengths (typically around 50--60 cells).

 Consequently, our experimental evaluation focuses primarily on \textsc{SAT}, which better reflects the practical behavior of the framework.

To evaluate both input-dependent and fixed-state machines, we generate test instances using the following two methods:
\begin{itemize}
\item[(1)] \textbf{Random Instances:} Randomly generated relatively short inputs (e.g., random CNF formulas) are produced to evaluate general behavior, as independent verification of unsatisfiability for a long input is often difficult.
\item[(2)] \textbf{Extracted Sub-instances:} Small-scale sub-instances are extracted from known satisfiable benchmarks. This method focuses on confirming the reconstruction of valid certificates from known positive instances, as independent verification of unsatisfiability for these sub-structures remains complex.
\item[(3)] \textbf{Targeted Benchmarks:} For fixed-state Turing machines, where performance is notably superior, we use original benchmark instances directly. Unlike the extracted sub-instances, these benchmarks allow us to evaluate the model's scalability and its ability to correctly identify both satisfiability and unsatisfiability within established structural patterns.
\end{itemize}

For \textsc{Subset-Sum}, instances are generated randomly with a bias toward satisfiable cases to ensure the edge-extension mechanism is sufficiently exercised.
Since a single missing edge extension could result in a false negative, ensuring correctness on positive instances is critical;
 conversely, any false positive is immediately verifiable via the generated \textsc{FNP} certificate. All instances are encoded according to the format described in \cref{sec:TMs}.

\paragraph{Test Metrics.}
We measure the following metrics:
\begin{itemize}
\item $N_{\text{dir}}$: The number of edges successfully appended via direct extension rules. (e.g., $45,659$ in the current test)
\item $N_{\text{ver}}$: The number of edges successfully appended through standard verification processes. (e.g., $1$ in the current test)
\item $N_{\text{total}}$: The total number of edges successfully appended to the footmarks graph $H$ both via direct extension and through verification (calculated as $N_{\text{total}} = N_{\text{dir}} + N_{\text{ver}}$).
\item $N_{\text{cand}}$: The cumulative number of candidate edges considered for extension (specifically for general index-succedent-based verification), allowing multiple counts to reflect the total computational effort.
\item $N_{\text{retry}}$: The total count of triggering the general verification mode when direct extension rules are insufficient and no next-edge-based or ceiling-edge-based candidate edges are available.
\item $N_{\text{r-cand}}$: The cumulative number of candidate edges verified specifically during general verification (retry) events, allowing multiple counts.
\item $N_{\text{r-ext}}$:  The number of edges successfully extended via general verification (retry).
\item $N_{\text{redun}}$ (Redundant/Futile Edges): The number of computing-redundant or computing-futile edges definitively identified and removed during the verification process to maintain structural minimality.
\item $N_{\text{pruned}}$: The number of pruned walks that are provisionally removed during the verification process to prevent the exploration of infeasible computation paths.
\item $N_{\text{halt}}$: The number of halting edges (Accepting or Rejecting) identified, which determines the final decision of the Turing machine.
\item $W_{\text{max}}$: The number of maximal computation walks extended directly and their average length ($\text{Avg}_{L}$), representing the depth of the explored computation space.
\end{itemize}
A \emph{halting edge} is defined as an edge whose terminal node corresponds
to an accepting or rejecting halting configuration.
Since acceptance is unique, the number of accepting halting edges is at most one,
while rejecting halting edges may occur multiple times.

A \emph{directly extended maximal computation walk} refers to a computation walk
that is extended from a newly added initial edge via direct extension only,
without invoking verification-based edge inclusion,
and proceeds until a halting configuration is reached.
Each such walk corresponds to a complete deterministic simulation path
originating from a fresh edge addition.

The \emph{length of a computation walk} is measured as the number of transition
edges from the starting edge of the direct extension to the final halting edge.
For the reported average computation walk length, only walks whose initial branching edge
was newly added via direct extension are counted.
This metric reflects the typical depth of deterministic simulation induced by
a single edge insertion.

These additional metrics provide finer-grained insight into the structural
behavior of the simulator.
In particular, they quantify how frequently newly introduced edges lead to
complete computations, and how deeply such computations are typically explored.
Empirically, the number of halting edges closely matches the number of directly
extended maximal computation walks, indicating that most newly introduced walks
reach termination without requiring further verification.

\subsection{Experimental Results for 3SAT with Input-Dependent TM}
Each SAT instance is given as a CNF tape string followed by a certificate region of fixed length. 
Every instance uses a certificate length of $10$, except for $I_5$, which uses $20$. 
Across all tested input-dependent SAT instances, the simulator consistently produced results aligned with the ground truth: 
satisfiable CNF instances were accepted, while unsatisfiable instances were correctly rejected. 
Both accepting and rejecting cases were observed under the same deterministic simulation framework, 
demonstrating that the verifier semantics are faithfully realized even when the transition structure depends on the concrete input.

Representative executions exhibit a wide range of running times, from sub-second (e.g., $I_4$) to over 3 minutes (e.g., $I_5$). 
Some instances terminated very quickly because the footmarks graph was primarily constructed via \textbf{direct extension} with minimal verification overhead, 
whereas more complex instances required significantly more time. 
The longest executions correspond to instances where a large number of candidate edges had to be verified and a higher volume of pruned walks were processed before acceptance was confirmed.

The size of the constructed footmarks graph $H$ varied across inputs, but it consistently remained significantly below the theoretical polynomial upper bound of $(|Q||\Gamma|h)^2 w$. 
For the evaluated instances ($I_1$ to $I_6$), we used $|\Gamma|=19$, $h=2n$, and $w=(n+m)$, with state counts $|Q|$ of $41$ or $71$ (consistently satisfying the condition $|Q| < 11+3m < 3n$). 
The empirical results, as summarized in \cref{tab:sat-input-dependent-execution-refined}, show that the number of extended edges ranged from approximately $3.9 \times 10^3$ to $3.6 \times 10^4$. 
While the counts of verified candidate edges and pruned walks increased in more complex cases—reflecting the increased branching induced by verification-driven edge extension—the number of identified computing-redundant or computing-futile edges remained $0$. 
Most importantly, the retry count remained $0$ across almost all instances, with only a single retry event observed in $I_4$ and $I_6$, respectively. 
This empirically confirms that $r_I \approx 0$, demonstrating the practical deterministic efficiency of the framework even in input-dependent transition environments.

Despite this variability, the average length of extended maximal computation walks remained relatively stable across instances. 
In all experiments, the average walk length ranged roughly between $1,200$ and $6,900$ steps. 
This indicates that the dominant cost arises not from individual walk depth, but from the total number of edges in the footmarks graph and the frequency of verification steps required within the graph structure.

Overall, these experiments confirm that the input-dependent SAT TM operates correctly within the NP simulation framework.

\paragraph{Input Instances.}
In all experiments reported below, we considered only 3SAT instances.
Each experimental input instance $I_k$ is defined as a single tape string
encoding a CNF formula, terminated by the end marker \texttt{\#}.
\begin{itemize}
\item $I_1$:
{\raggedright\ttfamily
1\_-2\_3\&-1\_2\_4\&-3\_-4\_5\&1\_6\_-7\&-6\_7\_8\&-5\_-8\_9\&2\_9\_10\&-1\_-10\_3\&-2\_4\_6\&3\_-5\_-7\&
4\_8\_-9\&5\_-6\_10\&-1\_-2\_-3\&1\_2\_5\&-4\_6\_7\&-7\_-8\_-9\&3\_5\_8\&2\_4\_10\&-1\_9\_-10\&
6\_-8\_2\&-3\_7\_4\&5\_-9\_1\&-2\_-6\_8\&4\_-7\_10\&-5\_3\_-1\_\#
\par}
    \item $I_2$:  {\raggedright\ttfamily
 1\_2\_3\&-1\_-2\_-3\&4\_5\_6\&-4\_-5\_-6\&7\_8\_9\&-7\_-8\_-9\&1\_4\_7\&2\_5\_8\&3\_6\_10\&-1\_5\_9\&
-2\_6\_7\&-3\_4\_8\&1\_-5\_10\&-4\_8\_2\&7\_-2\_5\&10\_-3\_-1\&2\_-8\_4\&6\_-9\_1\&-7\_3\_5\&8\_-1\_6\&
9\_-4\_2\&5\_7\_-10\&-6\_1\_3\&-2\_-9\_8\&4\_-3\_10\_\#
\par}
    \item $I_3$: {\raggedright\ttfamily
 -1\_3\_5\&2\_-4\_6\&-3\_5\_7\&4\_-6\_8\&-5\_7\_9\&6\_-8\_10\&-7\_9\_-1\&8\_-10\_2\&-9\_1\_-3\&10\_-2\_4\&
 1\_2\_3\&-4\_-5\_-6\&7\_8\_10\&-9\_-1\_-2\&3\_4\_5\&-6\_-7\_-8\&2\_5\_9\&-10\_1\_4\&-3\_6\_8\&5\_9\_2\&
 -1\_7\_4\&8\_-2\_-5\&6\_3\_-10\&-4\_9\_1\&7\_-5\_2\_\#
\par}
    \item $I_4$: {\raggedright\ttfamily
 1\_2\&-1\_-2\&1\_-2\&-1\_2\&3\_4\_5\&-3\_-4\&-4\_-5\&-3\_-5\&6\_7\_8\&-6\_-7\&-7\_-8\&-6\_-8\&9\_10\_1\&
 -9\_-10\&-10\_-1\&-9\_-1\&2\_3\_6\&4\_7\_9\&5\_8\_10\&-2\_-3\&-4\_-7\&-5\_-8\&-6\_-9\&-1\_3\_5\&-2\_4\_8\_\#
\par}
    \item $I_5$: {\raggedright\ttfamily
4\_-18\_19\&3\_18\_-5\&-5\_-8\_-15\&-20\_7\_-16\&10\_-13\_-7\&-12\_-9\_17\&17\_19\_5\&-16\_9\_15\&
11\_-5\_-14\&18\_-10\_13\&-3\_11\_12\&-6\_-17\_-8\&-18\_14\_1\&-19\_-15\_10\&12\_18\_-19\&-8\_4\_7\&
-8\_-9\_4\&7\_17\_-15\&12\_-7\_-14\&-10\_-11\_8\&2\_-15\_-11\&9\_6\_1\&-11\_20\_-17\_\#
\par}
    \item $I_6$: {\raggedright\ttfamily
1\_2\_3\&1\_2\_-3\&1\_-2\_3\&1\_-2\_-3\&-1\_4\_5\&-1\_4\_-5\&-1\_-4\_5\&-1\_-4\_-5\&2\_6\_7\&
2\_6\_-7\&2\_-6\_7\&2\_-6\_-7\&-2\_8\_9\&-2\_8\_-9\&-2\_-8\_9\&-2\_-8\_-9\&3\_5\_10\&3\_5\_-10\&3\_-5\_10\&
3\_-5\_-10\&-3\_7\_10\&-3\_7\_-10\&-3\_-7\_10\&-3\_-7\_-10\&1\_2\_4\_\#
\par}
\end{itemize}

\begin{table}[!ht]
\centering
\caption{Execution statistics of input-dependent SAT verifier simulation}
\label{tab:sat-input-dependent-execution-refined}
\footnotesize
\begin{tabular}{c|c|c|c|c|c|c|c|c|c|c}
\hline
Inst. & Tape & Cert. & States & Total & $N_{\text{retry}}$ & $N_{\text{r-cand}}$ & Halt & Max. & Avg. Walk & Acc. \\
& Len  & Len   &        & Edges &                       &                            & Edges & Walks & Len &   \\ \hline
$I_1$ & 189 & 10 & 41 & 31{,}511 & 0 & 0 & 73 & 73 & 3{,}903.41 & Yes \\
$I_2$ & 183 & 10 & 41 & 29{,}059 & 0 & 0 & 66 & 66 & 3{,}457.09 & Yes \\
$I_3$ & 186 & 10 & 41 & 31{,}537 & 0 & 0 & 126 & 126 & 3{,}064.12 & Yes \\
$I_4$ & 153 & 10 & 41 & 3{,}980  & 1 & 2 & 4  & 4  & 1{,}216.25 & No  \\
$I_5$ & 215 & 20 & 71 & 35{,}865 & 0 & 0 & 102 & 102 & 6{,}874.34 & Yes \\
$I_6$ & 195 & 10 & 41 & 14{,}339 & 1 & 12& 20 & 20 & 2{,}528.60 & No  \\ \hline
\end{tabular}
\end{table}

\begin{table}[!ht]
\centering
\caption{Detailed edge extension and execution time (Input-dependent SAT Verifier)}
\label{tab:sat-input-dependent-detailed-refined}
\footnotesize
\begin{tabular}{c|c|c|c|c|c|c|c}
\hline
Inst. & Direct & Verified & Cand. & $N_{\text{r-ext}}$ & Redun. & Pruned & Time \\
& Ext.   & Ext.     & Verif. & & Edges & Walks & \\ \hline
$I_1$ & 31{,}457 & 54 & 113 & 0 & 0 & 7 & 2m 17.25s \\
$I_2$ & 29{,}012 & 47 & 71  & 0 & 0 & 7 & 1m 13.13s \\
$I_3$ & 31{,}430 & 107 & 180 & 0 & 0 & 8 & 3m 01.14s \\
$I_4$ & 3{,}979  & 1  & 5   & 0 & 0 & 0 & 0.67s    \\
$I_5$ & 35{,}802 & 63 & 86  & 0 & 0 & 0 & 3m 31.81s \\
$I_6$ & 14{,}328 & 11 & 40  & 0 & 0 & 0 & 24.58s    \\ \hline
\end{tabular}
\end{table}

\begin{table}[H]
\centering
\caption{Accepted certificate witnesses for Input-dependent SAT verifier simulation instances}
\label{tab:sat-witness-certificates}
\small
\begin{tabular}{c|c|c|l}
\hline
Instance & Certificate Length & Acceptance (Acc.) & Witness for Accepted Certificate \\ \hline
$I_1$    & 10                 & Yes               & \texttt{FFFFTFFFFT}              \\
$I_2$    & 10                 & Yes               & \texttt{FTFTTFTTFT}              \\
$I_3$    & 10                 & Yes               & \texttt{FTFTFTTFTF}              \\
$I_4$    & 10                 & No                & N/A                              \\
$I_5$    & 20                 & Yes               & \texttt{TTFFTFFFFFFFFFFFTTTF}    \\
$I_6$    & 10                 & No                & N/A                              \\ \hline
\end{tabular}
\end{table}

\subsection{Experimental Results for Fixed-State SAT Turing Machine}

We next report experimental results for the fixed-state SAT Turing machine, evaluated under the same NP verifier simulation framework. Unlike the input-dependent construction, the transition structure of the fixed-state machine is independent of the concrete CNF instance; only the initial tape contents vary. Thus, $|Q|=25$ and $|\Gamma|=17$ for all inputs, where $Q$ is the set of states and $\Gamma$ is the set of symbols.

Across all tested instances, the simulator produced results consistent with the ground truth: satisfiable formulas were accepted, while unsatisfiable formulas---specifically $I_4$, $I_6$, and the larger benchmark $I_9$---were correctly rejected.

The size of the generated feasible graphs varied substantially depending on the complexity of the input. While smaller instances remained below $1.3 \times 10^4$ edges, larger and more complex formulas produced graphs with more than $1.3 \times 10^5$ edges. This result remained significantly below the theoretical polynomial upper bound of $(|Q||\Gamma|h)^2 w$, where $|Q|=25$, $|\Gamma|=17$, $h=2m+2$, and $w=n+m+1$ for instance length $n$ and certificate length $m$ for all instances.

Notably, the experiments included instances $I_7$, $I_8$, and $I_9$, which correspond to \textbf{actual benchmark CNF instances (cnf20)}. The successful verification of both satisfiable ($I_7, I_8$) and unsatisfiable ($I_9$) instances demonstrates that the simulation framework is capable of handling non-trivial, practically relevant SAT problems, maintaining rigorous soundness even as the combinatorial cost of exploration increases with formula size.

In terms of execution metrics, the average length of extended maximal computation walks ranged from approximately $380$ to $5{,}800$ steps. The running times exhibited a broad spectrum, from sub-second durations for simple instances to over $1$ day for the largest unsatisfiable benchmark instance ($I_9$). This significant variance in time, especially in $I_7$ and $I_9$, reflects the intensive verification-driven branching and pruning required to exhaustively confirm the feasible graph for large-scale or unsatisfiable inputs.

Overall, these results confirm that the fixed-state verifier semantics are faithfully realized within the improved NP verifier simulation framework. The successful processing of cnf20 benchmarks across both SAT and UNSAT cases underscores the robustness of the system and confirms the soundness and scalability of the fixed-state simulation approach.

Furthermore, the majority of edges were generated through \textbf{direct edge extension}, leading to a significant practical speedup in the simulation. Most importantly, the retry count remained at most 3 across all simulations. 
In the most complex case ($I_9$), this resulted in a general-succedent-based candidate edge count ($N_{\text{r-cand}}$) of $5{,}424$, while the total extended edge count ($N_{\text{total}}$) reached approximately $1.36 \times 10^5$. 
Consequently, the ratio $r_I = N_{\text{r-cand}} / N_{\text{total}}^2$ is approximately $2.9 \times 10^{-7}$, and the number of general-index-succedent-based candidate edges examined through retry verification remained a small fraction of the total candidate edges examined.

\paragraph{Input Instances (Fixed-State TM).}
All experiments in this subsection were conducted exclusively on 3SAT instances.
Each experimental input instance $I_k$ is represented as a single tape string
encoding a CNF formula, terminated by the end marker \texttt{\#}.
The certificate length indicates the number of Boolean variables in the instance.
\begin{itemize}
\item $I_1$--$I_6$: These instances are identical to those used for the input-dependent TM.
\item $I_7$: {\raggedright\ttfamily
4\_-18\_19\&3\_18\_-5\&-5\_-8\_-15\&-20\_7\_-16\&10\_-13\_-7\&-12\_-9\_17\&17\_19\_5\&-16\_9\_15\&
11\_-5\_-14\&18\_-10\_13\&-3\_11\_12\&-6\_-17\_-8\&-18\_14\_1\&-19\_-15\_10\&12\_18\_-19\&-8\_4\_7\&
-8\_-9\_4\&7\_17\_-15\&12\_-7\_-14\&-10\_-11\_8\&2\_-15\_-11\&9\_6\_1\&-11\_20\_-17\&9\_-15\_13\&
12\_-7\_-17\&-18\_-2\_20\&20\_12\_4\&19\_11\_14\&-16\_18\_-4\&-1\_-17\_-19\&-13\_15\_10\&
-12\_-14\_-13\&12\_-14\_-7\&-7\_16\_10\&6\_10\_7\&20\_14\_-16\&-19\_17\_11\&-7\_1\_-20\&-5\_12\_15\&
-4\_-9\_-13\&12\_-11\_-7\&-5\_19\_-8\&1\_16\_17\&20\_-14\_-15\&13\_-4\_10\&14\_7\_10\&-5\_9\_20\&
10\_1\_-19\&-16\_-15\_-1\&16\_3\_-11\&-15\_-10\_4\&4\_-15\_-3\&-10\_-16\_11\&-8\_12\_-5\&14\_-6\_12\&
1\_6\_11\&-13\_-5\_-1\&-7\_-2\_12\&1\_-20\_19\&-2\_-13\_-8\&15\_18\_4\&-11\_14\_9\&-6\_-15\_-2\&
5\_-12\_-15\&-6\_17\_5\&-13\_5\_-19\&20\_-1\_14\&9\_-17\_15\&-5\_19\_-18\&-12\_8\_-10\&-18\_14\_-4\&
15\_-9\_13\&9\_-5\_-1\&10\_-19\_-14\&20\_9\_4\&-9\_-2\_19\&-5\_13\_-17\&2\_-10\_-18\&-18\_3\_11\&
7\_-9\_17\&-15\_-6\_-3\&-2\_3\_-13\&12\_3\_-2\&-2\_-3\_17\&20\_-15\_-16\&-5\_-17\_-19\&-20\_-18\_11\&
-9\_1\_-5\&-19\_9\_17\&12\_-2\_17\&4\_-16\_-5\#
\par}
\item $I_8$: {\raggedright\ttfamily
-10\_-16\_5\&16\_-6\_5\&-17\_-14\_-18\&-10\_-15\_19\&-1\_-9\_-18\&3\_7\_-6\&-13\_1\_6\&
-2\_-16\_-20\&7\_8\_18\&-7\_10\_-20\&2\_-14\_-17\&2\_1\_19\&7\_-20\_-1\&-11\_1\_-17\&3\_-12\_19\&
-3\_-13\_6\&-13\_3\_-12\&5\_-7\_-12\&20\_8\_-16\&-13\_-6\_19\&-5\_1\_14\&9\_-5\_18\&-12\_-17\_-1\&
-20\_-16\_19\&12\_10\_-11\&6\_-7\_-2\&13\_-10\_17\&-20\_8\_-16\&-10\_-1\_-8\&-7\_-3\_19\&19\_-1\_-6\&
19\_-2\_13\&-2\_20\_-9\&-8\_-20\_16\&-13\_-1\_11\&15\_-12\_-6\&-17\_-19\_9\&19\_-18\_16\&7\_-8\_-19\&
-3\_-7\_-1\&7\_-17\_-16\&-2\_-14\_1\&-18\_-10\_-8\&-16\_5\_8\&4\_8\_10\&-20\_-11\_-19\&8\_-16\_-6\&
18\_12\_8\&-5\_-20\_-10\&16\_17\_3\&7\_-1\_-17\&17\_-4\_7\&20\_-9\_-13\&13\_18\_16\&-16\_-6\_5\&
5\_17\_7\&-12\_-17\_-6\&-20\_19\_-5\&9\_-19\_16\&-13\_-16\_11\&-4\_-19\_-18\&-13\_10\_-15\&
16\_-7\_-14\&-19\_-7\_-18\&-20\_5\_13\&12\_-6\_4\&7\_9\_-13\&16\_3\_7\&9\_-1\_12\&-3\_14\_7\&
1\_15\_14\&-8\_-11\_18\&19\_-9\_7\&-10\_6\_2\&14\_18\_-11\&-9\_-16\_14\&1\_11\_-20\&11\_12\_-4\&
13\_-11\_-14\&17\_-12\_9\&14\_9\_1\&8\_19\_4\&6\_-13\_-20\&-2\_-13\_11\&14\_-13\_17\&9\_-11\_18\&
-13\_-6\_5\&5\_19\_-18\&-4\_10\_11\&-18\_-19\_-20\&3\_-9\_8\#
\par}
\item $I_9$: {\raggedright\ttfamily
-11\_-16\_13\&-1\_-7\_10\&15\_-20\_-14\&-3\_18\_-11\&14\_1\_-6\&-10\_17\_11\&-12\_-13\_4\&
-19\_12\_-15\&1\_16\_-17\&-2\_-18\_3\&14\_2\_12\&-17\_-13\_20\&-12\_5\_19\&-16\_11\_-7\&-3\_-19\_5\&
17\_1\_12\&-11\_5\_16\&-14\_3\_6\&5\_17\_2\&1\_-8\_13\&-20\_7\_-19\&-10\_-4\_14\&-4\_17\_5\&
-5\_1\_-19\&-7\_14\_-3\&-11\_16\_-4\&10\_-3\_15\&-18\_-1\_-4\&-16\_-13\_6\&12\_-2\_-14\&-20\_1\_-10\&
3\_-17\_18\&-4\_-19\_-2\&16\_5\_2\&-18\_6\_-11\&12\_17\_1\&-1\_15\_-18\&13\_-8\_10\&-14\_2\_-6\&
-18\_-16\_3\&-9\_-4\_14\&-3\_15\_11\&19\_12\_-1\&-16\_-5\_-20\&11\_17\_-14\&-1\_-18\_10\&-13\_-20\_6\&
14\_2\_-17\&15\_-3\_7\&-6\_-12\_1\&20\_-19\_4\&-11\_16\_-18\&1\_14\_12\&-2\_-5\_13\&-10\_17\_-3\&
-15\_4\_-20\&19\_-11\_1\&-6\_-18\_-13\&12\_-20\_-14\&1\_17\_-5\&-16\_3\_11\&14\_-8\_2\&-19\_-4\_13\&
-18\_1\_-10\&15\_-7\_12\&-3\_-17\_6\&-20\_16\_5\&-11\_-1\_14\&2\_-12\_-18\&13\_-5\_19\&-10\_1\_17\&
-15\_-6\_-4\&14\_3\_-11\&-18\_-20\_12\&1\_16\_-7\&-13\_-19\_2\&-5\_-3\_10\&17\_-14\_1\&-12\_-6\_18\&
4\_15\_-20\&-11\_13\_-1\&-16\_2\_-19\&14\_-10\_3\&-18\_-5\_17\&1\_12\_-15\&-7\_-20\_13\&-6\_-11\_4\&
2\_-14\_-18\&19\_-3\_1\&-17\_-10\_16\&-5\_-13\_12\&-20\_-15\_-1\&11\_-4\_14\&-18\_2\_13\&1\_17\_-6\&
-19\_-12\_3\&-16\_-10\_5\&-14\_1\_15\&-7\_-2\_18\&13\_4\_-20\#
\par}

\end{itemize}

\begin{table}[H]
\centering
\caption{Execution statistics of fixed-state SAT verifier simulation}
\label{tab:sat-fixed-execution}
\footnotesize
\begin{tabular}{l|c|c|c|c|c|c|c|c|c}
\hline
Inst. & Tape & Cert. & Total & $N_{\text{retry}}$ & $N_{\text{r-cand}}$ & Halt & Max. & Avg. Walk & Acc. \\
& Len  & Len  & Edges &                       &                            & Edges & Walks & Len &   \\ \hline
$I_1$ & 189 & 10 & 4{,}810 & 0 & 0 & 3 & 3 & 1{,}603.33 & Yes \\
$I_2$ & 183 & 10 & 12{,}653 & 0 & 0 & 38 & 38 & 892.84 & Yes \\
$I_3$ & 186 & 10 & 16{,}004 & 0 & 0 & 67 & 67 & 821.33 & Yes \\
$I_4$ & 153 & 10 & 1{,}280 & 1 & 5 & 4 & 4 & 383.75 & No \\
$I_5$ & 215 & 20 & 10{,}793 & 0 & 0 & 5 & 5 & 2{,}270.20 & Yes \\
$I_6$ & 195 & 10 & 5{,}923 & 1 & 78 & 18 & 18 & 665.50 & No \\ \hline
$I_7$ & 850 & 20 & 120{,}052 & 0 & 0 & 523 & 523 & 3{,}118.78 & Yes \\
$I_8$ & 844 & 20 & 94{,}983 & 0 & 0 & 100 & 100 & 5{,}792.87 & Yes \\
$I_9$ & 939 & 20 & 136{,}118 & 3 & 5{,}424 & 594 & 594 & 3{,}865.04 & No \\
\hline
\end{tabular}
\end{table}

\begin{table}[H]
\centering
\caption{Detailed edge extension, pruning metrics, and execution time (SAT, fixed-state TM)}
\label{tab:sat-fixed-detailed}
\footnotesize
\begin{tabular}{l|r|r|r|c|r|r|l}
\hline
Inst. & Direct & Verified & Cand. & $N_{\text{r-ext}}$ & Redun. & Pruned & \multicolumn{1}{c}{Time} \\
& Ext.   & Ext.     & Verif. & & Edges & Walks & \\ \hline
$I_1$   & 4{,}809   & 1   & 1      & 0  & 0  & 0   & 0.75s \\
$I_2$   & 12{,}632  & 21  & 273    & 0  & 0  & 6   & 2m 19s \\
$I_3$   & 15{,}956  & 48  & 820    & 0  & 0  & 4   & 8m 35s \\
$I_4$   & 1{,}279   & 1   & 11     & 0  & 0  & 0   & 0.55s \\
$I_5$   & 10{,}792  & 1   & 1      & 0  & 0  & 0   & 0.38s \\
$I_6$   & 5{,}914   & 9   & 200    & 0  & 0  & 2   & 44.11s \\ \hline
$I_7$   & 119{,}566 & 486 & 6{,}495 & 0  & 61 & 428 & 14h 9m \\
$I_8$   & 94{,}918  & 65  & 1{,}071 & 0  & 0  & 14  & 1h 57m \\
$I_9$   & 135{,}562 & 556 & 16{,}876 & 18 & 98 & 447 & 1d 9h 58m \\
\hline
\end{tabular}
\end{table}

\begin{table}[H]
\centering
\caption{Accepted certificate witnesses for fixed-state SAT verifier simulation instances}
\label{tab:sat-fixed-witness-certificates}
\small
\begin{tabular}{c|c|c|l}
\hline
Instance & Certificate Length & Acceptance (Acc.) & Witness for Accepted Certificate \\ \hline
$I_1$    & 10                 & Yes               & \texttt{FFFFTFFFFT}              \\
$I_2$    & 10                 & Yes               & \texttt{FTFFTFTFFT}              \\
$I_3$    & 10                 & Yes               & \texttt{FTFTFTFFTT}              \\
$I_4$    & 10                 & No                & N/A                              \\
$I_5$    & 20                 & Yes               & \texttt{FFFFFFFFTFFFFFFFTFF}    \\
$I_6$    & 10                 & No                & N/A                              \\ \hline
$I_7$    & 20                 & Yes               & \texttt{TFFFFTFFFFFFTTTFTFFT}    \\
$I_8$    & 20                 & Yes               & \texttt{FFFFFFTTFFFFFTTTFFTF}    \\
$I_9$    & 20                 & No                & N/A                              \\ \hline
\end{tabular}
\end{table}
\paragraph{Comparison Between TM Variants}
We compare the two Turing Machine variants (the input-dependent machine with $\bigO(n)$ states per input and the fixed-state 3SAT machine with a bounded number of states). While the fixed-state variant generally outperforms the input-dependent machine in terms of asymptotic worst-case bounds, such performance differences were less pronounced for smaller inputs. For these smaller cases, the overhead of the simulation framework often overshadows the differences in transition structure, making it difficult to discern a clear performance gap. However, as instance complexity increases, the fixed-state variant is expected to demonstrate superior efficiency over the input-dependent approach. This potential performance gain would be attributed not only to reduced transition overhead and a more compact computation-graph representation, but also to the lower inherent complexity of the fixed-state construction itself.

\subsection{Experimental Results for Sum-of-Subset}

We evaluated the proposed simulation framework on a \textsc{Sum-of-Subset} verifier implemented as a fixed-state Turing machine. 
Due to the large constant factors induced by the number of states, auxiliary symbols, and verification-driven edge extensions, 
the experiments were intentionally restricted to relatively small instances. 
For \textsc{Sum-of-Subset}, we tested instances with target sums ranging from 22 up to 175, 
and with input sets containing about 10 elements encoded in decimal of instance length up to 35. 
Certificate lengths ranged from 23 to 27 decimal digits.

Across all tested instances, the simulator produced results consistent with the ground truth: instances with a valid subset sum were accepted, while those without a valid solution were rejected.
Representative executions include both fast-terminating cases (e.g., a few seconds for simple instances) and slower cases requiring several minutes of computation.

The size of the generated footmark graph varied substantially across inputs. 
Over all the tested instances, the total number of extended edges remained at most $1.8 \times 10^4$, 
which is significantly lower than the polynomial bound $(|Q||\Gamma|h)^2 w$, where $|Q|=75$, $|\Gamma|=28$, $h=2m$, and $w=n+m+1$ for instance length $n$ and certificate length $m$. 
The number of verified candidate edges and pruned walks increased for more complex inputs, 
reflecting the complex topology arising from the branching behavior of the verification-driven edge extension process.

Despite this variability, the average length of maximal computation walks remained relatively stable across instances, typically between approximately $56$ and $108$ steps. 
This suggests that the dominant cost arises not from individual walk lengths, but from the number of candidate walks that must be explored and verified.

Since the edges extended by computation walk verification and the number of pruned walks are relatively high (except for $S_3$ and $S_4$), 
the running time scales with the verification overhead even for relatively short input instances. 
Furthermore, the majority of edges were generated through direct edge extension, leading to a significant practical speedup in the simulation. 
Most importantly, the retry count remained at 0 across all simulations. 
Consequently, the ratio $r_I = N_{\text{r-cand}} / N_{\text{total}}^2$ was zero, confirming the efficiency of the framework in navigating the footmark graph for \textsc{Sum-of-Subset}.

Overall, these experiments confirm that the simulator correctly realizes the intended verifier semantics for \textsc{Sum-of-Subset}. 
At the same time, they highlight the practical impact of large constant factors inherent in fixed-state Turing machine encodings, motivating the need for subsequent optimization-oriented variants.

\paragraph{Input Instances.}
Each experimental input instance $S_k$ is defined as follows:
\begin{itemize}
    \item $S_1$: \texttt{120\_@19\_120\_47\_14\_34\_12\_43\_12\_22\#}
    \item $S_2$: \texttt{38\_@46\_43\_50\_25\_10\_47\_20\_14\_4\#}
    \item $S_3$: \texttt{25\_@35\_48\_47\_3\_32\_28\_34\_5\_8\#}
    \item $S_4$: \texttt{22\_@43\_9\_46\_26\_34\_1\_49\_10\_41\#}
    \item $S_5$: \texttt{42\_@41\_2\_26\_42\_23\_12\_32\_31\_23\#}
    \item $S_6$: \texttt{175\_@3\_16\_5\_12\_37\_5\_25\_39\_48\_37\#}
\end{itemize}

\begin{table}[!ht]
\centering
\caption{Execution statistics of Subset-Sum verifier simulation}
\label{tab:subsetsum-execution}
\footnotesize
\begin{tabular}{l|c|c|c|c|c|c|c|c|c}
\hline
Inst. & Tape & Cert. & Total & $N_{\text{retry}}$ & $N_{\text{r-cand}}$ & Halt & Max. & Avg. Walk & Acc. \\
& Len  & Len  & Edges &                       &                            & Edges & Walks & Len &   \\ \hline
$S_1$ & 33 & 27 & 17{,}112 & 0 & 0 & 563 & 563 & 99.30 & Yes \\
$S_2$ & 30 & 25 & 13{,}659 & 0 & 0 & 540 & 540 & 75.22 & Yes \\
$S_3$ & 28 & 23 & 11{,}735 & 0 & 0 & 471 & 471 & 65.04 & No \\
$S_4$ & 29 & 24 & 12{,}269 & 0 & 0 & 487 & 487 & 56.13 & No \\
$S_5$ & 30 & 25 & 13{,}906 & 0 & 0 & 530 & 530 & 71.75 & Yes \\
$S_6$ & 32 & 26 & 17{,}124 & 0 & 0 & 598 & 598 & 107.64 & Yes \\
\hline
\end{tabular}
\end{table}

\begin{table}[!ht]
\centering
\caption{Detailed edge extension, pruning metrics, and execution time (Subset-Sum TM)}
\label{tab:subsetsum-fixed-detailed}
\footnotesize
\begin{tabular}{l|r|r|r|c|r|r|l}
\hline
Inst. & \multicolumn{1}{c|}{Direct} & \multicolumn{1}{c|}{Verified} & \multicolumn{1}{c|}{Cand.} & $N_{\text{r-ext}}$ & \multicolumn{1}{c|}{Redun.} & \multicolumn{1}{c|}{Pruned} & \multicolumn{1}{c}{Time} \\
& \multicolumn{1}{c|}{Ext.} & \multicolumn{1}{c|}{Ext.} & \multicolumn{1}{c|}{Verif.} & & \multicolumn{1}{c|}{Edges} & \multicolumn{1}{c|}{Walks} & \\ \hline
$S_1$ & 17{,}044 & 68 & 80 & 0 & 0 & 62 & 3m 8.16s \\
$S_2$ & 13{,}570 & 89 & 92 & 0 & 0 & 26 & 2m 23.66s \\
$S_3$ & 11{,}671 & 64 & 64 & 0 & 0 & 0 & 1m 35.72s \\
$S_4$ & 12{,}211 & 58 & 58 & 0 & 0 & 0 & 1m 39.39s \\
$S_5$ & 13{,}827 & 79 & 81 & 0 & 0 & 14 & 2m 12.92s \\
$S_6$ & 16{,}988 & 136 & 162 & 0 & 0 & 97 & 3m 58.38s \\
\hline
\end{tabular}
\end{table}

\begin{table}[H]
\centering
\caption{Accepted certificate witnesses for Subset-Sum simulation instances}
\label{tab:subsetsum-witness-certificates}
\small
\begin{tabular}{c|c|c|l}
\hline
Instance & Certificate Length & Acceptance (Acc.) & Witness for Accepted Certificate \\ \hline
$S_1$    & 27                 & Yes               & \texttt{xx\_120\_xx\_xx\_xx\_xx\_xx\_xx\_xx} \\
$S_2$    & 25                 & Yes               & \texttt{xx\_xx\_xx\_xx\_xx\_xx\_20\_14\_4} \\
$S_3$    & 23                 & No                & N/A \\
$S_4$    & 24                 & No                & N/A \\
$S_5$    & 25                 & Yes               & \texttt{xx\_x\_xx\_42\_xx\_xx\_xx\_xx\_xx} \\
$S_6$    & 26                 & Yes               & \texttt{x\_16\_5\_xx\_37\_5\_25\_39\_48\_xx} \\ \hline
\end{tabular}
\end{table}
\subsection{Observations and Discussion}

The experimental results across both problem families—SAT and \textsc{Sum-of-Subset}—consistently align with our theoretical analysis. 
The simulator faithfully reproduces the behavior of the underlying verifiers, and the dynamically constructed computation graphs remain polynomially bounded as predicted.

A key observation is the discrepancy between the conservative theoretical upper bound and the actual performance. 
While the formal time complexity for these instances is bounded by $O(n^{16} \log n)$, the effective complexity in practice is substantially lower. 
For example, the \textbf{cnf20 instances ($I_7$)} were solved in approximately 14 hours, and the rejected instance $I_9$ was solved in approximately 1.4 days (roughly 34 hours), despite requiring the total extension of the footmark graph—a duration that would be impossible if the worst-case polynomial exponent were fully realized in practice.

This discrepancy arises because the number of pruned walks and the number of removed computing-futile or computing-redundant edges in actual computation graphs are far smaller than the worst-case estimates. 
Furthermore, the dominance of direct extension allows the simulator to bypass repeated, expensive verification steps for a significant portion of the graph construction. 
Crucially, the ratio $r_I$ for general-index-succedent-based candidate edges was vanishingly small (less than $0.001\%$). 
This confirms that the retry mechanism can be effectively absorbed into other candidate edge terms as a small constant factor, introducing near-zero overhead in practice. 
Since the number of these candidate edges accounted for only small fraction of the total candidate edges examined, the general verification process remains non-dominant and does not affect the overall polynomial degree of the simulation.
Consequently, although feasible graph computation remains the primary cost, practical pruning and direct extension yield execution times much faster than the $O(n^{16} \log n)$ bound suggests.

However, the experiments also confirm that execution time is more strongly correlated with the total length of the Boolean formula or the subset string than with the number of variables alone. 
Due to the high polynomial degree and substantial constant factors inherent in the framework, absolute performance remains far below that of modern, highly optimized SAT solvers. 
Therefore, this implementation is intended not as a competitive high-performance solver, but as a \textit{proof-of-concept} to validate the internal optimization strategies and the theoretical soundness of the verifier semantics.

The results provide concrete evidence that the proposed framework is implementable, correct, and capable of handling nontrivial, practically relevant instances like cnf20. 
These observations motivate further exploration into more efficient computation models. In particular, transitioning to \textbf{RAM-based simulations} may significantly reduce the polynomial degree and constant overhead, a direction we discuss in detail in the following section.

\section{Discussion and Future Work}\label{sec:discussions}

We discuss how this implementation connects to the theoretical results of
\cite{lee2025PNP}, including the role of feasible graphs, the objective of improved technique, and implementation details.
Limitations and potential optimizations are also addressed.

Although this paper presents several algorithmic and structural improvements over the original construction, it is important to emphasize that the current implementation focuses on fixed-state \textsc{SAT} machines, 
which are practically tractable for moderate input sizes, while input-dependent \textsc{SAT} and \textsc{Sum-of-Subset} verifiers were tested only on small instances to confirm correctness.
In particular, the algorithm has been realized primarily
to preserve clarity of correctness and feasibility arguments,
rather than to minimize constant factors or low-level overhead.
Further performance gains may be achievable
through alternative data structures,
more aggressive pruning strategies,
or implementations in lower-level languages
that reduce memory and object-management costs.

Within this context, the improvements introduced in this paper should be understood as principled reductions at the algorithmic level, 
applied primarily to fixed-state constructions, rather than as a fully optimized implementation. Input-dependent or larger-state machines were included only for small-scale correctness checks.
They aim to identify and eliminate structurally redundant work
while maintaining the formal guarantees of footmarks construction. The implementation details are introduced to show how the theoretical results of original paper can be realized, not for the purpose of exhaustive practical optimization.

Despite the reductions achieved in both the number of candidate edges
and the practical cost of edge verification,
the overall worst-case running time of the proposed algorithm
remains dominated by the candidate edge verification, especially construction of the feasible graph.
This is inherent to the current computation model,
where the dynamic computation graph is explicitly represented
as a layered structure with width $w$ and height $h$,
resulting in $\bigO(wh^2)$ edges in the worst case.
Consequently, the time complexity of \textproc{ComputeFeasibleGraph()}
forms the leading term of the entire algorithm.

The performance improvements presented in this paper
address this bottleneck from two complementary perspectives.
First, we reduce the asymptotic degree by improving
the computation of cover edges and the feasible graph itself.
Second, we substantially decrease the number of calls
to \textproc{VerifyExistenceOfWalk()} by restricting the candidate set
for edge extension and by prioritizing direct walk construction
before triggering full verification.
Although the latter improvements yield a formal reduction in the number of candidate edges to be verified in worst-case complexity,
 they provide even greater practical benefits by minimizing the average computational overhead per extension,
  resulting in significant performance gains.

Nevertheless, the quadratic dependence on the height parameter $h$ and linear dependence on width parameter $w$
is ultimately a consequence of the single-tape, graph-based
representation of computation.
In this model, tape positions and state transitions are unfolded
into explicit tiers, causing a combinatorial blow-up
in the number of edges that must be considered conservatively.
Although this representation is well suited
for formal correctness and feasibility arguments,
it limits further asymptotic improvements.

A promising direction for future work is to reformulate the feasible graph construction under the RAM model, 
which could further benefit input-dependent or large-state machines beyond the fixed-state cases tested here. 
While a naive transition to RAM structures might increase edge complexity due to address register overhead, 
a more sophisticated modeling approach could allow direct transitions between relevant configurations without input-dependent size of states. 
The key challenge for future research lies in preserving model consistency while achieving a principled reduction in the total number of required edges. 
Since the running time of the computation model depends polynomially on the number of edges, such a refined reduction is expected to significantly lower the dominant polynomial degree of the algorithm.

Formalizing this approach requires
a careful redefinition of the computation model and computation walks,
index-precedence relations, and feasibility conditions
to ensure that feasible walks are still preserved
under the new model with the reduced edge size.
While this paper focuses on optimizations
within the single-tape computation framework,
extending the feasible graph methodology to the RAM model
and establishing the resulting complexity bounds
remain important topics for future research.

\section{Conclusion}\label{sec:conclusion}
In this paper, we presented explicit Turing Machine (TM) constructions for \textsc{SAT} and \textsc{Subset-Sum}
to support the polynomial-time NP verifier simulation framework via the feasible graph. 
Our theoretical analysis establishes that the proposed framework solves these canonical NP-complete problems in time bounded by $\bigO(n^{16} \log n)$, 
where $n$ denotes the input size. This confirms that explicit TM constructions and NP verifier simulation frameworks admit polynomial-time execution, 
providing a rigorous guarantee of feasibility for moderate-size inputs.

A key contribution of this work is the development of a functional implementation that bridges the gap between abstract TM designs and executable software.
Beyond simple verification, we demonstrated that these machines can deterministically generate valid witnesses, thereby extending the framework to \textsc{FNP} computation without increasing the degree of polynomial complexity.
Building on this practical foundation, we introduced a refined construction of the feasible graph that yields a genuine reduction in the asymptotic degree of the dominant subroutine. 
By improving the computation of cover edges and directly eliminating step-pendant edges from the step-reachable graph, 
the time complexity of the construction was reduced to a strictly lower-degree polynomial. Beyond this asymptotic improvement, 
we introduced several algorithmic refinements—such as restricting the candidate set for edge extension and prioritizing direct walk construction—
that substantially lower the average cost per extension and lead to significant practical speedups.

Taken together, these results clarify a critical structural insight: while various components of the algorithm admit practical optimization, 
the overall complexity remains fundamentally governed by the feasible graph construction. This identifies the construction process as the central bottleneck and, 
consequently, as the primary target for future asymptotic improvements.

Finally, we discussed how alternative computation models, most notably the RAM model, offer a path toward further reductions in the polynomial degree. 
By enabling direct transitions between configurations and reducing the effective size of the computation graph, 
such models have the potential to lower the cost of the simulation itself. Formalizing this approach while preserving feasibility guarantees remains a promising direction for future research. 
Overall, this work demonstrates that the NP verifier simulation framework admits provable asymptotic improvements, substantial practical refinements, 
and a concrete implementation path, providing a clearer understanding of the theoretical boundaries of deterministic NP simulation.

\clearpage
\bibliographystyle{plain}
\bibliography{crlee2}  

\clearpage

\appendix
\section{Ordinary Verifier Turing Machines}
The certificate-oblivious Turing machine introduced in \cref{sec:TMs} is essential for guaranteeing the completeness of the framework; however, constructing such a machine directly is highly complex. Therefore, in this appendix, we introduce the corresponding ordinary Turing machines for each verifier Turing machine to serve as a reference.

\subsection{Input-Dependent Verifier Turing Machine for SAT}
In this subsection, we present an ordinary, input-dependent Turing machine for SAT, which is more intuitive although not certificate-oblivious. The input format of this machine is identical to that of the aforementioned certificate-oblivious Turing machine, as illustrated in the following example:
\[
1\_2\_3\&4\_5\_6\&7\_10\_9\&-9\_10\_1\&-2\_6\_1\&3\_5\_2\&-4\_2\_10\#TFTFFFTFTT
\]
The operational semantics are largely similar to those of the certificate-oblivious version, with the exception that this machine does not need to fetch variables when in the \texttt{Skip} state, and the use of the intermediate symbol \texttt{'S'} as a clause satisfaction marker is unnecessary. Given that a lengthy description would be redundant, we proceed directly to the formal definition, a proof of correctness, and an analysis of its polynomial-time complexity.

\begin{definition}[Deterministic SAT Turing Machine with Input-Dependent States] \label{def:ordinary_sat_tm_input_dependent_states}
Let $\varphi$ be a CNF formula, encoded on a single tape as described in the previous paragraph.
We define a deterministic single-tape Turing machine
\[
M_{\mathrm{SAT0}}^{\mathrm{ID}}
= (Q_\varphi, \Sigma, \Gamma, \delta_\varphi, \qinit, \qacc, \qrej)
\]
with the following properties:
\begin{itemize}
    \item The state space $Q$ comprises the states enumerated in \cref{tab:ordinary_sat_variable_transitions}, including the terminal states $\qacc$ (\textsc{Accept}) and $\qrej$ (\textsc{Reject}), with the initial state $\qinit = \textsc{Check.Forwarded}$.

    \item The input alphabet $\Sigma = \{0,\ldots,9,-,\&, \_, \#, T, F \}$ defines the valid symbols for encoding the CNF formula and the Boolean assignment. The tape alphabet $\Gamma$ encompasses $\Sigma$ as a subset, further extended by the blank symbol $\blank$ and the auxiliary markers \texttt{`?'} and \texttt{`!'} for processing.

    \item $\delta : Q\setminus \{ \qacc, \qrej \} \times \Gamma \to Q \times \Gamma \times \{-1,+1\}$
    is the deterministic transition function defined by the rules provided in \cref{tab:ordinary_sat_variable_transitions}.
\end{itemize}
\end{definition}
\begin{table}[!ht]
\caption{Transition function of the ordinary input-dependent Turing machine} \label{tab:ordinary_sat_variable_transitions}
\centering
\renewcommand{\arraystretch}{1.15}
\begin{tabular}{llllll}
\toprule
Current State & Read Symbol & Next State & Write & Move & Remark \\
\midrule
Check        & \texttt{\spacedelim} & Check        & \texttt{\spacedelim} & R & Skip blanks \\
Check        & \texttt{-} & Not         & \texttt{-} & R & Negation detected \\
Check        & \texttt{D} & Inc.D       & \texttt{?} & R & Begin number parsing \\
Not         & \texttt{D} & Inc.D       & \texttt{!} & R & Mark negated literal \\
Skip        & \texttt{\&} & Check        & \texttt{\spacedelim} & R & Move to next clause \\
Skip        & \texttt{\#} & Accept      & \texttt{\spacedelim} & R & All clauses processed \\
Skip        & \texttt{*} & Skip        & \texttt{\spacedelim} & R & Bypass the literal \\
Check        & \texttt{\&} & Reject      & \texttt{\spacedelim} & R & False clause \\
Check        & \texttt{\#} & Reject      & \texttt{\spacedelim} & R & Final false clause \\
\midrule
Inc.N       & \texttt{\spacedelim} & Forward.N  & \texttt{\spacedelim} & R & End of literal \\
Inc.N       & \texttt{\&} & Forward.N  & \texttt{\&} & R & Clause boundary \\
Inc.N       & \texttt{\#} & Dec.(N-1)  & \texttt{\#} & R & Final literal \\
Inc.N       & \texttt{D} & Inc.(10N+D) & \texttt{\spacedelim} & R & Decimal expansion \\
Forward.N   & \texttt{*} & Forward.N  & \texttt{*} & R & Scan to assignment area \\
Forward.N   & \texttt{\#} & Dec.(N-1)  & \texttt{\#} & R & Reached truth value\\
Dec.N       & \texttt{T} & Dec.(N-1)  & \texttt{T} & R & Decrement/Forward in certificate region \\
Dec.N       & \texttt{F} & Dec.(N-1)  & \texttt{F} & R & Decrement/Forward in certificate region \\
Dec.0       & \texttt{T} & Backward.T & \texttt{T} & L & Selected literal true \\
Dec.0       & \texttt{F} & Backward.F & \texttt{F} & L & Selected literal false \\
\midrule
Backward.T  & \texttt{*} & Backward.T & \texttt{*} & L & Return to clause \\
Backward.F  & \texttt{*} & Backward.F & \texttt{*} & L & Return to clause \\
Backward.T  & \texttt{?} & Skip       & \texttt{\spacedelim} & R & True clause  \\
Backward.F  & \texttt{?} & Check       & \texttt{\spacedelim} & R & False literal \\
Backward.T  & \texttt{!} & Check       & \texttt{\spacedelim} & R & False from negated literal \\
Backward.F  & \texttt{!} & Skip       & \texttt{\spacedelim} & R & True from negated literal \\

\bottomrule
\end{tabular}
\\
The states ($\cdot N$) of the TM are constructed during preprocessing
based on the input variable count.
\end{table}
\paragraph{Transition Symbols.}
We summarize the tape symbols and meta-symbols used in the transition
specification of the variable-state Turing machine.

\begin{itemize}
  \item \textbf{digit}: A specific decimal symbol  $\{0,1,\dots,9\}$.
  \item \textbf{$D$}:  A meta-symbol denoting any digit for a digit-encoded variable index.
  \item \textbf{$N$}: A meta-variable representing the integer value currently   encoded in the state name.
  \item \textbf{? / !}: Literal markers indicating whether the literal is positive (\texttt{?})  or negated (\texttt{!}).
  \item \textbf{T / F}: The specific truth value.
  \item \textbf{*}: A wildcard symbol representing any symbol other than those explicitly specified
  \item \textbf{\#}: A delimiter separating the formula region from the certificate region.
  \item \textbf{\&}: A delimiter separating individual clauses within the formula region.
\end{itemize}

\paragraph{Bounded Parameterized States}
The parameterized state families
$\mathrm{Inc}.N$, $\mathrm{Forward}.N$, and $\mathrm{Dec}.N$
are instantiated only for indices
\[
0 \le N \le m,
\]
where $m$ is the number of variables determined during preprocessing.
Hence, for every fixed input instance, the machine is a finite state Turing machine.
Any transition that would produce a state outside this range causes the machine to enter the \textsc{Reject} state.

\begin{lemma}[Soundness of Ordinary Input-Dependent SAT Verifer] \label{lem:ordinary_soundness_sat_id}
If the verifier $M_{\mathrm{SAT0}}^{\mathrm{ID}}$ halts in the \texttt{Accept} state on
input $\Phi \# \mathcal{V}$, then the valuation $\mathcal{V}$ is a satisfying assignment of the CNF
formula $\Phi$.
\end{lemma}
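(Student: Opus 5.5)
The plan is to prove the lemma by establishing a loop invariant over the clause-by-clause execution of $M_{\mathrm{SAT0}}^{\mathrm{ID}}$ and then reading soundness off the single transition into \textsc{Accept}. This follows the pattern of \cref{lem:soundness_sat_id}, but the argument simplifies because the ordinary machine never revisits literals in \texttt{Skip} mode and has no \texttt{S} marker.

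First, inspect \cref{tab:ordinary_sat_variable_transitions}. The only transition entering \textsc{Accept} is $(\texttt{Skip},\#)$. The only transitions entering \texttt{Skip} are $(\texttt{Backward.T},\texttt{?})$ and $(\texttt{Backward.F},\texttt{!})$. The only transition leaving \texttt{Skip} without halting is $(\texttt{Skip},\&)\mapsto\texttt{Check}$. Moreover, \texttt{Check} on \texttt{\&} or \texttt{\#} rejects. From this we get the control skeleton: each clause is entered in \texttt{Check}, and it can be exited (via \texttt{\&} or \texttt{\#}) only in \texttt{Skip}.

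Second, prove a literal-evaluation claim. Suppose the machine is in \texttt{Check} (or \texttt{Not}) at the first digit of a literal $\ell$ with variable index $i$. We show that it next returns to the marker cell in state \texttt{Backward.$b$}, where $b=\mathcal{V}(i)$, with the marker recording the polarity of $\ell$.

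The argument for this claim has four steps. By induction on the digits, \texttt{Inc.$N$} accumulates the decimal value $i$, and the digits are overwritten by \texttt{\_}. Out-of-range values reject, which is harmless for soundness. \texttt{Forward.$i$} passes only non-\# symbols. It therefore stops at the first \#, which is the clause/certificate separator, because the formula region contains no other \#; the last-literal case goes directly \texttt{Inc.$N$}$\to$\texttt{Dec.$(N-1)$}. Then \texttt{Dec} decrements once per certificate cell and halts its descent at the $i$-th cell. Finally, \texttt{Backward.$b$} moves left over \texttt{*} until it hits the nearest marker to its left, which is the one just written, since earlier markers were erased to \texttt{\_} upon use. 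The main obstacle is this step: justifying that the nearest \texttt{?}/\texttt{!} to the left is exactly the current literal's marker. This needs the invariant that at most one marker exists on the tape at any time.

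Third, combine the two claims. The machine switches from \texttt{Check} to \texttt{Skip} inside a clause only when it returns in \texttt{Backward.T} on \texttt{?} or in \texttt{Backward.F} on \texttt{!}. In either case the literal just evaluated is true under $\mathcal{V}$.

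Now argue by contradiction as in \cref{lem:soundness_sat_id}. Suppose some clause $\mathcal{C}$ is falsified. Every literal evaluation in $\mathcal{C}$ returns to \texttt{Check}. At the delimiter closing $\mathcal{C}$, the machine is then in \texttt{Check}, and the transition $(\texttt{Check},\&)$ or $(\texttt{Check},\#)$ rejects. Since the machine is deterministic, this contradicts halting in \textsc{Accept}. Hence every clause contains a true literal, and $\mathcal{V}$ satisfies $\Phi$.
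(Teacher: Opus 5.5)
Your proposal is correct and follows essentially the same argument as the paper. Both assume a falsified clause and use three facts: \textsc{Accept} is reachable only from \texttt{Skip} on the final \texttt{\#}, a clause can be left only in \texttt{Skip}, and \texttt{Skip} is entered only when the fetched value $\mathcal{V}(i)$ agrees with the literal's polarity marker, so the run must reject at that clause. Your version is more careful than the paper's sketch, notably in checking that \texttt{Forward.$N$} stops at the separator and that \texttt{Backward.$b$} returns to the current literal's marker, but the decomposition is the same.
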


\begin{proof}
We prove soundness by contradiction, based on the operational semantics induced by the transition function of $M_{\mathrm{SAT0}}^{\mathrm{ID}}$.

Assume that $M_{\mathrm{SAT0}}^{\mathrm{ID}}$ halts in the \texttt{Accept} state, but the valuation $\mathcal{V}$ does \emph{not} satisfy $\Phi$.
Then there exists a clause $\mathcal{C}$ in $\Phi$ such that every literal  $\ell$ in $\mathcal{C}$ evaluates to \texttt{False} under the assignment $\mathcal{V}$.

By inspection of the transition table, the \texttt{Accept} state is reachable  only after the machine has scanned the entire formula region, 
successfully  processing each clause delimited by \texttt{\&}, and has reached the delimiter  symbol \texttt{\#}.
In particular, the machine passes a clause only by entering the \texttt{Skip} phase for that clause.

For any fixed clause $\mathcal{C}$, the machine enters the \texttt{Skip} only if  at least one literal in $\mathcal{C}$ is verified as satisfied.
For each literal, the machine deterministically:
        \begin{itemize}
        \item marks the location with a literal marker and parses the variable index using the digit-reading transitions and the state family \texttt{Inc.(10$N$+$D$)};
        \item moves to the certificate area through the \texttt{Fetch.$N$} state and enters the corresponding \texttt{Dec.$N$} state, where $N$ represents the exact offset to the certificate region;
        \item reaches the valuation cell $\mathcal{V}(i)$ by decrementing $N$ to zero;
        \item returns to the literal marker via the appropriate \texttt{Backward.T} or \texttt{Backward.F} state and evaluates the clause with the temporarily marked literal (\texttt{!} or \texttt{?})
        \end{itemize}
Only if this verification evaluates \texttt{True} the machine enters the \texttt{Skip} state and advances to the next clause.

By our assumption, the clause $\mathcal{C}$ is unsatisfied by $\mathcal{V}$; thus, no literal in $\mathcal{C}$ can cause the machine to enter  \texttt{Skip} state.
Hence, the machine cannot pass $\mathcal{C}$ and therefore cannot  reach the \texttt{Accept} state, contradicting the initial assumption.

Therefore, since no such unsatisfied clause exists, every clause in $\Phi$ contains at least one literal that   evaluates to \texttt{True} under $\mathcal{V}$, and $\mathcal{V}$ satisfies $\Phi$.
\end{proof}

\begin{lemma}[Completeness of Ordinary Input-Dependent SAT Verifer] \label{lem:ordinary_completeness_sat_id}
If a valuation $\mathcal{V}$ is a satisfying assignment for the CNF formula $\Phi$, then the verifier $M_{\mathrm{SAT0}}^{\mathrm{ID}}$ enters the \texttt{Accept} state on input $\Phi \# \mathcal{V}$.
\end{lemma}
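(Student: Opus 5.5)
I will follow the deterministic execution of $M_{\mathrm{SAT0}}^{\mathrm{ID}}$ on $\Phi \# \mathcal{V}$, arguing by induction on the clauses of $\Phi$. The invariant is that when the machine begins clause $k$, it is in state \texttt{Check}. Its head sits at the first cell of clause $k$, and that clause and every cell to its right are unchanged. The cells of earlier clauses have been overwritten only with \texttt{\_}, \texttt{?}, or \texttt{!}, and the certificate region still holds $\mathcal{V}$ untouched. The base case is the initial configuration. (I read $\qinit$ as \textsc{Check}, since \cref{tab:ordinary_sat_variable_transitions} names no \textsc{Forwarded} suffix.)

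\textbf{Literal evaluation.} I first isolate a single-literal claim. Suppose the head is in \texttt{Check} or \texttt{Not} at a literal $\ell = \pm x_i$, with $1 \le i \le m$. Then the machine writes \texttt{?} (or \texttt{!} after a \texttt{-}) on the first digit. It accumulates $i$ through the \texttt{Inc.(10N+D)} family, erasing the remaining digits to \texttt{\_}. It then passes to \texttt{Forward.i}, scans right across \texttt{\&}, \texttt{\_}, digits and \texttt{-} by the wildcard rule, and on reading \texttt{\#} enters \texttt{Dec.(i-1)}. Stepping over $i-1$ certificate cells, it reaches \texttt{Dec.0} exactly on cell $\mathcal{V}(i)$ and turns left in \texttt{Backward.}$\mathcal{V}(i)$. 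The leftward scan meets no \texttt{?} or \texttt{!} before the current marker, because earlier markers were already overwritten with \texttt{\_} when they were consumed. So it stops at the marker of $\ell$ and erases it. The combinations (\texttt{T},\texttt{?}) and (\texttt{F},\texttt{!}) lead to \texttt{Skip}, and the other two lead back to \texttt{Check}. Consequently the machine enters \texttt{Skip} exactly when $\ell$ is true under $\mathcal{V}$. In either case the head is one cell right of the marker, and the rest of the literal is already \texttt{\_}.

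\textbf{Clause step.} In \texttt{Check}, false literals keep the machine in \texttt{Check}, and the \texttt{\_} delimiters are skipped. Since $\mathcal{V}$ satisfies $\Phi$, clause $k$ has a true literal, so \texttt{Skip} is reached before the clause ends. A \texttt{Check} transition on \texttt{\&} or \texttt{\#}, which would reject, therefore never happens. Once in \texttt{Skip}, the wildcard rule erases the remaining literals with no further fetches. On \texttt{\&} the machine enters \texttt{Check} at the start of clause $k+1$, which re-establishes the invariant. On the final \texttt{\#} it enters \texttt{Accept}. Termination follows because every clause step moves the head strictly rightward past the clause after $O(n)$ steps per literal.

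\textbf{Main obstacle.} The hard part is the leftward return in \texttt{Backward}. I must check that its wildcard rule does not match \texttt{?} or \texttt{!}, since specific symbols take precedence over the wildcard, and that no stale marker lies between the certificate and the current literal. The second point depends on the fact that each consumed marker is rewritten to \texttt{\_}, and I will fold it explicitly into the invariant. A minor gap remains at the edges: a literal terminated directly by \texttt{\&} or \texttt{\#} takes the \texttt{Inc.N} rules on those symbols. For well-formed input these lead to the same \texttt{Forward}/\texttt{Dec} path, and I will treat them as a short extra case.
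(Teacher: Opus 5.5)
Your proposal is correct and follows the paper's route: a deterministic trace of the execution clause by clause. Each literal is fetched through \texttt{Inc}/\texttt{Forward}/\texttt{Dec}/\texttt{Backward}, \texttt{Skip} is entered at the first true literal of each clause, and \texttt{Accept} is reached on the final \texttt{\#}. Your explicit invariant, the priority of \texttt{?}/\texttt{!} over the wildcard, and the \texttt{\&}/\texttt{\#}-terminated literal cases make it more careful than the paper's informal argument.

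One small correction to your ``main obstacle'': markers from earlier literals lie to the left of the current marker, so the leftward scan can never reach them first. What the scan actually needs is that every cell strictly right of the current marker is an erased digit, untouched input, or certificate. Your invariant already gives this.
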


\begin{proof}
We prove completeness by tracing the deterministic execution of $M_{\mathrm{SAT0}}^{\mathrm{ID}}$ under a satisfying valuation $\mathcal{V}$.

By assumption, the valuation $\mathcal{V}$ satisfies $\Phi$. Hence, for every clause $\mathcal{C}$ in $\Phi$, there exists at least one literal $\ell \in \mathcal{C}$ such that the assignment of the corresponding value to $\ell$ evaluates to \texttt{True}.

During the \texttt{Check} phase for a fixed clause $\mathcal{C}$, the machine sequentially scans all literals in the clause. For each literal, it:
\begin{itemize}
    \item parses the variable index using the digit-reading transitions and the \texttt{Inc.(10$N$+D)} state family;
    \item moves to the valuation region through the \texttt{Fetch.$N$} state and enters the corresponding \texttt{Dec.$i$} state, where $i$ encodes the exact offset to the valuation region;
    \item reaches and reads the valuation cell $\mathcal{V}(i)$ associated with the literal;
    \item returns to the clause region via the appropriate \texttt{Backward.T} or \texttt{Backward.F} transition and verifies the truth value.
\end{itemize}
This process correctly computes the truth value of each literal under $\mathcal{V}$.

When a literal $\ell$ satisfying $\mathcal{C}$ is encountered, the machine deterministically transitions into the \texttt{Skip} state. 
This transition bypasses all remaining literals of $\mathcal{C}$ and moves the head to the beginning of the next clause. 

Since every clause contains at least one satisfying literal, the machine successfully enters the \texttt{Skip} state for each clause delimiter \texttt{\&}. 
Consequently, it advances through the entire formula region. 
After processing the final clause, the machine reaches the region delimiter \texttt{\#}, upon which it enters the \texttt{Accept} state by the definition of the transition function.

Therefore, whenever $\mathcal{V}$ satisfies $\Phi$, $M_{\mathrm{SAT0}}^{\mathrm{ID}}$ deterministically halts in the \texttt{Accept} state.
\end{proof}

\begin{lemma}[Time and Space Complexity of Ordinary Input-Dependent SAT Verifer]
\label{lem:time_space_ordinary_sat_id}
Let $\Phi$ be a CNF formula of encoding length $n$. For a fixed assignment $\mathcal{V}$, the input-dependent Turing machine $M_{\mathrm{SAT0}}^{\mathrm{ID}}$ decides satisfaction in $\bigO(n^2)$ time using $\bigO(n)$ space.
\end{lemma}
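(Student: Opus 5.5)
The plan is to mirror the argument of \cref{lem:time_space_sat_id}, since the transition table of $M_{\mathrm{SAT0}}^{\mathrm{ID}}$ differs only in that the \texttt{Skip} state never fetches. I would split the execution into \emph{literal-evaluation episodes} and argue separately about time per episode, number of episodes, and tape usage. First I will fix the structure of one episode. It starts when the head, in \texttt{Check} or \texttt{Not}, reads the first digit of a literal and writes the marker \texttt{?} or \texttt{!}. It then passes through the \texttt{Inc.$N$} states while scanning the remaining digits of the index, and through \texttt{Forward.$N$} up to \texttt{\#}. Next come the \texttt{Dec.$N$} states over the certificate cells, until \texttt{Dec.0} turns back left. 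Finally \texttt{Backward.T}/\texttt{Backward.F} runs left until the marker is found and erased, at which point the machine moves right into \texttt{Check} or \texttt{Skip}.

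Within one episode the head moves rightward monotonically from the marker to a certificate cell at distance at most $n$ (the certificate length is at most the number of variables, which is at most $n$). It then moves leftward monotonically back to the marker. So each episode costs $\bigO(n)$ transitions.

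Next I bound the number of episodes and the remaining moves. Every episode begins at a distinct digit cell of the clause region, because that cell is overwritten by a marker and then by \texttt{\_}, so it is never read as a fresh literal again. Hence there are at most $\bigO(n)$ episodes. Outside episodes, the states \texttt{Check}, \texttt{Not} and \texttt{Skip} only move right, and \texttt{Skip} overwrites everything it passes with \texttt{\_} until the next \texttt{\&} or \texttt{\#}. So between consecutive episodes the head advances monotonically through fresh clause cells, contributing $\bigO(n)$ steps in total. Out-of-range parameterized states reject immediately, which keeps the machine halting. Summing gives $\bigO(n)\cdot\bigO(n)+\bigO(n)=\bigO(n^2)$.

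For space, I would check that every transition writes only over an existing non-blank cell. The one exception is the final move past \texttt{\#} into \texttt{Accept}/\texttt{Reject}, and a fetch can also look at most one blank beyond the certificate (where it rejects). The head therefore stays within the input plus $\bigO(1)$ boundary cells, giving $\bigO(n)$ space. The input-dependent state count affects only the finite control, as remarked for $M_{\mathrm{SAT}}^{\mathrm{ID}}$.

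The main obstacle is making the ``monotone sweep'' claims airtight. In particular I must show that \texttt{Backward.B} always finds the marker. It does, because exactly one \texttt{?}/\texttt{!} exists on the tape during an episode, and every other cell between it and the certificate is covered by the wildcard rule. I must also confirm that after the episode the head resumes strictly to the right of where it started, so that no clause cell is rescanned from \texttt{Check}. I would establish this as a small invariant: at most one marker is present at any time, and all cells left of the current clause-scan position are delimiters or \texttt{\_}.
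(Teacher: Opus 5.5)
Your proposal is correct and follows essentially the same route as the paper's proof: $\bigO(n)$ literal occurrences, each costing one $\bigO(n)$ fetch round trip to the certificate region and back, give $\bigO(n^2)$ time, and since all bookkeeping overwrites existing cells the space stays $\bigO(n)$. Your extra invariants (at most one \texttt{?}/\texttt{!} marker on the tape, and monotone rightward progress of the \texttt{Check}/\texttt{Not}/\texttt{Skip} scan) just make rigorous what the paper asserts informally; the only slip is cosmetic, since the fetch distance is bounded by the formula length plus the variable index rather than by $n$ itself, which is still $\bigO(n)$.
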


\begin{proof}
The complexity bounds are derived from the operational flow of the machine for a given certificate.

\begin{itemize}
    \item \textbf{Time Complexity:} The total running time is determined by the cumulative cost of literal evaluations.
    \begin{itemize}
        \item \textbf{Traversal and Evaluation:} The machine executes a linear scan over the clause region of length $\bigO(n)$. For each encountered literal $\ell = x_i$, the machine performs a fetch operation to the certificate region. The distance between the clause region and the $i$-th certificate cell is bounded by $\bigO(n)$, and the subsequent return to the clause region maintains this cost. Consequently, evaluating a single literal incurs $\bigO(n)$ time.
        \item \textbf{Aggregate Cost:} Given that the total number of literal occurrences in $\Phi$ is $\bigO(n)$, and each evaluation takes $\bigO(n)$, the overall time complexity is $\bigO(n^2)$. Even with repeated fetches for the same variable across multiple clauses, the cumulative complexity remains quadratically bounded.
    \end{itemize}

\item \textbf{Space Complexity:} The machine functions as an in-place verifier. The computation is restricted to the input tape containing both the formula and the certificate. All auxiliary information is managed by overwriting existing tape cells with a constant set of marker symbols. 
As no additional work tape is utilized, the total space usage remains $\bigO(n)$.
\end{itemize}

Thus, for a fixed certificate, $M_{\mathrm{SAT0}}^{\mathrm{ID}}$ operates within $\bigO(n^2)$ time and $\bigO(n)$ space.
\end{proof}

The input-dependent Turing machine $M_{\mathrm{SAT0}}^{\mathrm{ID}}$ serves as a polynomial-time verifier for the \textsc{SAT} problem, utilizing state expansion to encode variable indices and fetch operations directly into its control logic. 
Its operational properties are summarized as follows:
\begin{itemize}
\item \textbf{Correctness:} The machine is sound and complete, maintaining a direct correspondence between its internal state transitions and the truth valuation of the formula. It halts in the \textsc{Accept} state if and only if the certificate $\mathcal{V}$ satisfies the CNF formula $\Phi$.
\item \textbf{Time Complexity:} The machine processes the formula by iteratively parsing literal indices and fetching certificate values. For an encoding length $n$, the cumulative cost of these operations results in an $O(n^2)$ quadratic time complexity, consistent with the efficient verification of \textsc{SAT}.
\item \textbf{Space Complexity:} The machine operates as an in-place verifier using $O(n)$ linear space. It manages all intermediate verification steps by overwriting input tape cells with marker symbols, requiring no additional work tape beyond the original input.
\end{itemize}
Hence, $M_{\mathrm{SAT0}}^{\mathrm{ID}}$ provides a  deterministic verification procedure for \textsc{SAT}, demonstrating that input-dependent state logic offers an intuitive approach to constraint satisfaction verification.

\subsection{Ordinary SAT Turing Machine With Fixed States} \label{subsec:ordinary_sat_tm_fixed}

In this subsection, we describe an ordinary, fixed-state Turing machine for SAT that is not certificate-oblivious. 
The input format, tape alphabet, and overall tape structure of this machine are identical to those of the certificate-oblivious Turing machine described previously. 
Its operational semantics are largely similar, with one exception: in the \texttt{Skip} state, the machine erases all literals in a satisfied clause rather than merely erasing the leading $0$ of the literal.

We focus here on the formal definition, correctness, and time complexity of the verifier Turing machine, rather than re-explaining its operational semantics.

\begin{definition}[Deterministic Ordinary SAT Verifier Turing Machine]\label{def:ordinary_tm_sat_fixed}
The deterministic verifier Turing machine
$M_{\mathrm{SAT0}}$ is defined as a tuple
\[
M_{\mathrm{SAT0}}
= (Q, \Sigma, \Gamma, \delta, \qinit, \qacc, \qrej),
\]
where the components are specified as follows.

\begin{itemize}
    \item The states $Q$ consists of the states detailed in \cref{tab:sat_fixed_transitions_forward,tab:sat_fixed_transitions_backward}, 
	including the initial state $\qinit= \textsc{Check.Forwarded}$ (from which the machine begins a left-to-right scan),
	 the accepting state $\qacc = \textsc{Accept}$ (reached if and only if all clauses are satisfied under the assignment), 
	and the rejecting state $\qrej = \textsc{Reject}$ (entered upon detecting an invalid encoding or a falsified clause).

    \item The input alphabet $\Sigma = \{0,\ldots,9,-,\&, \_, \#, T, F \}$ encodes the CNF formula and the Boolean assignment, while the tape alphabet $\Gamma$ satisfies $\Sigma \subseteq \Gamma$ and includes the blank symbol $\blank$.

    \item $\delta : Q \times \Gamma \to Q \times \Gamma \times \{-1,+1\}$ is the transition function defined explicitly by the rules given in \cref{tab:ordinary_sat_fixed_transitions_forward,tab:sat_fixed_transitions_backward}.
\end{itemize}
\end{definition}

The corresponding transition table fully determines the machine's behavior, including digit-wise subtraction, borrow propagation, literal evaluation, and clause verification, similar to the certificate-oblivious one. 
All numerical generality, including variable indices and fetch positions, is handled exclusively through tape symbols and structured scan modes, rather than through state expansion. 
As shown in the following table, the machine erases symbols whenever the \textsc{Skip} state is entered before reaching a clause delimiter, which differs from the certificate-oblivious one.

\begin{table}[H]
\centering \small
\caption{Transition function of the fixed-state Turing machine} \label{tab:ordinary_sat_fixed_transitions_forward}
\begin{tabular}{llllll}
\textbf{State} & \textbf{Read} & \textbf{Next} & \textbf{Write} & \textbf{Move} & \textbf{Comment} \\
\hline
Check.S & $\_$ & Check.S & $\_$ & R & bypass delimiters \\
Check.S & $-$ & CheckNot.S & $-$ & R & negated literal \\
Check.S & $0$ & Unknown.S & $\_$ & R & eliminate leading zero \\
Check.S & D & UnknownTerm.S & D & R & enter variable term \\
Check.S & T & Skip.S & T & R & clause satisfied \\
Check.S & F & Check.S & F & R & continue scan \\
Check.S & \& & Reject & $\_$ & R & premature end \\
Check.S & \# & Reject & $\_$ & R & false final clause \\
CheckNot.S & $\_$ & CheckNot.S & $\_$ & R & bypass delimiters \\
CheckNot.S & T & Check.S & T & R & false from negation\\
CheckNot.S & F & Skip.S & F & R & satisfied from negation\\
CheckNot.S & D & UnknownTerm.S & D & R & negation to unassigned term\\
CheckNot.S & 0 & Unknown.S & $\_$ & R & eliminate leading zero \\
\midrule
Unknown.S & $\_$ & Unknown.S & $\_$ & R & bypass \\
Unknown.S & 0 & Unknown.S & $\_$ & R & eliminate leading zero \\
Unknown.S & D & UnknownTerm.S & D & R & enter variable term \\
Unknown.S & T & Skip.S & T & R & satisfied \\
Unknown.S & F & Unknown.S & F & R & undecided \\
Unknown.S & $-$ & UnknownNot.S & $-$ & R & negated unknown \\
Unknown.S & \& & Check.Free & \& & R & next clause \\
Unknown.S & \# & Fetch & \# & R & assignment phase \\
UnknownNot.S & $\_$ & UnknownNot.S & $\_$ & R & bypass \\
UnknownNot.S & T & Unknown.S & T & R & undecided from negation \\
UnknownNot.S & F & Skip.S & F & R & satisfied from negation\\
UnknownNot.S & D & UnknownTerm.S & D & R & enter variable term \\
UnknownNot.S & 0 & Unknown.S & $\_$ & R & eliminate leading zero \\
UnknownTerm.S & D & UnknownTerm.S & D & R & scan digits \\
UnknownTerm.S & $\_$ & Unknown.S & $\_$ & R & exit variable term \\
UnknownTerm.S & \& & Check.Free & \& & R & next clause \\
UnknownTerm.S & \# & Fetch & \# & R & fetch phase \\
\midrule
Skip.S & * & Skip.S & $\_$ & R & skip satisfied clause \\
Skip.Free & \& & Check.Free & \& & R & next clause \\
Skip.Free & \# & Fetch & \# & R & fetch phase \\
Skip.Forwarded & \& & Check.Forwarded & \& & R & forwarded next clause \\
Skip.Forwarded & \# & Accept & \# & R & all clauses satisfied \\
\midrule
Fetch & $\_$ & Fetch & $\_$ & R & bypass blanks \\
Fetch & T & Backward.B & $\_$ & L & fetch true \\
Fetch & F & Backward.B & $\_$ & L & fetch false \\
\midrule
Backward.B & * & Backward.B & * & L & scan leftward \\
Backward.B & 1 & BackwardFrom1.B & 0 & L & subtract 1 \\
Backward.B & 0 & Borrow.B & 9 & L & borrow \\
Backward.B & D & BackwardInTerm.B & D$-1$ & L & decrement digit \\
Backward.B & $\epsilon$ & Check.Forwarded & $\epsilon$ & R & evaluation pahse \\
Borrow.B & 0 & Borrow.B & 9 & L & propagate borrow \\
Borrow.B & D & BackwardInTerm.B & D$-1$ & L & resolve borrow \\
BackwardInTerm.B & D & BackwardInTerm.B & D & L & scan left within variable \\
BackwardInTerm.B & $\_$ & Backward.B & $\_$ & L & escape from literal(left) \\
BackwardInTerm.B & \& & Backward.B & \& & L & escape from clause(left) \\
BackwardInTerm.B & $-$ & Backward.B & $-$ & L & escape from lieteral(left) \\
BackwardInTerm.B & $\epsilon$ & Check.Forwarded & $\epsilon$ & R & evaluation pahse \\
BackwardFrom1.B & D & BackwardInTerm.B & D & L & no borrow  \\
BackwardFrom1.B & $\_$ & Assign.B & $\_$ & R & backward for assignment \\
BackwardFrom1.B & $-$ & Assign.B & $-$ & R & backward for assignment \\
BackwardFrom1.B & \& & Assign.B & \& & R &  backward for assignment \\
BackwardFrom1.B & $\epsilon$ & Assign.B & $\epsilon$ & R & assignment site \\
Assign.B & 0 & Backward.B & B & L & finalize assignment \\
\hline
\end{tabular}
\caption*{$\_$: a space delimiter, $D$ : a decimal digit,  `.B` : either `.T` or `.F` generally, and either `.1' or `.0' for \textsc{Borrow.B}}
\end{table}

\begin{lemma}[Soundness of Ordinary Fixed-State SAT Verifier] \label{lem:ordinary_tm_sat_fixed_soundness}
If the Turing machine $M_{\mathrm{SAT0}}$ halts in the \textsc{Accept} state on input $\Phi \#\mathcal{V}$, then the Boolean assignment encoded by the certificate string $\mathcal{V}$ satisfies the CNF formula $\varphi$.
\end{lemma}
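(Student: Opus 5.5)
The plan is to follow the structure of the soundness proof for the certificate-oblivious machine, \cref{lem:tm_sat_fixed_soundness}. That argument carries over almost verbatim, because the only difference in the transition table (\cref{tab:ordinary_sat_fixed_transitions_forward}) is that \textsc{Skip.S} overwrites the remaining symbols of a satisfied clause with `\_' rather than only erasing leading zeros. I would prove an invariant about the clause region across fetch rounds, and then read off what acceptance forces.

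\textbf{Step 1: acceptance requires a full Forwarded pass.} The only transition into \textsc{Accept} is \textsc{Skip.Forwarded} reading `\#'. The mode \textsc{Forwarded} is entered only at \textsc{Check.Forwarded}, which happens at the start of a scan (from \textsc{Backward.B} or \textsc{BackwardInTerm.B} reading $\blank$). It is preserved across a delimiter `\&' only through the rule \textsc{Skip.Forwarded} $\to$ \textsc{Check.Forwarded}, since every \textsc{Unknown}-type state downgrades the mode to \textsc{Free}. So in the final left-to-right scan every clause, in order, must have been exited in a \textsc{Skip} state. Since \textsc{Skip.S} is entered only on reading a \texttt{T} in positive context, or an \texttt{F} in negated context (\textsc{CheckNot}/\textsc{UnknownNot}), each clause contains at that moment a literal position carrying a truth symbol that makes the literal true.

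\textbf{Step 2: invariant on the clause region.} After the $k$-th fetch/backward round, I claim the following three facts, proved by induction on $k$.
\begin{itemize}
\item Every remaining numeric token equals its original variable index minus $k$.
\item Every token whose original index was $i \le k$ has been replaced by $\mathcal{V}(i)$, the $i$-th certificate symbol, keeping its sign marker `-'.
\item Any clause that \textsc{Skip} erased was erased only after it already contained a true literal of this kind.
\end{itemize}
The inductive step is the same decrement analysis as in \cref{lem:tm_sat_fixed_soundness}. \textsc{Fetch} consumes the leftmost remaining certificate symbol, which is $\mathcal{V}(k+1)$ by induction, since earlier symbols were overwritten by `\_'. \textsc{Backward}, \textsc{Borrow} and \textsc{BackwardInTerm} then decrement every token by one, and \textsc{BackwardFrom1} followed by \textsc{Assign.B} replaces exactly those tokens that were equal to $1$, i.e.\ original index $k+1$, by $B=\mathcal{V}(k+1)$.

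Erasure by \textsc{Skip} needs one additional observation. Once a clause has been erased it contains neither digits nor truth symbols, so in every later scan it falls into \textsc{Check} and reaches `\&' or `\#', which leads to \textsc{Reject}. This seems to be the main obstacle: as written, the table would reject on every pass after a clause has been erased in \textsc{Free} mode. I would first check whether the intended reading keeps the satisfying \texttt{T}/\texttt{F} (the write column of the \textsc{Skip.S} wildcard rule suggests otherwise). For soundness, however, this works in our favour: any such run rejects, so it does not affect an accepting run. It therefore suffices to observe that in an accepting run the final scan sees each clause with a witnessing truth symbol, and that each truth symbol at a literal position of the clause region was written by \textsc{Assign.B}. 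This is because the certificate symbols lie to the right of `\#' and the scan never writes \texttt{T}/\texttt{F} elsewhere.

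\textbf{Step 3: conclusion.} Combining Steps 1 and 2, each clause has a literal over some variable $x_i$ whose position holds $\mathcal{V}(i)$ and whose polarity makes it true under $\mathcal{V}$. Hence $\mathcal{V}$ satisfies every clause, and therefore $\varphi$. The delicate part is the bookkeeping in Step 2, ensuring that the borrow and leading-zero erasures never corrupt a token's value. I would handle this by reusing the digit-wise decrement argument from the certificate-oblivious case.
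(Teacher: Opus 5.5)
Your proposal is correct and follows essentially the same route as the paper. Acceptance is only possible from \textsc{Skip.Forwarded} on `\#' after a pass in which every clause is left through \textsc{Skip}, \textsc{Skip} is entered only on a literal made true by a substituted symbol, and the decrement-by-one rounds ensure the symbol substituted for $x_i$ is the $i$-th certificate symbol; your induction invariant just makes this last step explicit.

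One correction to your side remark in Step~2: the transitions into \textsc{Skip.S} write back the witnessing \texttt{T}/\texttt{F} (e.g.\ \textsc{Check.S} reading \texttt{T} writes \texttt{T}), and only the rest of the clause is overwritten. An erased clause therefore keeps its true literal and is re-skipped, not rejected, in later passes, so the obstacle you anticipate does not arise. Your soundness argument never depended on it.
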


\begin{proof}
The machine $M_{\mathrm{SAT0}}$ enters the \textsc{Accept} state if and only if it reaches the end-of-formular marker `\#'after verifying that every clause is satisfied. 
Specifically, the \textsc{Skip.Forwarded} state is maintained until the final clause is processed, ensuring that all clauses evaluate to \texttt{True} under the given assignment. 
If any clause were to evaluate to \texttt{False}, the machine would necessarily transition to the \textsc{Reject} state.

A clause is marked as satisfied (entering \textsc{Skip.Forwarded}) only if at least one of its literals evaluates to \texttt{True}. 
A positive literal evaluates to \texttt{True} if and only if its variable is replaced by \texttt{T}, while a negative literal does so if its variable is replaced by \texttt{F}. 
These substitutions occur exclusively during the \textsc{Fetch} phase.

During each \textsc{Fetch} operation, the machine selects the next symbol from the certificate $\mathcal{V}$ and initiates a backward scan that decrements all remaining variable indices. 
When an index reaches zero, the corresponding variable is identified and replaced throughout the clause region with the fetched Boolean value. 
Since all variable counters are decremented by one in each \textsc{Fetch} phase, the $i$-th variable is substituted precisely when the $i$-th certificate symbol is processed. 
Consequently, every substitution performed by the machine corresponds exactly to the Boolean assignment encoded by $\mathcal{V}$.

Since acceptance requires that every clause is successfully verified during the final forward scan, we conclude that under the assignment specified by $\mathcal{V}$, each clause of $\varphi$ evaluates to \texttt{True}. Thus, $\mathcal{V}$ satisfies the CNF formula $\varphi$.
\end{proof}

\begin{lemma}[Completeness of Ordinary Fixed-State SAT Verifier]\label{lem:ordinary_tm_sat_fixed_completeness}
Let $\Phi \# \mathcal{V}$ be a well-formed input such that the Boolean assignment encoded by the certificate $\mathcal{V}$ satisfies the CNF formula $\Phi$. Then, the Turing machine $M_{\mathrm{SAT0}}$ does not enter the \textsc{Reject} state and eventually halts in the \textsc{Accept} state.
\end{lemma}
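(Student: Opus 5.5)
We prove completeness for $M_{\mathrm{SAT0}}$ by following its deterministic execution and keeping one invariant across the fetch/evaluation rounds. The invariant is stated after the $k$-th \textsc{Fetch}/\textsc{Backward}/\textsc{Assign} round has finished. At that point, every occurrence of variables $x_1,\dots,x_k$ in a not-yet-erased clause has been overwritten by the correct truth value $\mathcal{V}(i)$. Every remaining variable index $j>k$ is stored on the tape as $j-k$, possibly with leading zeros. Every clause already satisfied in an earlier evaluation scan has been erased to \texttt{\_} by the \textsc{Skip.S} rule that writes \texttt{\_} on every symbol.

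Step 1 is the decrement-and-assign argument, which we import from \cref{lem:ordinary_tm_sat_fixed_soundness}. There it is shown that the $k$-th fetch substitutes exactly the variables whose current index is $1$. Combined with the $-1$ shift of all other indices, this gives the invariant by induction on $k$. The borrow and leading-zero handling of the \textsc{Backward}/\textsc{Borrow}/\textsc{BackwardInTerm} rules is identical to the certificate-oblivious machine. Erased clauses contain no digits, so they do not disturb the decrement.

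Step 2 shows that no scan ever rejects. Reading the table, \textsc{Reject} arises only from \textsc{Check.S} on \texttt{\&} or \texttt{\#}. That situation means the machine has reached the end of a clause containing no unknown literal and no literal evaluating to \texttt{True}. Every literal in such a clause is an already-assigned value, so by the invariant the clause is false under $\mathcal{V}$. This contradicts the assumption that $\mathcal{V}$ satisfies $\Phi$. The only other possible rejection is \textsc{Fetch} meeting a symbol outside $\{\_,T,F\}$. On a well-formed input this can happen only after all certificate symbols are used. By then every variable has been assigned, so every surviving clause has been fully evaluated; as argued below, the machine accepts before reaching that point.

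Step 3 establishes progress and acceptance. Each round consumes one certificate symbol, so there are at most $m$ rounds. After round $m$ (or earlier), every variable in the formula has been assigned, because indices beyond $m$ are excluded by well-formedness. Each clause then contains a true literal, so the next scan from \textsc{Check.Forwarded} enters \textsc{Skip.Forwarded} in every clause. It never drops to \textsc{Free}, because no \textsc{Unknown} state occurs, and at \texttt{\#} it moves to \textsc{Accept}.

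The main obstacle is making the \textsc{Forwarded}/\textsc{Free} bookkeeping precise. Already-erased clauses consist only of \texttt{\_} and \texttt{\&}. The point to check is that \textsc{Check.Forwarded} passes through an erased clause: reading \texttt{\_} keeps the state \textsc{Check.S}, and the next \texttt{\&} would then trigger \textsc{Reject}. This is the case I would examine first. The intended resolution is that erasure leaves the clause's trailing \texttt{\&} intact: \textsc{Skip.S} writes \texttt{\_} on \texttt{*}, but the \texttt{\&} rule has priority and preserves \texttt{\&}. If an erased clause still produces a \textsc{Check}-on-\texttt{\&} step, the plan must show that the entry to \textsc{Skip.S} occurs in the same scan that erases the clause. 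The check is then that on later scans the machine can never enter a clause that is empty between two \texttt{\&}. If that fails, completeness requires the convention that an empty clause is treated as satisfied, and I would verify this against the transition table before finalizing the argument.
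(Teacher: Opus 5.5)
Your overall plan (a decrement-and-assign invariant, then ``\textsc{Reject} arises only from \textsc{Check.S} on \texttt{\&} or \texttt{\#}'', then acceptance within $m$ rounds) is the same operational argument the paper gives. However, the invariant you state for satisfied clauses is false for $M_{\mathrm{SAT0}}$, and with it Step~2 does not go through; this is exactly the obstacle you leave open. Suppose a satisfied clause really consisted only of \texttt{\_} on later scans. The machine would cross it in \textsc{Check.S} and reject at the next \texttt{\&}. For such a clause the premise of Step~2 (``every literal is already assigned and false'') is vacuous, so no contradiction with $\mathcal{V}$ satisfying $\Phi$ arises. Neither proposed resolution closes this. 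Keeping the \texttt{\&} intact changes nothing, since \textsc{Check.S} on \texttt{\&} rejects. The table also contains no ``empty clause counts as satisfied'' convention, so under your invariant the lemma would actually be false.

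The missing idea is that the satisfying symbol survives. Three facts from the table give this:
\begin{itemize}
\item The four transitions into \textsc{Skip.S} (\textsc{Check.S} on \texttt{T}, \textsc{CheckNot.S} on \texttt{F}, \textsc{Unknown.S} on \texttt{T}, \textsc{UnknownNot.S} on \texttt{F}) write the symbol they read back onto the tape.
\item Every symbol read before it in the clause is written back as well. Only leading zeros become \texttt{\_}, and the \textsc{Not} states bypass \texttt{\_}, so a \texttt{-} still governs the value after it.
\item The \textsc{Skip.S} rule that writes \texttt{\_} only touches what follows.
\end{itemize}
So the correct invariant is this: a clause that has once entered \textsc{Skip.S} always retains, before its delimiter, a satisfying value symbol (a \texttt{T} of a positive literal or an \texttt{F} of a negated one), possibly preceded by still-unknown literals. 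Conversely, a clause without such a symbol has never been touched by \textsc{Skip.S}, so it still contains all of its literals. On every later scan the machine therefore reaches a satisfying symbol before \texttt{\&}/\texttt{\#}. It re-enters \textsc{Skip} with the mode $S$ it entered the clause with, since no transition inside a clause changes $S$.

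With this in place of your invariant, both remaining steps become valid:
\begin{itemize}
\item Step~2: \textsc{Check.S} can meet \texttt{\&} or \texttt{\#} only in a clause with no digit and no satisfying value symbol, i.e.\ one whose literals are all assigned and false under $\mathcal{V}$, which is impossible.
\item Step~3: in the scan after round $m$, every clause ends in \textsc{Skip.Forwarded}, so \texttt{\#} leads to \textsc{Accept}.
\end{itemize}
This persistence is also what the paper's assertion, that the machine ``will identify a satisfying literal'' in every clause on each scan, tacitly relies on.

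Finally, drop the remark in Step~1 that satisfied clauses contain no digits. Digits of unknown literals preceding the witness remain, and they are decremented and assigned as usual.
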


\begin{proof}
The machine operates through an iterative process: in each phase, it fetches a certificate symbol, erases it, and performs a backward scan to substitute the corresponding variable, followed by a forward scan to evaluate each clause.

Assume that the assignment encoded by $\mathcal{V}$ satisfies $\Phi$. We demonstrate that $M_{\mathrm{SAT0}}$ cannot enter the \textsc{Reject} state during execution.

According to the transition rules, the machine enters \textsc{Reject} only if it encounters a clause where all literals have been evaluated and found to be \texttt{False}. Conversely, unassigned variables (represented by the states \textsc{Unknown.S}, where $\textsc{S} \in \{\textsc{Forwarded}, \textsc{Free}\}$) do not trigger rejection.

During the fetch-and-evaluate phase, when a variable is substituted with its assigned value, the machine transitions to the \textsc{Skip.S} state for the current clause if the clause is satisfied by that literal. Once a clause enters a \textsc{Skip.S} state, the machine bypasses the remaining literals in that clause, precluding any possibility of entering the \textsc{Reject} state for that clause.

Because $\mathcal{V}$ satisfies $\Phi$, every clause in $\Phi$ contains at least one literal that evaluates to \texttt{True} under the given assignment. Consequently, for every clause, the machine will identify a satisfying literal and transition into a \textsc{Skip.S} state before all literals are evaluated as \texttt{False}. Thus, no clause can cause the machine to reject.

After all certificate symbols are processed and the variables are substituted, the machine concludes its execution by performing a final forward scan. Since all clauses have been verified as \texttt{True}, the machine reaches the end-of-input marker $\#$ while in the \textsc{Skip.Forwarded} state, at which point it enters the \textsc{Accept} state. 

Therefore, $M_{\mathrm{SAT0}}$ always halts in \textsc{Accept} when provided with a satisfying certificate $\mathcal{V}$.
\end{proof}

\begin{lemma}[Time and Space Complexity of Ordinary Fixed-State SAT Verifier] \label{lem:ordinary_tm_sat_fixed_time}
Let $\Phi$ be a CNF formula of encoding length $n$, and let $m < n$ be the length of the truth assignment certificate. The fixed-state SAT verifier Turing machine $M_{\mathrm{SAT0}}$ runs in $O(nm)$ time and uses $O(n)$ space.
\end{lemma}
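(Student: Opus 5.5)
The plan is to follow the same phase-based accounting used for the certificate-oblivious machine in \cref{lem:tm_sat_fixed_time}, now adapted to the transition table of \cref{tab:ordinary_sat_fixed_transitions_forward}. I will decompose the run of $M_{\mathrm{SAT0}}$ into \emph{rounds}. Each round consists of a forward evaluation scan in states prefixed \textsc{Check}, \textsc{Unknown} or \textsc{Skip}. This scan runs from the left blank up to `\#'. It is followed by a \textsc{Fetch} scan into the certificate region, and then a backward subtraction/assignment scan in states \textsc{Backward}, \textsc{Borrow}, \textsc{BackwardInTerm}, \textsc{BackwardFrom1} and \textsc{Assign}. The backward scan returns to the left blank and re-enters \textsc{Check.Forwarded}. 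The time bound then follows from two claims: there are at most $m+1$ rounds, and each round costs $\bigO(n)$ steps.

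For the number of rounds, I will use the fact that every transition of \textsc{Fetch} on \texttt{T} or \texttt{F} overwrites that symbol with `\_'. A round can only begin its backward phase by consuming one certificate symbol in this way. After $m$ rounds the certificate region therefore contains only delimiters. In the following round, \textsc{Fetch} either meets $\blank$, and by the partial-transition convention it rejects, or the forward scan has already halted in \textsc{Accept} or \textsc{Reject}. This gives at most $m+1$ forward scans and $m$ backward scans, which also shows termination.

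For the cost of a single round, I will argue that the head never turns around except at fixed boundary symbols. In the forward phase every transition moves right, so it reaches `\#' within $n$ steps. \textsc{Fetch} moves right over `\_' only until the first remaining truth value, which takes at most $m+1$ steps. The backward phase moves left in every state except two. One is \textsc{BackwardFrom1.B}$\to$\textsc{Assign.B}, a single right step followed immediately by a left step writing $B$. The other is the final return to $\blank$, where control re-enters \textsc{Check.Forwarded}. Each turnaround inside the backward phase is thus a constant detour per cell. So the backward phase costs at most $3(n+m+1)$ steps, matching the ``$3n$ per variable'' count of \cref{lem:tm_sat_fixed_time}. The \textsc{Borrow} chain stays inside a single digit string and only moves left, so it adds no extra cost. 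Summing, each round costs $\bigO(n+m)=\bigO(n)$, and the total is $\bigO(nm)$.

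Space follows the argument of \cref{lem:tm_sat_fixed_time}. No transition writes a non-blank symbol onto a blank cell. The head only visits the two blank cells bounding $\Phi\#\mathcal{V}$: the left $\blank$ where it turns right, and the right end, which causes rejection. Hence at most $n+m+2=\bigO(n)$ cells are used. I expect the main obstacle to be the bounded-turnaround claim for the backward phase. There I must check from the table, case by case, that the \textsc{Assign.B} right step and the borrow propagation cannot chain into repeated back-and-forth motion. The key invariant is that \textsc{Assign.B} immediately returns to \textsc{Backward.B} moving left, past a cell that now holds \texttt{T} or \texttt{F}, which \textsc{Backward.B} treats as a wildcard. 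The only other difference from the oblivious machine is the \textsc{Skip} erasure, which only shrinks the work and so does not threaten the bound.
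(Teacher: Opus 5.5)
Your proposal is correct and takes the same route as the paper: $m$ fetch--substitute--evaluate iterations, each made of a constant number of linear scans over $\bigO(n+m)=\bigO(n)$ cells, with space confined to the input and the two boundary blanks. Where the paper simply asserts that each iteration costs $\bigO(n)$, you add the bookkeeping. You bound the number of rounds by the consumption of certificate symbols, which is valid because the only transition that creates a new \texttt{T}/\texttt{F} is \textsc{Assign.B}, and it acts on a digit cell left of `\#'. You also check that the single \textsc{BackwardFrom1.B}$\to$\textsc{Assign.B} detour is the only turnaround inside the backward phase.
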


\begin{proof}
We analyze the complexity of the verifier by examining the operations performed during each variable processing phase.

For the time complexity, observe that the machine processes each of the $m$ variables sequentially. In each iteration, a variable is fetched from the certificate, which triggers a backward scan to substitute all occurrences of that variable across the input. This substitution phase requires $O(n)$ time as the head traverses the input tape. Subsequently, the machine enters an evaluation phase, scanning the clause region to update the status of each clause (i.e., satisfied, falsified, or pending) based on the current partial assignment. Since the input size is $O(n)$, this forward scan also consumes $O(n)$ time. Once a variable is fetched and erased by being replaced with a delimiter symbol, it is never visited again. Given that there are $m$ variables to process, the total time complexity is bounded by $O(n) \cdot m = O(nm)$. If any clause is found to be falsified or if the formula remains unsatisfied after all variables are processed, the machine terminates in the \textsc{Reject} state.

For the space complexity, the machine operates within the boundaries of the original input tape of size $O(n+m) = O(n)$. The machine uses a constant-sized alphabet and does not introduce additional auxiliary tape cells beyond the original input region, except for the boundary blank symbols. Since all operations—including backward fetches and forward evaluations—are confined to this $O(n)$ space, the total space complexity is $O(n)$.
\end{proof}

The deterministic Turing machine $M_{\mathrm{SAT0}}$ serves as a polynomial-time verifier for the \textsc{SAT} problem. 
Its operational properties, as established in the preceding lemmas, are summarized as follows:
\begin{itemize}
\item \textbf{Correctness:} The machine is sound and complete. It halts in the \textsc{Accept} state if and only if the provided certificate $\mathcal{V}$ satisfies the CNF formula $\Phi$.
\item \textbf{Time Complexity:} For a formula of encoding length $n$ and a certificate of length $m < n$, the machine runs in $O(nm)$ time. Given that $m < n$, this confirms that $M_{\mathrm{SAT0}}$ operates within $O(n^2)$ quadratic time.
\item \textbf{Space Complexity:} The machine operates using $O(n)$ linear space, strictly confined within the boundaries of the original input tape.
\end{itemize}

Thus, $M_{\mathrm{SAT0}}$ provides a deterministic, quadratic-time, and linear-space verification procedure for \textsc{SAT}, demonstrating that the problem admits an efficient verification process using a fixed-state Turing machine architecture.

\subsection{Ordinary Subset-Sum Turing Machine} \label{subsec:ordinary_subset-sum-tm}
In this subsection, we present an ordinary verifier Turing machine, $M_{\mathrm{SS0}}$, for the Subset-Sum problem to serve as a reference. 
Since the input format and operational mechanism of this machine differ from those of the certificate-oblivious version discussed in the main text, we provide a full description of its behavior for clarity and completeness.

The machine $M_{\mathrm{SS0}}$ employs a finite set of states and a finite tape alphabet, including standard digit symbols $\{0, 1, \dots, 9\}$, 
circled digits $\{\text{\circled{0}, \circled{1}, \dots, \circled{9}}\}$, and structural markers $\{\texttt{@}, \texttt{\#}, \texttt{|}, 
\texttt{\_}, \texttt{;}\}$. For the formal definition of transition rules, we use the symbols $D$ and $M$ as parameters representing arbitrary digits 
to describe the matching and subtraction phases concisely.
 The machine receives a target integer $T$ and a list of integers $a_1,a_2,\dots,a_n$ encoded in decimal and prefixed with the delimiter `\_';
the machine determines whether a subset of the list sums to $T$. The semantics of the computational workflow are formally specified through state transitions and head movements.
The input format is
\[
1\_@\_1\_3\_5\_7\_10\_20\#3\_5\_7\_;
\]
where the left-hand side specifies the target sum,
the segment between `@' and `\#' encodes the set of input elements,
and the segment to the right of `\#' represents a proposed certificate.

We assume the input is well-formed (via prepreocessing) for the 
purposes of this subsection. Specifically, the target sum is 
delimited by an underscore (\texttt{\_}), the list of elements begins 
with an underscore, and the certificate likewise terminates with an 
underscore. The entire input is finalized by a semicolon (\texttt{;}), 
which serves as a delimiter for certificate parsing; during execution, 
the machine processes the certificate only within the scanning range 
between the marker \texttt{\#} and the semicolon. Furthermore, the 
length of the certificate is implicitly bounded by the size of the 
encoded set, as any certificate exceeding the total encoding length 
would be redundant.

The insertion of a final underscore (\texttt{\_}) before the concluding 
semicolon (\texttt{;}) resolves potential state complexity in the 
transition logic. Without this delimiter, the machine would need 
distinct states to simultaneously handle the termination of the last 
numerical element and the transition to the final \texttt{CheckSum} phase. 
By separating these events, the trailing underscore allows the machine 
to complete the numerical processing independently. This enables a 
sequential and simplified transition to the \texttt{CheckSum} phase 
upon encountering the semicolon, where the machine verifies whether 
the target sum is satisfied.

The Turing machine employs auxiliary tape symbols, such as 
$\sim$, $|$, $\$$, and $\text{\circled{0}, \dots, \circled{9}}$, to support 
digit matching, erasing, and subtraction. Its operation interleaves 
digit matching with subtraction: starting from the most significant 
digit of the certificate, the machine scans rightward to select a 
target digit and then scans leftward into the input set region to 
locate a corresponding digit occurrence. Once a complete numerical 
match is confirmed, the head moves to the sum area to perform 
digit-wise subtraction corresponding to the matched element.

This design tightly interleaves matching and subtraction,
eliminating the need for a globally separated subtraction phase.

\paragraph{Tape Structure.}
The tape consists of the following regions:
\begin{enumerate}
    \item a \emph{target sum region}, initially holding $T$ and used as a subtractive workspace
        in which selected elements $a_i$ are successively subtracted according to the given certificate after each element matching.
  
	\item the \emph{input set region}, containing the decimal encodings of
	$a_1,\dots,a_n$ separated by delimiters, including a leading delimiter
	``\_'' that ensures unambiguous alignment for MSD-based matching,
	and bounded by the marker ``@'' on the left and ``\#'' on the right.

    \item a \emph{certificate region}, in which the machine records the
        certificate symbols encoding a candidate solution, specifying
        which elements $a_i$ are selected in the subset;
        the certificate is interpreted only up to the first terminating
        delimiter ``;'', and any symbols appearing to the right of this
        delimiter are ignored by the machine.

    \item a right-unbounded blank region; the tape is also left-unbounded,
        consisting entirely of blank symbols.
\end{enumerate}

We use $D$ to denote a generic digit from $\{0,\dots,9\}$, and we use the
notation \circled{D} to denote its corresponding symbol in the machine’s
internal alphabet used for marking and positional bookkeeping. Thus,
\circled{D} represents one of \circled{0},\circled{1},\dots,
\circled{9}.
\paragraph{Symbol conventions}
The machine relies on several delimiter and auxiliary symbols whose
semantics are summarized as follows:

\begin{itemize}
    \item The underscore symbol \texttt{\_} serves as a \emph{number-level
        delimiter}.  In the certificate region, it marks the end of a
        certificate element and triggers element-level match
        verification.  In the target sum region, a trailing \texttt{\_}
        guarantees a well-defined least significant digit for
        subtraction.

    \item During processing, an underscore \texttt{\_} may be temporarily
        replaced by \texttt{|}, indicating that the corresponding number
        is currently under active processing and that its boundary
        should not yet be interpreted as final.

    \item The symbol \texttt{;} denotes the logical end of the
        certificate.  Its encounter triggers the final checksum
        verification phase.

    \item The symbol \texttt{\$} is used as a subtraction workspace
        marker, recording the digit position currently involved in
        subtraction process.

    \item The symbol \texttt{$\sim$} represents a deleted character, indicating that the corresponding digit has
        already been consumed and should be ignored in subsequent scans.

    \item Marked digits \circled{D} denote digits that have been
        matched or processed.  In the matching phase, they record
        positional correspondence; in the subtraction phase, they
        identify the most recently processed digit and guide borrow
        propagation.  All circled digits are restored to their original
        values once the corresponding operation is completed.
\end{itemize}

\paragraph{High-Level Semantics.}
The machine follows a certificate-driven verifier semantics for the
Subset-Sum problem, implemented as a deterministic and space-efficient
simulation.
Digit matching and arithmetic operations are coordinated through
explicit verification stages, ensuring that subtraction is performed
only after a complete element has been successfully matched.

Conceptually, the execution proceeds through the following stages,
each realized by a structured block of states in the transition table:

\begin{enumerate}
    \item \textbf{Forward scan.}
    The machine scans the tape from left to right until the delimiter \texttt{\#} is encountered,
    indicating the beginning of the certificate region.

    \item \textbf{Selection of a certificate digit.}
    In the state \textsc{FindDigitToMatch}, the machine scans the
    certificate region to locate the next unmatched digit of the current
    certificate element, proceeding from most significant to least
    significant digit.
    Once such a digit is found, it is erased by replacing it with the
    deletion marker \texttt{\textasciitilde}, and the machine initiates
    a backward digit-matching procedure.
    Encountering the delimiter \texttt{;} signals the termination of the
    certificate and transitions the machine to final sum verification.

	\item \textbf{Backward digit matching.}
	The states \textsc{BackwardToMatch.M} and \textsc{MatchPosition.M} 
	perform digit-wise backward matching within the input set region to the 
	left of \texttt{@}. Specifically, \textsc{BackwardToMatch.M} searches for 
	a previously matched digit and restores it to its original value. 
	If the digit to its right is $M$, the state \textsc{MatchPosition.M} 
	temporarily marks the new matched digit (e.g., by replacing it with 
	\text{\circled{D}}) to record positional correspondence, indicating 
	that the prefix has been successfully matched. This marking does not 
	yet constitute formal acceptance of the match but serves as a 
	tentative alignment subject to later verification.
	
	Since multiple partial matches may exist, the machine does not halt 
	even after a match is found; instead, it completes the backward pass 
	to the boundary to ensure deterministic positioning.
	
	\item \textbf{Forward matching check.}
	After a digit has been tentatively marked, the machine transitions to a 
	forward scan phase (\textsc{CheckForward}) to verify the existence of the 
	circled digit, which confirms the prefix partial match. If the forward 
	scan fails to locate the marked digit, the machine immediately rejects; 
	otherwise, the digit match is committed, and the machine resumes 
	processing the remaining digits of the current certificate element.

	\item \textbf{Verification of complete element matching.}
	When the machine encounters the delimiter \texttt{\_} in the 
	certificate region, it interprets this as the end of the current 
	certificate element. The state \textsc{BackwardToCheckMatch} initiates 
	a verification to ensure that all digits of this element have been 
	consistently matched. Upon locating the end-of-element marker, the 
	state \textsc{MatchedDigits} verifies the existence of the circled 
	digit at the LSD position; since only one digit is marked at a time, 
	the presence of the LSD marker confirms that the entire element has 
	been successfully matched. If this verification fails, the machine 
	rejects. Otherwise, the match is confirmed at the element level, 
	rendering the association between the certificate element and the 
	input-set element irreversible.

	\item \textbf{Subtraction after confirmed matching.}
	Only after a complete element has been successfully matched does the 
	machine enter the subtraction phase. The states \textsc{SumArea.M}, 
	\textsc{Subtract.M}, and \textsc{Borrow.B} perform digit-wise 
	subtraction of the matched element from the target-sum region. This 
	process proceeds from the least significant digit (LSD), utilizing 
	the positional marker \texttt{\$} and the circled digit for 
	alignment. Crucially, the trailing delimiter of the target-sum region 
	immediately preceding \texttt{@} serves as the LSD marker for the 
	operation. Upon completion, the matched element is erased from the 
	input-set region to prevent reuse, and all temporary markers are 
	restored to their original state.

    \item \textbf{Final sum verification.}
    After all certificate elements have been processed and the
    corresponding subtractions have been applied, the machine scans the
    target sum region to check whether the remaining value is exactly
    zero.
    If only zeros (or their marked equivalents) remain, the machine
    enters an accepting configuration; otherwise, it rejects.
\end{enumerate}

\paragraph{Invariant}
After the completion of the \(j\)-th subtraction phase,
the target-sum region encodes
\[
    S - \sum_{i=1}^{j} a_{\pi(i)},
\]
where each \(a_{\pi(i)}\) is the unique input-set element whose
digit sequence has been fully verified and subsequently subtracted
during the \(i\)-th subtraction phase.

In particular, the value stored in the target-sum region changes only
upon completion of a subtraction phase and remains unchanged during all
other matching and verification phases.

\begin{definition}[Ordinary Subset-Sum Verifier Turing Machine] \label{def:ordinary_sum_of_subset_verifier_tm}

The deterministic subset-sum verifier Turing machine
\[
M_{\mathrm{SS0}}
=
(Q,\Sigma,\Gamma,\delta,\qinit,\qacc,\qrej)
\]
is defined as follows.

\begin{itemize}
\item
$Q$ is a finite set of states consisting of all states explicitly appearing
in the transition table in \cref{tab:ordinary_tm_subsetsum_transitions}, including
parameterized state families such as $\textproc{Backward.M}$,
$\textproc{MatchPosition.M}$, $\textproc{Subtract.M}$, and $\textproc{Borrow.B}$.
Each parameterized family is instantiated over a finite domain
($M,D \in \{0,\dots,9\}$, $B \in \{0,1\}$), and therefore contributes only a constant number of
states.

\item
$\Sigma$ is the input alphabet, consisting of decimal digits
$\{0,\dots,9\}$ and delimiter symbols
$\texttt{\_}$, \texttt{@}, \texttt{\#}, and \texttt{;}.

\item
$\Gamma$ is the tape alphabet, extending $\Sigma$ with auxiliary marker
symbols used during verification, including
$\sim$, $|$, $\$$, \circled{0}$,\dots,$\circled{9},  and blank symbol $\epsilon$.

\item
$\delta$ is the deterministic transition function defined explicitly by the
rules in \cref{tab:ordinary_tm_subsetsum_transitions}.
It implements digit matching, marker-based bookkeeping, and digit-wise
subtraction with borrow propagation.

\item
$\qinit = \textproc{Forward}$ is the initial state, in which the
machine begins by scanning the input from left to right.

\item
$\qacc = \textproc{Accept}$ is the accepting halting state.

\item
$\qrej = \textproc{Reject}$ is the rejecting halting state.
\end{itemize}
\end{definition} 

\begin{table}[!ht]
\centering
\small
\begin{tabular}{llllll}
\hline
\textbf{State} & \textbf{Read} & \textbf{Write} & \textbf{Move} & \textbf{Next State} & \textbf{Comment} \\
\hline

Forward & \# & \# & +1 & FindDigitToMatch & Enter certificate area \\
Forward & * & * & +1 & Forward & Scan input. \\

FindDigitToMatch & $\sim$ & $\sim$ & +1 & FindDigitToMatch & Skip deleted symbols \\
FindDigitToMatch & M & $\sim$ & -1 & BackwardToMatch.M & Select MSD digit to match \\
FindDigitToMatch & \circled{D} & \circled{D} & +1 & FindDigitToMatch & Skip matched digits \\
FindDigitToMatch & \_ & $\sim$ & -1 & BackwardToCheckMatch & End of number; verify full match \\
FindDigitToMatch & ; & ; & -1 & BackwardToCheckSum & No digits left; check sum \\

BackwardToMatch.M & D & D & -1 & BackwardToMatch.M & Skip digits \\
BackwardToMatch.M & | & | & -1 & BackwardToMatch.M & Skip separators \\
BackwardToMatch.M & \_ & | & +1 & MatchPosition.M & Candidate position found \\
BackwardToMatch.M & $\sim$ & $\sim$ & -1 & BackwardToMatch.M & Skip deleted cell \\
BackwardToMatch.M & \# & \# & -1 & BackwardToMatch.M & Continue scan \\

BackwardToMatch.M & \circled{D} & D & +1 & MatchPosition.M & Skip circled digit \\

MatchPosition.M & | & | & -1 & BackwardToMatch.M & No match; try next element \\
MatchPosition.M & $\sim$ & $\sim$ & -1 & BackwardToMatch.M & No match; try next element \\
MatchPosition.M & M & \circled{M} & -1 & BackwardToMatch.M & Mark digit matched \\
MatchPosition.M & D & D & -1 & BackwardToMatch.M & No match; try next element \\

BackwardToMatch.M & @ & @ & -1 & CheckForward & End of set scan \\

CheckForward & \circled{D} & \circled{D} & +1 & Forward & Matching digit confirmed \\
CheckForward & * & * & +1 & CheckForward & Scan right \\
CheckForward & \# & \_ & -1 & Reject & No matching digit found \\

BackwardToCheckMatch & \# & \# & -1 & MatchedDigits & Verify LSD for element matching\\
BackwardToCheckMatch & | & \_ & -1 & MatchedDigits & Verify LSD for element matching \\
BackwardToCheckMatch & \circled{D} & D & -1 & BackwardToCheckMatch & Restore digit \\
BackwardToCheckMatch & * & * & -1 & BackwardToCheckMatch & Scan left \\
BackwardToCheckMatch & @ & \_ & -1 & Reject & Invalid termination \\

MatchedDigits & \circled{M} & \$ & -1 & BackwardToSubtract.M & All digits matched \\
MatchedDigits & D & D & -1 & BackwardToCheckMatch & Continue restore \\
MatchedDigits & $\sim$ & $\sim$ & -1 & BackwardToCheckMatch & Skip deleted \\ 

BackwardToSubtract.M & @ & @ & -1 & SumArea.M & Enter sum area \\
BackwardToSubtract.M & * & * & -1 & BackwardToSubtract.M & Scan left \\

SumArea.M & D & D & -1 & SumArea.M & Traverse sum digits \\
SumArea.M & | & | & -1 & SumArea.M & Skip marker \\
SumArea.M & \_ & | & -1 & Subtract.M & prepare LSD subtraction \\
SumArea.M & \circled{D} & D & -1 & Subtract.M & Prepare subtraction \\

Subtract.M & D & \circled{D}-\circled{M} & -1 & Borrow.B & Perform subtraction. \\

Borrow.0 & * & * & +1 & Forward & No borrow. \\
Borrow.1 & 0 & 9 & -1 & Borrow.1 & Propagate borrow. \\
Borrow.1 & D & $D{-}1$ & +1 & Forward & Borrow resolved. \\
Borrow.1 & $\epsilon$ & \_ & -1 & Reject & Underflow. \\

BackwardToCheckSum & @ & @ & -1 & CheckSum & Begin check sum phase. \\
BackwardToCheckSum & \circled{D} & \circled{D} & -1 & Reject & Incompleted matching \\
BackwardToCheckSum & * & * & -1 & BackwardToCheckSum & Scan left. \\

CheckSum & \_ & \_ & -1 & CheckSum & Skip separators. \\
CheckSum & 0 & 0 & -1 & CheckSum & Continue check. \\
CheckSum & $\epsilon$ & \_ & -1 & Accept & Sum equals zero. \\
CheckSum & * & \_ & -1 & Reject & Nonzero remainder. \\

\hline
\end{tabular}
\caption{Transition table for the Subset-Sum Turing Machine $M_{\mathrm{SS}}$.
Here $D\in\{0,\dots,9\}$, $\text{\circled{D}}\in \text{\{\circled{0}},\dots, \text{\circled{9}}\}$,
and $M$ denotes the currently matched digit stored in the state.} \label{tab:ordinary_tm_subsetsum_transitions} 
\end{table}

The transition table employs schematic symbols to compactly represent families of transitions. 
The meanings and application order of these symbols—\(D\), \circled{D}, \(M\), \circled{D}, \(B\), and \(*\)—are identical to those of the certificate-oblivious subset-sum verifier.

\begin{lemma}[Token Identification and Single-Use Correspondence]
\label{lem:ordinary_token-identification}
During any accepting computation of $M_{\mathrm{SS0}}$,
each confirmed certificate element corresponds to exactly one uniquely
determined delimiter-bounded input-set element.

Let $b_j$ be the $j$-th delimiter-bounded integer encoded in the
certificate region that reaches the confirmed matching phase.
Then there exists a unique input-set element $a_k$ such that:

\begin{enumerate}
    \item Starting from the most significant digit, every digit of $b_j$
    is matched sequentially with the digits of $a_k$.
    \item The match terminates exactly at the delimiter of both tokens,
    establishing equality of the encoded integers $b_j=a_k$ rather than
    prefix agreement.
    \item The subtraction phase is executed over the same digit interval,
    from least significant digit to most significant digit.
    \item After completion, all digits of $a_k$ are erased and can never
    participate in future matches.
\end{enumerate}

Consequently, each certificate element $b_j$ is used exactly once and
corresponds to exactly one input-set element $a_k$, and no input element
can be reused in later phases.
\end{lemma}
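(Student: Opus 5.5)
The proof will trace the transition table of $M_{\mathrm{SS0}}$ (\cref{tab:ordinary_tm_subsetsum_transitions}) and argue by induction on $j$, the index of the certificate element being processed. The induction hypothesis is the invariant stated before \cref{def:ordinary_sum_of_subset_verifier_tm}. After the $(j-1)$-th confirmed match, the tape looks like the original input with three changes: the previously matched input elements have been erased or consumed, the first $j-1$ certificate elements have been replaced by $\sim$, and no circled digit or \texttt{|} marker remains in the input-set region. This cleanliness matters, because it ensures that the next matching pass starts from a canonical configuration.

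\textbf{Digit-wise alignment (item 1).} I first isolate the behaviour of one matching round, namely \textsc{FindDigitToMatch}, then \textsc{BackwardToMatch.M} and \textsc{MatchPosition.M}, then \textsc{CheckForward}. The claim is that after the $r$-th digit of $b_j$ is processed, a circled digit sits exactly at the $r$-th digit position of some input element whose first $r$ digits equal the first $r$ digits of $b_j$. For $r=1$, the candidates are the positions immediately after a \texttt{\_} delimiter; the leading \texttt{\_} after \texttt{@} guarantees that every element has one. For $r>1$, the only candidate is the cell right of the unique circled digit, and the rule that restores \circled{D} to D and moves right enforces this. \textsc{CheckForward} rejects whenever no circled digit survives. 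Consequently, in an accepting run every digit of $b_j$ advances the single marker one cell within a single element, which gives item 1.

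\textbf{Exact termination (item 2).} When the certificate delimiter \texttt{\_} is read, \textsc{BackwardToCheckMatch} and \textsc{MatchedDigits} proceed only if the circled digit is immediately left of a delimiter (\texttt{|} or \texttt{\#}), that is, at the LSD of its element. Otherwise the scan reaches \texttt{@} and rejects. Combined with the alignment property, this shows the element and $b_j$ have the same length and the same digits, so $b_j=a_k$. Uniqueness of $a_k$ follows because only one circled marker exists at any time.

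\textbf{Subtraction and erasure (items 3 and 4).} The LSD of $a_k$ is replaced by \$ and the machine enters \textsc{BackwardToSubtract.M}. I then check that the \$ marker moves leftward through the digits of $a_k$ one per round, while the sum region is processed from its trailing \texttt{\_}/\texttt{|} boundary. Digits that have been consumed are never restored, so after the round the element contains no digits eligible for matching. Any later match would need an unconsumed digit immediately after a delimiter, so $a_k$ cannot be reused, which gives item 4.

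\textbf{Main obstacle.} I expect the hardest step to be the alignment claim in item 1 when several input elements share a prefix with $b_j$. The backward scan then visits multiple candidate positions, and I must show that at most one circled digit survives and that it lies in a consistent element. My first approach is a stronger invariant: between rounds there is exactly one circled digit, and it is the rightmost marked candidate. I would establish this from the fact that \textsc{BackwardToMatch.M} un-circles the old marker before marking anew.

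If that invariant fails, I would weaken the lemma. The fallback statement is that some element with the matching prefix survives. That weaker statement still suffices for the soundness argument, since equality $b_j=a_k$ for some unused $a_k$ is all that is needed downstream.
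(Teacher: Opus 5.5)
Your trace fails at the obstacle you flagged, and the invariant you propose to resolve it is false for the table. \textsc{MatchPosition.M} always steps back left into \textsc{BackwardToMatch.M}, which keeps scanning until \texttt{@}. Hence in the first-digit round every \texttt{\_} of the input region is rewritten as \texttt{|}, and the first digit of \emph{every} element beginning with $M$ gets circled. In each later round every circled digit is restored, and its right neighbour is circled if it equals $M$. The markers therefore follow all elements sharing the current prefix in parallel. For the set $\{12,13\}$ or $\{12,12\}$ with $b_j=12$ there are two circled digits after the first round, and for $\{12,12\}$ two circled LSDs reach the confirmation phase. So ``the only candidate is the cell right of the unique circled digit'' and ``uniqueness follows because only one marker exists'' are both wrong. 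Un-circling before re-marking only shows that each marker advances by one cell.

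What your item-1 argument does give is a per-marker version: after round $r$, the circled digits are exactly the $r$-th digits of the elements whose first $r$ digits agree with $b_j$. Item 2 then goes through marker by marker. Uniqueness must come from the confirmation scan instead. \textsc{BackwardToCheckMatch} walks left from \texttt{\#}, restoring circled digits and turning \texttt{|} back into \texttt{\_}. At each delimiter it lets \textsc{MatchedDigits} inspect the element's last cell. The first element with a circled LSD, i.e.\ the rightmost, is committed by writing \$, and \textsc{BackwardToSubtract.M} then skips everything to its left. Thus $a_k$ is the rightmost element equal to $b_j$. No weakening is needed, and your fallback would not prove the stated uniqueness in any case.

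The same example refutes your induction hypothesis. After the commit, the \texttt{|} markers left of $a_k$ (including the one after \texttt{@}) and the other candidate's circled LSD are still on the tape, because \textsc{BackwardToSubtract.M} passes them by the wildcard. More seriously, your planned check that ``the \$ marker moves leftward through the digits of $a_k$'' has no support in the table. \$ is written once, by \textsc{MatchedDigits} on the LSD, and no transition ever reads it. After \textsc{Subtract.M} and \textsc{Borrow.B}, control returns to \textsc{Forward}, which fetches the next certificate symbol. Only the LSD of $a_k$ is subtracted and overwritten, so items 3 and 4 cannot be obtained by tracing the table; the paper's proof merely asserts them.

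In fact, as tabulated, the \$ makes any later \textsc{BackwardToMatch.M} pass hit an undefined, hence rejecting, transition. The \texttt{|} that \textsc{SumArea.M} writes on the target sum's trailing delimiter also makes \textsc{CheckSum} reject. So no accepting run of $M_{\mathrm{SS0}}$ contains a confirmed match at all, the lemma holds only vacuously, and your induction on $j$ has no nontrivial step. A proof faithful to the table must say this explicitly. A non-vacuous proof of items 3 and 4 requires a transition table that actually walks over and erases all digits of $a_k$.
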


\begin{proof}
The digits of a certificate element $b_j$ are processed in the state \textsc{FindDigitToMatch} from the most significant to the least significant digit.
Backward matching ensures that each element exists in the input-set region, provided it is not masked by 'x'.
The matching process verifies that all matched digits belong to a single delimiter-bounded input-set element and that both token boundaries are reached, as matching occurs at the exact relative position for each digit.
Consequently, the equality $b_j = a_k$ is established before subtraction begins.
The subtraction routine operates on each element in reverse order to perform the subtraction, including borrow propagation; subsequently, the subtracted digits are erased.
Thus, each subtraction corresponds to exactly one complete input-set element and one certificate element; if an element is masked by 'x', zero values are subtracted accordingly.
\end{proof}

\begin{lemma}[Soundness of Ordinary Subset-Sum Verifier]\label{lem:soundness-ordinary_subsetsum}
Let $W$ be an input string encoded in the Subset-Sum format
\[
W = S\_@\_a_1\_a_2\_\cdots\_a_k\_\# b_1\_b_2\_\cdots\_b_m\_ ;,
\]
where $S, a_i, b_j$ are nonnegative integers represented in decimal.
If $M_{\mathrm{SS0}}$ enters the accepting configuration on input $W$,
then
\[
\{b_1,\dots,b_m\} \subseteq \{a_1,\dots,a_k\}
\quad\text{and}\quad
\sum_{j=1}^m b_j = S.
\]
\end{lemma}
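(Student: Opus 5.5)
The plan mirrors the proof of \cref{lem:soundness-subsetsum} for the certificate-oblivious machine. It is an invariant argument over the phases of $M_{\mathrm{SS0}}$, driven by inspection of \cref{tab:ordinary_tm_subsetsum_transitions}. The key difference from the oblivious machine is that each certificate element is matched and then immediately subtracted. There is no global matching phase.

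\textbf{Step 1: acceptance forces a zero sum region.} The only transition into \textsc{Accept} is \textsc{CheckSum} reading $\blank$. \textsc{CheckSum} is entered only from \textsc{BackwardToCheckSum} on reading \texttt{@}. \textsc{BackwardToCheckSum} is entered only when \textsc{FindDigitToMatch} reads \texttt{;}, that is, after every certificate digit and delimiter has been overwritten by $\sim$.

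On its way left, \textsc{BackwardToCheckSum} rejects on any circled digit. So no element can be left partially matched.

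\textsc{CheckSum} then scans leftward and passes only over \texttt{\_} and \texttt{0}. At acceptance, therefore, the target-sum region encodes the value $0$.

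\textbf{Step 2: subtraction invariant.} I would prove by induction on $j$ the Invariant stated before \cref{def:ordinary_sum_of_subset_verifier_tm}. After the $j$-th completed subtraction, the sum region encodes $S-\sum_{i\le j} s_i$, where $s_i$ is the element subtracted in round $i$.

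The sum region is written only by \textsc{SumArea.M}, \textsc{Subtract.M} and \textsc{Borrow.B}. The \textsc{Borrow.1}/$\blank$ underflow rule rejects. Hence every completed round performs a correct decimal subtraction of a nonnegative result, digit by digit from the LSD.

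The LSD alignment comes from the trailing \texttt{\_} before \texttt{@}. It then advances through the \texttt{|} markers and the circled digit left in the sum region.

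Combining this with Step 1 gives $S=\sum_j s_j$.

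\textbf{Step 3: each $s_j$ equals $b_j$ and is some $a_k$.} Here I invoke \cref{lem:ordinary_token-identification}.

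A subtraction round starts only via \textsc{MatchedDigits} reading a circled digit $\circled{M}$. This happens after \textsc{FindDigitToMatch} reads the element-ending \texttt{\_} and \textsc{BackwardToCheckMatch} reaches the element boundary. The \textsc{CheckForward} step rejects whenever no marked digit survives a digit-matching pass. Consequently the confirmed match is a full, delimiter-to-delimiter equality $b_j=a_k$, so $s_j=b_j$. The lemma also guarantees that $a_k$ is erased and never reused.

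Then $\{b_1,\dots,b_m\}\subseteq\{a_1,\dots,a_k\}$, read as a sub-multiset via distinct indices, and $\sum_j b_j=S$.

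\textbf{Main obstacle.} The hardest part is Step 3's reliance on the matching mechanism. Several input elements may share a prefix, so several positions can be tentatively circled during the backward pass. I must argue that the single surviving circle at the LSD stage sits on exactly one element whose length equals that of $b_j$.

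My first attempt would be an auxiliary invariant, maintained by \textsc{BackwardToMatch.M}/\textsc{MatchPosition.M}. After matching the first $r$ digits of $b_j$, the circled cells are exactly the $r$-th digits of those input elements whose first $r$ digits agree with $b_j$. I would then check that the \textsc{MatchedDigits} test at the trailing \texttt{|} or \texttt{\#} selects only an element ending right there.

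If multiple survivors remain possible, the statement still holds provided the subtraction uses one of them consistently. I would argue that from the leftmost-boundary behavior of \textsc{BackwardToCheckMatch}.
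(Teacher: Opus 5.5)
Your proposal is correct and follows the paper's proof essentially step for step: acceptance only through \textsc{CheckSum} forces a zero target-sum region, the subtraction invariant gives $S=\sum_j c_j$, and every subtraction round is preceded by a confirmed full-element match, so $c_j=b_j=a_{i_j}$ via \cref{lem:ordinary_token-identification}. Your auxiliary prefix invariant is a sharper form of the paper's ``restoration logic'' claim, and it already removes your multiple-survivor worry: any element whose LSD is still circled when \textsc{MatchedDigits} fires equals $b_j$ digit for digit, so several survivors can only be repeated copies of $b_j$. Which one \textsc{BackwardToCheckMatch} selects is therefore immaterial; note that, since it scans leftward, it picks the rightmost such element, not the leftmost.
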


\begin{proof}
Assume that $M_{\mathrm{SS0}}$ enters the accepting configuration
on input $W$. We show that this implies
\[
\{b_1,\dots,b_m\} \subseteq \{a_1,\dots,a_k\}
\quad\text{and}\quad
\sum_{j=1}^m b_j = S.
\]

\begin{itemize}
\item[(1)] \textbf{Acceptance implies zero target sum.}
The machine enters the accepting configuration only from the
\textsc{CheckSum} state.
By the transition rules in \textsc{CheckSum},
acceptance is possible only if the target-sum region $T$ entirely contains  zero(`0') digits.
Hence, at acceptance, the tape encodes
\[
T = 0.
\]

\item[(2)] \textbf{Zero target sum implies complete and correct subtraction.}
The target-sum region is initially set to $S$. The subtraction phase is triggered by the \textsc{BackwardToSubtract} state. 
Once the machine enters this state at the LSD of a matched element, each digit of that element is subtracted iteratively from LSD to MSD. 
This process is implemented by the states \textsc{Subtract.M} and \textsc{Borrow.B}, which return control to \textsc{BackwardToSubtract} until the operation is complete. 
Any unresolved borrow, digit underflow, or malformed subtraction during this phase forces an immediate transition to the rejecting state. 

Let $c_j$ denote the numerical value subtracted during the $j$-th subtraction phase. 
Reaching the \textsc{CheckSum} state with the target-sum region being exactly zero implies that every subtraction phase was executed successfully without rejection. 
Therefore, the final state of the tape satisfies:
\[
T = S - \sum_{j=1}^m c_j = 0, \quad \text{which implies} \quad S = \sum_{j=1}^m c_j.
\]

\smallskip
\item[(3)] \textbf{Successful subtraction implies matching of an input-set element.}
A subtraction phase is initiated only after the machine confirms that a 
digit-matching process corresponds to a complete and valid input-set 
element. Specifically, subtraction is triggered only after the backward 
confirmation phase (via \textsc{BackwardToCheckMatch} and 
\textsc{MatchedDigits}) verifies that the LSD of a candidate element is 
properly marked, indicating that the element has been matched. 

If this verification fails at any digit or at the element level, the 
machine never enters the subtraction states. Consequently, every value 
$c_j$ subtracted from the target-sum region must be identical to some 
input-set element $a_{i_j}$. Thus, for each $j \in \{1, \dots, m\}$, 
there exists an index $i_j \in \{1, \dots, k\}$ such that $c_j = a_{i_j}$. 
Furthermore, the input-set digits are erased immediately following their 
subtraction to prevent reuse, ensuring that each input-set element is 
utilized at most once.

\item[(4)] \textbf{Element matching requires complete digit-wise correspondence.}
For each certificate element, the machine processes its digits from the most significant digit (MSD) to the least significant digit (LSD) in the state \textsc{FindDigitToMatch}. 
Each selected digit is tentatively marked within some input-set element, while the previously matched digit is restored to its original value by the backward matching procedure. 
Only after all digits of the delimiter-bounded token have been processed does the machine transition to the states \textsc{BackwardToCheckMatch} and \textsc{MatchedDigits}. 

After this phase, the subtraction phase is triggered only if every digit of the certificate element has been successfully matched, leaving the LSD properly marked. 
A partially matched element cannot possess a marked LSD due to the restoration logic applied at each digit-matching step. 
This ensures that the $j$-th successfully matched element corresponds exactly to the $j$-th certificate element $b_j$; otherwise, the machine rejects. 
Furthermore, this mechanism ensures that the $j$-th matched element always triggers the $j$-th subtraction phase, 
implying that each $b_j$ is associated with a unique input-set element $a_{i_j}$ and ensuring $\{b_1, \dots, b_m\} \subseteq \{a_1, \dots, a_k\}$. 

Since the value subtracted from the target-sum region is $c_j = a_{i_j}$, and the matching process guarantees $a_{i_j} = b_j$, it follows that $c_j = b_j$. 
Thus, each subtraction step corresponds exactly to one delimiter-bounded input-set element as per \cref{lem:ordinary_token-identification}. 
Consequently, we conclude that $S = \sum_{j=1}^m b_j$ and $\{b_1, \dots, b_m\} \subseteq \{a_1, \dots, a_k\}$.

\end{itemize}
Combining (1)–(4), acceptance implies that the certificate encodes a subset of
the input set whose elements sum exactly to $S$, completing the proof.
\end{proof}

\begin{lemma}[Completeness of Ordinary Subset-Sum Verifier]\label{lem:completeness-ordinary_subsetsum}
Let $W$ be an input string encoded in the Subset-Sum format:
\[
W = S\_@\_a_1\_a_2\_\cdots\_a_k\_\# b_1\_b_2\_\cdots\_b_m\_ ;,
\]
where $S, a_i, b_j$ are nonnegative integers represented in decimal. 
If $\{b_1, \dots, b_m\} \subseteq \{a_1, \dots, a_k\}$ and 
\[
\sum_{j=1}^m b_j = S,
\]
then $M_{\mathrm{SS0}}$ enters the accepting configuration on input $W$.
\end{lemma}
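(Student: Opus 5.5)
The plan is to trace the deterministic run of $M_{\mathrm{SS0}}$ on $W$ and show, by induction on the certificate elements, that it never enters \textsc{Reject}. It should therefore reach \textsc{CheckSum} with a target-sum region of zeros, and then accept. Since the verifier processes elements one at a time, I will use the invariant stated before \cref{def:ordinary_sum_of_subset_verifier_tm}. After the $j$-th subtraction phase, the target-sum region encodes $S-\sum_{i\le j} b_i$. The input-set region still contains, unerased, a sub-multiset of $\{a_1,\dots,a_k\}$ that includes $b_{j+1},\dots,b_m$. The tape is otherwise restored, with no stray circled digits or $|$ markers. One point needs care: the hypothesis $\{b_1,\dots,b_m\}\subseteq\{a_1,\dots,a_k\}$ must be read as a sub-multiset containment, since each input element is erased after use (\cref{lem:ordinary_token-identification}). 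I will make this reading explicit at the start.

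For the inductive step, fix the $(j+1)$-th certificate element $b=b_{j+1}$ with digits $d_1\cdots d_r$ from MSD to LSD. Choose an unerased occurrence $a$ of $b$ in the input-set region. I will prove by an inner induction on $\ell$ the following claim. After the $\ell$-th digit is processed via \textsc{FindDigitToMatch}, \textsc{BackwardToMatch.M}, \textsc{MatchPosition.M} and \textsc{CheckForward}, the circled marks sit exactly on the $\ell$-th digit of every surviving element whose length-$\ell$ prefix equals $d_1\cdots d_\ell$, and $a$ is among them. The leading \texttt{\_} together with the $|$ replacement anchors position $1$. Restoring the previous circled digit and shifting the mark one cell right moves the alignment from $\ell-1$ to $\ell$. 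Because $a$ survives, \textsc{CheckForward} always finds a circled digit and does not reject. When the terminating \texttt{\_} of $b$ is read, \textsc{BackwardToCheckMatch} and \textsc{MatchedDigits} look for a circled digit immediately before a delimiter. That digit exists at the LSD of $a$, so \textsc{MatchedDigits} fires and enters \textsc{BackwardToSubtract.M}.

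For the subtraction, \textsc{SumArea.M}, \textsc{Subtract.M} and \textsc{Borrow.B} implement digit-wise decimal subtraction of $a=b$ from the current target value $T_j=S-\sum_{i\le j}b_i$, using the \$ marker to walk from LSD to MSD. The hypothesis $\sum_i b_i=S$ gives $T_j\ge b_{j+1}$, since all $b_i\ge 0$. Hence no borrow propagates to $\epsilon$ and \textsc{Borrow.1} never rejects. This re-establishes the invariant. Finally, on reading \texttt{;}, \textsc{BackwardToCheckSum} finds no leftover circled digit, and \textsc{CheckSum} reads only \texttt{0} and \texttt{\_} until $\epsilon$, which gives \textsc{Accept}.

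The main obstacle will be the inner matching induction. Several elements can share a prefix, so several tentative marks coexist, and the proof must show that restoration and re-marking never lose the alignment on $a$. It must also show that the final LSD check selects an element of the exact same length, rather than a longer element that merely has $b$ as a prefix. My plan is to argue that marks sitting inside longer elements are not followed by a delimiter and are skipped or restored by \textsc{BackwardToCheckMatch}. The mark on the LSD of $a$ is followed by \texttt{\_} and triggers \textsc{MatchedDigits}. A secondary point is checking that erasing the subtracted element and restoring the markers leaves the region in the form the next iteration assumes.
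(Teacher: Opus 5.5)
Your plan follows the paper's proof stage for stage: digit-wise matching, LSD confirmation, a subtraction that cannot underflow because the running sum never exceeds $S$, and a final zero check. It is more careful than the paper about the sub-multiset reading and about prefix elements. The problem is that the steps you treat as routine are false for the transition table of $M_{\mathrm{SS0}}$, so your invariant cannot be maintained.

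(i) After \textsc{Subtract.M}, both \textsc{Borrow.0} and \textsc{Borrow.1} exit to \textsc{Forward}, which runs to \# and back into \textsc{FindDigitToMatch}. No rule other than a wildcard ever reads \$. The subtraction states are entered only through the single \textsc{MatchedDigits} step per certificate element. Hence only the least significant digit of the matched element is ever subtracted; there is no walk from LSD to MSD.

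(ii) \textsc{BackwardToSubtract.M} crosses the set region by its wildcard rule. So circled marks and $|$ delimiters to the left of the selected element are never restored, and the \$ written by \textsc{MatchedDigits} stays there. \textsc{BackwardToMatch.M} has no rule for \$, so the scan for the first digit of $b_2$ rejects. Moreover, the circled digit left in the sum region would make \textsc{SumArea.M} begin the next element's subtraction one place too far left.

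(iii) \textsc{SumArea.M} overwrites the trailing \_ of the sum region with $|$, and \textsc{Subtract.M} writes circled digits. \textsc{CheckSum} has rules only for \_, $0$, $\epsilon$, and a wildcard leading to \textsc{Reject}. Tracing \texttt{5\_@\_5\#5\_;} shows the machine ends in \textsc{Reject}.

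(iv) With the layout of the statement, the \_ just before \# is the first cell scanned. \textsc{BackwardToMatch.M} turns it into $|$ and steps onto \# in \textsc{MatchPosition.M}, which has no rule for \#. In the example's layout, \textsc{MatchPosition.M} likewise lands on \# whenever $a_k$ is a proper prefix of the current certificate element. That is exactly the prefix scenario you flagged.

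So the lemma is not provable for $M_{\mathrm{SS0}}$ as tabulated, and no refinement of the argument alone will close the gap. The paper's proof (its steps (3) and (4)) asserts the same subtraction and zero-check behaviour without consulting the table. The gap is therefore shared with the paper rather than introduced by your route. To make your induction go through, you must first amend the machine:
\begin{enumerate}
\item loop back to the \$ marker to subtract the remaining digits with borrow;
\item erase the matched element and restore every marker before the next element, including those to its left and the one in the sum region;
\item let \textsc{CheckSum} accept $|$ and circled zeros, as the oblivious machine's \textsc{CheckSum} does;
\item give \textsc{MatchPosition.M} a rule for \#.
\end{enumerate}
Only then can your invariant be stated precisely and checked rule by rule, and that check is where the content of the lemma lies.
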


\begin{proof}

The machine processes the certificate element by element, and within each
element matches digits in the order determined by the tape layout, starting
from the most significant digit.

\begin{itemize}
\item[(1)] \textbf{Successful digit matching for each certificate digit.}
In state \textsc{Forward}, the machine reaches the beginning of the next
unprocessed certificate element and enters \textsc{FindDigitToMatch}.
For each digit of the current certificate element, a corresponding digit
exists in the input-set region by validity of the certificate.
State \textsc{BackwardToMatch.M} deterministically locates the matched digit and circles this digit with restoring the previously matched digit.
After each successful match, the machine returns to \textsc{CheckForward} to
 validate the match by verifying the presence of the circled digit.
Thus, all digits of each certificate element are matched without triggering
any rejecting transition, and LSD digit remains circled for each element.

\item[(2)] \textbf{Verification of a complete element.}
When \textsc{FindDigitToMatch} encounters the trailing delimiter
of a certificate element, the machine enters
\textsc{BackwardToCheckMatch}.
This phase confirms that the matched digits form a complete input-set element by checking the existence of the circled LSD digit at the \textsc{MatchedDigits}
triggered by a valid trailing delimiter, and restores all temporary markers.
Thus, $j$-th matched digit is the exactly $j$-th certificate element and identical to some element of $\{a_1, \dots, a_k\}$. 
\item[(3)] \textbf{Element-wise subtraction corresponding to the certificate.}
For each successfully confirmed element, the machine enters the subtraction
states \textsc{BackwardToSubtract}.
Using the states \textsc{SumArea.$M$}, \textsc{Subtract.$M$}, and \textsc{Borrow.$B$},
the subtraction is performed from LSD to MSD, and the subtracted digit is erased. 
The digits marked during the matching phase encode exactly the value of the
current certificate element $b_j$, and the subtraction performed corresponds
precisely to subtracting $b_j$ from the target-sum region.
Because
\[
\sum_{j=1}^m b_j = S,
\]
all borrow propagations terminate successfully and no underflow occurs.
After processing all certificate elements, the target-sum region contains only
zeros.

\item[(4)] \textbf{Final zero check and acceptance.}
Upon reaching the end of the certificate, the machine enters
\textsc{CheckSum}.
Since the target-sum region encodes zero, the machine reaches the accepting
configuration.
\end{itemize}

Hence, for every valid certificate, $M_{\mathrm{SS0}}$ never enters a
rejecting state and eventually accepts, establishing completeness.
\end{proof}

We analyze the time and space complexity of $M_{\mathrm{SS0}}$ as a function of the Subset-Sum problem instance length, 
rather than the total input length of the verifier TM. Let $n$ denote the length of the initial tape contents—including the target sum, 
the input-set encoding, and delimiters—excluding the certificate. We then assume that the certificate length is at most $n$

\begin{lemma}[Time and Space Complexity of Ordinary Subset-Sum Verifier]
\label{lem:time-space-ordinary_subsetsum}
Let $n$ denote the length of the initial tape contents (including the target sum, the input set, and delimiters). 
Assuming the certificate length is at most $n$, the machine $M_{\mathrm{SS0}}$ halts in $ \bigO(n^2)$ time and uses $ \bigO(n)$ tape space.
\end{lemma}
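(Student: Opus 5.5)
The plan is to bound time and space separately, following the structure used for \cref{lem:time-space-subsetsum}. Write the total tape content as $N = n + m + 2 \le 2n+2$, where $m \le n$ is the certificate length. Every phase of $M_{\mathrm{SS0}}$ in \cref{tab:ordinary_tm_subsetsum_transitions} is a single monotone sweep between two markers drawn from $\epsilon$, \texttt{@}, \texttt{\#}, \texttt{;}, \texttt{\$}, or a circled digit. So each phase costs $\bigO(N)=\bigO(n)$ steps. The whole argument therefore reduces to counting how many phases are started, and that count will be charged to certificate symbols.

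\textbf{Time.} Consider one iteration begun by \textsc{FindDigitToMatch}. There are three kinds.

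\emph{(i) A certificate digit is consumed.} It is overwritten by $\sim$, so at most $m$ such iterations occur. Each one runs a single leftward pass of \textsc{BackwardToMatch.M}/\textsc{MatchPosition.M} down to \texttt{@}. The \textsc{MatchPosition} transitions always move back left, so this pass is monotone apart from $\bigO(1)$ local oscillations per cell. It is followed by one rightward \textsc{CheckForward} pass and one \textsc{Forward} pass, giving $\bigO(n)$ in total.

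\emph{(ii) A certificate delimiter \texttt{\_} is consumed.} Again at most $m$ occur. Each triggers \textsc{BackwardToCheckMatch}/\textsc{MatchedDigits}, which is one leftward pass, and then a single subtraction. The subtraction walks left to \texttt{@} (\textsc{BackwardToSubtract}), scans the sum region (\textsc{SumArea}), and performs one digit subtraction with leftward borrow propagation. \textsc{Borrow.0}/\textsc{Borrow.1} then return to \textsc{Forward}. This is $\bigO(n)$ per delimiter.

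\emph{(iii) \texttt{;} is met.} This happens once and yields the \textsc{BackwardToCheckSum}/\textsc{CheckSum} sweep, costing $\bigO(n)$.

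Summing gives $\bigO(m\cdot n + n)=\bigO(n^2)$.

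The main obstacle is making sure the subtraction of a multi-digit element is correctly charged. Per the table, each return through \textsc{Forward} handles one digit marked by \texttt{\$}, so the number of subtraction rounds equals the number of digits of the matched element. I will bound this by the digits of the corresponding certificate element, using \cref{lem:ordinary_token-identification} ($b_j=a_k$, same digit interval). Hence the rounds are at most $m$ in total across all elements, and each costs $\bigO(n)$.

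I also need every transition to either advance the sweep or change phase, so that no cycle avoids consuming a certificate symbol. I will check this by inspecting the table. Leftward scans terminate at \texttt{@} or $\epsilon$. Rightward scans terminate at \texttt{\#} or a circled digit, and otherwise reach \textsc{Reject} at \texttt{\#}. The borrow chain terminates at a nonzero digit or at $\epsilon$ (\textsc{Reject}). Termination follows, since each outer iteration strictly decreases the number of unconsumed certificate symbols.

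\textbf{Space.} No transition writes a non-blank symbol onto a cell that was initially $\epsilon$, except the rejecting moves (\textsc{Borrow.1} on $\epsilon$, \textsc{CheckSum} on $\epsilon$), which halt immediately. Every sweep turns around at or before the $\epsilon$ cell bordering the input or at \texttt{;}. Hence the head stays within the input cells plus one boundary cell on each side, i.e. $N+2=\bigO(n)$ cells.
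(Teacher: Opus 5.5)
Your proposal is correct and takes essentially the paper's approach: each certificate symbol consumed by \textsc{FindDigitToMatch} triggers only $\bigO(1)$ linear sweeps of length $\bigO(n)$, giving $\bigO(mn)=\bigO(n^2)$ time, and the head stays between the left blank and the \texttt{;} marker, giving $\bigO(n)$ space. Two small slips leave the bound intact. As tabulated, \textsc{Borrow.$B$} returns to \textsc{Forward}, which always passes control to \textsc{FindDigitToMatch}, so there is one \textsc{Subtract.M} step per consumed certificate delimiter rather than one per \$-marked digit (still at most $m$ rounds). Also, \textsc{CheckSum} on $\epsilon$ is the accepting halting move, not a rejecting one.
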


\begin{proof}
We bound the time and space complexity separately.

\smallskip
\noindent\textbf{Time complexity.} 
The machine operates through a sequence of deterministic phases, each implemented using linear tape scans, marker-based position recovery, and irreversible digit erasure. While individual scans are linear ($ \bigO(n)$), certain phases are repeated proportional to the input length, resulting in quadratic total time.

\begin{enumerate}
\item \textbf{Forward scan.}
After each backward matching or subtraction operation, the machine performs a left-to-right scan to locate the next unprocessed digit or delimiter. Each scan requires $ \bigO(n)$ steps. Since this is performed for each of the at most $n$ digits in the certificate, the cumulative time is $ \bigO(n^2)$.

\item \textbf{Digit-to-match search.}
In the state \textsc{FindDigitToMatch}, the machine scans the certificate region to locate the most-significant unmatched digit. A single scan takes $ \bigO(n)$ time, and since at most $n$ digits are selected over the entire computation, the total time for this phase is $ \bigO(n^2)$.

\item \textbf{Backward matching and element verification.}
For each selected certificate digit, the machine scans leftward to locate a corresponding digit in the input-set region. Additionally, upon encountering a certificate delimiter, a backward scan verifies the complete element match (via LSD marker check). Each such scan takes $ \bigO(n)$, and at most $n$ such operations are performed, totaling $ \bigO(n^2)$.

\item \textbf{Subtraction phase.}
Digit-wise subtraction with borrow propagation involves repeated scans between the matched element and the target-sum region. Each subtraction step (per digit) requires $ \bigO(n)$ time. Since there are at most $n$ digits in the certificate to be subtracted, the total time spent in subtraction is $ \bigO(n^2)$.

\item \textbf{Final sum check.}
A final linear scan of the target-sum region verifies that all digits are zero. This requires a single $ \bigO(n)$ pass.
\end{enumerate}

Summing the contributions from all phases, the total running time is:
\[
 \bigO(n^2) +  \bigO(n^2) +  \bigO(n^2) +  \bigO(n^2) +  \bigO(n) =  \bigO(n^2).
\]

\smallskip
\noindent\textbf{Space complexity.}
The machine never extends the tape beyond the cells initially occupied by the input and the provided certificate, which terminates with a `;' marker. 
All auxiliary information is managed by overwriting existing symbols. The machine prevents the head from moving beyond the tape's boundaries by treating the blank symbols (on the left) and the `;' marker (on the right) as impenetrable barriers; 
the head reverses its direction upon encountering these boundaries. Thus, the total number of tape cells accessed remains $\bigO(n)$.
\smallskip
Therefore, $M_{\mathrm{SS0}}$ operates in quadratic time and linear space relative to the input length.
\end{proof}

The Turing machine $M_{\mathrm{SS0}}$, as formally defined in \cref{def:ordinary_sum_of_subset_verifier_tm}, 
serves as a rigorous deterministic polynomial-time verifier for the \textsc{Subset-Sum} problem. 
Given an input string $W$ of length $n$ and a corresponding certificate of length at most $n$, the machine operates under the criteria of both soundness and completeness. 
Specifically, $M_{\mathrm{SS0}}$ enters an accepting configuration if and only if the provided certificate corresponds to a subset of the input multiset that satisfies the target summation constraint, thereby ensuring that no invalid certificate is accepted. 
Furthermore, the machine is computationally efficient, requiring $\bigO(n^2)$ time and $\bigO(n)$ space. 
Collectively, these properties establish that the \textsc{Subset-Sum} problem admits a deterministic, quadratic-time, and linear-space verification procedure, satisfying the requirements for its classification within the complexity class NP.

\section{Implementation Detail of Two SAT TMs}\label{sec:impl_sat_tms}
The algorithms in this appendix detail the implementation of the transition function for the Turing machine in Section \cref{sec:TMs},
 showing how concrete next states, output symbols, and head movements are derived deterministically from the symbolic specification.
The table matches exactly the implementation given in the reference Python code.
\begin{algorithm}[H]
\caption{Symbolic Transition Resolution for Input-dependent SAT Turing Machine}
\label{alg:delta_sat_id}
\begin{algorithmic}[1]
\Function{$\delta$}{$state, symbol$}

    \State Let $action \gets state$, $addr \gets \emptyset$, $altstate \gets \emptyset$

    \Comment{Parse parameterized state}
    \If{$state$ contains the character `.'}
        \State Parse $state$ into $(action, addr)$
        \State Set $altstate \gets action || \texttt{`.N'}$
    \EndIf

    \Comment{Build prioritized symbol classes}
    \State Let $Symbols \gets [symbol]$
    \If{$symbol$ is a digit}
        \State Append \texttt{`D'} to $Symbols$
    \EndIf
    \State Append \texttt{`*'} to $Symbols$

    \Comment{Resolve transition by priority}
    \ForAll{$s \in Symbols$}
        \If{$(state, s)$ is defined in \textsc{Transitions}}
            \State Let $(q', out, dir) \gets \textsc{Transitions}[(state, s)]$
        \ElsIf{$altstate \neq \emptyset$ \textbf{and} $(altstate, s)$ is defined}
            \State Let $(q', out, dir) \gets \textsc{Transitions}[(altstate, s)]$
        \Else
            \State{\Continue}
        \EndIf

        \Comment{Re-instantiate parameterized states}
        \If{$q'$ ends with \texttt{`.D'} \textbf{and} $symbol$ is a digit}
            \State Replace \texttt{`.D'} in $q'$ with \texttt{`.'} || $symbol$
        \EndIf

        \If{$q'$ ends with \texttt{`.N'}}
            \State Replace \texttt{`.N'} in $q'$ with \texttt{`.'} || $addr$
        \ElsIf{$q'$ ends with \texttt{`.(N-1)'}}
            \State Replace \texttt{`.(N-1)'} with \texttt{`.'} || $(addr-1)$
        \ElsIf{$q'$ ends with \texttt{`.(10N+D)'}}
            \State Replace \texttt{`.(10N+D)'} with \texttt{`.'} || $(10\cdot addr + symbol)$
        \EndIf
        
        \If{$q'$ not belongs to the machine's states $Q$}[Out-of-range parameterized state]
        	\State \Return $(\textsc{Reject}, \texttt{`\_'}, +1)$
	\EndIf

        \Comment{Resolve output symbol}
        \If{$out = \texttt{`*'}$}
            \State Set $out \gets symbol$
        \EndIf

        \State \Return $(q', out, dir)$
    \EndFor

    \Comment{No transition applicable}
    \State \Return $(\textsc{Reject}, \texttt{`\_'}, -1)$

\EndFunction
\end{algorithmic}
\end{algorithm}

\begin{algorithm}[H]
\caption{Symbolic Transition Function $\delta$ for SAT Turing Machine with Fixed States}
\label{alg:delta_sat_fixed}
\begin{algorithmic}[1]
\Function{$\delta$}{$state$, $symbol$}

\State Let $action \gets state$, $sub \gets \bot$, $altstate \gets \bot$

\If{$state$ contains character \texttt{`.'}}
    \State Parse $state$ into $(action, sub)$
    \If{$sub$ is a digit}
        \State Set $altstate \gets action || \texttt{`.D'}$
    \ElsIf{$sub \in \{ \texttt{`T'}, \texttt{`F'} \}$}
        \State Set $altstate \gets action || \texttt{`.B'}$
    \Else
        \State Set $altstate \gets action || \texttt{`.S'}$
    \EndIf
\EndIf

\State Initialize list $\mathcal{S} \gets [symbol]$
\If{$symbol$ is a digit}
    \State append \texttt{`D'} to $\mathcal{S}$
\ElsIf{$symbol \in \{ \texttt{`T'}, \texttt{`F'} \}$}
    \State append \texttt{`B'} to $\mathcal{S}$
\EndIf
\State append \texttt{`*'} to $\mathcal{S}$

\ForAll{$s \in \mathcal{S}$}
    \If{$(state, s)$ is defined in $\textsc{Transitions}$}
        \State Let $(q', out, dir) \gets \textsc{Transitions}[(state, s)]$
    \ElsIf{$altstate \neq \bot$ and $(altstate, s)$ is defined in $\textsc{Transitions}$}
        \State Let $(q', out, dir) \gets \textsc{Transitions}[(altstate, s)]$
        \If{$q'$ ends with \texttt{`.S'}}
            \State{ Replace \texttt{`.S'} in $q'$ by \texttt{`.'} || $sub$}
        \EndIf
        \If{$q'$ ends with \texttt{`.B'} and $out = \texttt{`B'}$ and $sub \in \{ \texttt{`T'}, \texttt{`F'} \}$}
            \State{Set $out  \gets sub$}
        \EndIf
    \Else
        \State{\Continue}
    \EndIf

    \If{$q'$ ends with \texttt{`.B'} and $symbol \in \{ \texttt{`T'}, \texttt{`F'} \}$}
        \State{ Replace \texttt{`.B'} in $q'$ by \texttt{`.'} || $symbol$}
    \EndIf

    \If{$out = \texttt{`D'}$ and $symbol$ is a digit}
        \State Set $out \gets symbol$
    \ElsIf{$out = \texttt{`D-1'}$ and $symbol$ is a digit}
        \State Set $out \gets symbol - 1$
    \ElsIf{$out = \texttt{`*'}$}
        \State Set $out \gets symbol$
    \EndIf

    \State \Return $(q', out, dir)$
\EndFor

\State \Return $(\textsc{Reject}, \texttt{`\_'}, -1)$
\EndFunction
\end{algorithmic}
\end{algorithm}

Both SAT Turing machines are implemented using a uniform symbolic transition resolution scheme.
Rather than explicitly enumerating the full transition table, transitions are computed at runtime by instantiating parameterized states according to a finite set of symbolic rules.

\section{Functions for Extension Candidate Edges}
\begin{algorithm}[H]
\caption{Get outgoing edges of $u$ whose index-precedent edges in $E_p$}\label{alg:next_edges_above_index_precedent}
\begin{algorithmic}[1]
\Description{Given a vertex $u$ and a set of precedent edges $E_p$, returns the set of next edges above the index-precedents.}
\Function{GetNextEdgesAboveIPreds}{$G, u, E_p$}
    \State Let $E_n \gets \emptyset$  \Comment{Set of next edges to return}
    
    \ForAll{$e \in E_p$}
        \If{$e = \texttt{NIL}$}
            \State Add all the floor outgoing edges of $u$ in $G$ to $E_n$
            \State \Continue
        \EndIf 
        \State Let $(v, w) \gets e$
        \State Let $i \gets \indexOf(v)$, $t \gets \tier(v) + 1$, $q \gets \nextState(u)$, $s \gets \output(v)$
        \State Let $z \gets$ the vertex in $V[i][t][q][s]$ of $G$ whose precedent is transition case of $v$
        \State Set $E_n \gets E_n \cup \{(u, z)\}$
    \EndFor
    
    \State \Return $E_n$
\EndFunction
\end{algorithmic}
\end{algorithm}

\begin{sublemma}[Running Time of \textproc{GetNextEdgesAboveIPreds}] \label{sublem:runningtime_get_next_edges_above_preds}
Let $G$ be a computation graph with height at most $h$.
Then \textproc{GetNextEdgesAboveIPreds()} in \cref{alg:next_edges_above_index_precedent} runs in time 
\[ O(|E_p| \log |E_p|). \]
If $E_p$ is a set of edges within a single edge slice, it runs in
\[ O(h^2 \log h). \]
\end{sublemma}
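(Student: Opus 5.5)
The argument is a direct cost accounting of \cref{alg:next_edges_above_index_precedent}. I will charge each iteration of the loop over $E_p$ separately, then sum over the loop, then specialize to a single edge slice.

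\textbf{Cost of one iteration.} There are two branches.

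\emph{The \texttt{NIL} branch.} Here the algorithm adds the floor outgoing edges of $u$. By \cref{rem:floor_edge_condition}, a floor edge is one whose head has tier $0$. The tier-$0$ transition cases at the adjacent index are fixed by $\nextState(u)$ and the finitely many symbols of $\Gamma$. So this branch contributes only $\bigO(|Q||\Gamma|)=\bigO(1)$ edges, and each can be retrieved through the dynamic arrays of \cref{alg:dynamic-tier,alg:lazy-instantiation} in constant time.

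\emph{The non-\texttt{NIL} branch.} The quantities $i$, $t$, $q$, $s$ are read off in $\bigO(1)$, since $\nextState$ and $\output$ are fixed by $\delta$. The cell $V[i][t][q][s]$ is then located through \textsc{CellArray}, \textsc{TierArray}, \textsc{StateDict} and \textsc{SymbolDict}; each of these is an indexed or hashed lookup, under the ordered-set convention adopted earlier. Within that transition case, the node $z$ whose index-precedent is $\TCase(v)$ is selected by the key $(\lastState,\lastSymbol)=(\state(v),\symbol(v))$. That key ranges over a set of constant size, so the selection is $\bigO(1)$.

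The only non-constant cost is inserting $(u,z)$ into the ordered set $E_n$. Since $|E_n|\le \bigO(|E_p|)$, each insertion costs $\bigO(\log|E_p|)$. Summing over the $|E_p|$ iterations gives $\bigO(|E_p|\log|E_p|)$.

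\textbf{Specialization to a single edge slice.} If $E_p\subseteq E_i$, I use the slice bound from \cref{prop:graph_size_bound}: a slice contains $\bigO(h^2|Q|^2|\Gamma|^2)=\bigO(h^2)$ edges. Substituting $|E_p|=\bigO(h^2)$ gives $\bigO(h^2\log h^2)=\bigO(h^2\log h)$.

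\textbf{Main obstacle.} The step that needs care is justifying that locating $z$ is $\bigO(1)$, or at worst $\bigO(\log h)$, and not a scan over the tier. I would argue this from the determinism of $M$: a node of index $i$ and tier $t$ is completely specified by $(q,s,\lastState,\lastSymbol)$. So fixing the index-precedent case pins down $z$ uniquely up to finitely many choices. If one prefers not to assume hashing, the lookup becomes $\bigO(\log(wh))$, and for a single slice this degrades the bound to $\bigO(h^2\log(wh))$. I would note that this is still absorbed by every caller, e.g.\ \cref{sublem:runningtime_collect_restricted_boundary_edges_final}.
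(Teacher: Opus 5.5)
Your proposal is correct and follows the paper's own argument: each element of $E_p$ costs a constant-time lookup of $z$ plus an $\bigO(\log|E_p|)$ insertion into the ordered set $E_n$, which sums to $\bigO(|E_p|\log|E_p|)$, and $|E_p|=\bigO(h^2)$ within a single slice then gives $\bigO(h^2\log h)$. Your treatment of the \texttt{NIL} branch (at most $|\Gamma|$ floor successors of $u$) and of the uniqueness of $z$ via its $(\lastState,\lastSymbol)$ key spells out what the paper leaves implicit; the $\bigO(\log(wh))$ fallback is not needed, since for a fixed machine every lookup is either list indexing or a search over a constant-size key set.
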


\begin{proof}
The algorithm iterates over each edge $e \in E_p$ exactly once. For each iteration, it performs a constant number of operations: accessing entries in the vertex table $V$, identifying the target vertex $z$, and constructing the next edge $(u, z)$. 
Each such operation, including the vertex table lookup, is performed in $\bigO(1)$ time.

Since the resulting edges are stored in an ordered set to maintain uniqueness and a consistent structure, each insertion incurs a cost of $\bigO(\log |E_p|)$. 
The total running time is proportional to $\bigO(|E_p| \log |E_p|)$. If $E_p$ belongs to a single edge slice, the number of edges is bounded by $h^2$, which simplifies the complexity to $\bigO(h^2 \log h)$.
\end{proof}

\begin{algorithm}[ht]
\caption{GetForwardWeakCeilingAdjacentEdges}
\label{alg:forward_weakly_ceiling_adjacent_edges}
\begin{algorithmic}[1]
\Description{Returns the set of forward ceiling-adjacent edges of $e_0$ in computation graph $G$.}
\Function{GetForwardWeakCeilingAdjacentEdges}{$G, e_0$}
    \State{Let $C \gets \emptyset$} \Comment{Collected forward weakly-ceiling-adjacent edges}
    \State{Let $(u_0,v_0) \gets e_0$}
    \State{Let $V_v \gets \emptyset$} \Comment{Visited vertex set}

    \If{$v_0$ is an accepting or rejecting node}
        \State \Return $C$
    \EndIf

    \State{Let $Q$ be a queue initialized with $v_0$}

    \While{$Q$ is not empty}
        \State{Dequeue $u$ from $Q$}
        \If{$u \in V_v$}
            \State \Continue
        \EndIf
        \State{Add $u$ to $V_v$}

        \If{$u$ is a folding node \textbf{or} $u=v_0$}
            \State Let $P \gets \IPrec_G(u)$
            \If{$P=\emptyset$}
                \State Add $\bot$ to $C$ \Comment{Boundary marker}
            \EndIf
            \ForAll{$v \in P$ such that $v \notin V_v$}
                \State Enqueue $v$ to $Q$
            \EndFor
        \Else
            \State Add all incoming edges of $u$ to $C$
        \EndIf
    \EndWhile

    \State \Return $C$
\EndFunction
\end{algorithmic}
\end{algorithm}

\begin{sublemma}[Time Complexity of Forward Weak Ceiling-Adjacent Edge Computation]
\label{sublem:time_complexity_forward_weakly_ceiling_adjacent_edges}

Let $G$ be a computation graph of height $h$ and width $w$, and let $e_0 \in E(G)$ be a given edge. 
Then \textproc{GetForwardWeakCeilingAdjacentEdges()} in \cref{alg:forward_weakly_ceiling_adjacent_edges} terminates in time 
\[
O(h^2 \log h).
\]
\end{sublemma}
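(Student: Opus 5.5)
The plan is to bound the work of \textproc{GetForwardWeakCeilingAdjacentEdges()} by counting the vertices visited by the breadth-first traversal and the work done per visited vertex. The key structural fact is that the traversal only moves downward through index-precedent cases of the \emph{same} tape index. Starting from $v_0$, each enqueued vertex lies in $\IPrec_G(u)$ for some already visited $u$. By \cref{def:index-precedent_nodes}, $\IPrec_G(u)$ is a single transition case at index $\indexOf(u)$ and tier $\tier(u)-1$. Hence every visited vertex has index $\indexOf(v_0)$ and tier at most $\tier(v_0)$. The first step is therefore to fix the column $i=\indexOf(v_0)$ and observe that all visited vertices live in the $\bigO(h)$ tiers of this column.

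Next I bound the number of distinct vertices in that column. Each tier contains $\bigO(|Q|^2|\Gamma|^2)=\bigO(1)$ transition-case/last-state combinations. I will argue via the visited set $V_v$ that each vertex is dequeued and expanded at most once. This yields $\bigO(h)$ expanded vertices, each contributing at most a constant-size precedent set. The total number of enqueue operations is then $\bigO(h)$, up to the constant factor $|Q||\Gamma|$. If one is cautious and counts nodes per tier as $\bigO(h)$, as in the edge-slice accounting of \cref{sublem:runningtime_get_next_edges_above_preds}, the count of vertices becomes $\bigO(h^2)$. I will adopt this cautious count, since it matches the slice bound used throughout the paper.

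For the per-vertex cost, there are three operations. Membership in and insertion into the ordered set $V_v$ cost $\bigO(\log h)$, since $|V_v|=\bigO(h^2)$. The folding-node test costs $\bigO(h)$ by the adjacency-list argument of \cref{lem:complexity_combining_detection}, but it is incurred only once per vertex. Adding incoming edges to $C$ costs $\bigO(\log h)$ per edge. All collected edges lie in the single edge slice adjacent to column $i$, so $|C|=\bigO(h^2)$ and the total insertion cost is $\bigO(h^2\log h)$.

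Summing gives $\bigO(h^2\log h)$ for the set operations and insertions into $C$. The folding tests contribute $\bigO(h)\cdot\bigO(h)=\bigO(h^2)$ under the $\bigO(h)$-vertex count. The main obstacle is exactly this reconciliation: under the cautious $\bigO(h^2)$-vertex count, the folding tests would give $\bigO(h^3)$. To avoid this, I would first try to show that each folding test is answered in $\bigO(\log h)$ by looking up the ordered incoming and outgoing adjacency keyed by edge index. Alternatively, I would charge each test to the incoming-edge list it scans, and use that these lists partition a subset of one edge slice of size $\bigO(h^2)$. Either route keeps the folding tests within $\bigO(h^2\log h)$ and yields the claimed bound.
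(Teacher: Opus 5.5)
Your proposal is correct and is essentially the paper's argument. The index-precedent traversal stays in column $i=\indexOf(v_0)$ and expands only $O(h)$ vertices (constantly many per tier). It then inserts $O(h^2)$ incoming edges into ordered sets at $O(\log h)$ each; these edges can come from $E_{i-1}\cup E_i$ rather than a single slice, which is harmless.

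The switch to a ``cautious'' $O(h^2)$ vertex count is unnecessary and misreads the slice bound: the $O(h^2)$ edges per slice come precisely from $O(h)$ nodes per column, i.e.\ $O(1)$ per tier. With your own $O(h)$ count, the $O(h)$-cost folding tests total only $O(h^2)$, so the obstacle you then work around never arises. The paper's proof leaves that folding-test cost implicit.
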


\begin{proof}
If the terminal vertex $v_0$ is an accepting or rejecting node, the algorithm returns immediately in constant time. 

The algorithm performs a backward traversal along index-precedent edges similar to \textproc{GetWeakCeilingAdjacentEdges()}, with the primary difference being the omission of the final edge inclusion check. The only additional operation is the insertion of the boundary marker $\bot$ when a vertex has no precedents; this takes standard set insertion time and does not affect the asymptotic complexity (refer to the appendix of the original paper for further details).

Since the computation graph has height $h$, at most $\bigO(h)$ vertices are reachable through index-precedent relations. Each vertex is inserted into and removed from the queue at most once; thus, the while-loop iterates $\bigO(h)$ times.

For each visited vertex $u$:
\begin{itemize}
    \item If $u$ is a folding node (or $u=v_0$), the algorithm enumerates $\IPrec_G(u)$, the size of which is bounded by a constant determined by the machine. Membership tests in $V_v$ cost $\bigO(\log h)$ time using an ordered structure.
    \item Otherwise, all incoming edges of $u$ are inserted into $C$. Each vertex has at most $\bigO(h)$ incoming edges. Since an edge slice contains at most $\bigO(h^2)$ edges, each insertion into the ordered set $C$ costs $\bigO(\log h^2) = O(\log h)$ time. Consequently, the cost for a single vertex in this case is $\bigO(h \log h)$.
\end{itemize}

Aggregating these costs over $\bigO(h)$ iterations, the total traversal and insertion cost is
\[
\bigO(h \cdot h \log h) = O(h^2 \log h).
\]
\end{proof}

\begin{algorithm}[ht]
\caption{Filtering the edges with a path to $e_f$ in the backward direction} \label{alg:filter_path_backward} 
\begin{algorithmic}[1]
\Function{FilterWithPathBackward}{$G, e_f, E_s$}
\State Let $Q \gets$ empty deque
\State Let $E_c \gets \emptyset$
\State push $e_f$ into $Q$

\If{$E_s = \varnothing$}
    \State \Return $E_c$
\EndIf
\State Let $(u0, v0)=e_f$
\State Let $i_0 \gets \min(\indexOf(v0), \nextIndex(v0))$
\State Let $E_v \gets \emptyset$

\While{$Q$ not empty}
    \State Let $e \gets$ pop from $Q$

    \If{$e = \texttt{NIL}$}
        \State Set $E_c \gets E_c \cup \{e\}$
    \Else
    	\State $(u,v) \gets e$
    	\If{$e \in E_v$}
        	\State \Continue
    	\EndIf
    	\State $E_v \gets E_v \cup \{e\}$
    	\If{$e \in E_s$}
        	\State Set $E_c \gets E_c \cup \{e\}$
    	\EndIf

    	\If{$\indexOf(e) = i_0$}
        	\State \Continue
    	\EndIf
    	\State push all edges of $\Prev_G(e)$ into $Q$
    \EndIf
\EndWhile
\State \Return $E_c$
\EndFunction
\end{algorithmic}
\end{algorithm}

\begin{sublemma}[Running time of \textproc{FilterWithPathBackward}]\label{sublem:runningtime_filter_backward}
Let $G$ be a computation graph of width $w$ and height $h$.
Assume the visited edge set $E_v$ is implemented as an ordered set.
Then \textproc{FilterWithPathBackward()} in \cref{alg:filter_path_backward} runs in time
\[
\bigO(wh^3 \log(wh)).
\]

\begin{proof}
The algorithm performs a backward traversal over edges starting from $e_f$. 
An edge can be enqueued up to its out-degree,  which is $\bigO(h)$. Thus, the total number of enqueued edges is at most $\bigO(|E(G)|h)$. The visited set $E_v$ ensures that each edge is processed only once.

For each edge $e$ popped from the queue, the algorithm performs:
\begin{itemize}
    \item A membership test $e \in E_v$, costing $\bigO(\log |E(G)|)$ time.
    \item If $e \notin E_v$ (which occurs at most $|E(G)|$ times):
    \begin{itemize}
        \item An insertion into $E_v$ and a membership test $e \in E_s$, each costing $\bigO(\log |E(G)|)$ time.
        \item Enumeration of $\textproc{Prev}_G(e)$, where the total number of predecessors generated over the entire execution is $\bigO(h \cdot |E(G)|)$.
    \end{itemize}
\end{itemize}

The total running time is dominated by the $\bigO(|E(G)| h)$ membership tests. With $|E(G)| = \bigO(wh^2)$, the total complexity is
\[
\bigO(wh^3 \log(wh)).
\]
\end{proof}
\end{sublemma}

\begin{algorithm}[ht]
\caption{Filtering the edges with a path from $e_s$ in the forward direction} \label{alg:filter_path_forward}
\begin{algorithmic}[1]
\Function{FilterWithPathForward}{$G, e_s, E_f$}
\State Let $Q \gets$ empty deque
\State push $e_s$ into $Q$

\State Let $i_0 \gets \indexOf(e_s)$
\State Let $E_v \gets \emptyset$
\State Let $E_c \gets \emptyset$

\While{$Q$ not empty}
    \State Let $e \gets$ pop from $Q$
    \State Let $(u,v) \gets e$

    \If{$e \in E_v$}
        \State \Continue
    \EndIf
    \State Set $E_v \gets E_v \cup \{e\}$

    \If{$e \in E_f$}
        \State Set $E_c \gets E_c \cup \{e\}$
    \EndIf

    \If{$e \neq e_s$ \textbf{and} $\indexOf(e) = i_0$}
        \State \Continue
    \EndIf

    \State push all edge of $\Next_G(e)$ into $Q$
\EndWhile
\State \Return $E_c$
\EndFunction
\end{algorithmic}
\end{algorithm}

\begin{sublemma}[Running time of \textproc{FilterWithPathForward}]\label{sublem:runningtime_filter_forward}
Let $G$ be a computation graph of width $w$ and height $h$.
Assume the visited edge set $E_v$ is implemented as an ordered set.
Then \textproc{FilterWithPathForward()} in \cref{alg:filter_path_forward} runs in time
\[
\bigO(wh^3 \log(wh)).
\]

\begin{proof}
The algorithm performs a forward traversal over edges starting from $e_s$. 
Note that an edge can be enqueued multiple times up to its in-degree, which is $\bigO(h)$. 
Thus, the total number of enqueued edges is at most $\bigO(|E(G)|h)$. However, due to the visited set $E_v$, each edge is processed (i.e., its successors are explored) at most once.

For each edge $e$ popped from the queue, the algorithm performs:
\begin{itemize}
    \item A membership test $e \in E_v$, costing $\bigO(\log |E(G)|)$ time.
    \item If $e \notin E_v$ (which occurs at most $|E(G)|$ times):
    \begin{itemize}
        \item An insertion into $E_v$ and a membership test $e \in E_f$, each costing $\bigO(\log |E(G)|)$ time.
        \item Enumeration of $\textproc{Next}_G(e)$, where the total number of next edges generated over the entire execution is $\bigO(h \cdot |E(G)|)$.
    \end{itemize}
\end{itemize}

Summing these, the total running time is dominated by the membership tests for all enqueued elements:
\[
\bigO(|E(G)| h \log |E(G)|).
\]
Since $|E(G)| = \bigO(wh^2)$ and $\log |E(G)| = \bigO(\log(wh))$, we obtain
\[
\bigO(wh^3 \log(wh)).
\]
\end{proof}
\end{sublemma}

\begin{algorithm}[ht]
\caption{Get Weakly Ceiling Adjacent Edges with Visited Node Set Input} \label{alg:weakly_ceiling_adjacent_edges}
\begin{algorithmic}[1]
\Function{GetWeakCeilingAdjacentEdges}{$G, e_0, E_f, \InOut:V_v$}  \Comment{Visited vertex set}
    \State{Let $C \gets \emptyset$}\Comment{Collected wealky-ceiling-adjacent edges}
    \State{Let $(u_0,v_0) \gets e_0$}
    \State{Let $Q$ be a queue initialized with $u_0$}
    \If{$e_0 \in E_f$} 
        \State{Enqueue $v_0$ to $Q$}
    \EndIf
    \While{$Q$ is not empty} 
        \State{Dequeue $u$ from $Q$}
        \If{$u \in V_v$} 
            \State{\Continue}
        \EndIf
        \State{Add $u$ to $V_v$}
        \If{$u$ is a folding node \textbf{or} $u=v_0$} 
            \State Let $P \gets \IPrec_G(u)$
            \ForAll{$v \in P$ such that $v \notin V_v$} 
                \State{ Enqueue $v$ to $Q$ }
            \EndFor
        \Else 
            \State{Add all incoming edges of $u$ to $C$}
        \EndIf
    \EndWhile
    \State{\Return $C$}
\EndFunction
\end{algorithmic}
\end{algorithm}

\begin{sublemma}[Time Complexity of Weak Ceiling-Adjacent Edge Computation With Visited Node Set]
\label{sublem:time_complexity_weakly_ceiling_adjacent_edges}
Let $G$ be a computation graph of height $h$ and width $w$.
Then \textproc{GetWeakCeilingAdjacentEdges()} in \cref{alg:weakly_ceiling_adjacent_edges}
terminates in time $\bigO(h^2\log(wh))$ for a given edge $e_0 \in E(G)$.
\end{sublemma}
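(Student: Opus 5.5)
The plan is to follow the same accounting used in the proof of the forward variant, \cref{sublem:time_complexity_forward_weakly_ceiling_adjacent_edges}. I will bound three things in turn: the number of while-loop iterations, the cost per iteration, and the extra cost caused by the visited set $V_v$ being shared across calls and stored as an ordered set over the whole graph. The only structural differences from the forward version are that the queue may be seeded with two vertices ($u_0$, and also $v_0$ when $e_0 \in E_f$) and that $V_v$ is an \InOut{} parameter. The shared $V_v$ can only shrink the work done, since an already visited vertex is skipped at once.

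The first step is to bound the traversal length. Every vertex enqueued after the seeds is obtained by repeatedly applying $\IPrec_G$ to a folding node (or to $v_0$). So all such vertices lie at the cell index of $u_0$ or of $v_0$, with strictly decreasing tier. Each application of $\IPrec_G$ yields a transition case, and its size is bounded by the constant $|Q|^2|\Gamma|^2$ (the possible last-state/last-symbol pairs). Hence at most $\bigO(h)$ distinct vertices per cell index, and so $\bigO(h)$ in total, can be dequeued and processed. The visited check guarantees that each is expanded at most once.

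The second step is the per-iteration cost. A membership test or insertion in $V_v$ costs $\bigO(\log(wh))$, because $V_v$ may by now contain vertices from the whole graph, up to $\bigO(wh)$ of them after earlier calls. This is exactly where the $\log(wh)$ in the statement comes from, rather than the $\log h$ of the forward variant. A folding vertex enqueues a constant number of precedents, and each is tested against $V_v$ at cost $\bigO(\log(wh))$. A non-folding vertex contributes all its incoming edges to $C$: there are $\bigO(h)$ of them, and each insertion into $C$ costs $\bigO(\log h^2)=\bigO(\log h)$ since $C$ stays within one edge slice. This gives $\bigO(h\log(wh))$ per vertex, and summing over $\bigO(h)$ vertices yields $\bigO(h^2\log(wh))$. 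The folding-node test is absorbed into this bound, as in \cref{lem:complexity_combining_detection}.

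The main obstacle is justifying the $\bigO(h)$ bound on reachable vertices cleanly. The argument needs that the IPrec-chain from each seed stays in a single column and moves down one tier per step, so that the two columns together contribute $\bigO(h)$ vertices even though each $\IPrec_G$ step may branch to a constant number of nodes. I will argue this through the tier bookkeeping in \cref{def:index-precedent_nodes}. I will also note that since $V_v$ is shared, repeated calls never re-expand a vertex, which is the amortization used in \cref{lem:time_improved_computing_cover_edges}.
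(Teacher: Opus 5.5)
Your proposal is correct and follows essentially the same accounting as the paper's proof: $\bigO(h)$ dequeued vertices, each costing $\bigO(h\log(wh))$ because membership in the shared ordered set $V_v$ (up to $\bigO(wh)$ nodes) costs $\bigO(\log(wh))$, while insertions into $C$ cost only $\bigO(\log h)$. Your two-column argument for the $\bigO(h)$ vertex bound (tiers at most $h$, constantly many nodes per (index, tier) coordinate) is more careful than the paper's one-line claim, and the initial $\bigO(\log(wh))$ test $e_0 \in E_f$, which you do not mention, is absorbed in the total.
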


\begin{proof}
The algorithm performs a backward traversal over nodes $u$ of the computation graph,
starting from the initial endpoint of $e_0$ and proceeding along backward folding paths.

First, note that the membership test $e_0 \in E_f$ at the initialization step
can be implemented in time $\bigO(\log |E_f|)$ using an ordered set.
Since $|E_f| \le |E(G)| = \bigO(w h^2)$, this check costs
$\bigO(\log (wh))$ time.

Since the computation graph has height $h$, it contains at most $\bigO(h)$ nodes.
Each node is enqueued and dequeued at most once due to the visited node set $V_v$.
Therefore, the while-loop iterates at most $\bigO(h)$ times.

For each dequeued node $u$, the algorithm performs one of the following operations:
\begin{itemize}
    \item If $u$ is a folding node or $u = \term(e_0)$, it enumerates
    $\IPrec_G(u)$.
    The number of index-precedent nodes is bounded by a constant determined by the
    tape alphabet and the deterministic transition function, hence
    $|\IPrec_G(u)| = \bigO(1)$.
    Each membership test or insertion into $V_v$ costs $\bigO(\log (wh))$ time due to the existence of $\bigO(wh)$ nodes.
    
    \item Otherwise, all incoming edges of $u$ are added to the set $C$.
    Each node has at most $\bigO(h)$ incoming edges.
    For each such edge, membership testing and insertion into $C$ costs
    $\bigO(\log |C|) = \bigO(\log (h^2))$ time since each edge slice have at most $\bigO(h^2)$ edges.
\end{itemize}

Thus, each iteration of the while-loop costs $\bigO(h \cdot \log(wh))$.
Since there are at most $\bigO(h)$ iterations, the total running time is $\bigO(h^2 \log(wh))$.

The algorithm terminates in time
\[
\bigO(\log(wh)) + \bigO(h^2 \log(wh))
= \bigO\bigl((h^2 \log(wh)) \bigr).
\]

\end{proof}

\end{document}